\def\doi{9(2:10)2013}
\newcommand{\A}{\mathcal A}
\newcommand{\Af}{\mathfrak A}
\newcommand{\ar}{\mathsf{ar}}
\newcommand{\bin}{\mathsf{bin}}
\newcommand{\B}{\mathcal B}
\newcommand{\Bf}{\mathfrak B}
\newcommand{\C}{\mathcal C}
\newcommand{\Cf}{\mathfrak C}
\newcommand{\cl}{\mathsf{cl}} 
\newcommand{\fr}{\mathsf{fr}}
\newcommand{\Con}[3]{\ensuremath{{#2}\equiv_{#1}{#3}}}
\newcommand{\ConBoxNull}[0]{\equiv}
\newcommand{\ConBox}[2]{\ensuremath{{#1}\ConBoxNull{#2}}}
\newcommand{\dom}{\mathsf{dom}}
\newcommand{\E}{\mathcal E}
\newcommand{\F}{\mathcal F}
\newcommand{\Ff}{\mathfrak F}
\newcommand{\fin}{\mathsf{fin}}
\newcommand{\forest}{\mathrm{forest}}
\newcommand{\FO}{\mathsf{FO}}
\newcommand{\FOext}[1][\empty]%
	{\FO\ifthenelse{\equal{#1}{\empty}}{}{[#1]}
	+\exists^\infty+\exists^{\mathsf{chain}}}
\newcommand{\Gf}{\mathfrak G}
\newcommand{\Hf}{\mathfrak H}
\newcommand{\Iso}{\mathrm{Iso}}
\newcommand{\iso}{\mathsf{iso}}
\newcommand{\lef}{\mathsf{lef}}
\newcommand{\leaves}{\mathsf{leaves}}
\newcommand{\N}{\mathbb N}
\newcommand{\K}{\mathcal K}
\renewcommand{\P}{{\mathbb P}}
\newcommand{\ps}[1]{\ensuremath{2^{#1}}}
\newcommand{\Qf}{\mathfrak Q}
\newcommand{\Pf}{\mathfrak P}
\newcommand{\pref}{\mathsf{pref}}
\newcommand{\Rel}{\mathcal{R}}
\newcommand{\Run}{\mathsf{Run}}
\newcommand{\rank}{\mathsf{rank}}
\newcommand{\Sf}{\mathfrak S}
\newcommand{\T}{\mathcal T}
\newcommand{\Tf}{\mathfrak T}
\newcommand{\tree}{\mathsf{tree}}
\newcommand{\trees}[1]{\T^{\fin}_{2,#1}}
\newcommand{\tp}{\mathsf{tp}}
\newcommand{\Uf}{\mathfrak U}
\newcommand{\unlabel}{\mathsf{unlabel}}
\newcommand{\Vf}{\mathfrak V}
\newcommand{\Wf}{\mathfrak W}
\newcommand{\word}{\mathsf{word}}
\newcommand{\wulpo}{wulpo\xspace}
\newcommand{\Xf}{\mathfrak{X}}
\newcommand{\Yf}{\mathfrak{Y}}
\newcommand{\Zf}{\mathfrak{Z}}
\begin{document}
\title[Tree-Automatic Well-Founded Trees]{Tree-Automatic Well-Founded
  Trees\rsuper*}

\author[M.~ Huschenbett]{Martin Huschenbett\rsuper a}
\address{{\lsuper a}Technische Universit{\"a}t Ilmenau, Germany}
\email{martin.huschenbett@tu-ilmenau.de}

\author[A.~Kartzow]{Alexander Kartzow\rsuper b}
\address{{\lsuper{b,d}}Universit{\"a}t Leipzig, Germany}
\email{\{kartzow, lohrey\}@informatik.uni-leipzig.de}

\author[J.~Liu]{Jiamou Liu\rsuper c}
\address{{\lsuper c}Auckland University of Technology, New Zealand}
\email{jiamou.liu@aut.ac.nz}

\author[M.~Lohrey]{Markus Lohrey\rsuper d} 

\thanks{{\lsuper{b,d}}The second and fourth author are supported by the DFG  
  research project GELO}
\keywords{hyperarithmetical hierarchy, isomorphism problem,
    ordinal rank, tree-automatic structures, 
    well-founded trees}

\subjclass{???}

\ACMCCS{[{\bf Theory of computation}]: Logic; Formal languages and
  automata theory---Tree languages}

\titlecomment{{\lsuper*}An extended abstract of this paper appeared at CiE 2012 
  \cite{KartzowLL12}
}  

\begin{abstract}
  We investigate tree-automatic well-founded trees.  Using
  Delhomm{\'{e}}'s decomposition technique for tree-automatic
  structures, we show that the (ordinal) $\rank$ of a tree-automatic
  well-founded tree is strictly below $\omega^\omega$.  Moreover, we
  make a step towards proving that the ranks of tree-automatic
  well-founded partial orders are bounded by $\omega^{\omega^\omega}$:
  we prove this bound for what we call upwards linear partial orders.
  As an application of our result, we show that the isomorphism
  problem for tree-automatic well-founded trees is complete for level
  $\Delta^0_{\omega^\omega}$ of the hyperarithmetical hierarchy with
  respect to Turing-reductions.
\end{abstract}

\maketitle

\section{Introduction}
Various classes of infinite but finitely presented structures received
a lot of attention in algorithmic model theory \cite{BaGrRu10}. Among
the most important such classes of structures is the class of {\em
  string-automatic structures} \cite{KhoN95}. A (relational) structure
is string-automatic if its universe is a regular set of words and all
relations can be recognized by synchronous multi-tape automata. During
the past 15 years a theory of string-automatic structures has emerged.
This theory was developed along two interrelated branches.
The first is a structural branch, which leads to (partial)
characterizations of particular classes of string-automatic structures
\cite{Delhomme04,KhMi09,KhoNRS07,KhoRS05,OlTh05}. The second is an
algorithmic branch, which leads to numerous decidability and
undecidability, as well as complexity results for important
algorithmic problems for string-automatic structures
\cite{BG04,KhoNRS07,KusLo11jsl}. One of the most fundamental results
for string-automatic structures states that their first-order theories
are uniformly decidable \cite{KhoN95}.

By replacing strings and string automata by trees and tree automata,
respectively, Blumensath \cite{Blu99} generalized string-automatic
structures to {\em tree-automatic structures} and proved that their
first-order theories are still uniformly decidable.  However compared
to string-automatic structures, the theory of tree-automatic
structures is less developed.  The only non-trivial characterization
of a class of tree-automatic structures we are aware of concerns
ordinals. In \cite{Delhomme04} Delhomm{\'{e}} proved that an ordinal
is tree-automatic if and only if it is strictly below
$\omega^{\omega^\omega}$.  Some complexity results for first-order
theories of tree-automatic structures are shown in
\cite{KusLo11jsl}. Recently, Huschenbett proved that it is decidable
whether a given tree-automatic scattered linear order is
string-automatic \cite{Huschenbett12}.

In this paper, we study tree-automatic well-founded
trees. We stress here that in this paper the term \emph{tree} always
refers to order trees $\Tf=(T, \leq)$, i.e., partial orders that have
the shape of a tree,  as opposed to successor trees where the edge
relation is not transitive (except for trivial cases). 
Using Delhomm{\'{e}}'s decomposition technique for tree-automatic
structures \cite{Delhomme04},
we show that the rank of a tree-automatic well-founded tree
is strictly below $\omega^\omega$. 
As a generalization of trees, we introduce the class of
upwards linear partial orders. These are orders where the elements
above any fixed element form a linear suborder. If such an order is
well-founded and tree-automatic, then its rank is strictly below
$\omega^{\omega^\omega}$. Unfortunately, we do not know if this bound
holds for all tree-automatic well-founded partial orders. 
We will show in Example 
\ref{exa:BoxDestroysRanksofWfPO} that there is no hope to extend
Delhomm\'{e}'s technique in order to obtain bounds on the ranks of all
tree-automatic well-founded partial orders. Thus, characterizing the ranks 
of well-founded partial orders requires the development of completely
new techniques.

We apply this result to the isomorphism problem for
tree-automatic well-founded trees. In \cite{KuLiLo11}, it was shown
that the isomorphism problem for string-automatic well-founded trees
is complete for level $\Delta^0_\omega$ of the  hyperarithmetical
hierarchy. In other words, the isomorphism problem for
string-automatic well-founded trees
is recursively equivalent to true arithmetic.
We show that the rank of well-founded computable trees determines
the complexity of the isomorphism problem in the following sense:
The isomorphism problem for well-founded computable trees of rank at
most $\lambda$ (where $\lambda$ is a
computable limit ordinal)
belongs to level $\Delta^0_{\lambda}$ of the hyperarithmetical
hierarchy. Since we know that the rank of a tree-automatic
well-founded tree is strictly below $\omega^\omega$,
this implies 
that the isomorphism problem for
tree-automatic well-founded trees belongs to level
$\Delta^0_{\omega^\omega}$
of the hyperarithmetical hierarchy. We also provide a corresponding
lower bound. We prove that
the isomorphism problem for tree-automatic well-founded trees is
$\Delta^0_{\omega^\omega}$-complete under Turing-reductions.

Let us remark that for non-well-founded order trees, the isomorphism
problem is complete for $\Sigma^1_1$ (the first existential level
of the analytical hierarchy) already in the string-automatic case
\cite{KuLiLo11}, and this complexity is in a certain sense maximal,
since the isomorphism problem for the class of all computable
structures is $\Sigma^1_1$-complete as well \cite{CaKni06,GonKn02}.
Let us also emphasize that our proof techniques do not work for
successor trees.

\section{Preliminaries}

We write $\N_{>0}$ for $\N\setminus\{0\}$.
For $M,N$ sets, $\ps{M}$ denotes the \emph{powerset} of $M$ and $N^M$
denotes the \emph{set of functions} from $M$ to $N$.
We denote by $M\sqcup N$ 
(by $\bigsqcup_{i\in N} M_i$, respectively) the \emph{disjoint union} of the
sets $M$ and $N$ ($(M_i)_{i\in N}$, respectively).
For any (partial) function $f$ we write $\dom(f)$ for the domain of $f$.

A \emph{(finite and relational) signature} $\tau=(\Rel,\ar)$ is
a finite set $\Rel$ of \emph{relation symbols}
together with a map $\ar:\Rel\to\N_{>0}$ assigning to each each $R\in\Rel$
its arity $\ar(R)$.
A \emph{$\tau$-structure} $\Sf=(S,(R^\Sf)_{R\in\Rel})$ consists
of a \emph{domain} $S$ and
an $\ar(R)$\nobreakdash-ary \emph{relation} $R^\Sf$ on $S$
for each $R\in\Rel$.
For $T \subseteq S$,
we denote the \emph{substructure} of $\Sf$ induced by the restriction of $\Sf$
to the set $T$ as $\Sf{\restriction}_T$.
In this paper we consider only structures with countable domains.

A partial order $\Af=(A, \leq)$ is regarded as a structure over
a signature consisting of a single binary relation symbol $\leq$.
Every substructure of $\Af$ is again a partial order.
We call a subset $B\subseteq A$ an
\emph{antichain} if for all distinct $a,b\in B$, neither
$a\leq b$ nor $b\leq a$.

Let $A$ be a (not necessarily finite) set. We use $\preceq$ to
denote the prefix order on finite words in $A^*$, i.e., for $u,v\in
A^*$, $u\preceq v$ if $v=uw$ for some $w\in A^*$. For a language
$L\subseteq A^*$, let $\pref(L) = \{w\in A^*\mid \exists u\in L:
w\preceq u\}$ be its {\em prefix-closure}.

\subsection{Trees, forests and upwards linear partial orders}
\label{sec:wulpos}

Let $\Pf=(P, \leq)$ be a partial order.
We say that $\Pf$ is {\em connected} if the undirected graph
$(P, \{ (a,b) \mid a \leq b \text{ or } b \leq a\})$ is connected.
For an element $p \in P$ let 
$\uparrow p = \{ a \in P \mid p < a \}$ be its  (strict) \emph{upwards
  closure} and  
$\downarrow p = \{ a \in P \mid a < p \}$ be its  
(strict) \emph{downwards closure}. Two elements $a,b \in P$ are
{\em comparable} if $a \leq b$ or $b \leq a$.
We say that $\Pf$ is \emph{upwards linear} if
for every $p \in P$, $\uparrow p$ induces a linear order
$\Pf{\restriction}_{\uparrow p}$. 
A {\em forest} is an upwards linear  partial order $\Ff =
(F,\leq)$ such that for every $p\in P$ the set
$\uparrow p$ induces a finite linear order.
A {\em tree} is a forest
which has a greatest element, which is called the {\em root} of the
tree. Note that a forest is a disjoint union of (an arbitrary number
of) trees and all its substructures are also forests.
For a given forest $\Ff$, we denote as 
$\langle \Ff \rangle$ the tree that results from adding a new root
(i.e., a new greatest element) to $\Ff$. If $F$ is the domain of $\Ff$
we write $\langle F\rangle$ for the domain of $\langle \Ff \rangle$.
For a node $u$ in $\Ff$, $\Ff(u)$ denotes the subtree of $\Ff$ at $u$,
i.e., $\Ff(u) = \Ff\restriction_{\{v \in F \mid v \leq u\}}$.
We define the {\em successor relation} of $\Ff$ as 
\begin{align*}
E_{\Ff}=\{(x,y)\in F^2\mid x>y, \neg \exists z: x>z>y\}.  
\end{align*}
For $x\in F$ the set of
{\em children} of $x$ in $\Ff$ is $E_\Ff(x)=\{y\in F\mid (x,y)\in
E_\Ff\}$. The set of leaves of $\Ff$ is
$\leaves(\Ff) = \{ x \in F \mid E_\Ff(x)=\emptyset\}$.
A partial order $\Pf=(P,\leq)$ is {\em well-founded}, if it does
not contain an infinite descending chain
$a_1 > a_2 >a_3 > \cdots$.
The {\em sum} of two partial orders $\Pf_1 =(P_1, \leq_1)$ and
$\Pf_2 =(P_2, \leq_2)$ with $P_1 \cap P_2 = \emptyset$ is the partial
order $\Pf_1 + \Pf_2 = (P_1 \cup P_2, \leq)$, where
$a \leq b$ if either ($a,b \in P_1$  and $a \leq_1 b$), or
($a,b \in P_2$  and $a \leq_2 b$), or ($a \in P_1$ and $b \in P_2$).

We write \emph{\wulpo} as an abbreviation for well-founded upwards
linear partial order. Note that the class of \wulpo's contains all
ordinals and all well-founded forests and is closed under taking substructures.

Let $\Pf=(P, \leq)$ be some partial order. 
In the following, we set $p < \infty$ for all $p \in P$.
In particular, $\infty$ is comparable to every $p \in P \cup \{\infty\}$ and
$\max(p,\infty)=\infty$ for every $p \in P \cup \{\infty\}$.
For $p_1 \in P$ and $p_2  \in P \cup \{\infty\}$ with $p_1 \leq p_2$,
we denote by  $[p_1, p_2)$ the \emph{interval}
$\{p\in P \mid p_1\leq p < p_2\}$. 
\label{def:branching-free}
We call this interval \emph{branching free} if for all $p \in
[p_1,p_2)$ we have  
$$
\downarrow p = \downarrow p_1 + [p_1,p) .
$$
In other words: $[p_1,p_2)$ is branching free, if for every
$p \in [p_1,p_2)$ and every $x < p$, the points
$x$ and $p_1$ are comparable.

\begin{lem} \label{very-simple-lemma}
Let $p_1 \in P$ and $p_2  \in P \cup \{\infty\}$ with
$p_1 \leq p_2$.
If $[p_1,p_2)$ is branching free, then for every $p \in [p_1,p_2)$
also $[p_1,p)$ is branching free.
\end{lem}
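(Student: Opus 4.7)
The plan is to unfold the definition of branching free and observe that the condition is point-wise in $q$, so it is automatically inherited by any subinterval sharing the same lower endpoint $p_1$.

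First I would restate the hypothesis in its equivalent elementwise form: $[p_1,p_2)$ is branching free iff for every $q \in [p_1,p_2)$ and every $x \in P$ with $x < q$, $x$ and $p_1$ are comparable (this is just the ``In other words'' reformulation given right before the lemma, which follows because $\downarrow q$ is the disjoint union $\downarrow p_1 + [p_1,q)$ precisely when every strict predecessor of $q$ either lies strictly below $p_1$ or satisfies $p_1 \leq x < q$).

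Then, fixing an arbitrary $p \in [p_1,p_2)$, I would verify branching freeness of $[p_1,p)$ by taking any $q \in [p_1,p)$. Since $p_1 \leq q < p < p_2$ (using $p < \infty$ in the case $p_2 = \infty$), we have $q \in [p_1,p_2)$. Applying the hypothesis at $q$ yields $\downarrow q = \downarrow p_1 + [p_1,q)$, which is exactly the branching-free condition at $q$ inside the smaller interval $[p_1,p)$. Since $q$ was arbitrary, $[p_1,p)$ is branching free.

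There is really no main obstacle here; the lemma is a bookkeeping sanity check of the definition. The only subtlety worth flagging is the convention $p < \infty$ for all $p \in P$, which keeps the argument uniform regardless of whether $p_2 \in P$ or $p_2 = \infty$.
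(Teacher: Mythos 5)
Your proof is correct and takes essentially the same approach as the paper: both observe that $[p_1,p)\subseteq[p_1,p_2)$ and that branching freeness is a pointwise condition on elements of the interval, so it is inherited directly.
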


\begin{proof} 
Assume that $x < p' \in [p_1,p)$.
We have to show that $x$ and $p_1$ are comparable.
But this clear, since also  $x < p' \in [p_1,p_2)$
and the latter interval is branching free.
\end{proof} 

\begin{lem} \label{lemma-comparable}
Let $\Pf=(P, \leq)$ be an upwards linear partial order
and let $a,c \in P$ and $b,d \in P \cup \{\infty\}$ with $a \leq b$ and $c
\leq d$. If $[a,b)$ and $[c,d)$ are branching free and not disjoint,
then $a$ and $c$ are comparable, and $b$ and $d$ are comparable. 
\end{lem}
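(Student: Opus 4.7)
The plan is to pick a common point of the two intervals and exploit each hypothesis separately: branching freeness to compare the lower endpoints, and upwards linearity to compare the upper endpoints.

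First I would choose some $p \in [a,b) \cap [c,d)$, so that $a \leq p < b$ and $c \leq p < d$. To show that $a$ and $c$ are comparable, I would distinguish whether $c = p$ or $c < p$. In the former case $a \leq p = c$, so the two are comparable directly. In the latter case, $c$ is a strict predecessor of $p \in [a,b)$, and since $[a,b)$ is branching free, the definition on page~\pageref{def:branching-free} (namely $\downarrow p = \downarrow a + [a,p)$) forces $c$ and $a$ to be comparable. (Symmetrically one could use branching freeness of $[c,d)$.)

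For the upper endpoints, I would handle the values $\infty$ first: if $b = \infty$ or $d = \infty$, then by the convention that $\infty$ is comparable to every element of $P \cup \{\infty\}$, the claim is immediate. Otherwise both $b$ and $d$ lie in $P$, and since $p < b$ and $p < d$ they both belong to the strict upwards closure $\uparrow p$. The assumption that $\Pf$ is upwards linear then says that $\Pf{\restriction}_{\uparrow p}$ is a linear order, so $b$ and $d$ are comparable.

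There is essentially no obstacle here; the only mildly subtle point is remembering that $p \in [a,b)$ is a \emph{strict} predecessor of $b$, which is exactly what is needed to place $b$ (and analogously $d$) into the \emph{strict} upwards closure $\uparrow p$ so that upwards linearity applies. Everything else reduces to a direct application of the two hypotheses to the witness $p$, and the edge cases $c = p$ and $b, d = \infty$ are disposed of by the definitions.
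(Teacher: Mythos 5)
Your proof is correct and takes essentially the same route as the paper's: pick a common point $p$, use branching freeness of $[a,b)$ to compare $a$ and $c$, and use upwards linearity at $p$ to compare $b$ and $d$. You are in fact slightly more careful than the paper, since you explicitly separate the edge cases $c = p$ (where the branching-free condition, which requires a \emph{strict} predecessor, does not directly apply) and $b = \infty$ or $d = \infty$ (where upwards linearity, which is a property of $P$ rather than of $P \cup \{\infty\}$, does not directly apply), both of which the paper glosses over.
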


\begin{proof}
Assume that $[a,b)$ and $[c,d)$ are not disjoint and let
$p \in [a,b) \cap [c,d)$. 
Since $\Pf$ is upwards linear, the assumptions $p < b$ and $p < d$
imply that $b$ and $d$ are comparable. 
Moreover, since $c \leq p \in [a,b)$ and $[a,b)$ is branching free,
the elements $c$ and $a$ are comparable too.
\end{proof}

We call a branching free interval {\em maximal branching free} if it is
maximal with respect to set inclusion.
We use this interval notation mostly for upwards linear partial
orders. In this case every interval induces a linear suborder. 

\begin{lem} \label{lemma-disjoint-or-equal}
  Let $\Pf=(P, \leq)$ be an upwards linear partial order and let
  $a,c \in P$ and $b,d \in P \cup \{\infty\}$ with $a \leq b$ and $c
  \leq d$.
  If $[a,b)$ and $[c,d)$ are maximal branching
  free, then they are either disjoint or equal. 
\end{lem}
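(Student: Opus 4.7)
The plan is to assume that $[a,b)$ and $[c,d)$ have a common element and to deduce that they coincide. First I would apply Lemma~\ref{lemma-comparable} to obtain that $a$ is comparable with $c$ and $b$ is comparable with $d$; up to symmetry I then assume $a\leq c$.

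The strategy is to produce a single branching free interval containing both, so that maximality forces equality. Using comparability of $b$ and $d$, I split into two cases. If $d\leq b$, then $[c,d)\subseteq [a,b)$ is immediate, and maximality of $[c,d)$ yields $[c,d)=[a,b)$ as sets. If instead $b\leq d$, the plan is to verify that the enlarged interval $[a,d)$ is itself branching free; then both $[a,b)$ and $[c,d)$ lie in $[a,d)$, and maximality of each gives that all three coincide. Once set equality of the intervals holds, a brief endpoint inspection -- using that each interval is nonempty, hence $a<b$, and that any endpoint in $P$ witnesses the difference between two distinct left-closed intervals sharing one side -- forces $a=c$ and $b=d$.

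The main work, and the only step I expect to be nontrivial, is verifying that $[a,d)$ is branching free in the second case. Given $p\in[a,d)$ and $x<p$, I plan to compare $p$ with $b$: either $b=\infty$ and $p<b$ trivially, or $p$ and $b$ both lie in $\uparrow a\cup\{a\}$, which is linearly ordered by upwards linearity of $\Pf$. If $p<b$, branching freeness of $[a,b)$ immediately gives that $x$ is comparable with $a$. If $b\leq p$, a witness in $[a,b)\cap[c,d)$ shows $c<b$, hence $c<p<d$ and $p\in[c,d)$; branching freeness of $[c,d)$ then gives that $x$ is comparable with $c$. Combining with $a\leq c$: either $c\leq x$, in which case $a\leq x$, or $x\leq c$, and then since $c\in[a,b)$ and $[a,b)$ is branching free, $x$ is comparable with $a$. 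This finishes the nontrivial step and completes the plan.
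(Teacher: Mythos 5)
Your proof is correct and follows essentially the same route as the paper's: both apply Lemma~\ref{lemma-comparable} to get comparability of the endpoints, then show a combined interval ($[a,d)$ in your notation, $[\min(a,c),\max(b,d))$ in the paper's) is branching free, and invoke maximality to force equality; the verification of branching-freeness is the same argument in both. Your two-case split on $d\leq b$ versus $b\leq d$ (given $a\leq c$) is a slightly cleaner reduction than the paper's enumeration of degenerate sub-cases, and the closing paragraph about recovering the endpoints $a,c,b,d$ is unnecessary since the lemma only asserts equality of the intervals as sets.
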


\begin{proof} 
  If $[a,b)$ and $[c,d)$ are not disjoint,
  Lemma~\ref{lemma-comparable} implies that  $\max(b,d)$ and
  $\min(a,c)$ are  defined. 
  We claim that $[\min(a,c),\max(b,d))$ is also branching free. 
  Since we assumed $[a,b)$ and $[c,d)$ to be maximal branching free,
  this implies $[a,b) = [\min(a,c),\max(b,d)) = [c,d)$.

  Let us now prove the claim. 
  If $a=c$ or $b=d$ or ($a<c$, $d<b$) or ($c<a$, $b<d$)
  then $[\min(a,c),\max(b,d))$ is $[a,b)$ or
  $[c,d)$ and hence branching free. The remaining two cases
  are symmetric. Hence we can assume that $a < c$ and $b < d$ and we have
  to show that $[a,d)$ is branching free. 
  Since $a \leq b$ and $a \leq c$, we have either $b \leq c$ or $c
  < b$. If $b \leq c$, then the intervals $[a,b)$ and $[c,d)$
  would be disjoint. Hence we have
  $$
  a < c < b < d .
  $$
  To show that $[a,d)$ is branching free, let $x \in [a,d)$ and $y <
  x$. We have to show that $a$ and $y$ are comparable.
  
  If $x \in [a,b)$ then $a$ and $y$ are comparable since 
  $[a,b)$ is branching free. So, assume that $x < b$ does not hold. 
  Since $\Pf$ is upwards linear and $a \leq b$, $a \leq x$, it follows
  $b \leq x$. In particular $x \in [c,d)$. Since $[c,d)$ is branching
  free, $c$ and $y$ must be comparable. If $c \leq y$ then also $a
  \leq y$. Hence $y < c$. Since $c \in [a,b)$ and $[a,b)$ is branching
  free, $y$ and $a$ are comparable.
\end{proof}

\begin{lem}\label{lem:WulpoPartitionMaxBranchingFreeIntervals}
  Let $\Pf$ be a \wulpo. Every node of $\Pf$ is contained in a unique
  maximal branching free interval. 
\end{lem}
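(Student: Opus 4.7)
The plan is to handle uniqueness and existence separately. Uniqueness drops out immediately from Lemma~\ref{lemma-disjoint-or-equal}: two maximal branching free intervals both containing the given node $p$ are non-disjoint, hence equal. For existence, I propose applying Zorn's lemma to the poset $\mathcal{I}_p$ of branching free intervals containing $p$, ordered by set inclusion.

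First I would check that $\mathcal{I}_p$ is nonempty. Since $\Pf$ is upwards linear and well-founded, $\uparrow p$ is a well-order; if it is nonempty it has a minimum $p^+$ and $[p,p^+) = \{p\}$, and if it is empty then $[p,\infty) = \{p\}$. Either way, the singleton interval $\{p\}$ is branching free, since the defining condition at the sole point $p$ reduces to $\downarrow p = \downarrow p + [p,p)$, which holds vacuously.

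Next I would verify Zorn's hypothesis by exhibiting the union as an upper bound of any chain $\mathcal{C} = \{[a_i, b_i) : i \in I\}$ in $\mathcal{I}_p$. The chain property forces $\{a_i\}$ to be linearly ordered inside $\downarrow p \cup \{p\}$ and $\{b_i\}$ to be linearly ordered inside $\uparrow p \cup \{\infty\}$. Well-foundedness of $\Pf$ then yields a minimum $a^* = a_{i_0}$ of $\{a_i\}$. Setting $b^*$ to be the minimum upper bound of $\{b_i\}$ in $\uparrow a^*$ if one exists, and $\infty$ otherwise, a direct computation shows $\bigcup \mathcal{C} = [a^*, b^*)$. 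Once this union is known to be branching free, Zorn's lemma yields a maximal element of $\mathcal{I}_p$, which is automatically maximal among \emph{all} branching free intervals, since any strictly larger branching free interval would still contain $p$.

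The step requiring the most care is verifying branching freeness of $[a^*, b^*)$ itself. For $p' \in [a^*, b^*)$ and $x < p'$, one can pick some $[a_j, b_j) \in \mathcal{C}$ with $p' \in [a_j, b_j)$, but branching freeness of this member only yields comparability of $x$ with $a_j$, and $a_j$ may be strictly above $a^*$. The fix is to compare $[a_j, b_j)$ with the distinguished member $[a^*, b_{i_0}) \in \mathcal{C}$ via the chain dichotomy: in one case $[a_j, b_j) \subseteq [a^*, b_{i_0})$, giving $p' \in [a^*, b_{i_0})$; in the other case $[a^*, b_{i_0}) \subseteq [a_j, b_j)$, which forces $a_j = a^*$ by minimality of $a^*$. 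Either way $p'$ lies in a branching free member of $\mathcal{C}$ whose left endpoint is already $a^*$, giving the required comparability of $x$ and $a^*$.
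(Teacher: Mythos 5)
Your proof is correct, but it takes a genuinely different route from the paper. The paper avoids Zorn's lemma entirely and constructs the maximal interval explicitly: it first uses well-foundedness together with upwards linearity to isolate $p_{+}$, the least element strictly above $p$ that has something incomparable to $p$ below it (or $\infty$ if no such element exists), so that $[p,p_{+})$ is branching free; it then applies Lemma~\ref{lemma-comparable} and well-foundedness a second time to find the least $p_{-}\leq p$ with $[p_{-},p_{+})$ branching free, and finally argues maximality by showing any branching-free interval through $p$ sits inside $[p_{-},p_{+})$ (this uses the intermediate claim inside the proof of Lemma~\ref{lemma-disjoint-or-equal} that the union of two overlapping branching-free intervals is branching free). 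You instead apply Zorn to $\mathcal{I}_p$, and the key technical work shifts to verifying that a chain union $[a^*,b^*)$ is again branching free --- your fix of routing comparability through the distinguished member $[a^*,b_{i_0})$ is exactly the right move, and your use of well-foundedness to extract $a^*$ and the supremum $b^*$ is sound. Both proofs lean on Lemma~\ref{lemma-disjoint-or-equal} for uniqueness. What each approach buys: the paper's is fully explicit and gives a concrete description of the endpoints $p_{-},p_{+}$ in terms of the branching structure around $p$, while yours is shorter to state at the cost of invoking Zorn and a slightly more delicate chain-union verification; since the same well-foundedness that powers Zorn's chain hypothesis here already suffices for the direct construction, the explicit route is arguably the more economical one in this setting.
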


\begin{proof}
  Fix a node $p\in P$. Firstly, we define a node $p_{+}>p$, secondly
  we define a node $p_{-}\leq p$, and finally, we prove that
  $[p_{-}, p_{+})$ is maximal branching free. 
  If for all $p' > p$ and all $p''<p'$,  $p''$ and $p$ are
  comparable, set $p_{+}:=\infty$. Note that $[p,p_{+})$ is
  branching free.
  Otherwise, by well-foundedness and upwards linearity, there is a
  minimal element $p_{+}>p$ for which there exists $p'<p_{+}$
  incomparable to $p$. Thus, $[p,p_{+})$ is branching free. 
  
  Lemma~\ref{lemma-comparable} implies that, for all $p_1\leq p, p_2\leq p$, if
  $[p_1,p_{+})$ and $[p_2,p_{+})$ are branching free, then $p_1$
  and $p_2$ are comparable. 
  By well-foundedness, there is a minimal $p_{-}\leq p$ such that
  $I:=[p_{-}, p_{+})$ is branching free. 

  In order to prove maximality, assume that for some $p_1\leq p <
  p_2$, $[p_1,p_2)$ is branching free. 
  First we show that $p_{+} \not\in [p_1,p_2)$: If $p_{+} \in
  [p_1,p_2)$ then, since $[p_1,p_2)$ is branching free, for every
  $p' < p_+$, $p'$ and $p_1$ are comparable. By upwards linearity,
  if $p'$ and $p_1$ are comparable then also $p'$ and $p$ are
  comparable. Hence, every $p' < p_+$ is comparable to $p$. But this
  contradicts the choice of $p_+$. Hence, we have $p_{+} \not\in [p_1,p_2)$.
  
  Since $p < p_2$ and $p < p_+$, $p_2$ and $p_+$ are comparable.
  We cannot have $p_+ < p_2$. Hence, $\max(p_2,p_{+}) = p_+$.
  By Lemma~\ref{lemma-disjoint-or-equal}, the interval
  \begin{equation*}
    [\min(p_1,p_{-}), p_{+}) = [\min(p_1,p_{-}), \max(p_2,p_{+}))    
  \end{equation*}
  is branching free. 
  But by definition of $p_{-}$, $I$ is the minimal branching free
  interval of the form $[p,p_{+})$. Thus, $p_{-}\leq p_1$ and
  $[p_1,p_2)$ is contained in $I$. This shows that $I$ is maximal
  branching free. 
\end{proof}

Lemma~\ref{lemma-disjoint-or-equal} and
\ref{lem:WulpoPartitionMaxBranchingFreeIntervals} 
imply that every \wulpo is partitioned into its maximal branching free intervals.

\begin{lem}\label{lem:SeparationMBFreeInt}
  Let $\Pf$ be a \wulpo and  $I\neq J$ two different maximal branching
  free intervals of $\Pf$. For $p_1\in I$ and $p_2\in J$, if $p_1<p_2$
  then there is a $p\in \Pf$ such that $p < p_2$ but $p$ and $p_1$ are
  incomparable, i.e., $\downarrow p_2 \neq \downarrow p_1 \cup
  [p_1,p_2)$. 
\end{lem}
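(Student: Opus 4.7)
The plan is to identify a witness $p$ explicitly by exploiting the maximal branching free interval through $p_1$. I would write $I=[p_-,p_+)$ for this interval (whose existence and uniqueness come from Lemma~\ref{lem:WulpoPartitionMaxBranchingFreeIntervals}). Since $I\neq J$ are distinct maximal branching free intervals, the partition remark immediately following Lemma~\ref{lem:WulpoPartitionMaxBranchingFreeIntervals} gives $I\cap J=\emptyset$, so $p_2\notin I$. Combined with $p_-\leq p_1<p_2$, this forces $p_+\leq p_2$, and in particular $p_+$ is an element of $P$, not $\infty$.

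The key step is to extract an element $p<p_+$ incomparable to $p_1$. For this I would reuse the construction of $p_+$ inside the proof of Lemma~\ref{lem:WulpoPartitionMaxBranchingFreeIntervals}, now applied to the node $p_1$: by uniqueness, this construction must reproduce the same $I$, so its output $p_+$ coincides with the upper endpoint of $I$, and its very definition (the minimal element above $p_1$ admitting some smaller element incomparable to $p_1$) provides an element $p<p_+$ incomparable to $p_1$. Since $p<p_+\leq p_2$, this $p$ is a witness below $p_2$ as required.

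The reformulation $\downarrow p_2\neq \downarrow p_1\cup[p_1,p_2)$ is then immediate: the witness $p$ lies in $\downarrow p_2$ but, being incomparable to $p_1$, lies in neither $\downarrow p_1$ nor $[p_1,p_2)$. The only slightly delicate point is identifying the construction's $p_+$ with the upper endpoint of $I$, and this is precisely what the uniqueness statement of Lemma~\ref{lem:WulpoPartitionMaxBranchingFreeIntervals} delivers, so no new work is required.
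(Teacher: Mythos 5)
Your proposal is correct, and it takes a genuinely different route from the paper. The paper argues by contradiction: it assumes $\downarrow p_2 = \downarrow p_1 \cup [p_1,p_2)$, observes that this forces $K := \{p \mid p_1 \leq p \leq p_2\}$ (written as $[p_1,p_m)$ for a suitable $p_m$) to be branching free, and then applies the merging argument of Lemma~\ref{lemma-disjoint-or-equal} to see that $I \cup K \cup J$ is branching free, contradicting maximality of $I$ and $J$. Your proof is direct: you identify the upper endpoint $p_+$ of the maximal branching free interval through $p_1$, use disjointness of $I$ and $J$ together with upward linearity to pin down $p_+ \leq p_2$ (in particular $p_+ \in P$), and then extract the witness $p < p_+$ from the defining property of $p_+$. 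Both arguments are sound. The trade-off is that the paper's proof only invokes the \emph{statements} of Lemmas~\ref{lemma-comparable}--\ref{lem:WulpoPartitionMaxBranchingFreeIntervals}, whereas yours reaches inside the \emph{proof} of Lemma~\ref{lem:WulpoPartitionMaxBranchingFreeIntervals} to reuse the explicit construction of $p_+$ (as the least element above $p_1$ admitting a smaller element incomparable to $p_1$); to be fully self-contained you would either have to promote that characterization of $p_+$ to a separate claim, or rederive it in place, which is short but not free. In exchange, your argument is constructive and makes the witness $p$ visible rather than hidden behind a contradiction. One small remark: you implicitly use the fact that a nonempty branching free interval determines its upper endpoint (so that the $p_+$ produced by the construction really is the same object as the upper endpoint of $I$); this follows from upward linearity and well-foundedness and is worth a sentence.
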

\begin{proof}
  Heading for a contradiction, assume that 
  \begin{equation}
    \label{eq:SeparationMBFreeInt}
  \downarrow p_2 = \downarrow p_1 \cup [p_1,p_2).    
  \end{equation}
  Note that there is a $p_m\in \Pf\cup\{\infty\}$ such that
  $K := [p_1,p_m)=\{p\in\Pf\mid p_1 \leq p \leq p_2\}$.
  Due to \eqref{eq:SeparationMBFreeInt}, $K$ is branching free.
  Using Lemma \ref{lemma-disjoint-or-equal} we conclude that
  $I\cup K \cup J$ is branching free contradicting maximality of $I$
  and $J$.
\end{proof}
We finally discuss the replacement of maximal branching free intervals in \wulpo's. 
Let $\Pf=(P, \leq)$ be a \wulpo and $I=[p_0,p_1)$ a maximal branching free
interval. For each $i\in I$ and each $p\in P\setminus I$ we have
\begin{align*}
  &\text{ $p \leq i$ if and only if $p\leq i'$ for all $i'\in I$ and}\\
  &\text{ $p \geq i$ if and only if $p\geq i'$ for all $i'\in I$}
\end{align*}
Let $J$ be some ordinal. Then we can replace $I$ by $J$ and obtain the
\wulpo
$\Pf'=((P\setminus I) \sqcup J, \leq')$ where
$p \leq' p'$ if
\begin{enumerate}[(1)]
\item $p,p'\in P\setminus I$ and $p\leq p'$, 
\item $p,p'\in J$ and $p\leq p'$ in $J$, 
\item $p\in P\setminus I, p'\in J$ and $p\leq p_0$, or
\item $p\in J$, $p'\in P\setminus I$ and $p_0\leq p'$.
\end{enumerate}
Note that the maximal branching free intervals of $\Pf'$ are $J$ and
the maximal branching free intervals of 
$\Pf$ except for $I$. 
Analogously, one can also replace simultaneously all maximal branching free intervals
of a \wulpo by new ordinals. This kind of replacement will be used
in Lemma \ref{lem:wulpo-box-respects-branching}.

\subsection{Ordinal arithmetic and the ordinal rank}
\label{sec:Rank}

We use standard terminology concerning ordinals; see e.g. \cite{Ros82}.
We briefly recall the operations of natural sum and natural product
on ordinals. 
The natural sum of ordinals $\alpha$ and
$\beta$, denoted by
$\alpha \oplus \beta$, is the largest (well-founded) linearization of 
$\alpha\sqcup \beta$ (cf.~\cite{Carruth42}, where the following
statements can be found).
It is commutative, associative and strictly monotone, i.e., $\alpha \oplus \beta <
\alpha \oplus \gamma$ if and only if $\beta < \gamma$. Furthermore,
\begin{align} \label{Eqn-Oplus-Limit}
  \alpha \oplus \beta = \omega^\gamma\text{ implies }\alpha =
  \omega^\gamma\text{ or }\beta = \omega^\gamma.
\end{align}
The natural product (cf.~\cite{Carruth42}) of $\alpha$ and $\beta$,
denoted as $\alpha\otimes \beta$, is
the largest  (well-founded) linearization of the direct (Cartesian)
product of $\alpha$ 
and $\beta$ with the componentwise order.
It is commutative, associative, and strictly monotone, i.e., for
$\alpha\neq 0$ we have $\alpha
\otimes \beta < 
\alpha \otimes \gamma$ if and only if $\beta < \gamma$. Furthermore,
\begin{align} \label{Eqn-Otimes-Limit}
  \alpha \otimes \beta = \omega^{\omega^\gamma}\text{ implies }\alpha =
  \omega^{\omega^\gamma}\text{ or }\beta = \omega^{\omega^\gamma}.
\end{align}

\begin{lem}\label{lem:subdistributivity-plus-natprod}
  Let $\alpha, \beta, \gamma$ be ordinals. Then
  \begin{align*}
    (\alpha \otimes \beta) + (\alpha \otimes \gamma) \leq
    \alpha \otimes (\beta + \gamma)
  \end{align*}
  where $+$ is the usual sum of ordinals. 
\end{lem}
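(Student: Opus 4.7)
The plan is to exploit the definition of $\otimes$ as the largest well-founded linearization of the Cartesian product with componentwise order, and to build an explicit linearization of $\alpha \times (\beta+\gamma)$ of order type $(\alpha \otimes \beta) + (\alpha \otimes \gamma)$.

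First I would observe that, as a set, $\alpha \times (\beta+\gamma) = (\alpha \times \beta) \sqcup (\alpha \times \gamma)$, and I would inspect the componentwise order on this set. Since $\beta + \gamma$ is the ordinal sum (every element of $\beta$ precedes every element of $\gamma$), for $(a,x) \in \alpha \times \beta$ and $(a',y) \in \alpha \times \gamma$ one has $(a,x) \leq (a',y)$ componentwise iff $a \leq a'$, while $(a',y) \leq (a,x)$ componentwise is impossible.

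Next I would pick, by definition of the natural product, well-founded linearizations $L_1$ of $(\alpha \times \beta, \leq_{\text{comp}})$ and $L_2$ of $(\alpha \times \gamma, \leq_{\text{comp}})$ of order types $\alpha \otimes \beta$ and $\alpha \otimes \gamma$ respectively. I would then let $L$ be the linear order on $(\alpha \times \beta) \sqcup (\alpha \times \gamma)$ that places all elements of $\alpha \times \beta$ (ordered by $L_1$) strictly below all elements of $\alpha \times \gamma$ (ordered by $L_2$); this is well-founded and has order type $(\alpha \otimes \beta) + (\alpha \otimes \gamma)$.

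The key verification is that $L$ extends the componentwise order on $\alpha \times (\beta+\gamma)$: pairs lying entirely in $\alpha \times \beta$ or entirely in $\alpha \times \gamma$ are handled by $L_1$ and $L_2$, a comparability $(a,x) \leq (a',y)$ with $x \in \beta$, $y \in \gamma$ is automatic in $L$ by construction, and the reverse cross-direction never occurs componentwise as noted. Hence $L$ is a well-founded linearization of the componentwise order on $\alpha \times (\beta+\gamma)$, and by maximality of $\otimes$ we obtain
\begin{equation*}
(\alpha \otimes \beta) + (\alpha \otimes \gamma) = \mathrm{otp}(L) \leq \alpha \otimes (\beta + \gamma),
\end{equation*}
which is the desired inequality. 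I do not expect any real obstacle beyond the cross-case bookkeeping in the verification.
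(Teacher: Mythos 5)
Your proof is correct and follows essentially the same route as the paper: view $(\alpha\otimes\beta)+(\alpha\otimes\gamma)$ as the order type of a concrete well-founded linearization of $\alpha\times(\beta+\gamma)$ (namely, a maximal linearization of $\alpha\times\beta$ followed by one of $\alpha\times\gamma$), check that it extends the componentwise order, and invoke maximality of $\otimes$. The paper phrases the construction slightly more tersely, but the decomposition, the cross-case observation, and the final appeal to maximality are all the same.
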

\begin{proof}
  Observe that 
  $(\alpha \otimes \beta) + (\alpha \otimes \gamma)$ is a well-order
  with domain $(\alpha\times \beta) \sqcup (\alpha\times \gamma) = 
  \alpha \times (\beta \sqcup \gamma)$ such
  that for all $a<a'\in \alpha$, $b<b'\in \beta$ and $c<c'\in \gamma$
  we have 
  $(a,b) < (a',c)$, 
  $(a,b) < (a',b)$, 
  $(a,c) < (a',c)$, 
  $(a,b) < (a,b')$, and
  $(a,c) < (a,c')$. 
  Thus, we conclude that it is a linearization of 
  the product of the orders $\alpha$ and $\beta+\gamma$. 
  Since $\alpha \otimes (\beta + \gamma)$ is the largest linearization
  of these two orders, the claim follows immediately.
\end{proof}
Let us recall the  (ordinal) rank of a
well-founded partial order.
For a set of ordinals $M$, we denote by $\sup(M)$ the
supremum of $M$, where $\sup(\emptyset) = 0$.

\begin{defi}
  Let $\Pf$ be a well-founded partial order.
  The \emph{rank}  of an element
  $p\in \Pf$ is
  inductively defined by
  \begin{align*}
    \rank(p, \Pf)=\sup\{\rank(p',\Pf)+1\mid p'<p\in P\}.
  \end{align*}
  The rank of $\Pf$ (also called the \emph{ordinal height} of $\Pf$) is
  \begin{align*}
    \rank(\Pf)=\sup\{\rank(p,\Pf)+1 \mid p\in P\}.
  \end{align*}
\end{defi}
If the partial order $\Pf$ is clear from the context, we will write
$\rank(p)$ 
instead of $\rank(p, \Pf)$. Note that $\rank(p, \Pf) =
\rank(\Pf{\restriction}_{\downarrow p})$. Also note that the rank
of the ordinal $\alpha$ is exactly $\alpha$.
Next, we collect some useful facts about the rank. 

\begin{lem} \label{lemma-rank-substructure}
  Let $\Pf=(P,\leq)$ be  a well-founded partial order and
  $S\subseteq P$ a subset. The induced suborder
  $\Pf{\restriction}_S$ has rank at most
  $\rank(\Pf)$. 
\end{lem}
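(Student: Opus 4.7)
The plan is to prove the pointwise strengthening that for every $p \in S$ one has $\rank(p, \Pf{\restriction}_S) \leq \rank(p, \Pf)$, and then take the supremum over $S$ to deduce the statement of the lemma.

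First I would establish the pointwise inequality by well-founded (transfinite) induction along the order $\Pf$ itself. Fix $p \in S$ and assume as induction hypothesis that $\rank(p', \Pf{\restriction}_S) \leq \rank(p', \Pf)$ holds for every $p' \in S$ with $p' < p$. From the definition of rank and the fact that every element strictly below $p$ in $\Pf{\restriction}_S$ is also strictly below $p$ in $\Pf$, we obtain
\begin{equation*}
  \rank(p, \Pf{\restriction}_S)
  = \sup\{\rank(p', \Pf{\restriction}_S)+1 \mid p' < p,\ p' \in S\}
  \leq \sup\{\rank(p', \Pf)+1 \mid p' < p\}
  = \rank(p, \Pf),
\end{equation*}
where the inequality uses the induction hypothesis on the first coordinate and monotonicity of $\sup$ together with enlarging the index set from $\{p' \in S : p' < p\}$ to $\{p' \in P : p' < p\}$.

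Having established the pointwise bound, the lemma follows by another application of monotonicity of $\sup$:
\begin{equation*}
  \rank(\Pf{\restriction}_S)
  = \sup\{\rank(p, \Pf{\restriction}_S)+1 \mid p \in S\}
  \leq \sup\{\rank(p, \Pf)+1 \mid p \in S\}
  \leq \sup\{\rank(p, \Pf)+1 \mid p \in P\}
  = \rank(\Pf).
\end{equation*}
Since the argument only relies on the inductive definition of $\rank$ and the basic observation that the strict order of $\Pf{\restriction}_S$ is contained in that of $\Pf$, there is no real obstacle; the only point that needs mild care is the legitimacy of the well-founded induction, which is justified by the fact that $\Pf$ itself is well-founded.
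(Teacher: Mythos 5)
Your proof is correct and follows essentially the same route as the paper: the paper also establishes the pointwise bound $\rank(s,\Pf{\restriction}_S)\leq\rank(s,\Pf)$ by induction (the paper phrases it as induction on $\rank(s,\Pf)$, whereas you use well-founded induction along $\Pf$, which amounts to the same thing) and then passes to the supremum.
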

\begin{proof}
  By induction on the rank of an element $s\in S$ (with respect to
  $\Pf$) one easily shows
  that the rank of $s$ in $\Pf{\restriction}_S$ is bounded by the rank
  of $s$ in $\Pf$. 
\end{proof}

We will also use a  result proved by Khoussainov and Minnes on
decomposition of partial orders into suborders. 

\begin{lem}[\cite{KhMi09}, Lemma 3.3]
  \label{lem:rankNaturalSum}
  Let $\Pf=(P, \leq)$ be a well-founded partial order, and let $P =
  P_1 \sqcup P_2$ be 
  a partition of the domain of $\Pf$. Setting
  $\Pf_i = \Pf\restriction_{P_i}$ for $i \in \{1,2\}$,
  we have $\rank(\Pf)\leq \rank(\Pf_1) \oplus \rank(\Pf_2)$.
\end{lem}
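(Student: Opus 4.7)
The plan is to prove the bound by transfinite induction on $\rank(p,\Pf)$ for individual elements $p \in P$, using auxiliary rank functions that separately track the contribution of each side of the partition. For $i \in \{1,2\}$ and any $p \in P$, set
\[
\beta_i(p) := \sup\{\rank(q,\Pf_i)+1 \mid q \in P_i,\ q < p\},
\]
so that $\beta_i(p) \leq \rank(\Pf_i)$, and for $p \in P_i$ write also $\rho_i(p) := \rank(p,\Pf_i)$, which satisfies $\rho_i(p)+1 \leq \rank(\Pf_i)$. The central claim to prove inductively is that for every $p \in P_i$,
\[
\rank(p,\Pf) \leq \rho_i(p) \oplus \beta_{3-i}(p).
\]

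To verify the claim, I expand $\rank(p,\Pf) = \sup\{\rank(q,\Pf)+1 \mid q<p\}$ and split on whether a predecessor $q<p$ lies in $P_i$ or in $P_{3-i}$. If $q \in P_i$, the induction hypothesis together with the identity $(\alpha\oplus\beta)+1 = (\alpha+1)\oplus\beta$ (valid since $\alpha \oplus 1 = \alpha+1$ and $\oplus$ is associative and commutative) gives
\[
\rank(q,\Pf)+1 \leq (\rho_i(q)+1)\oplus\beta_{3-i}(q) \leq \rho_i(p)\oplus\beta_{3-i}(p),
\]
using $\rho_i(q)+1 \leq \rho_i(p)$ (because $q < p$ in $\Pf_i$), $\beta_{3-i}(q) \leq \beta_{3-i}(p)$ (because $\{r \in P_{3-i} : r<q\} \subseteq \{r \in P_{3-i} : r<p\}$), and strict monotonicity of $\oplus$. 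If $q \in P_{3-i}$, the IH yields analogously
\[
\rank(q,\Pf)+1 \leq (\rho_{3-i}(q)+1) \oplus \beta_i(q) \leq \beta_{3-i}(p) \oplus \rho_i(p),
\]
where $\rho_{3-i}(q)+1 \leq \beta_{3-i}(p)$ is immediate from the definition of $\beta_{3-i}(p)$, and $\beta_i(q) \leq \rho_i(p)$ follows since any $r \in P_i$ with $r<q$ satisfies $r<p$ in $\Pf_i$, so $\rank(r,\Pf_i)+1 \leq \rho_i(p)$. Commutativity of $\oplus$ finishes this case, and taking the supremum over all $q<p$ establishes the inductive claim.

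From the claim, for every $p \in P_i$,
\[
\rank(p,\Pf)+1 \leq (\rho_i(p)+1) \oplus \beta_{3-i}(p) \leq \rank(\Pf_i) \oplus \rank(\Pf_{3-i}) = \rank(\Pf_1) \oplus \rank(\Pf_2),
\]
and the desired bound $\rank(\Pf) = \sup_{p \in P}(\rank(p,\Pf)+1) \leq \rank(\Pf_1) \oplus \rank(\Pf_2)$ follows at once. The main obstacle, and the reason one cannot simply bound $\rank(p,\Pf)$ by $\rank(\Pf_i) \oplus \rank(\Pf_{3-i})$ from the outset, is the need for a bookkeeping quantity (here $\beta_{3-i}(p)$) that both grows strictly when moving up inside $P_{3-i}$ and yet is bounded by $\rho_i(p)$ whenever we cross from $P_{3-i}$ back into $P_i$; designing the inductive claim so that strict monotonicity of $\oplus$ can be invoked in both cases is the critical step.
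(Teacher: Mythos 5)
The paper does not prove this lemma---it is cited directly from Khoussainov and Minnes \cite{KhMi09} without an argument---so there is no in-paper proof to compare against. Your proof is correct. The transfinite induction on $\rank(p,\Pf)$ is well-founded since $q<p$ implies $\rank(q,\Pf)<\rank(p,\Pf)$; the identity $(\alpha\oplus\beta)+1 = (\alpha+1)\oplus\beta$ holds because $\gamma+1=\gamma\oplus 1$ and $\oplus$ is associative and commutative; and the two bounding steps $\rho_i(q)+1\leq\rho_i(p)$ and $\beta_i(q)\leq\rho_i(p)$ (for $q$ a predecessor of $p$) are verified correctly, so weak monotonicity of $\oplus$ (which follows from the stated strict monotonicity) closes both cases. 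The final passage from the pointwise claim to $\rank(\Pf)\leq\rank(\Pf_1)\oplus\rank(\Pf_2)$ via $\rho_i(p)+1\leq\rank(\Pf_i)$ and $\beta_{3-i}(p)\leq\rank(\Pf_{3-i})$ is also sound. One small observation: when $p\in P_i$ you have $\rho_i(p)=\beta_i(p)$, so the inductive claim could be stated uniformly as $\rank(p,\Pf)\leq\beta_1(p)\oplus\beta_2(p)$ for all $p\in P$, which avoids the case split in the statement (though not in the proof). This is the natural argument and is, to my knowledge, essentially what one finds in \cite{KhMi09}.
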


\begin{lem} \label{lem:IntervalRankShift}
  Let $\Pf=(P, \leq)$ be a well-founded partial order and $p_1\in P$, 
  $p_2 \in P \cup\{\infty\}$ with $p_1 \leq p_2$. If
  $[p_1,p_2)$ is branching free, then 
  $\rank(p)=\rank(p_1) + \rank([p_1,p))$
  for each $p\in [p_1,p_2)$. 
  Moreover, if $p_2 \in P$, then
  $\rank(p_2) \geq \rank(p_1) + \rank([p_1,p_2))$. 
\end{lem}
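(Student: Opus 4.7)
The plan is to prove the equality $\rank(p) = \rank(p_1) + \rank([p_1,p))$ for $p \in [p_1,p_2)$ by transfinite induction on $p$ (legitimate since $\Pf$ is well-founded), and then to derive the inequality for $p_2$ from essentially the same supremum computation. The base case $p = p_1$ is trivial, since $[p_1,p_1) = \emptyset$ has rank $0$. For the inductive step, fix $p > p_1$ in $[p_1,p_2)$ and assume the claim for every $q \in [p_1,p)$. Branching freeness of $[p_1,p_2)$ at $p$ yields the disjoint decomposition $\{x \in P \mid x < p\} = \downarrow p_1 \sqcup [p_1,p)$ in which every element of $\downarrow p_1$ is strictly below every element of $[p_1,p)$. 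The defining supremum of $\rank(p)$ therefore splits as
\begin{align*}
\rank(p) = \max\!\bigl(\sup\{\rank(x)+1 \mid x < p_1\},\ \sup\{\rank(x)+1 \mid x \in [p_1,p)\}\bigr).
\end{align*}

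The first supremum equals $\rank(p_1)$ by definition. For the second, the induction hypothesis replaces $\rank(x)$ by $\rank(p_1) + \rank([p_1,x))$ for each $x \in [p_1,p)$, and continuity of ordinal addition in its right argument lets me pull $\rank(p_1)$ outside the supremum. Using the identity $\rank(x, \Pf\restriction_{[p_1,p)}) = \rank(\Pf\restriction_{[p_1,x)}) = \rank([p_1,x))$ (the downward closure of $x$ inside $\Pf\restriction_{[p_1,p)}$ is exactly $[p_1,x)$), the remaining inner supremum $\sup\{\rank([p_1,x))+1 \mid x \in [p_1,p)\}$ is precisely $\rank([p_1,p))$ by the definition of rank. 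Since $p_1 \in [p_1,p)$ contributes at least $1$ to this inner supremum, the second piece strictly exceeds $\rank(p_1)$, so the $\max$ is attained on the right, giving $\rank(p) = \rank(p_1) + \rank([p_1,p))$.

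For the moreover part, suppose $p_2 \in P$. Then $[p_1,p_2) \subseteq \downarrow p_2$, so
\begin{align*}
\rank(p_2) \geq \sup\{\rank(x)+1 \mid x \in [p_1,p_2)\}.
\end{align*}
Applying the already-proved first part of the lemma to each $x \in [p_1,p_2)$ and repeating the pull-out computation from the induction step shows that this lower bound equals $\rank(p_1) + \rank([p_1,p_2))$, as required.

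The only technical subtlety, and the step I expect to dwell on briefly, is the continuity identity $\alpha + \sup S = \sup(\alpha + S)$ used to pull $\rank(p_1)$ past the supremum. This is standard for ordinal addition in its right argument and splits into two cases: either $\sup S \in S$, in which case both sides trivially equal $\alpha + \max S$, or $\sup S$ is a limit ordinal with $S$ cofinal below it, and then the continuity of $\beta \mapsto \alpha + \beta$ at limits yields equality. No upwards linearity is used anywhere; only branching freeness and well-foundedness of $\Pf$ enter the argument.
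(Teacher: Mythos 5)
Your proof is correct and takes a recognizably different route from the paper. The paper proves the equality by transfinite induction on $\alpha = \rank([p_1,p))$, splitting into three cases (zero, successor, limit), choosing witness elements in the successor case and bounding the supremum from both sides in the limit case. You instead do a single well-founded induction on $p$ itself, uniformly decomposing the defining supremum of $\rank(p)$ across the partition $\downarrow p = \downarrow p_1 \sqcup [p_1,p)$ guaranteed by branching freeness, applying the induction hypothesis to the elements of $[p_1,p)$, and then invoking the continuity of ordinal addition in its right argument to pull $\rank(p_1)$ out of the supremum. What your approach buys is exactly the collapse of the paper's three-way case analysis: the continuity identity $\alpha + \sup S = \sup(\alpha + S)$ does in one stroke the work that the paper spreads over Cases 2 and 3. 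The paper's version has the small advantage of not leaning on the continuity identity as an explicit ingredient, re-deriving its content inline where needed. One minor point in your ``moreover'' part: if $p_1 = p_2$ then $[p_1,p_2)=\emptyset$ and the supremum is $0$ rather than $\rank(p_1)$, so the displayed lower bound is not literally $\rank(p_1)+\rank([p_1,p_2))$; of course the asserted inequality $\rank(p_2)\geq\rank(p_1)$ still holds trivially in that degenerate case (the paper's corresponding chain of equalities has the same quiet gap). Both proofs are correct, and yours is arguably the tidier presentation.
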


\begin{proof}
  Let $p\in [p_1,p_2)$. We prove the first statement of the lemma
  by induction on $\alpha := \rank( [p_1,p) )$. We distinguish three
  cases:

  \medskip
  \noindent 
  {\em Case 1.} $\alpha=0$. We must have  $[p_1,p)=\emptyset$, i.e.,
  $p_1=p_2$. Hence $\rank(p_2) = \rank(p_1) + 0$. 
  
  \medskip
  \noindent 
  {\em Case 2.} $\alpha$ is a successor ordinal, i.e., $\alpha=
  \beta+1$. Then by   definition of the rank there is
    an element $p' \in [p_1,p)$ such that
    \begin{equation*}
    	\rank(p', [p_1,p)) = \rank([p_1,p'))=\beta.
    \end{equation*}
    Moreover, $\rank(p) \geq \rank(p') + 1$ since $p' < p$.
    By induction
    hypothesis, we  conclude that 
    \begin{equation*}
    	\rank(p) \geq \rank(p') + 1 = \rank(p_1) + \beta + 1 =
    	\rank(p_1)+\alpha.
    \end{equation*}
    Moreover, for every $p' \in [p_1,p)$ we must have $\rank([p_1,p')) \leq \beta$.
    Since $[p_1,p_2)$ is branching free, we have $\downarrow p' =
    \downarrow p_1 + [p_1,p')$ for every $p' < p$. 
    Therefore 
    $$
    \rank(p) = \sup \{ \rank(p')+1 \mid p' \in [p_1,p) \} .
    $$
    By induction, 
    $$
    \rank(p') = \rank(p_1)+\rank([p_1,p')) \leq \rank(p_1)+\beta .
    $$
    Hence, $\rank(p) \leq \rank(p_1)+\beta+1 = \rank(p_1)+\alpha$.

    \medskip
    \noindent 
    {\em Case 3.} $\alpha$ is a limit ordinal. 
    Since $[p_1,p_2)$ is branching free, for each $p'<p$ one of the
    following holds. 
    \begin{enumerate}[(1)]
    \item $p'\in [p_1,p)$ and there is some
      $\beta<\alpha$ such that $\rank([p_1,p'))=\beta$.
      By induction hypothesis, $\rank(p') = \rank(p_1)+\beta$.
    \item     $p'<p_1$ whence $\rank(p') \leq \rank(p_1)$. 
    \end{enumerate}
    Thus, we have
    \begin{align*}
    	\rank(p) &\leq   \sup\{\rank(p_1)+\beta+1 \mid \beta<\alpha\} \\
		&= \sup\{\rank(p_1)+\beta \mid \beta < \alpha\} \\
                &= \rank(p_1) + \alpha.
    \end{align*}
    On the other hand, $\rank([p_1,p))=\alpha$ implies that for each
    $\beta<\alpha$ there is a node ${p_\beta\in [p_1,p)}$
    such that $\beta \leq \rank([p_1,p_\beta)< \alpha$. 
    By induction, we get
    $\rank(p_\beta) \geq \rank(p_1)+\beta$. 
    Thus,
    \begin{align*}
    	\rank(p) &\geq  \sup\{\rank(p_\beta) +1 \mid \beta<\alpha\} \\
		&\geq  \sup\{ \rank(p_1)+\beta +1  \mid \beta<\alpha\} \\
		&= \sup\{ \rank(p_1)+\beta   \mid \beta<\alpha\} \\
                &= \rank(p_1) + \alpha.
    \end{align*}
    Finally, the second statement of the lemma follows easily:
    \begin{align*}
    	\rank(p_2) & \geq  \sup\{ \rank(p)+1 \mid p \in [p_1,p_2) \} \\
        & =  \sup\{ \rank(p_1)+ \rank([p_1,p))+1 \mid p \in [p_1,p_2) \}\\
        & =  \rank(p_1) + \rank([p_1,p_2))
    \end{align*}
    This concludes the proof of the lemma.
 \end{proof}

\begin{cor}\label{cor:IntervalRankShift}
  Let $\Pf$ be a partial order and $p_1 \leq p_2\in \Pf$ such that
  $[p_1,p_2)$ is branching free. For every $p\in [p_1,p_2)$,  
  $\rank(p) < \rank(p_1) + \rank([p_1,p_2))$.   
\end{cor}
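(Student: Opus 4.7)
The plan is to reduce the corollary directly to Lemma~\ref{lem:IntervalRankShift} together with the strict monotonicity of ordinal addition on the right.

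First, I would fix $p \in [p_1,p_2)$ and apply Lemma~\ref{lem:IntervalRankShift} to the branching free interval $[p_1,p_2)$ and this particular $p$. This yields the equality
\begin{equation*}
  \rank(p) \;=\; \rank(p_1) + \rank([p_1,p)).
\end{equation*}
Thus the corollary reduces to showing $\rank([p_1,p)) < \rank([p_1,p_2))$, since ordinary ordinal addition is strictly monotone in the right argument.

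Next, I would establish this strict inequality by looking at the rank of $p$ inside the substructure $\Pf{\restriction}_{[p_1,p_2)}$. Because $[p_1,p_2)$ is branching free, for every $q \in [p_1,p_2)$ with $q < p$ we automatically have $q \in [p_1,p)$; conversely every element of $[p_1,p)$ lies in $[p_1,p_2)$ and is strictly below $p$. Hence the strict downward closure of $p$ inside $\Pf{\restriction}_{[p_1,p_2)}$ coincides with $[p_1,p)$, and therefore
\begin{equation*}
  \rank\bigl(p,\Pf{\restriction}_{[p_1,p_2)}\bigr) \;=\; \rank([p_1,p)).
\end{equation*}
Now the defining supremum in $\rank([p_1,p_2)) = \sup\{\rank(q,\Pf{\restriction}_{[p_1,p_2)})+1 \mid q \in [p_1,p_2)\}$, together with $p \in [p_1,p_2)$, yields
\begin{equation*}
  \rank([p_1,p_2)) \;\geq\; \rank\bigl(p,\Pf{\restriction}_{[p_1,p_2)}\bigr)+1 \;=\; \rank([p_1,p))+1 \;>\; \rank([p_1,p)).
\end{equation*}

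Finally, I would combine the two previous displays: strict monotonicity of $+$ gives
\begin{equation*}
  \rank(p) \;=\; \rank(p_1) + \rank([p_1,p)) \;<\; \rank(p_1) + \rank([p_1,p_2)),
\end{equation*}
which is the desired conclusion. There is no real obstacle here; the only point worth a sentence is the identification of the downward closure of $p$ inside $[p_1,p_2)$ with $[p_1,p)$, which uses precisely the branching free assumption (so that elements of $\Pf$ below $p$ that are not in $[p_1,p)$ sit strictly below $p_1$ and are excluded from the interval).
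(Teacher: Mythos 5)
Your proof is correct, and since the paper states this corollary without proof, it fills in exactly the argument that was intended. The key point you correctly handle is that merely being a proper suborder does not force a strict drop in rank (e.g.\ $\omega\setminus\{0\}$ has the same rank as $\omega$), so the strict inequality $\rank([p_1,p)) < \rank([p_1,p_2))$ really does need the observation that $p$ itself witnesses rank $\rank([p_1,p))$ inside $\Pf{\restriction}_{[p_1,p_2)}$, which is where the branching free hypothesis is used to identify the downset of $p$ in the interval with $[p_1,p)$.
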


\subsection{Finitely labeled trees} \label{sec:tree}

We identify a non-empty, finite, prefix-closed subset
$T\subseteq\{0,1\}^*$ with the tree
${(T,\succeq)}$ and call $T$ a 
\emph{finite binary tree}. Note that $\varepsilon$ is the 
largest element of the inverse prefix relation $\succeq$ and hence the root of the tree.
We denote the set of all finite binary trees
as $\T^\fin_2$.
For $T \in \T^{\fin}_2$ let
\begin{align*}
	\fr(T) = T\{0,1\}\setminus T \quad \text{and} \quad
	\cl(T) = T \cup \fr(T)
\end{align*}
be the \emph{frontier} and \emph{closure} of $T$.
Notice that $\cl(T) \in \T^{\fin}_2$.

A finite {\em $\Sigma$-labeled} binary tree is a
pair $(T,\lambda)$, where $T \in \T^\fin_2$
and $\lambda: T\rightarrow \Sigma$ is a labeling
function. We denote the set of all
finite $\Sigma$-labeled binary trees by  $\T^\fin_{2,\Sigma}$; elements of
$\T^\fin_{2,\Sigma}$ will be denoted by lower case
letters ($s,t,\ldots$).
When $\Sigma$ is the singleton set $\{\sharp\}$,
we will simply consider a tree $t \in \T^\fin_{2,\Sigma}$ as
unlabeled, i.e., $t \in \T^\fin_2$.
The set of leaves of $t = (T,\lambda)$ is $\leaves(t)=\leaves(T,\succeq)$.
We use the following operations on labeled trees
$t=(T,\lambda) \in \T^\fin_{2,\Sigma}$.
For $d \in T$, the {\em subtree rooted at} $d$ is the tree
$t(d) =(U,\mu) \in \T^\fin_{2,\Sigma}$ with
$U = \{ u \in \{0,1\}^*\mid du \in T\}$ and
$\mu(u) = \lambda(du)$.
For trees $t_1,\ldots,t_n \in \T^\fin_{2,\Sigma}$
and nodes $d_1,\ldots,d_n \in \cl(T)$ forming an antichain
(i.e., $d_i$ is not a prefix of $d_j$ for $i\neq j$)
we consider the tree
$t[d_1/t_1,\dotsc,d_n/t_n]\in \T^\fin_{2,\Sigma}$
which is obtained from $t$ by simultaneously replacing for each $i$ the subtree rooted at $d_i$ by $t_i$. Formally,
$t[d_1/t_1,\ldots,d_n/t_n]=(V,\nu)$ is defined by
\begin{align*}
	V &= \left(T \setminus \left(\{ d_1,\ldots,d_n \}\{0,1\}^* \right)\right) \cup
		\bigcup_{1\leq i\leq n}Â d_iT_i   \text{ and}\\
	\nu(v) &= \begin{cases}
		\lambda_i(u) & \text{if $v=d_iu$ for some (unique)
			$i \in \{1,\ldots,n\}$,} \\
		\lambda(v) & \text{otherwise,}
	\end{cases}
\end{align*}
where $t_i=(T_i,\lambda_i)$.

We fix a new symbol $\diamond \not\in \Sigma$. Let
$\Sigma_\diamond$ denote the set $\Sigma \sqcup \{\diamond\}$ and
$\Sigma_\diamond^n$ the $n$-fold product 
$\Sigma_\diamond\times \dots \times \Sigma_\diamond$.
Let ${\bar t=(t_1,\ldots,t_n) \in (\T^{\fin}_{2,\Sigma})^n}$  be a
tuple of trees with $t_i=(T_i,\lambda_i)$. 
The \emph{convolution} of $\bar t$ is the $\Sigma_\diamond^n$-labeled
binary tree  
$\otimes\bar t = (T,\lambda) \in
\T^{\fin}_{2,\Sigma_\diamond^n}$
defined by
\begin{align*}
	T &= \bigcup_{1\leq i\leq n} T_i \text{ and}\\
	\lambda(u) &= (\lambda_1'(u),\ldots,\lambda_n'(u))\,, \text{
          where}\\
	\lambda_i'(u) &= \begin{cases}
		\lambda_i(u) & \text{if $u\in T_i$,} \\
		\diamond & \text{if $u\in T\setminus T_i$.}
	\end{cases}
\end{align*}
Instead of $\otimes (t_1, t_2, \dots, t_n)$ we sometimes also write 
$t_1\otimes\cdots\otimes t_n$.
Finally, for any $n$-ary relation
$R\subseteq(\T^{\fin}_{2,\Sigma})^n$
we define
\begin{equation*}
	\otimes R
	= \{ \otimes \bar t \mid \bar t \in R \}
	\subseteq \T^{\fin}_{2,\Sigma_\diamond^n}\,.
\end{equation*}

\subsection{Tree automata and tree-automatic structures}

Let $\Sigma$ be a finite alphabet.
A (top-down) \emph{tree automaton}
over $\Sigma$ is a tuple $\A = (Q,\Delta, I,F)$, where
$Q$ is the finite set of states, $I\subseteq Q$ is the set of initial
states, $F \subseteq Q$ is the set of final states, and
\mbox{$\Delta\subseteq (Q\setminus F)\times\Sigma \times Q \times Q$}
is the transition relation.\footnote{In contrast to the usual
  definition, we disallow transitions starting in final
  states. Obviously this restricted version of tree-automata is
  equivalent to the usual one. Our definition simplifies some
  constructions because runs on a tree $t$ only assume final states on
  the frontier of $t$.}  Given a finite $\Sigma$-labeled binary tree
$t=(T,\lambda) \in \T^\fin_{2,\Sigma}$, a
{\em successful run} of $\A$ on $t$ is a mapping
$\rho:\cl(T) \rightarrow Q$ such that
\begin{enumerate}[(1)]
\item  $\rho(\varepsilon) \in I$,
\item  $\rho(\fr(T)) \subseteq F$, and
\item $(\rho(d),\lambda(d),\rho(d0), \rho(d1))\in\Delta$
for every $d\in T$.
\end{enumerate}
Let  $L(\A)$  be the set of all $t \in \T^\fin_{2,\Sigma}$
which admit a successful run of~$\A$. A set $L\subseteq \T^\fin_{2,\Sigma}$
is called {\em regular} if there is a tree automaton $\A$ over
$\Sigma$ with $L=L(\A)$.

A successful run $\rho$ of $\A$ on  $t=(T,\lambda)\in
\T^\fin_{2,\Sigma}$ naturally defines a
$\left((\Sigma\times (Q\setminus F)) \cup F\right)$-labeled binary
tree $\tree(\rho) = (\cl(T),\mu)$ such that
$\mu(d) = (\lambda(d),\rho(d))$ for every $d\in T$ and
${\mu(d) =  \rho(d)}$ for every $d \in \fr(T)$.
$\Run(\A,t)$ denotes the set of
all trees $\tree(\rho)$ where $\rho$ is a successful run of $\A$ on
$t$.  Let 
$\Run(\A) = \bigcup_{t\in L(\A)}\Run(\A,t)$. This is also a regular set:
a tree automaton for $\Run(\A)$ can be obtained by replacing every
transition $(p,a,p_1,p_2)\in \Delta$ by $(p,(a,p),p_1,p_2)$ and adding
transitions $(f,f,\bot,\bot)$ for each final state $f\in F$ where
$\bot$ is a new state which is the only final state of the new
automaton. For notational simplicity we refer to $\tree(\rho)$
simply as $\rho$.

A tree automaton $\A=(Q,\Delta,I,F)$ is called 
\emph{bottom-up deterministic} if 
\begin{enumerate}[(1)]
\item  for all $p,q\in Q$ and $a\in\Sigma$ there
  is a unique $r\in Q$ with $(r,a,p,q)\in\Delta$ and
\item $F$ is a
  singleton set.
\end{enumerate}
In this situation, for $F=\{q_0\}$, a tree $t=(T,\lambda) \in
\T^{\fin}_{2,\Sigma}$, a node $u\in\cl(T)$, a set 
$U\subseteq \fr(T)$
and a map $\zeta:U\to Q$ we recursively define
a state $\A(t,u,\zeta) \in Q$  by
\begin{equation*}
	\A(t,u,\zeta) = \begin{cases}
		r & \text{if $u\in T$ and
			$(r,\lambda(u),\A(t,u0,\zeta),\A(t,u1,\zeta))\in\Delta$,} \\
		\zeta(u) & \text{if $u\in U$,} \\
		q_0 & \text{if $u\in\fr(T)\setminus U$.}
	\end{cases}
\end{equation*}
We omit the second parameter if $u=\varepsilon$ and
the third one if $U=\emptyset$.
Notice that $\A(t,u)=\A(t(u))$ for all $u\in T$.
The tree $t$ admits a successful run $\rho$ of~$\A$
if and only if $\A(t)\in I$.
It is well known that for each tree automaton $\A$
one can compute a bottom-up deterministic tree automaton $\A'$
with $L(\A)=L(\A')$.

An $n$-ary relation $R\subseteq (\T^\fin_{2,\Sigma})^n$ is called
{\em tree-automatic} if there is a tree automaton $\A_R$ over
$\Sigma_\diamond^n$
such that $L(\A_R) = \otimes R$.
A relational structure $\Sf$ is called {\em tree-automatic} over $\Sigma$
if its domain $S$ is a regular subset of $\T^\fin_{2,\Sigma}$ and each of
its atomic relations $R^\Sf$ is tree-automatic; any tuple $\P$ of automata
that accept the domain and the relations of $\Sf$ is called a
{\em tree-automatic presentation} of $\Sf$. In this case, we write
$\Sf(\P)$ for $\Sf$. If a tree-automatic structure $\Sf$ is isomorphic to
a structure $\Sf'$, then $\Sf$ is called a {\em tree-automatic copy} of
$\Sf'$ and $\Sf'$ is {\em tree-automatically presentable}. In this paper
we sometimes abuse the terminology referring to $\Sf'$ as simply
tree-automatic and calling a tree-automatic presentation of $\Sf$ also
a tree-automatic presentations of $\Sf'$. We also simplify our statements by
saying ``given/compute a tree-automatic structure $\Sf$'' for
``given/compute a tree-automatic presentation $\P$ of a structure
$\Sf(\P)$''.
The structures $(\N, +)$ and $(\N, \cdot)$ are examples
of tree-automatic structures.  

A tree-automatic structure over a singleton alphabet
(i.e., the domain of the structure is a subset of $\T^\fin_2$) is called
{\em unary tree-automatic}. Moreover, let
$$
\T_{\bin} = \{ t \in \T^\fin_2 \mid \forall u \in t : u0 \in t
\Leftrightarrow u1 \in t \}
$$
be the set of all finite (unlabeled) full binary trees.
We will make use of the following simple lemma.

\begin{lem}\label{lemma-unary-alphabet}
For every tree-automatic structure $\Sf$  there is an isomorphic unary
tree-automatic structure $\Sf'$
whose domain is a subset of $\T_{\bin}$. Moreover, there is a
computable isomorphism from $\Sf$ to $\Sf'$.
\end{lem}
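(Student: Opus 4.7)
The plan is to exhibit a computable injection $\iota \colon \T^\fin_{2,\Sigma} \to \T_{\bin}$ and then take $\Sf'$ to be the structure with domain $\iota(S) \subseteq \T_\bin$ and with each $R^{\Sf'} = \{(\iota(t_1),\ldots,\iota(t_n)) \mid (t_1,\ldots,t_n) \in R^\Sf\}$; then $\iota$ is the required computable isomorphism from $\Sf$ to $\Sf'$. Since $\Sigma$ is finite, fix $k \geq 1$ with $2^k \geq |\Sigma|$ and an injection $c \colon \Sigma \to \{0,1\}^k$. For each $\sigma \in \Sigma$ I define a full binary gadget $G_\sigma$ of uniform depth $k{+}1$ as follows: its left spine is $\varepsilon, 0, 0^2, \ldots, 0^k$; at each position $0^j$ with $0 \leq j < k$ the right child $0^j 1$ is either kept as a leaf (if the $(j{+}1)$-st bit of $c(\sigma)$ is $0$) or given two leaf children $0^j 10$ and $0^j 11$ (if that bit is $1$); and the node $0^k$ has two leaf children $0^k 0$ and $0^k 1$ which serve as designated attachment points. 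Then $\iota(t)$ for $t = (T,\lambda) \in \T^\fin_{2,\Sigma}$ is defined recursively as $G_{\lambda(\varepsilon)}[0^k 0 / \iota(t(0)),\, 0^k 1 / \iota(t(1))]$, where a substitution is omitted if the corresponding child does not exist in $T$. Each $\iota(t)$ is full binary, and since every $\iota(s)$ extends a spine of length $k \geq 1$ it cannot be a single leaf; this yields injectivity of $\iota$.

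Regularity of $\iota(S)$ and of each $R^{\Sf'}$ is then verified by simulating the automata of $\Sf$ blockwise on blocks of depth $k{+}1$. A tree automaton for $\iota(S)$ walks top-down through each block, decoding its label from the right-spine children (bit $0$ iff $0^j 1$ is a leaf, bit $1$ iff it has two leaf children), and then simulates a transition of an automaton for $S$ at the two attachment leaves. For each $n$-ary relation $R^\Sf$ recognised by an automaton $\A$ over $\Sigma_\diamond^n$, the same blockwise simulation yields an automaton $\A'$ over $\{\sharp,\diamond\}^n$ accepting $\otimes R^{\Sf'}$: within each block, $\A'$ decodes the label $\sigma_i$ of each coordinate $i$ whose position is active throughout the block, and at the end of the block it performs the corresponding transition of $\A$. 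Finite memory suffices for the current depth modulo $k{+}1$, the partial bits read in each coordinate, and $\A$'s current state at each coordinate.

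The main point to verify is that the $\diamond$-padding in $\otimes (\iota(t_1), \ldots, \iota(t_n))$ faithfully reflects that in $\otimes (t_1, \ldots, t_n)$, so that $\A'$ can reconstruct the correct input of $\A$. This works because all gadgets share the uniform depth $k{+}1$: an original position $u = u_1 u_2 \cdots u_n \in \{0,1\}^*$ lies in the domain of $t_i$ if and only if its translate $0^k u_1 0^k u_2 \cdots 0^k u_n$ lies in the domain of $\iota(t_i)$, so the activity vector at each original position is preserved by $\iota$. Consequently $\A'$ can correctly reconstruct $(\lambda_1'(u), \ldots, \lambda_n'(u))$ at every original position and feed it to $\A$, which completes the proof.
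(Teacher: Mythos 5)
Your encoding is a genuinely different one from the paper's: the paper encodes each letter $i\in\{1,\dots,n\}$ by a variable-depth spine tree $a_i=\pref(\{0,10,\dots,1^{i-1}0,1^i\})$ and wraps a tree $t$ as $\unlabel(t)=\{\varepsilon,0\}\cup 00\unlabel(t_0)\cup 01\unlabel(t_1)\cup 1a_{\lambda(\varepsilon)}$, whereas you use uniform-depth gadgets $G_\sigma$ of depth $k+1$. The uniform depth makes the block structure across all coordinates line up rigidly, which is a nice simplification of the automaton simulation.

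However, there is a concrete error in the justification. You assert that an original position $u=u_1\cdots u_n$ lies in the domain of $t_i$ if and only if its translate $\tilde u = 0^ku_1 0^ku_2\cdots 0^ku_n$ lies in the domain of $\iota(t_i)$, and you use this to conclude that $\iota$ preserves the activity vector at each original position. This biconditional is false. The attachment points $0^k0$ and $0^k1$ are nodes of every gadget $G_\sigma$; when a substitution is ``omitted'' the attachment point simply stays as a leaf, so it is still in the domain. Consequently $\tilde u\in\iota(T_i)$ holds precisely when the \emph{parent} $u_1\cdots u_{n-1}$ is in $T_i$, not when $u$ itself is. For example, if $t_i=\{\varepsilon\}$ then $0\notin T_i$ but $0^k0\in\iota(T_i)$. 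As stated, your automaton $\A'$ would therefore read an all-$\sharp$ label tuple at $\tilde u$ and misidentify every coordinate as active. The repair is local and preserves the spirit of your construction: since $\iota(s)$ always contains the spine node $0$ (and is never a single leaf, as you already observed for injectivity), coordinate $i$ has $u\in T_i$ if and only if $\tilde u 0$ is active in coordinate $i$ of the convolution. So $\A'$ should defer its determination of the activity vector by one step, reading it off the label tuple at $\tilde u 0$ rather than at $\tilde u$ itself, before decoding $c(\sigma_i)$ along the block and firing the corresponding transition of $\A$ at the attachment points. With this correction, and with the phrase ``$\A$'s current state at each coordinate'' replaced by the single current state of $\A$ on the convolution, the construction goes through.
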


\begin{proof}
Let $\Sigma$ be some finite alphabet; w.l.o.g. assume that
$\Sigma = \{1,2,\ldots,n\}$. For $1 \leq i \leq n$
let $a_i = \pref( \{ 0, 10, 110, \ldots, 1^{i-1}0, 1^i\}) \in \T_{\bin}$.
We inductively define an injective mapping
${\unlabel : \T^{\fin}_{2,\Sigma} \to \T_{\bin}}$ as
follows:
Let $t = (T,\lambda) \in \T^{\fin}_{2,\Sigma}$ and for $i \in \{0,1\}$
let $t_i = t(i)$ be the subtree of $t$ rooted at node $i$,
where we set $t_i = \emptyset$ if $i \notin T$. Then
\begin{equation*}
\unlabel(t) = \{\varepsilon,0\} \cup 00\unlabel(t_0)
  \cup 01\unlabel(t_1) \cup 1 a_{\lambda(\varepsilon)},
\end{equation*}
where we set $\unlabel(\emptyset) = \emptyset$.
By induction over the size of $t$ it easily follows that the mapping
$\unlabel$ is indeed injective. We show that for
every tree-automatic relation $R\subseteq (\T^{\fin}_{2,\Sigma})^k$,
the relation
\[
\unlabel(R) = \{(\unlabel(t_1),\ldots, \unlabel(t_k))\mid
(t_1,\ldots,t_k)\in R\}
\]
is also tree-automatic. Suppose $\A=(Q, \Delta,I,F)$
is a tree automaton recognizing the relation $R$. We construct a
(top-down) tree automaton
$\A'$ as follows:
The state set of $\A'$ contains the set
$$
Q \;\cup\; (Q \times Q)
\;\cup\; \{\diamond, 1,2,\ldots, n\}^n \cup \{\bot\}
$$
(in addition $\A'$ contains some auxiliary states that we do not
specify),
where $\bot$ is final (states from $F$ are no longer final).
For a state $q \in Q$, $\A'$ contains the following
transitions (we omit here the unique node label, which formally should
be the second component of every transition):
$$
(q,  (p,r), (x_1, \ldots, x_k)) \text{ if } (q, (x_1, \ldots, x_k), p,
r) \in \Delta \quad \text{ and } (q,\bot,\bot) \text{ for } q \in F .
$$
For a state $(p,r) \in Q \times Q$, $\Delta'$ contains the following
transitions:
$$
((p,r), p, r).
$$
Finally, the tree automaton $\A'$ contains additional states and transitions
such that from a state $(x_1, \ldots, x_k) \in \{\diamond, 1,2,\ldots,n\}^n$
only the tree $\pref(s_1 \otimes s_2 \otimes \cdots \otimes s_n)$ with
\begin{equation*}
    s_i=\begin{cases}
      a_{x_i} \ \ &\text{if $x_i\in \{1,\ldots,n\}$,}\\
      \{\varepsilon\} \ \ &\text{if $x_i=\diamond$}.
    \end{cases}
\end{equation*}
is accepted.  Now let $t_1,\ldots,t_k\in \T^\fin_{2,\Sigma}$ and $q \in Q$.
A straightforward induction on the size of trees shows that
$\A$ accepts the convolution $t_1 \otimes \cdots \otimes t_k$ via a run $\rho$
with $\rho(\varepsilon) = q$ if and only if
$\A'$ accepts ${\unlabel(t_1) \otimes \cdots \otimes \unlabel(t_k)}$ via
a run $\rho'$
with $\rho'(\varepsilon) = q$.

The above considerations shows that for every
tree-automatic structure $\Sf$  there exists an isomorphic unary
tree-automatic structure $\Sf'$,
whose domain is a subset of $\T_{\bin}$. An isomorphism between
$\Sf$ and $\Sf'$ is given by the computable mapping $\unlabel$.
This proves the lemma.
\end{proof}
Consider $\FOext[\tau]$,
the first-order
logic over the signature $\tau$ extended by the quantifiers
$\exists^\infty$ (``there are infinitely many'')
and the chain-quantifier $\exists^{\mathsf{chain}}$
(if $\varphi(x,y)$ is some formula, then
$\exists^{\mathsf{chain}} \varphi(x,y)$ asserts that $\varphi$ is a
partial order and there is an infinite increasing $\varphi$-chain).
Let $\varphi(x_1,\dotsc,x_m,y_1,\dotsc,y_n)$
be a formula of this logic in the signature of some structure $\Sf$.
For all $\bar b\in S^n$ we define the $m$-ary relation
\begin{equation*}
	\varphi^{\Sf}(\cdot,\bar b)
	= \{ \bar a\in S^m \mid \Sf \models \varphi(\bar a,\bar b) \}
\end{equation*}
on $\Sf$. In case of $n=0$ we simply write $\varphi^{\Sf}$ instead of
$\varphi^{\Sf}(\cdot)$.
The following theorem from \cite{Blu99,KartzowPHD2011,ToL08} lays out
the main motivation for investigating tree-automatic structures.

\begin{thm}\label{thm:tree-automatic}
  From a tree-automatic presentation $\P$ of a structure $\Sf(\P)$
  over the signature $\tau$ and
  an $\FOext[\tau]$-formula
  $\varphi(\overline{x})$ one can compute a tree automaton
  accepting $\otimes\varphi^{\Sf(\P)}$.
  In particular, the
  $\FOext[\tau]$-theory
  of any tree-automatic
  structure $\Sf$ is (uniformly) decidable.
\end{thm}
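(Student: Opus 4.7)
The plan is to proceed by induction on the structure of $\varphi$, maintaining at each step a tree automaton for the convolution of the relation defined by the current subformula, and to read off uniform decidability of the theory from emptiness of the final automaton in the case of sentences. For atomic formulas $R(x_{i_1},\ldots,x_{i_k})$ I would start from the automaton for $\otimes R^{\Sf(\P)}$ given by the presentation $\P$ and adapt it to the full tuple $\overline{x}$ by inserting $\diamond$-tapes for the unused variables (a routine product construction); for equality $x_i=x_j$ I would use an automaton checking componentwise coincidence of two tapes (including their $\diamond$-positions). The Boolean connectives $\wedge,\vee,\neg$ are handled by the classical closure of regular tree languages under union, intersection, and complement, taking care to intersect the complement with the set of valid convolutions of $m$-tuples from $S$ (computed from the automaton for $S$). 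Existential first-order quantification $\exists x_i\,\varphi$ corresponds to projecting away the $i$-th component of the convolution; this preserves regularity via the standard nondeterministic erasure construction, with additional bookkeeping to trim tapes that become trivially $\diamond$-labeled.

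For the infinity quantifier $\exists^\infty x\,\varphi(x,\bar y)$, I would use that from the automaton $\A$ obtained inductively for $\otimes\varphi^{\Sf(\P)}$ one can build an automaton checking, on input $\bar y$, whether $\varphi^{\Sf(\P)}(\cdot,\bar y)$ is infinite. By the standard pumping lemma for regular tree languages, infinity is equivalent to the existence of an accepting run on some convolution $\otimes(t,\bar y)$ that contains a pumpable configuration along the $x$-tape; the set of $\bar y$ for which such a configuration exists is itself regular, obtained by a nondeterministic guess of the pumping location. Uniform decidability of the $\FOext[\tau]$-theory then reduces to emptiness testing of tree automata, which is decidable.

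The hard part is the chain quantifier $\exists^{\mathsf{chain}}\,\varphi(x,y,\bar y)$. Given an automaton $\A$ for the binary relation $\varphi^{\Sf(\P)}(\cdot,\cdot,\bar y)$, I would follow the Ramsey-style approach of \cite{Blu99,KartzowPHD2011,ToL08}: any infinite strictly ascending chain $t_0<t_1<t_2<\cdots$ induces a finite coloring of the pairs $i<j$ by the state profile $\A$ attains on $t_i\otimes t_j$, so Ramsey's theorem yields an infinite monochromatic subsequence. The existence of an infinite chain is therefore equivalent to a combinatorial self-compatibility condition on a constant-size family of state profiles of $\A$, which can be formulated as a regular property in $\bar y$ by a product construction over the finite set of profiles. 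The main technical obstacle is to define such a ``state profile'' so that it carries enough information about the synchronization between the $x$- and $y$-tapes to make the Ramsey argument sound while still being preserved under composition; once this is achieved, the inductive construction delivers the desired automaton for $\otimes\varphi^{\Sf(\P)}$ and emptiness testing yields the second assertion.
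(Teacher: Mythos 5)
The paper states Theorem~\ref{thm:tree-automatic} without proof, attributing it to \cite{Blu99,KartzowPHD2011,ToL08}; there is no in-paper argument to compare against. Your handling of the first-order fragment and of $\exists^\infty$ is the standard one from those sources and, while terse, is sound: projection for $\exists$, the Boolean closures (with complementation relativized to the regular set of valid convolutions), and for $\exists^\infty$ the observation that infinitude of $\{t : \otimes(t,\bar y)\in L(\A)\}$ is witnessed by a pumpable state repetition in the region of the convolution lying outside the fixed finite domain of $\bar y$, a property that is itself regular in $\bar y$. Decidability then indeed follows by emptiness testing.

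The genuine gap is in $\exists^{\mathsf{chain}}$, and you yourself flag it without closing it. You propose coloring pairs $i<j$ of an infinite $\varphi$-chain by ``the state profile $\A$ attains on $t_i\otimes t_j$,'' but you never say what this profile is, and for \emph{tree} automata that definition is essentially the entire proof. Two members $t_i,t_j$ of the chain need not be nested; the convolution $t_i\otimes t_j\otimes\bar y$ has no single cut point, and a run of $\A$ on it is determined not by one state but by a map from a multi-node frontier to states together with the shape of the overlap region. To make the Ramsey argument work you would have to show (i) that this datum ranges over a uniformly finite set, (ii) that it determines acceptance, (iii) that these profiles compose like a finite semigroup, so that a Ramsey-monochromatic subsequence forces an idempotent accepting profile, and (iv) conversely that from an idempotent accepting profile one can \emph{construct} an actual infinite $\varphi$-chain, uniformly and regularly in $\bar y$. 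None of these steps is carried out; in particular the converse direction (iv), which is what makes the regular self-compatibility condition \emph{sufficient} rather than merely necessary, is entirely absent. As written, your proposal names the missing lemma rather than proving it, so it does not yet establish the theorem.
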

Note that the property of being a tree is expressible in
$\FO[\leq]+\exists^\infty$. Moreover, the chain-quantifier allows to
define well-foundedness of a tree. Hence, we obtain:

\begin{cor}
  The class of tree-automatic well-founded trees is decidable.
\end{cor}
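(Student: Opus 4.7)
The plan is to reduce the statement to Theorem~\ref{thm:tree-automatic} by exhibiting a single sentence $\Phi\in\FOext[\leq]$ whose models are precisely the well-founded trees. Once $\Phi$ is in hand, the theorem applied to any presentation $\P$ produces a tree automaton accepting $\otimes\Phi^{\Sf(\P)}$; since $\Phi$ has no free variables, this language is non-empty iff $\Sf(\P)\models\Phi$, and emptiness of regular tree languages is decidable. So the whole corollary reduces to the definability of ``well-founded tree'' in $\FOext[\leq]$.

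For the ``is a tree'' part I would follow the definition from Section~\ref{sec:wulpos} and conjoin the first-order axioms saying that $\leq$ is reflexive, antisymmetric and transitive, the upwards linearity axiom $\forall p\forall a\forall b\,(p\leq a\wedge p\leq b\to a\leq b\vee b\leq a)$, and the existence of a greatest element $\exists r\forall x\,(x\leq r)$, together with the statement that every upward closure is finite. The last property is not first-order but is captured by $\forall x\,\neg\exists^\infty y\,(x\leq y)$ using the $\exists^\infty$ quantifier available in $\FOext[\leq]$. The conjunction of these formulas is a sentence in $\FOext[\leq]$ holding exactly in the trees (in the order-theoretic sense used throughout the paper).

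Well-foundedness is handled via the chain quantifier applied to the reversed order. Setting $\psi(x,y):=y\leq x$, the formula $\exists^{\mathsf{chain}}\psi(x,y)$ asserts both that $\psi$ defines a partial order (which holds precisely when $\leq$ does) and that $\psi$ admits an infinite ascending chain, i.e., that $\leq$ admits an infinite descending chain. Conjoining $\neg\exists^{\mathsf{chain}}\psi(x,y)$ with the tree formulas from the previous paragraph yields the desired sentence $\Phi$, and a direct appeal to Theorem~\ref{thm:tree-automatic} together with decidability of emptiness for tree automata completes the argument. I do not foresee any real obstacle here: both non-standard quantifiers of $\FOext$ were introduced precisely so that finite upward closures and well-foundedness become expressible, making the proof a routine application of the master theorem stated just before the corollary.
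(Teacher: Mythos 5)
Your proof is correct and follows exactly the route the paper itself takes: the paper's justification for this corollary is precisely the remark that ``tree'' is expressible in $\FO[\leq]+\exists^\infty$ (partial-order and upwards-linearity axioms, greatest element, and $\forall x\,\neg\exists^\infty y\,(x\leq y)$ for finite upward closures) and that well-foundedness is expressible via the chain quantifier applied to the reversed order, after which Theorem~\ref{thm:tree-automatic} gives decidability.
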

Let $\K$ be a class of tree-automatic presentations. The
{\em isomorphism problem} $\Iso(\K)$ is the set of pairs
$(\P_1,\P_2)\in \K\times \K$ of tree-automatic presentations with
$\Sf(\P_1)\cong \Sf(\P_2)$. If $\K$ is the class of
tree-automatic presentations
for a class $\C$ of relational structures (e.g. trees), then we 
briefly speak of the isomorphism problem for (tree-automatic members
of) $\C$. The isomorphism problem for the class of all tree-automatic
structures is complete for $\Sigma^1_1$, the first level of the
analytical hierarchy; this holds already for string-automatic trees
\cite{KhoNRS07,KuLiLo11}.

\section{Bounding the rank of tree-automatic \wulpo's and well-founded trees}
\label{sec:Delhomme}
In this section we show that the $\rank$ of a tree-automatic \wulpo is
strictly below $\omega^{\omega^\omega}$. As a corollary, we 
obtain that the rank of every tree-automatic
well-founded tree is strictly below $\omega^\omega$. The first part of
our proof is a refinement of Delhomm{\'e}'s decomposition theorem for
tree-automatic structures \cite{Delhomme04}. This result
shows that for a given tree-automatic structure $\Af$ and
a first-order formula $\varphi(x,\bar y)$ all substructures induced
by $\varphi$ (via different tuples $\bar s$ of parameters)
are composed from a finite set $\mathcal{S}_\varphi^{\Af}$ of structures
using the operations of box-augmentation and sum-augmentation.
Roughly speaking, a structure $\Af$ is a sum-augmentation of structures
$\Bf_1,\ldots,\Bf_n$ if there is a finite partition of the domain of
$\Af$ which induces the substructures $\Bf_1,\ldots,\Bf_n$ (see Definition \ref{def:sum-augmentation}).
The structure $\Af$ is a  box-augmentation of $\Bf_1,\ldots,\Bf_n$
if its domain is a finite product of copies of the domains of
$\Bf_1,\ldots,\Bf_n$ such that fixing all but the $i^{\mathrm{th}}$
component of this product arbitrarily results in a structure
isomorphic to $\Bf_i$ (see Definition \ref{def:box-augmentation}). 

Let $\nu$ be a function assigning ordinals to structures.
An ordinal $\alpha$ is $\nu$-in\-de\-com\-pos\-able if for all structures
$\Bf_1,\ldots,\Bf_n$ and all
sum- or box-augmentations $\Af$ of them,
${\nu(\Af)=\alpha}$ implies $\nu(\Bf_i)=\alpha$
for some $i\in \{1,\ldots,n\}$
(see Definition \ref{def:indecomposable value}).
Delhomm{\'e}'s result implies the following. The substructures of a
tree-automatic structure $\Af$ (induced by a fixed first-order formula
$\varphi$) only take finitely many $\nu$-indecomposable values.
Unfortunately, Delhomm{\'{e}} never published a proof of this result.
Since our further arguments rely on this proof,
we reprove his result in Sections~\ref{sec:DelhommesDecomposition}
and~\ref{sec:Indecomposability}. 
More precisely, we strengthen his result in the following way.
\begin{iteMize}{$\bullet$}
\item All box-augmentations occurring 
  in the decomposition of tree-automatic substructures provided by
  Delhomm{\'e}'s result are tamely colorable.
  Roughly speaking, $\Af$ is a tamely colorable
  box-augmentation of $\Bf_1,\ldots,\Bf_n$ if
  there are finite colorings of the $\Bf_i$ and a simple rule how to
  completely reconstruct $\Af$ from the $\Bf_i$ and these colorings.
\item We introduce a notion of $\nu$-in\-de\-com\-posa\-bil\-ity
  for intervals of ordinals:
  we say $[\alpha_1,\alpha_2]$ is $\nu$-indecomposable if
  \begin{enumerate}[(1)]
  \item for all structures
    $\Bf_1,\ldots,\Bf_n$
    and all sum-augmentations $\Af$ of them,
    ${\nu(\Af)=\alpha_2}$ implies $\nu(\Bf_i)=\alpha_2$
    for some $i\in \{1,\ldots,n\}$ and
  \item for all structures
    $\Bf_1,\ldots,\Bf_n$
    and all tamely colorable  box-augmentations $\Af$ of them,
    $\nu(\Af)=\alpha_2$ implies $\nu(\Bf_i)\in [\alpha_1,\alpha_2]$
    for some $i\in \{1,\ldots,n\}$.
  \end{enumerate}
\end{iteMize}
It follows directly that for a sequence
$\alpha_0 < \alpha_1 < \alpha_2 < \cdots$ of ordinals such that
each interval $[\alpha_i,\alpha_{i+1}]$ is $\nu$\nobreakdash-indecomposable, 
the  substructures of a tree-automatic structure
$\Af$ (induced by a first-order formula $\varphi$)
take only finitely many values among the $\alpha_i$ under $\nu$ 
(Proposition~\ref{Prop:Indecomposability}). 

In Section \ref{sec:rank-box-ind}, we prove that the $\rank$ of any
tree-automatic \wulpo
is bounded by $\omega^{\omega^\omega}$ and in the case of a
tree-automatic well-founded forest its $\rank$ is bounded by
$\omega^\omega$. 
We prove this result using
$\rank$-indecomposability:
we can prove that all intervals of the form
$[\omega^{\omega^\alpha},\omega^{\omega^{\alpha+1}}]$ are
$\rank$-indecomposable (if we restrict the domain of $\rank$ to the
class of \wulpo's). Thus, there are 
infinitely many $\rank$-indecomposable intervals below
$\omega^{\omega^\omega}$. 
A simple transfinite induction shows that for every well-founded partial
order $\Pf$ of rank $\alpha$ and for all $\beta<\alpha$, there is a
node $p\in\Pf$ such that $\Pf{\restriction}_{\{p'\mid p'<p\}}$ is of rank
  $\beta$. 
Thus, for a fixed tree-automatic well-founded partial order $\Pf$, the
automaton corresponding to the formula $x<y$  
induces substructures of all ranks $\beta<\rank(\Pf)$. 
For a tree-automatic \wulpo $\Pf$ we have seen that these
substructures may only take finitely many values of the form
$\omega^{\omega^i}$. Thus, one concludes immediately that 
$\rank(\Pf)<\omega^{\omega^\omega}$. 
The bound on the ranks of tree-automatic forests follows from an
analogous proof: writing $\rank_{\F}$ for the restriction of $\rank$
to forests, we can prove that the interval $[\omega^i, \omega^{i+1}]$
is $\rank_\F$-indecomposable. 

Note that Delhomm{\'e}'s original result is too weak for our proof.
Using the definition of  $\rank$-indecomposable values with respect
to all box-augmentations, the ordinals $0$ and $1$ are the only
$\rank$-indecomposable
values.  The problem is that any forest containing an infinite
antichain is
the box-augmentation of two infinite antichains. Note that an infinite
antichain has $\rank$ $1$. Hence, ordinals above $1$
are not $\rank_{\F}$-indecomposable with respect to Delhomm{\'e}'s original
definition. The crucial point is that such a box-augmentation of two
infinite antichains in general is not tamely colorable.

\subsection{Delhomm{\'e}'s decomposition theorem for
  tree-automatic structures}
\label{sec:DelhommesDecomposition}

In this section we reprove Delhomm{\'e}'s decomposition
theorem~\cite{Delhomme04}.
Beforehand, we give the precise definitions of
sum- and box-augmentations.
After providing a proof of Delhomm{\'e}'s original result,
we introduce tamely colorable box-augmentations and
improve this result.
\emph{For the rest of this section, we fix a relational signature
$\tau=(\Rel,\ar)$.}

\begin{defi}
	\label{def:sum-augmentation}
	\label{def:box-augmentation}
	Let $\Af$ and $\Bf_1,\ldots,\Bf_n$ be $\tau$-structures.
	\begin{enumerate}[(1)]
	\item We say that $\Af$ is a
 	 	\emph{sum-augmentation} of
  		$(\Bf_1,\ldots,\Bf_n)$
		if there is a partition
		\begin{equation*}
			A = \bigsqcup_{1\leq i \leq n} A_i
		\end{equation*}
		of the domain of $\Af$ such that
	  	$\Af{\restriction}_{A_i} \cong \Bf_i$
		for all $1\leq i\leq n$.
	\item We say that $\Af$ is a
  		\emph{box-augmentation} of
  		$(\Bf_1, \ldots, \Bf_n)$
		if there is a bijection
		\begin{equation*}
			\eta:\prod_{1\leq i \leq n} B_i \to A
		\end{equation*}
		such that for each $1\leq k \leq n$ and all
		$\bar d=(d_1,\ldots,d_{k-1},d_{k+1},\ldots,d_n)\in
		\prod_{1\leq i\leq n,i\not=k} B_i$
                \begin{align*}
                  &\eta^{\bar d}_k : B_k \to A \\
                  &\eta^{\bar d}_k(e) =
                  \eta(d_1, \ldots d_{k-1}, e, d_{k+1}, \ldots d_n)
		\end{align*}
   		is an (isomorphic) embedding of $\Bf_k$ into $\Af$.
	\end{enumerate}
\end{defi}

\begin{rem}
Whenever $\Af$ is a sum- or box-augmentation of $(\Bf_1,\ldots,\Bf_n)$,
the $\Bf_i$ are substructures of $\Af$.
In particular,
if $\Af$ is a forest or a \wulpo, the $\Bf_i$ are also
forests or \wulpo's, respectively.
\end{rem}

To simplify notation, for a $\tau$-structure $\Af$,
an $\FOext[\tau]$-formula $\varphi(x,y_1,\ldots,y_n)$, and
a tuple $\bar s\in A^n$
we use $\Af{\restriction}_{\varphi,\bar s}$ as an abbreviation
for $\Af{\restriction}_{\varphi^{\Af}(\cdot,\bar s)}$. 
Delhomm{\'e}'s decomposition theorem is the following.

\begin{thm}
  \label{thm:delhomme}
  From a given tree-automatic $\tau$-structure $\Af$
  and an
  $\FOext[\tau]$-formula
  \mbox{$\varphi(x,y_1,\dotsc,y_n)$,}
  one can compute a finite set
  $\mathcal{S}^{\Af}_\varphi$
  of tree-automatic structures
  such that for all $\bar s\in A^n$ the substructure
  $\Af{\restriction}_{\varphi,\bar s}$
  is a sum-augmentation of box-augmentations of elements
  from~$\mathcal{S}^{\Af}_\varphi$.
\end{thm}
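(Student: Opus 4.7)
The plan is to apply Theorem \ref{thm:tree-automatic} to reduce the statement to analyzing runs of a single tree automaton on convolutions of the form $\otimes(t,s_1,\dots,s_n)$, and then to decompose these runs along the finite skeleton formed by the parameter trees $\bar s$. First I would use Theorem \ref{thm:tree-automatic} to compute an automaton accepting $\otimes\varphi^{\Af(\P)}$ and convert it to a bottom-up deterministic automaton $\A=(Q,\Delta,I,\{q_0\})$ over $\Sigma_\diamond^{n+1}$, where $\Sigma$ is the alphabet of $\P$. For each $q\in Q$ the set $L_q=\{t\in\T^\fin_{2,\Sigma}\mid \A(t)=q\}$ is regular, and since $\Af(\P)$ is tree-automatic, the induced substructure $\Af(\P){\restriction}_{L_q\cap A}$ is tree-automatic as well, with a presentation that can be computed uniformly from $\P$ and $\A$. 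These $|Q|$ substructures (together with a trivial singleton) form the candidate set $\mathcal{S}^{\Af}_\varphi$.

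Next, fix $\bar s=(s_1,\dots,s_n)$ with $s_i=(T_i,\lambda_i)$, set $S=T_1\cup\dots\cup T_n$ and $U=\cl(S)$; note that $U$ is prefix-closed and finite. For any candidate $t=(T,\lambda)$ I would call a node $u\in T\setminus U$ an \emph{attachment node} if its parent lies in $T\cap U$. Since $T$ is prefix-closed, it decomposes uniquely as the disjoint union of $T\cap U$ and the shifted domains $u\cdot T_u$ of the subtrees $t(u)$ at the attachment nodes. Define the \emph{type} $\tp(t,\bar s)$ to record (a) the trace of $t$ on $U$, i.e., which nodes of $U$ lie in $T$ together with their $\lambda$-labels, and (b) the map sending each attachment node $u$ to the state $\A(t(u))\in Q$. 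Because $U$ is determined by $\bar s$ and $Q$ is finite, only finitely many types are realized.

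By bottom-up determinism, whether $\A$ accepts $\otimes(t,\bar s)$ depends only on $\tp(t,\bar s)$, so $\varphi^{\Af(\P)}(\cdot,\bar s)$ is the disjoint union of its accepting type classes, yielding the outer sum-augmentation. For a fixed accepting type $\sigma$ with attachment set $D_\sigma$ and state labelling $q_\sigma\colon D_\sigma\to Q$, the gluing map
\begin{equation*}
  \eta_\sigma\colon \prod_{u\in D_\sigma}(L_{q_\sigma(u)}\cap A)\longrightarrow A,\qquad
  (t_u)_{u\in D_\sigma}\longmapsto t_0\,\cup\,\bigcup_{u\in D_\sigma} u\cdot t_u,
\end{equation*}
where $t_0$ is the skeleton fragment on $U$ prescribed by $\sigma$, is a bijection onto the $\sigma$-class. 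This exhibits the class as a box-augmentation of structures from $\mathcal{S}^{\Af}_\varphi$, completing the combinatorial picture.

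The main obstacle is to verify that $\eta_\sigma$ really is a box-augmentation in the structural sense of Definition~\ref{def:box-augmentation}: for every atomic relation $R$ of $\Af(\P)$ and every tuple $(t^{(1)},\dots,t^{(k)})$ in the $\sigma$-class, the truth value of $R(t^{(1)},\dots,t^{(k)})$ must factor through the independent coordinate choices at each $u\in D_\sigma$, and fixing all but one coordinate must give an isomorphic embedding of the corresponding $\Af(\P){\restriction}_{L_{q_\sigma(u)}\cap A}$. I would establish this by a second application of the decomposition idea: a bottom-up deterministic automaton for $\otimes R$ over $\Sigma_\diamond^k$, run on $\otimes(t^{(1)},\dots,t^{(k)})$ for tuples constrained by $\sigma$, sees a fixed configuration on the common skeleton $U$, so its acceptance reduces to a condition on the states it reaches at each attachment coordinate independently. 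This immediately gives the required box-structure and, together with Theorem \ref{thm:tree-automatic} to reduce extended first-order definable relations to the atomic case, concludes the proof.
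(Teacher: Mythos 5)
Your overall skeleton (sum over type classes determined by automaton behavior on the finite skeleton of $\bar s$, then box within each class via a gluing map) matches the paper, but the definition of the candidate set $\mathcal{S}^\Af_\varphi$ as the induced substructures $\Af{\restriction}_{L_q\cap A}$ is wrong in two ways, and the ``main obstacle'' you defer to the last paragraph is exactly where this choice fails. First, the box factors range over the subtrees $t(u)$ one can plug in at an attachment node $u$; these are arbitrary trees in $\T^{\fin}_{2,\Sigma}$ satisfying a state condition and need \emph{not} lie in the domain $A$ of $\Af$ at all, so taking $L_q\cap A$ as the domain is incorrect — the paper's $B_\gamma$ is defined purely by state conditions with no reference to $A$. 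Second, and more fundamentally, the relation on a factor $\Bf_u$ cannot be the induced relation from $\Af$. Your own observation that acceptance of $\A_R$ on a plugged tuple ``reduces to a condition on the states it reaches at each attachment coordinate'' shows what that relation must be: $R^{\Bf_u}=\{\bar x \mid \A_R(\otimes\bar x)\in P_R(u)\}$ for some set $P_R(u)\subseteq Q_R$ determined by the skeleton fragment and the $\A_R$-states at the \emph{other} attachment nodes. This $P_R(u)$ generally differs from the set $I_R$ of accepting states of $\A_R$, so $R^{\Bf_u}$ is not the restriction of $R^\Af$ to any regular subset, and no structure of the form $\Af{\restriction}_{L_q\cap A}$ can serve as a factor.

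Consequently, the candidate structures must be parameterized by more than one state $q$ of the $\varphi$-automaton: you also need a state $q_R$ of $\A_R$ and a subset $P_R\subseteq Q_R$ for each relation symbol $R$, which is exactly the paper's index $\gamma=((q_R)_{R\in\Rel},q_\varphi,(P_R)_{R\in\Rel})$, with $B_\gamma$ and $R^{\Bf_\gamma}$ defined abstractly from those data. For the same reason your type $\tp(t,\bar s)$ must record the $\A_R$-states at each attachment node and not merely the $\A_\varphi$-state: otherwise two trees of the same ``type'' may induce different relational behavior when placed at $u$, and $\eta_\sigma$ would fail to restrict to an isomorphic embedding in each coordinate as Definition~\ref{def:box-augmentation} requires. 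Once both the candidate structures and the types carry this extra data, the rest of your outline goes through essentially as written.
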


\begin{proof}
Let $\A_R$, where $R \in \Rel$, (resp. $\A_\varphi$) be a
bottom-up deterministic tree automaton accepting
$\otimes R^{\Af}$ (resp. $\otimes\varphi^{\Af}$) and
$Q_R$ (resp. $Q_\varphi$) be its state set.
For each $t=(T,\lambda) \in \trees{\Sigma}$ and all $R\in\Rel$ of
arity $r$ we define
$\mathop{\otimes}\limits_{R} t = \otimes(t,\ldots,t) \in\trees{\Sigma_\diamond^{r}}$
where the convolution is made up of $r$ many copies of~$t$.
We further consider the tree
$t\mathop{\otimes}\limits_n \emptyset=(T,\lambda') \in
\trees{\Sigma_\diamond^{1+n}}$ 
with $\lambda'(u)=(\lambda(u),\diamond,\ldots,\diamond)$,
where the number of $\diamond$-symbols is~$n$.

As a first step towards proving the claim,
we construct the set $\mathcal{S}_\varphi^{\Af}$.
Before we construct the set, let us give some intuition on the
structures we are interested in. $\mathcal{S}_\varphi^{\Af}$ consists
of structures $\Bf_\gamma$ that are substructures of 
the structures $\Af{\restriction}_{\varphi,\bar s}$ which are obtained
as follows: we fix an element $t$ of $\Af{\restriction}_{\varphi,\bar s}$ and
we fix a node $d$ in $t$ that is outside of the domain of $\bar
s$. Now the domain of $\Bf_\gamma$ consists of all elements of 
$\Af{\restriction}_{\varphi,\bar s}$ that are obtained from $t$ by
replacing the subtree rooted at $d$ by some other tree. Thus, the
structures $\Bf_\gamma$ are the substructures of all
$\Af{\restriction}_{\varphi,\bar s}$ whose domain is obtained by only
local changes to a fixed tree.

Let
	$\Gamma = \prod_{R\in\Rel} Q_R \times Q_\varphi
		\times \prod_{R\in\Rel} 2^{Q_R}$.
For each
$\gamma=((q_R)_{R\in\Rel},q_\varphi,(P_R)_{R\in\Rel})\in\Gamma$
we define a structure $\Bf_\gamma$ with domain
\begin{equation} \label{eq:domain_B_gamma}
	B_\gamma = \{ t\in\trees\Sigma \mid
		\text{$\A_R(\mathop{\otimes}\limits_{R} t)=q_R$ for
                  all $R\in\Rel$  and 
		$\A_\varphi(t\mathop{\otimes}\limits_n \emptyset)=q_\varphi$} \}
\end{equation}
and relation $R\in\Rel$ (with arity $r=\ar(R)$) interpreted by
\begin{equation}
	\label{eq:relations_B_gamma}
	R^{\Bf_\gamma} = \{ \bar t\in (B_\gamma)^r \mid
		\A_R(\otimes\bar t)\in P_R \}.
\end{equation}
Clearly, $\Bf_\gamma$ is tree-automatic.
Finally, we let $ \mathcal{S}_\varphi^{\Af}$ be the finite set
\begin{equation*}
	\mathcal{S}_\varphi^{\Af} = \{ \Bf_\gamma \mid \gamma\in\Gamma \}.
\end{equation*}
It remains to show that for each tuple $\bar s=(s_1,\dotsc,s_n) \in \Af^n$
the substructure $\Af{\restriction}_{\varphi,\bar s}$
is a sum-augmentation of box-augmentations of
structures from $\{\Bf_\gamma \mid \gamma\in\Gamma\}$.

For this purpose, we fix such a tuple $\bar s$ and
consider the finite binary tree
$D=\bigcup_{1\leq i\leq n} S_i$,
where $s_i=(S_i,\mu_i)$.
The \emph{$\bar s$\nobreakdash-type} of a tree $t=(T,\lambda)\in\trees\Sigma$
is the tuple
\begin{equation*}
	\tp_{\bar s}(t)=
	(t{\restriction}_D,U,(\zeta_R)_{R\in\Rel},\zeta_\varphi),
\end{equation*}
where
\begin{enumerate}[(1)]
	\item $t{\restriction}_D=(T\cap D,\lambda{\restriction}_{(T\cap D)})
		\in \trees{\Sigma}$
		is the restriction of $t$ to the domain $T\cap D$,
	\item $U=T\cap\fr(D)\subseteq\fr(T\cap D)$,
	\item $\zeta_R:U\to Q_R$ is a map with
		$\zeta_R(u)=\A_R(\mathop{\otimes}\limits_{R} t(u))$,
		and
	\item $\zeta_\varphi:U\to Q_\varphi$ is a map with
		$\zeta_\varphi(u)=\A_\varphi(t(u)\mathop{\otimes}\limits_n
                \emptyset)$. 
\end{enumerate}
\begin{figure}[t]
	\centering
	\begin{tikzpicture}[knoten/.style={inner sep=0,minimum size=5mm,fill,circle}]
	
	\draw (0,0.5) node {$t$};
	\begin{scope}[thick,every node/.style={inner sep=0,minimum size=2mm,fill,circle}
	             ,level/.style={sibling distance=6cm/#1,level distance=1.5cm}
	             ,label distance=1mm
	             ]
		\node (u) at (0,0) {}
			child { node (u0) {}
				child { node[label=right:$u_1$] (u00) {}
					child { node (u000) {} }
					child { node (u001) {} }
				}
				child { node[label=left:$u_2$] (u01) {}
					child { node (u010) {} }
					child { node (u011) {} }
				}
			}
			child { node (u1) {}
				child { node (u10) {} }
				child { node (u11) {} }
			};
	\end{scope}
	
	\begin{scope}
		\coordinate (c) at ($(u000)!0.5!(u011)$);
		\draw ($(c)!2!(0,-4.5)$) node {$D$};
		\draw ($(u)+(63.4:0.25)$)
			-- ($(u1)+(63.4:0.25)$) arc (63.4:45:0.25)
			-- ($(u11)+(45:0.25)$) arc (45:26.6:0.25)
			-- ($(u000)!2!(0,-4.5)+(26.6:0.25)$) arc (26.6:-90:0.25)
			-- ($(u011)!2!(0,-4.5)+(-90:0.25)$) arc (-90:-206.6:0.25)
			-- ($(u10)+(153.4:0.25)$) arc (153.4:135:0.25)
			-- ($(u1)+(-0.25,-0.25)+1.141*(-0.25,0)$)
			-- ($(u0)+(-90:0.25)$) arc (-90:-243.4:0.25)
			-- ($(u)+(116.6:0.25)$) arc (116.6:63.4:0.25);
	\end{scope}

	\begin{scope}
		\draw ($(u)+(0,-1)$) node {$t{\restriction}_D$};
		\draw[semithick,densely dashed] ($(u)+(63.4:0.15)$)
			-- ($(u1)+(63.4:0.15)$) arc (63.4:45:0.15)
			-- ($(u11)+(45:0.15)$) arc (45:-90:0.15)
			-- ($(u10)+(-90:0.15)$) arc (-90:-225:0.15)
			-- ($(u1)+(-0.15,-0.15)+1.141*(-0.15,0)$)
			-- ($(u0)+(-90:0.15)$) arc (-90:-243.4:0.15)
			-- ($(u)+(116.6:0.15)$) arc (116.6:63.4:0.15);
	\end{scope}
	
	\begin{scope}
		\draw ($(u00)+(0,-1)$) node {$t(u_1)$};
		\draw ($(u00)+(26.6:0.15)$)
			-- ($(u001)+(26.6:0.15)$) arc (26.6:-90:0.15)
			-- ($(u000)+(-90:0.15)$) arc (-90:-206.6:0.15)
			-- ($(u00)+(153.4:0.15)$) arc (153.4:26.6:0.15);
	\end{scope}

	\begin{scope}
		\draw ($(u01)+(0,-1)$) node {$t(u_2)$};
		\draw ($(u01)+(26.6:0.15)$)
			-- ($(u011)+(26.6:0.15)$) arc (26.6:-90:0.15)
			-- ($(u010)+(-90:0.15)$) arc (-90:-206.6:0.15)
			-- ($(u01)+(153.4:0.15)$) arc (153.4:26.6:0.15);
	\end{scope}
	
	\begin{scope}
		\draw ($(u00)!0.5!(u01)$) node {$U$};
		\draw[semithick,densely dotted] ($(u00)+(90:0.3)$)
			-- ($(u01)+(90:0.3)$) arc (90:-90:0.3)
			-- ($(u00)+(-90:0.3)$) arc (-90:-270:0.3);
	\end{scope}
	\end{tikzpicture}
	\caption{The $\bar s$-type of some example tree $t$. The dashed line surrounds $t{\restriction}_D$ and $U$ consists of all nodes within the dotted line.}
	\label{fig:s-type}
\end{figure}
Figure~\ref{fig:s-type} illustrates the choice of $t{\restriction}_D$ and $U$. 
Observe that for $t_u:= t(u)\mathop{\otimes}\limits_n\emptyset$ the
 equation 
\begin{equation*}
	\otimes(t,\bar s) =
		\otimes(t{\restriction}_D,\bar
                s)[(u/ t_u)_{u\in U}]
\end{equation*}
holds, whence
\begin{equation}
	\label{eq:tp_saturates_phi}
	\A_\varphi(\otimes(t,\bar s)) =
		\A_\varphi(\otimes(t{\restriction}_D,\bar s),\zeta_\varphi).
\end{equation}
Therefore, $\tp_{\bar s}(t)$ determines whether $t\in\varphi^\Af(\cdot,\bar s)$.
Since $D$ and $\Rel$ are finite,
there are only finitely many distinct $\bar s$\nobreakdash-types. Consequently, the equivalence relation $\sim_{\bar s}$
on $\trees\Sigma$ defined by
$t\sim_{\bar s} t'$ if and only if $\tp_{\bar s}(t)=\tp_{\bar s}(t')$
has finite index.
Due to \eqref{eq:tp_saturates_phi},
$\sim_{\bar s}$ saturates the set $\varphi^{\Af}(\cdot,\bar s)$, i.e.,
$\varphi^{\Af}(\cdot,\bar s)$ is a disjoint union of some
$\sim_{\bar s}$\nobreakdash-classes.
Assume that $Z_1,\dotsc,Z_\ell\subseteq\varphi^{\Af}(\cdot,\bar s)$
are these 
$\sim_{\bar s}$-classes.
Then $\Af{\restriction}_{\varphi,\bar s}$
is a sum-augmentation of
$\Af{\restriction}_{Z_1},\ldots,\Af{\restriction}_{Z_\ell}$.

As the next step, we fix a single $\sim_{\bar s}$\nobreakdash-class
$Z\subseteq\varphi^{\Af}(\cdot,\bar s)$.
We show that $\Zf=\Af{\restriction}_Z$ is a box-augmentation of  a tuple
of structures from $\mathcal{S}_\varphi^{\Af}$.
Let $(t,U,(\zeta_R)_{R\in\Rel},\zeta_\varphi)$
be the $\bar s$-type corresponding to $Z$.
Note that the domain of $t$ is contained in the set $D$.
For $u\in U$ we define
\begin{equation*}
	\gamma(u) = 
	((\zeta_R(u))_{R\in\Rel},\zeta_\varphi(u),
		(P_R(u))_{R\in\Rel})
	\in\Gamma,
\end{equation*}
where for each relation $R$
\begin{equation} \label{P_R(u)}
	P_R(u) =
		\{ q\in Q_R \mid \A_R(\mathop{\otimes}\limits_R
                t,\zeta_R[u\mapsto 
                q])\in I_R \}. 
\end{equation}
Here $I_R\subseteq Q_R$ is the set of initial states of $\A_R$, and
$\zeta_R[u\mapsto q]$ is the map $\zeta_R$
with the value at position $u$ changed to $q$.
Finally, we put $\Xf_u = \Bf_{\gamma(u)}$ and
denote the domain of $\Xf_u$ by $X_u$.
By definition of the the domain $B_{\gamma(u)}$ of
$\Xf_u$ (see \eqref{eq:domain_B_gamma}), $y \in X_u$ is equivalent to
\begin{align} 
\A_R(\mathop{\otimes}\limits_R y) & =  \zeta_R(u) \text{ for all } R \in\Rel \label{eq-in-X_u(1)}\\
\A_\varphi(y \mathop{\otimes}\limits_n \emptyset) & = 
\zeta_\varphi(u) \label{eq-in-X_u(2)} .
\end{align}
It remains to prove that $\Zf$ is a box-augmentation of
$(\Xf_u)_{u\in U}$.
First, observe that the map
\begin{equation*}
	\eta: \prod_{u\in U} X_u\to\trees\Sigma
\end{equation*}
defined by
\begin{equation*}
	\eta((x_u)_{u\in U}) = t[(u/x_u)_{u\in U}]
\end{equation*}
is injective.
Moreover, a tree $x=(X,\lambda)\in\trees\Sigma$ is contained in the
image of $\eta$ 
if and only if
\begin{enumerate}[(1)]
	\item $x{\restriction}_D=t$,
	\item $X\cap\fr(D)=U$, and
	\item $x(u)\in X_u$ for each $u\in U$.
\end{enumerate}
Note that by \eqref{eq-in-X_u(1)} and  \eqref{eq-in-X_u(2)}, 
condition (3) is equivalent to 
$\A_R(\mathop{\otimes}\limits_R x(u)) = \zeta_R(u)$ for all 
$u \in U$ and $R \in\Rel$ and $\A_\varphi(x(u) \mathop{\otimes}\limits_n \emptyset) = 
\zeta_\varphi(u)$ for all $u \in U$.
Hence, the conjunction of (1), (2), and (3) is equivalent to 
$\tp_{\bar s}(x)=(t,U,(\zeta_R)_{R\in\Rel},\zeta_\varphi)$ which again is
equivalent to
$x\in Z$.
Thus, $\eta$ is a bijection onto its image $Z$.

Finally, we show that $\eta$ is a componentwise embedding of the
$(\Xf_u)_{u\in U}$ into $\Zf$. 
For $u_0\in U$ and
$\bar x=(x_u)_{u\in U\setminus\{u_0\}}\in
\prod\limits_{u\in U\setminus\{u_0\}} X_u$
define the map $\eta_{u_0}^{\bar x}: X_{u_0} \to Z$ by
\begin{equation*}
	\eta_{u_0}^{\bar x}(x_{u_0}) = \eta((x_u)_{u\in U}).
\end{equation*}
For all $R\in\Rel$ of arity $r$ and $(t_1, \dots, t_r) \in (X_{u_0})^r$ we have
\begin{equation*}
	\otimes \left(\eta_{u_0}^{\bar x}(t_1), \dots, \eta_{u_0}^{\bar
          x}(t_r)\right) = 
	\left( \mathop{\otimes}\limits_{R} t
	\left[(u /  x_u)_{u\in U\setminus\{u_0\}}\right] \right)
      \left [u_0/\otimes
      (t_1, \dots, t_r)\right]. 
\end{equation*}
Since $\A_R( \mathop{\otimes}\limits_{R} {x_u})=\zeta_R(u)$ for all
$u\in U\setminus\{u_0\}$, 
this implies
\begin{equation} \label{eq-componentwise}
	\A_R(\otimes (\eta_{u_0}^{\bar x}(t_1),\dots,\eta_{u_0}^{\bar
          x}(t_r))) =
		\A_R(\mathop{\otimes}\limits_{R}t,\zeta_R[u_0\mapsto
                \A_R(\otimes(t_1, \dots, t_r))])\,.
\end{equation}
We obtain the following chain of equivalences
\begin{align*}
	(\eta_{u_0}^{\bar x}(t_1), \dots, \eta_{u_0}^{\bar
          x}(t_r)) \in R^{\Zf} 
	&\quad\Longleftrightarrow\quad
	(\eta_{u_0}^{\bar x}(t_1), \dots, \eta_{u_0}^{\bar
          x}(t_r)) \in R^{\Af}\\
	&\quad\Longleftrightarrow\quad
        \A_R(\otimes 
        (\eta_{u_0}^{\bar x}(t_1), \dots, \eta_{u_0}^{\bar
          x}(t_r)))\in I_R \\
        &\quad\stackrel{\text{\eqref{eq-componentwise}}}{\Longleftrightarrow}\quad
        \A_R(\mathop{\otimes}\limits_{R} t,
        \zeta_R[u_0\mapsto \A_R(\otimes(t_1, \dots,
        t_r))])\in I_R \\ 
	&\quad\stackrel{\text{\eqref{P_R(u)}}}{\Longleftrightarrow}\quad
		\A_R(\otimes(t_1, \dots, t_r))\in P_R(u_0) \\
	&\quad\stackrel{\text{\eqref{eq:relations_B_gamma}}}{\Longleftrightarrow}\quad
		(t_1, \dots, t_r) \in R^{\Xf_{u_0}}
\end{align*}
showing that $\eta_{u_0}^{\bar x}$ is
an embedding of $\Xf_{u_0}$ into $\Zf$.
Consequently, $\Zf$ is a box-augmentation of $(\Xf_u)_{u\in U}$.
\end{proof}

After all, let us take a closer look at the properties of the
box-augmentation $\Zf$ of $(\Xf_u)_{u\in U}$ in the previous proof.
Consider some $R\in\Rel$ and let $r=\ar(R)$.
For all
$\bar x_1,\ldots,\bar x_r\in\prod_{u\in U} X_u$ with
$\bar x_i=(x_{i,u})_{u\in U}$
we have
\begin{equation*}
	\otimes(\eta(\bar x_1),\ldots,\eta(\bar x_r)) =
		(\mathop{\otimes}\limits_{R}
                t)[(u/\otimes(x_{1,u},\ldots,x_{r,u}))_{u\in U}] 
\end{equation*}
and hence
\begin{equation}
	\label{eq:induced_coloring}
	\A_R(\otimes(\eta(\bar x_1),\ldots,\eta(\bar x_r))) =
		\A_R(\mathop{\otimes}\limits_{R} t,\zeta),
\end{equation}
where $\zeta:U\to Q_R$ is defined by
$\zeta(u)=\A_R(\otimes(x_{1,u},\ldots,x_{r,u}))$.
Thus, the tuple $(\zeta(u))_{u\in U}$ 
determines  whether 
$(\eta(\bar x_1),\ldots,\eta(\bar x_r))\in R^\Af$.

This observation leads to an improved version of
Theorem~\ref{thm:delhomme},
namely Corollary~\ref{cor:tame-delhomme} below.
As an abstraction of our observation, 
we introduce the notion of
tamely colorable box-augmentations
in the following definition.
Therein for $R\in\Rel$
a \emph{finite $R$-coloring} of a $\tau$-structure $\Af$
is a map $\sigma:A^{\ar(R)}\to C$ into a finite set $C$ such that
for all $\bar x,\bar y\in A^{\ar(R)}$ with
$\sigma(\bar x)=\sigma(\bar y)$ we have $\bar x\in R^\Af$
if and only if $\bar y\in R^\Af$.

\begin{defi}
	\label{defi:tame-box-augmentation}
	Let $\A$ and $\Bf_1,\ldots,\Bf_n$ be $\tau$-structures.
	We say that $\Af$ is a
	\emph{tamely colorable box-augmentation}
	of $(\B_1,\ldots,\Bf_n)$ if
	\begin{enumerate}[(1)]
		\item there exists a bijection
			$\eta:\prod_{1\leq i \leq n} B_i \to A$
			witnessing  that $\Af$ is a box-augmentation of
			$(\Bf_1,\ldots,\Bf_n)$ and
		\item for every $R\in\Rel$ of arity $r=\ar(R)$
                  there are finite $R$-colorings $\sigma_i:B_i^r\to C_i$
                  of each $\Bf_i$
                  such that the map
                  $\sigma: A^r\to C_1\times\cdots\times C_n$
                  mapping $(a_1, \dots, a_r)\in A^r$ with
                  $a_j=\eta(b_{j,1},\ldots,b_{j,n})$ for $1 \leq j
                  \leq r$  to
                  \begin{equation*}
                    \sigma(a_1,\ldots,a_r)
                    = \left(\sigma_1(b_{1,1},\ldots,b_{r,1}),\ldots,
                      \sigma_n(b_{1,n},\ldots,b_{r,n})\right),
                  \end{equation*}
                  is a finite $R$-coloring of $\Af$.
	\end{enumerate}
\end{defi}

\noindent Roughly speaking, the last condition says that
the colors assigned to the tuples of the
``componentwise preimages'' of the $a_i$ under $\eta$
already determine whether $(a_1,\dotsc,a_r)\in R^\Af$.

\begin{rem}
In the situation of Definition~\ref{defi:tame-box-augmentation},
assume that $\Af$ and the $\Bf_i$ are directed graphs and
all $C_i$ are the same set, say $\{1,\ldots,m\}$.
For $1\leq i\leq n$ the structure
$\Xf_i=\bigl(B_i;R_1^{\Xf_i},\dotsc,R_m^{\Xf_i}\bigr)$
with ${R_c^{\Xf_i}=\sigma_i^{-1}(c)}$
can be regarded as a coloring of the edges of
the complete directed graph on $B_i$ with $m$ colors.
Since this coloring is compatible with the edge relation of $\Bf_i$,
the graph $\Af$ is a generalized
product---in the sense of Feferman and Vaught---of
the structures $\Xf_1,\ldots,\Xf_n$
using only atomic formulas.
\end{rem}

We conclude by proving the ``tamely-colorable'' version of
Theorem~\ref{thm:delhomme}.

\begin{cor}
  \label{cor:tame-delhomme}
  Given a tree-automatic $\tau$-structure $\Af$ and
  an $\FOext[\tau]$-formula $\varphi(x,\bar y)$,
  one can compute a finite set
  $\mathcal{S}^{\Af}_\varphi$
  of tree-automatic $\tau$-structures
  such that for all $\bar s\in A^n$ the substructure
  $\Af{\restriction}_{\varphi,\bar s}$
  is a sum-augmentation of tamely colorable box-augmentations
  of elements from~$\mathcal{S}^{\Af}_\varphi$.
\end{cor}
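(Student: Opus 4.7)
The plan is to reuse verbatim the construction from the proof of Theorem~\ref{thm:delhomme} and to exhibit explicit finite $R$-colorings that witness tame colorability; the crucial observation has essentially been made in the paragraph preceding the corollary, and the proof just needs to package it correctly.

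First, I would take $\mathcal{S}^{\Af}_\varphi$ to be exactly the finite set $\{\Bf_\gamma\mid\gamma\in\Gamma\}$ built in the proof of Theorem~\ref{thm:delhomme}. Fix a tuple $\bar s\in A^n$ and, as before, decompose $\Af{\restriction}_{\varphi,\bar s}$ as a sum-augmentation of the substructures $\Af{\restriction}_{Z}$ ranging over the (finitely many) $\sim_{\bar s}$-classes $Z\subseteq\varphi^\Af(\cdot,\bar s)$. For a fixed such class $Z$, let $(t,U,(\zeta_R)_{R\in\Rel},\zeta_\varphi)$ be its $\bar s$-type, let $\Xf_u=\Bf_{\gamma(u)}$ for $u\in U$, and let $\eta:\prod_{u\in U}X_u\to Z$ be the bijection from the earlier proof witnessing that $\Zf=\Af{\restriction}_Z$ is a box-augmentation of $(\Xf_u)_{u\in U}$.

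Next, for every $R\in\Rel$ of arity $r=\ar(R)$ and every $u\in U$, I define the $R$-coloring $\sigma_u:X_u^r\to Q_R$ of $\Xf_u$ by
\begin{equation*}
\sigma_u(x_1,\ldots,x_r) = \A_R(\otimes(x_1,\ldots,x_r)).
\end{equation*}
This is a finite $R$-coloring of $\Xf_u$: if $\sigma_u(\bar x)=\sigma_u(\bar y)$ then membership of $\bar x$ or $\bar y$ in $R^{\Xf_u}$ is determined solely by whether this common state lies in $P_R(u)$, by definition~\eqref{eq:relations_B_gamma} of $R^{\Bf_{\gamma(u)}}$. I would then verify that the product map $\sigma:Z^r\to\prod_{u\in U}Q_R$ defined by
\begin{equation*}
\sigma(\eta(\bar x_1),\ldots,\eta(\bar x_r)) = \bigl(\sigma_u(x_{1,u},\ldots,x_{r,u})\bigr)_{u\in U}
\end{equation*}
is a finite $R$-coloring of $\Zf$. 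This is exactly the content of equation~\eqref{eq:induced_coloring}: because $\A_R$ is bottom-up deterministic and $\otimes(\eta(\bar x_1),\ldots,\eta(\bar x_r))=(\mathop{\otimes}\limits_R t)[(u/\otimes(x_{1,u},\ldots,x_{r,u}))_{u\in U}]$, the state $\A_R(\otimes(\eta(\bar x_1),\ldots,\eta(\bar x_r)))$ depends on the tuple $\sigma(\eta(\bar x_1),\ldots,\eta(\bar x_r))$ only, and membership in $R^\Af$ is in turn decided by whether that state lies in $I_R$. Finally, since $\A_R$ already witnesses $R$-compatibility within each $\Xf_u$ via the set $P_R(u)$, both halves of Definition~\ref{defi:tame-box-augmentation} are met.

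The argument is mostly an extraction from the existing proof, so the only real obstacle is bookkeeping: one needs to make sure that the state $\A_R$ assigns to the ``top part'' of $\mathop{\otimes}\limits_R t$ depends only on the states computed at the frontier nodes $u\in U$, which is immediate from bottom-up determinism. Since this establishes conditions~(1) and~(2) of Definition~\ref{defi:tame-box-augmentation}, every $\sim_{\bar s}$-class yields a tamely colorable box-augmentation of elements of $\mathcal{S}^\Af_\varphi$, and $\Af{\restriction}_{\varphi,\bar s}$ is a sum-augmentation of these.
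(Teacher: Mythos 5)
Your proposal is correct and follows essentially the same route as the paper: it reuses the set $\mathcal{S}^\Af_\varphi$ and the bijection $\eta$ from the proof of Theorem~\ref{thm:delhomme}, defines the colorings $\sigma_u(\bar x)=\A_R(\otimes\bar x)$, observes via~\eqref{eq:relations_B_gamma} that these are finite $R$-colorings of the $\Xf_u$, and invokes equation~\eqref{eq:induced_coloring} to verify condition~(2) of Definition~\ref{defi:tame-box-augmentation}. The paper's proof is just a terser version of the same argument.
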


\begin{proof}
We show that the box-augmentation $\Zf$ of $(\Xf_u)_{u\in U}$
constructed in the proof of Theorem~\ref{thm:delhomme} is tamely colorable.
Therefore, consider the bijection $\eta:\prod_{u\in U} X_u\to Z$
witnessing this box-augmentation and
fix some $R\in\Rel$ with arity $r=\ar(R)$.
Due to the definition of $R^{\Xf_u}$ in
\eqref{eq:relations_B_gamma},
for each $u\in U$ the map $\sigma_u:X_u^r\to Q_R$ with
$\sigma_u(\bar x)=\A_R(\otimes\bar x)$ is a finite $R$-coloring of $\Xf_u$.
Finally, \eqref{eq:induced_coloring} shows that
condition (2) from Definition~\ref{defi:tame-box-augmentation} holds.
\end{proof}

\subsection{Sum- and box-indecomposability}
\label{sec:Indecomposability}

Suppose that $\C$ is a class of $\tau$-structures and
$\nu$ is a function assigning 
an ordinal $\nu(\Af)$ to each $\Af\in\C$.\footnote{All our
  applications use isomorphism invariant functions, which means that
  $\nu(\Af)=\nu(\Bf)$ if $\Af\cong\Bf$. In fact, the results of this
  section apply to arbitrary class functions but are only useful if
  the function is isomorphism invariant for the important part of the
  class $\C$ because for other functions there are not enough
  indecomposable values. Thus, upon first reading of the following
  part, we recommend the reader to think of $\nu$ as an isomorphism
  invariant function.}
In this situation, we say that $\C$ is \emph{ranked} by $\nu$.

\begin{defi}\label{def:indecomposable value}
	Let $\C$ be a class of $\tau$-structures ranked by $\nu$.
  \begin{enumerate}[(1)]
  \item An ordinal $\alpha$ is called \emph{$\nu$-sum-indecomposable} if
  	for every $\Af\in\C$ with $\nu(\Af)=\alpha$ and
	all $\tau$-structures $\Bf_1,\ldots,\Bf_m$ such that
    $\Af$ is a sum-augmentation of  $(\Bf_1,\ldots,\Bf_m)$,
    there is $1 \leq i\leq m$ such that $\Bf_i\in\C$ and $\nu(\Bf_i)=\alpha$.
  \item
    An interval $[\alpha_1,\alpha_2]$ of ordinals is called
    \emph{$\nu$-tamely-colorable-box-indecomposable} if
    for every $\Af\in\C$ with $\nu(\Af)=\alpha_2$ and
    all $\tau$-structures $\Bf_1,\ldots,\Bf_m$ such that
    $\Af$ is a tamely colorable box-aug\-men\-tation of
    $(\Bf_1,\ldots,\Bf_m)$,
    there is $1 \leq i\leq m$ such that
    $\Bf_i\in\C$ and ${\nu(\Bf_i)\in [\alpha_1,\alpha_2]}$.
  \end{enumerate}
\end{defi}

\begin{rem}
  For classes $\C$ which are closed under taking substructures,
  like the classes of forests and \wulpo's,
  the requirement $\Bf_i\in\C$ is always satisfied. Hence in this case
  explicitely requiring $\Bf_i\in C$ is not necessary.
\end{rem}

The decomposition results from the previous section imply that $\nu$
may only take finitely many $\nu$-sum-indecomposable and
$\nu$-tamely-colorable-box-indecomposable
values among the substructures of the form
$\Af{\restriction}_{\varphi,\bar s}$
(defined just before Theorem~\ref{thm:delhomme})
for a fixed $\FOext[\tau]$-formula $\varphi(x,\bar y)$.

\begin{prop} \label{Prop:Indecomposability}
  Let $\C$ be a class of $\tau$-structures ranked by $\nu$ and
  $\alpha_0<\alpha_1<\alpha_2<\dotsc$ an infinite sequence of
  $\nu$-sum-indecomposable ordinals such that
  $[\alpha_i,\alpha_{i+1}]$ is $\nu$-tamely-colorable-box-indecomposable
  for all $i\in\N$.
  Moreover, let $\Af$ be a tree-automatic $\tau$-structure and
  $\varphi(x,y_1,\dotsc,y_r)$ an $\FOext[\tau]$-formula.
  Then there are only finitely many $i\in\N$ which
  admit a tuple $\bar s\in A^r$ with
  $\Af{\restriction}_{\varphi,\bar s}\in\C$ and
  $\nu(\Af{\restriction}_{\varphi,\bar s})=\alpha_i$.
\end{prop}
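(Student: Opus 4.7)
The plan is to argue by contradiction, combining Corollary~\ref{cor:tame-delhomme} with the pigeonhole principle applied to the finite set $\mathcal{S}^{\Af}_\varphi$ of structures delivered by that corollary.

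Suppose there are infinitely many indices $i\in\N$ admitting a tuple $\bar s_i\in A^r$ with $\Af{\restriction}_{\varphi,\bar s_i}\in\C$ and $\nu(\Af{\restriction}_{\varphi,\bar s_i})=\alpha_i$. By passing to a subsequence we may assume this holds for every $i\geq 1$. Fix such an $i$. By Corollary~\ref{cor:tame-delhomme}, $\Af{\restriction}_{\varphi,\bar s_i}$ is a sum-augmentation of tamely colorable box-augmentations $\Cf_{i,1},\ldots,\Cf_{i,\ell_i}$ of tuples of structures from the \emph{fixed} finite set $\mathcal{S}^{\Af}_\varphi$ (which depends only on $\Af$ and $\varphi$, not on $\bar s_i$).

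Since $\nu(\Af{\restriction}_{\varphi,\bar s_i})=\alpha_i$ and $\alpha_i$ is $\nu$-sum-indecomposable, there exists $j_i$ with $\Cf_{i,j_i}\in\C$ and $\nu(\Cf_{i,j_i})=\alpha_i$. The structure $\Cf_{i,j_i}$ is a tamely colorable box-augmentation of some tuple $(\Bf_{i,1},\ldots,\Bf_{i,k_i})$ of elements of $\mathcal{S}^{\Af}_\varphi$. Applying $\nu$-tamely-colorable-box-indecomposability of the interval $[\alpha_{i-1},\alpha_i]$ to $\Cf_{i,j_i}$ yields an index $l_i$ with $\Bf_{i,l_i}\in\C$ and $\nu(\Bf_{i,l_i})\in[\alpha_{i-1},\alpha_i]$. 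Crucially, $\Bf_{i,l_i}\in\mathcal{S}^{\Af}_\varphi$.

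Because $\mathcal{S}^{\Af}_\varphi$ is finite, some single structure $\Bf\in\mathcal{S}^{\Af}_\varphi$ occurs as $\Bf_{i,l_i}$ for infinitely many $i$. For all these $i$ we have $\Bf\in\C$ and $\nu(\Bf)\geq\alpha_{i-1}$. Since $(\alpha_i)_{i\in\N}$ is a strictly increasing sequence of ordinals, the ordinals $\alpha_{i-1}$ are unbounded, contradicting that $\nu(\Bf)$ is a single fixed ordinal. Hence only finitely many $i$ can satisfy the hypothesis.

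The only subtle point, and the main thing to watch carefully, is the bookkeeping: sum-indecomposability is used to pick out a single summand of rank exactly $\alpha_i$, whereas box-indecomposability is then applied to that summand and only guarantees a component whose rank lies somewhere in $[\alpha_{i-1},\alpha_i]$, not necessarily at $\alpha_i$. The lower bound $\alpha_{i-1}$ is nevertheless enough because the sequence $(\alpha_i)$ is unbounded, so a fixed structure in the finite set $\mathcal{S}^{\Af}_\varphi$ cannot realize values $\geq\alpha_{i-1}$ for infinitely many $i$.
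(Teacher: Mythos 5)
Your proof is correct and takes essentially the same route as the paper: apply Corollary~\ref{cor:tame-delhomme} to decompose $\Af{\restriction}_{\varphi,\bar s}$, use sum-indecomposability to isolate a summand of value $\alpha_i$, then box-indecomposability to land in $\mathcal{S}^{\Af}_\varphi$ with a value in $[\alpha_{i-1},\alpha_i]$, and finish by pigeonhole on the finite set $\mathcal{S}^{\Af}_\varphi$. The only cosmetic differences are that you argue by contradiction via unboundedness of $(\alpha_i)$ where the paper directly observes that each $\Bf\in\mathcal{S}^{\Af}_\varphi$ can contribute to at most two indices $i$, and that your ``pass to a subsequence'' step is unnecessary (you can simply work with the infinite set of good $i\geq 1$ and the original consecutive intervals $[\alpha_{i-1},\alpha_i]$, avoiding the need to note that enlarging an indecomposable interval downward preserves indecomposability).
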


\begin{proof}
	Let $\mathcal{S}_\varphi^{\Af}$ be the finite set of structures
	which exists by Corollary~\ref{cor:tame-delhomme}.	
	Consider an $i\in\N_{>0}$ satisfying the condition above,
	witnessed by $\bar s\in A^r$, i.e.,
	$\Af{\restriction}_{\varphi,\bar s}\in\C$ and
	$\nu(\Af{\restriction}_{\varphi,\bar s})=\alpha_i$.	
	There are structures $\Xf_1,\ldots,\Xf_k$ such that
	each of them is a tamely colorable box-augmentation of elements
	from~$\mathcal{S}_\varphi^{\Af}$ and
	$\Af{\restriction}_{\varphi,\bar s}$
	is a sum-augmentation of $(\Xf_1,\ldots,\Xf_k)$.
	Due to the definition of $\nu$-sum-indecomposability,
	there is a $1\leq j\leq k$ such that $\Xf_j\in\C$ and
        $\nu(\Xf_j)=\alpha_i$. 
	There are structures
	$\Yf_1,\ldots,\Yf_\ell\in\mathcal{S}_\varphi^{\Af}$
	such that $\Xf_j$ is a tamely colorable box-augmentation of
	$(\Yf_1,\ldots,\Yf_\ell)$.
	By the definition of $\nu$-tamely-colorable-box-indecomposability,
	there is an $1\leq h\leq\ell$ such that
	$\Yf_h\in\C$ and $\nu(\Yf_h)\in[\alpha_{i-1},\alpha_i]$.
	Thus, $i$ belongs to the set
	\begin{equation*}
		\{ i\in\N_{>0} \mid
			\exists \Bf\in\mathcal{S}_\varphi^\Af\cap\C:
			\nu(\Bf)\in[\alpha_{i-1},\alpha_i] \},
	\end{equation*}
	which is finite since each $\Bf\in\mathcal{S}_\varphi^\Af$
	satisfies the condition above for at most two distinct $i$'s.
\end{proof}

\subsection{Rank-tamely-colorable-box- and
  rank-sum-indecomposability} 
\label{sec:rank-sum-ind}
\label{sec:rank-box-ind}

Let us briefly prove that the ordinals $\omega^\alpha$ are
$\rank$-sum-indecomposable for the class of all well-founded
partial orders. Afterwards, we characterize
$\rank$-tamely-colorable-box-indecomposable intervals for the classes
of \wulpo's and well-founded forests.
For this purpose let $\rank_{ul}$ be the function obtained from
$\rank$ by restriction of the domain to the class of \wulpo's and
analogously let $\rank_F$ be $\rank$ restricted to the class of
well-founded forests. 

\begin{prop} \label{prop:rank-sum-indecomposable}
  The ordinals of the form $\omega^\alpha$ are
  $\rank$-sum-indecomposable for the class of well-founded
  partial orders.
\end{prop}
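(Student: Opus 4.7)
The plan is to combine the two fundamental facts about rank already established in the paper, namely Lemma~\ref{lem:rankNaturalSum} and Lemma~\ref{lemma-rank-substructure}, with the key property of $\omega^\alpha$ recorded in equation \eqref{Eqn-Oplus-Limit}.

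Let $\Pf$ be a well-founded partial order with $\rank(\Pf) = \omega^\alpha$ and assume that $\Pf$ is a sum-augmentation of $(\Bf_1, \ldots, \Bf_m)$, witnessed by a partition $P = P_1 \sqcup \cdots \sqcup P_m$ with $\Pf{\restriction}_{P_i} \cong \Bf_i$. Iterating Lemma~\ref{lem:rankNaturalSum} along this partition, together with associativity and commutativity of $\oplus$, yields
\begin{equation*}
\omega^\alpha \;=\; \rank(\Pf) \;\leq\; \rank(\Bf_1) \oplus \cdots \oplus \rank(\Bf_m).
\end{equation*}
Simultaneously Lemma~\ref{lemma-rank-substructure} bounds each $\rank(\Bf_i) \leq \rank(\Pf) = \omega^\alpha$. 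It therefore suffices to prove that some $\rank(\Bf_i)$ already equals $\omega^\alpha$.

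The crucial auxiliary fact I would establish is that $\omega^\alpha$ is closed under natural sum: if $\delta_1, \delta_2 < \omega^\alpha$, then $\delta_1 \oplus \delta_2 < \omega^\alpha$. I would argue this by contradiction: assume $\delta_1 \oplus \delta_2 \geq \omega^\alpha$ and use strict monotonicity of $\oplus$ to obtain $\delta_1' \leq \delta_1$ and $\delta_2' \leq \delta_2$ with $\delta_1' \oplus \delta_2' = \omega^\alpha$ (shrinking one coordinate at a time and, for limit ordinals, using continuity to locate the critical value). Equation \eqref{Eqn-Oplus-Limit} then forces $\delta_1' = \omega^\alpha$ or $\delta_2' = \omega^\alpha$, contradicting $\delta_1, \delta_2 < \omega^\alpha$. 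A cleaner alternative would be to proceed by transfinite induction on $\alpha$, splitting into successor and limit cases; in the successor case $\alpha = \alpha' + 1$, one writes ordinals below $\omega^\alpha = \omega^{\alpha'} \cdot \omega$ in the form $\omega^{\alpha'} \cdot n + \delta'$ with $\delta' < \omega^{\alpha'}$ and applies the inductive hypothesis, while in the limit case any two ordinals below $\omega^\alpha$ already lie below some $\omega^\beta$ with $\beta < \alpha$.

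Putting the pieces together: iterating the closure property just proved gives that whenever $\rank(\Bf_i) < \omega^\alpha$ for \emph{all} $i$, the natural sum $\rank(\Bf_1) \oplus \cdots \oplus \rank(\Bf_m)$ is still strictly below $\omega^\alpha$, contradicting the displayed inequality. Hence at least one $\rank(\Bf_i)$ equals $\omega^\alpha$, which is exactly what $\rank$-sum-indecomposability requires. I expect the only delicate step to be the closure-under-$\oplus$ claim, and specifically the limit case, where monotonicity alone gives a supremum bound and the rigidity expressed by \eqref{Eqn-Oplus-Limit} is essential to exclude equality with $\omega^\alpha$.
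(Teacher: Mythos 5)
Your overall structure mirrors the paper's proof: reduce to the inequality $\rank(\Pf) \leq \bigoplus_i \rank(\Bf_i)$ via Lemma~\ref{lem:rankNaturalSum}, note that each $\rank(\Bf_i) \leq \omega^\alpha$ by Lemma~\ref{lemma-rank-substructure}, and conclude by closure of $\omega^\alpha$ under $\oplus$. The paper reaches the same conclusion by reducing to the binary case and inducting on $n$; you iterate the lemma directly, which amounts to the same thing.

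Where you go beyond the paper is in spelling out the closure property (the paper simply credits it to \eqref{Eqn-Oplus-Limit}, which, taken literally, only handles the case $\delta_1 \oplus \delta_2 = \omega^\alpha$ and not the \emph{a priori} possible case of strict overshoot). Your caution here is warranted, but your first proposed argument has a genuine gap: the natural sum is \emph{not} left-continuous. For instance, $\sup_{n<\omega}(n \oplus 1) = \omega$ while $\omega \oplus 1 = \omega + 1$, so ``shrinking one coordinate to a limit and passing to the sup'' can jump past $\omega^\alpha$ without ever hitting it, and the critical value you are hunting for need not be attained. Your second argument — transfinite induction on $\alpha$, using the Cantor normal form in the successor step and cofinality of $\{\omega^\beta : \beta < \alpha\}$ in $\omega^\alpha$ in the limit step — is the standard and correct route, and is the one you should retain. (It amounts to proving the usual fact that $\oplus$ is computed by merging Cantor normal forms, from which the closure property is immediate: merging CNFs whose exponents are all $<\alpha$ yields a CNF whose exponents are all $<\alpha$.) With the first argument dropped, your proof is sound and establishes a detail the paper leaves implicit.
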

\begin{proof}
  Let $\Pf=(P, \leq)$ be a well-founded
  partial order and assume that $\Pf$ is a sum-augmentation 
  of $(\Pf_1, \dots, \Pf_n)$. 
  If 
  \mbox{$\rank(\Pf\restriction_{P \setminus P_1})<\omega^\alpha$}
  and 
  $\rank(\Pf_1)<\omega^\alpha$, then
  \begin{align*}
    \rank(\Pf)
    \stackrel{\text{Lemma~\ref{lem:rankNaturalSum}}}{\leq}
    \rank(\Pf_1) \oplus 
    \rank(\Pf\restriction_{P \setminus P_1}) < \omega^\alpha,
  \end{align*}
  where the last inequality follows from Property
  \eqref{Eqn-Oplus-Limit} of $\oplus$ (see page \pageref{Eqn-Oplus-Limit}).
  Thus,
  ${\rank(\Pf)=\omega^\alpha}$ implies
  $\rank(\Pf_1)=\omega^\alpha$ or
  $\rank(\Pf\restriction_{P \setminus P_1})=\omega^\alpha$.
  The claim follows by induction on $n$.
\end{proof}
It follows trivially that the ordinals of the form $\omega^\alpha$ are
$\rank_{ul}$-sum-indecomposable and $\rank_F$-sum-indecomposable.

We want to show
that 
$[\omega^{\omega^\alpha}, \omega^{\omega^{\alpha+1}}]$ is a
$\rank_{ul}$-tamely-colorable-box-indecomposable interval for each ordinal
$\alpha$
and 
$[\omega^{\alpha}, \omega^{\alpha+1}]$ is a 
$\rank_{F}$-tamely-colorable-box-indecomposable interval for each ordinal
$\alpha$. We start with the observation
that every box-decomposition of a \wulpo only contains at most one
proper \wulpo in the sense that if a \wulpo is a box-augmentation of
$(\Pf_1, \Pf_2, \dots, \Pf_n)$ then all but one of the $\Pf_i$
are disjoint unions of ordinals.
In order to prove this fact, we introduce the following notation.
Let $\Pf=(P, \leq)$ be some partial order. We call $a\in P$ a \emph{branching
  node (of $\Pf$)}, if there
are $b, c\in P$ such that $b<a$, $c<a$ and neither $b\leq c$ nor
$c\leq b$ (i.e., $b$ and $c$ are incomparable).

\begin{lem} \label{lemma-box-aug-branching}
  Let $\Pf$, $\Pf_1$, and $\Pf_2$ be \wulpo's.
  If $\Pf$ is a box-augmentation of $(\Pf_1, \Pf_2)$
  then
  $\Pf_1$ or $\Pf_2$ does not contain a branching node.
\end{lem}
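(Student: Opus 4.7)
The plan is to argue by contradiction. Suppose both $\Pf_1$ and $\Pf_2$ contain branching nodes, say $a_1 \in P_1$ with incomparable witnesses $b_1, c_1 < a_1$ and $a_2 \in P_2$ with incomparable witnesses $b_2, c_2 < a_2$. Let $\eta \colon P_1 \times P_2 \to P$ be the bijection witnessing $\Pf$ as a box-augmentation of $(\Pf_1, \Pf_2)$. The idea is to use the componentwise embeddings to lift both branchings into $\Pf$ and then collide two applications of upwards linearity of $\Pf$ with an incomparability that the embeddings are forced to preserve.

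First I would collect the relations among the five key nodes $\eta(b_1, b_2)$, $\eta(c_1, b_2)$, $\eta(a_1, b_2)$, $\eta(b_1, a_2)$, and $\eta(c_1, a_2)$ that are automatic from the box-augmentation. The embedding $\eta_1^{b_2}$ yields $\eta(b_1, b_2), \eta(c_1, b_2) < \eta(a_1, b_2)$; the embeddings $\eta_2^{b_1}$ and $\eta_2^{c_1}$ yield $\eta(b_1, b_2) < \eta(b_1, a_2)$ and $\eta(c_1, b_2) < \eta(c_1, a_2)$; and the embedding $\eta_1^{a_2}$, which preserves the incomparability of $b_1$ and $c_1$, shows that $\eta(b_1, a_2)$ and $\eta(c_1, a_2)$ are incomparable in $\Pf$.

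Next I would invoke upwards linearity of $\Pf$ twice. At $\eta(b_1, b_2)$, the nodes $\eta(a_1, b_2)$ and $\eta(b_1, a_2)$ both lie in $\uparrow \eta(b_1, b_2)$ and hence are comparable; injectivity of $\eta$ rules out equality, and by symmetry between $\Pf_1$ and $\Pf_2$ I may assume $\eta(a_1, b_2) < \eta(b_1, a_2)$. At $\eta(c_1, b_2)$, the same reasoning renders $\eta(a_1, b_2)$ and $\eta(c_1, a_2)$ comparable, leaving two sub-cases. If $\eta(a_1, b_2) < \eta(c_1, a_2)$, then $\uparrow \eta(a_1, b_2)$ contains the incomparable pair $\eta(b_1, a_2), \eta(c_1, a_2)$, contradicting upwards linearity of $\Pf$. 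Otherwise $\eta(c_1, a_2) < \eta(a_1, b_2) < \eta(b_1, a_2)$, and transitivity forces $\eta(c_1, a_2) < \eta(b_1, a_2)$, again contradicting incomparability.

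The main conceptual obstacle is to see where the contradiction should come from, since an upwards linear order can perfectly well accommodate a single node with many incomparable predecessors; merely lifting one branching node from one factor into $\Pf$ does not help. The trick is that branching in both factors simultaneously buys two independent applications of upwards linearity at $\eta(b_1, b_2)$ and $\eta(c_1, b_2)$, and the single incomparability between $\eta(b_1, a_2)$ and $\eta(c_1, a_2)$, which must survive because $\eta_1^{a_2}$ is an embedding, cannot be consistent with both of them.
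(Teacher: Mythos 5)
Your proof is correct and takes essentially the same approach as the paper: assume branchings in both factors, use the componentwise embeddings to produce two chains rooted at the same pair of bottom nodes, apply upwards linearity of $\Pf$ twice, and collide this with an incomparability that the embeddings force to persist. The only cosmetic difference is that you work with the incomparable pair from $\Pf_1$ (keeping $a_2$ fixed in the second coordinate) whereas the paper works with the incomparable pair from $\Pf_2$ (keeping $a_1$ fixed in the first coordinate); because your final comparable pair $\eta(a_1,b_2)$ and $\eta(c_1,a_2)$ shares no coordinate, you need a short case split at the end, while the paper instead compares $\eta(a_1,b_2)$ with $\eta(a_1,c_2)$, which share the first coordinate, and can appeal directly to the embedding $\eta_2^{a_1}$ to pull the comparability back to $\Pf_2$ in one step. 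Both are sound; the paper's final step is marginally shorter.
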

\begin{proof}
  Let $\Pf=(P, \leq)$ and $\Pf_i=(P_i,\leq_i)$. 
  Heading for a contradiction assume that
  \mbox{$a_i, b_i, c_i \in P_i$} for $i\in\{1,2\}$ are nodes such that
  $b_i <_i a_i, c_i<_i a_i$ and 
  neither $b_i\leq_i c_i$ nor $c_i\leq_i b_i$.

  Let $\eta: P_1 \times P_2 \to P$ be the bijection that witnesses
  that  $\Pf$ is a box-augmentation of $(\Pf_1, \Pf_2)$.
  Then $\Pf$ contains the chains
  \begin{align*}
    &\eta(b_1, b_2) < \eta(b_1, a_2)\text{ and}\\
    &\eta(b_1, b_2) < \eta(a_1, b_2).
  \end{align*}
  Since $\Pf$
  is a \wulpo, the elements above $\eta(b_1,b_2)$ are linearly ordered and
  we may assume that $ \eta(b_1, a_2) < \eta(a_1, b_2)$ without
  loss of generality.
  Thus, we obtain
  \begin{align*}
    \eta(b_1,c_2) < \eta(b_1, a_2) < \eta(a_1, b_2).
  \end{align*}
  Furthermore, we have
  \begin{align*}
    \eta(b_1, c_2) < \eta(a_1, c_2).
  \end{align*}
  Again, the elements above $\eta(b_1,c_2)$ are linearly ordered
  and we obtain that $\eta(a_1, c_2)$ and $\eta(a_1, b_2)$ are
  comparable in $\Pf$. 
  By definition of a box-augmentation, we obtain that $c_2$
  and $b_2$ are comparable in $\Pf_2$ as well, which contradicts our
  assumptions. 
  Thus,  $\Pf_1$ or $\Pf_2$ does not contain a
  branching  node. 
\end{proof}

\begin{cor} \label{coro-at-most-1-branching}
  Let $\Pf$ and $\Pf_1, \dots, \Pf_n$ be \wulpo's such that $\Pf$ is a
  box-augmentation of $(\Pf_1, \dots, \Pf_n)$. There is at most one
  $i\in\{1, \dots, n\}$ such that $\Pf_i$ contains a branching node,
  i.e., there is an $i\in\{1, \dots, n\}$ such that
  $\Pf_k$ is a disjoint union of well-orders for all $k\neq i$.
\end{cor}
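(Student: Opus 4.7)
The plan is to deduce this corollary from Lemma~\ref{lemma-box-aug-branching} by a contradiction argument that reduces the $n$-fold case to the binary case via ``freezing'' all but two of the components of the box-augmentation bijection.

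Suppose for contradiction that two of the factors, say $\Pf_1$ and $\Pf_2$ (after renumbering), both contain branching nodes. Let $\eta:\prod_{i=1}^n P_i\to P$ be the bijection witnessing that $\Pf$ is a box-augmentation of $(\Pf_1,\ldots,\Pf_n)$, and fix arbitrary elements $d_k\in P_k$ for $3\leq k\leq n$. Define
\begin{equation*}
  \tilde\eta:P_1\times P_2\to P,\qquad
  \tilde\eta(x,y)=\eta(x,y,d_3,\ldots,d_n),
\end{equation*}
and let $\Pf'=\Pf\restriction_{\tilde\eta(P_1\times P_2)}$. Since $\Pf$ is a \wulpo\ and the class of \wulpo's is closed under substructures, $\Pf'$ is again a \wulpo.

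The next step is to verify that $\tilde\eta$ witnesses that $\Pf'$ is a box-augmentation of $(\Pf_1,\Pf_2)$. Injectivity of $\tilde\eta$ is inherited from $\eta$, and surjectivity onto the domain of $\Pf'$ holds by construction. For the componentwise embedding condition, fix $x\in P_1$; then $y\mapsto\tilde\eta(x,y)$ is exactly the map $\eta_2^{(x,d_3,\ldots,d_n)}$, which by assumption on $\eta$ is an embedding of $\Pf_2$ into $\Pf$, and its image lies in $\Pf'$ so that it is in fact an embedding into $\Pf'$. The symmetric argument works for the first coordinate.

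Now Lemma~\ref{lemma-box-aug-branching} applied to the box-augmentation $\Pf'$ of $(\Pf_1,\Pf_2)$ forces that at least one of $\Pf_1,\Pf_2$ has no branching node, contradicting our choice. Hence at most one $\Pf_i$ contains a branching node. For every other $\Pf_k$, the absence of a branching node together with upwards linearity means that $\Pf_k$ decomposes into its connected components, each of which is a linear well-order, giving the stated disjoint union of well-orders. I do not expect any serious obstacle; the only delicate point is confirming that $\Pf'$ really is a box-augmentation of the two marked factors, which is a direct unravelling of the definitions.
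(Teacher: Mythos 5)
Your proof is correct and follows essentially the same route as the paper's: the paper also freezes all components except two and applies Lemma~\ref{lemma-box-aug-branching} to the resulting two-factor box-augmentation. Your write-up is slightly more explicit about verifying the box-augmentation axioms, but the underlying argument is identical.
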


\begin{proof}
  Let $\eta:\prod_{i=1}^{n} \Pf_i \to \Pf$ be the bijection of the
  box-augmentation.
  Choose numbers ${1\leq j < k \leq n}$ and  a tuple
  $\bar b={ (b_1, \dots, b_{j-1},b_{j+1}, \dots, 
    b_{k-1}, b_{k+1}, \dots,  b_n)
    \in\prod_{i\in\{1, \dots n\}\setminus\{j,k\}} P_i}$ 
  arbitrarily  but fixed.
  Then $\eta^{\bar b}_{j,k}: P_j\times P_k \to P$ with
  $\eta^{\bar b}_{j,k}(b_j,b_k)=\eta(b_1, \dots, b_n)$ induces a
  box-augmentation of some sub\wulpo 
  $\Pf'\leq \Pf$. Application of
  Lemma~\ref{lemma-box-aug-branching} yields the 
  claim.
\end{proof}

\begin{rem} \label{rem-difference-forest-wulpos}
  In the following, our proofs for the case of \wulpo's and the case
  of well-founded forests proceed completely analogous. 
  The difference in the results stems from an observation
  concerning  Corollary~\ref{coro-at-most-1-branching}: if a
  well-founded forest
  $\Ff$ is a  box-augmentation of partial orders $\Pf_1, \dots, \Pf_n$
  then the 
  $\Pf_i$ occur as substructures of $\Ff$. Hence, these are also
  well-founded forests. But if a well-founded forest is a disjoint union of ordinals, all these
  ordinals must be finite. Thus, each connected component of such a
  disjoint union has finite rank and the whole structure has rank at
  most $\omega$. In contrast, if a disjoint union of ordinals is a
  tree-automatic 
  \wulpo each connected component is a tree-automatic ordinal whence
  its rank is bounded by $\omega^{\omega^\omega}$ and all smaller
  ordinals can be reached. This difference
  causes the different results with respect to box-indecomposability. 
\end{rem}

\begin{lem} \label{lemma-box-of-chains}
  Let $\Pf$ and $\Pf_1, \dots, \Pf_m$ be \wulpo's such  that $\Pf$ is a
  box-augmentation of
  $(\Pf_1, \dots, \Pf_m)$ via $\eta:\prod_{i=1}^{m} \Pf_i \to \Pf$.
  Let $I_i\subseteq P_i$ be a well-order.
  The substructure $\Pf' \subseteq \Pf$  induced by
  $\eta(\prod_{i=1}^{m} I_i)$ is a well-order.
\end{lem}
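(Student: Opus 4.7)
The plan is to establish the two defining properties of a well-order on $\Pf'$ separately: linearity and well-foundedness. Well-foundedness will come essentially for free, since $\Pf$ is a \wulpo and therefore well-founded; any suborder of a well-founded partial order is again well-founded (cf.~Lemma~\ref{lemma-rank-substructure}). Hence the entire work lies in showing that $\eta\bigl(\prod_i I_i\bigr)$ is linearly ordered in $\Pf$.

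To show linearity, I would pick $\bar x,\bar y \in \prod_i I_i$ and exploit the fact that each $I_i$ is a chain: define $\bar z \in \prod_i I_i$ coordinatewise by $z_i = \min(x_i,y_i)$, which is well-defined because $I_i$ is totally ordered. The plan is then to prove that $\eta(\bar z) \leq \eta(\bar x)$ (and symmetrically $\eta(\bar z) \leq \eta(\bar y)$) by interpolating between $\bar z$ and $\bar x$ along single-coordinate changes: define $\bar w^{(0)} = \bar z$ and, for $1 \leq j \leq m$, let $\bar w^{(j)}$ agree with $\bar w^{(j-1)}$ in every coordinate except the $j$th, where it takes the value $x_j$. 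Then $\bar w^{(m)} = \bar x$, and for each $j$ the pair $\bar w^{(j-1)},\bar w^{(j)}$ differs only in coordinate $j$, so the box-augmentation property supplies an embedding $\eta^{\bar d}_j$ of $\Pf_j$ into $\Pf$ through which this step factors; since $z_j \leq_j x_j$ holds in $I_j$, we obtain $\eta(\bar w^{(j-1)}) \leq \eta(\bar w^{(j)})$. Chaining these inequalities gives $\eta(\bar z) \leq \eta(\bar x)$.

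With $\eta(\bar z) \leq \eta(\bar x)$ and $\eta(\bar z) \leq \eta(\bar y)$ in hand, linearity follows from upwards linearity of $\Pf$: if $\eta(\bar z)$ coincides with one of $\eta(\bar x),\eta(\bar y)$ the comparability is immediate, and otherwise both lie in the strict upward closure $\uparrow \eta(\bar z)$, which induces a linear suborder of $\Pf$ by definition of upwards linearity, so $\eta(\bar x)$ and $\eta(\bar y)$ are comparable. Combining linearity with the inherited well-foundedness yields that $\Pf'$ is a well-order.

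I do not expect any real obstacle here; the only delicate point is keeping straight that the box-augmentation property gives embeddings only when a single coordinate varies, which is precisely why the single-step interpolation $\bar z = \bar w^{(0)}, \bar w^{(1)}, \dots, \bar w^{(m)} = \bar x$ is needed rather than a direct coordinatewise argument. Everything else—existence of $\min(x_i,y_i)$, well-foundedness of $\Pf'$, and the conclusion from upwards linearity—is essentially bookkeeping.
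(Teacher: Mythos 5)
Your proposal is correct and matches the paper's own argument essentially step for step: the paper also forms the coordinatewise minimum $\bar z$, climbs from $\eta(\bar z)$ to each of $\eta(\bar x)$ and $\eta(\bar y)$ by changing one coordinate at a time using the box-augmentation embeddings, and then invokes upwards linearity above $\eta(\bar z)$ to conclude comparability. Your explicit handling of the degenerate case where $\eta(\bar z)$ equals one of $\eta(\bar x),\eta(\bar y)$ is a small refinement, but otherwise the two proofs are the same.
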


\begin{proof}
  Note that $\Pf'$ is a \wulpo because it is a substructure of $\Pf$. 
  Thus, it suffices to show that $\Pf'$ is linear. 
  Let $a^j_i\in I_i$ for each $1\leq i \leq m$ and $j\in\{1,2\}$. 
  Set $m_i=\min(a^1_i,a^2_i)$. 
  For $j\in\{1,2\}$,
  \begin{align*}
    \eta(m_1, m_2,\dots, m_n) \leq \eta(a^j_1, m_2, \dots, m_n) \leq
    \cdots \leq \eta(a^j_1, a^j_2, \dots, a^j_n). 
  \end{align*}
  Since the elements above $\eta(m_1, \dots, m_n)$ are 
  pairwise comparable,  the elements $\eta(a^1_1, \dots, a^1_n)$ and
  $\eta(a^2_1, \dots, a^2_n)$ are comparable. 
  Since the $a^j_i$ have been chosen arbitrarily any two elements of
  $\Pf'$ are comparable, i.e., $\Pf'$ is linear. 
\end{proof}

In the following lemma, the term ``replacement'' refers to the
replacement operation introduced at the end of
Section~\ref{sec:wulpos}.

\begin{lem}\label{lem:wulpo-box-respects-branching}
 If a \wulpo $\Pf$ is the box-augmentation of ordinals
 $\Cf_1, \dots, \Cf_n$ 
 and a \wulpo $\Qf$ via the bijection $\eta$, then $\Pf$ results from $\Qf$ by
 replacing every maximal branching free interval $I$ of $\Qf$ by 
 $\eta\left( (\prod_{i=1}^n \Cf_i) \times I\right)$ (which is a well-order by
 Lemma~\ref{lemma-box-of-chains}).
\end{lem}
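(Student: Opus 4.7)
The plan is to verify that, under the bijection $\eta$, the order on $\Pf$ coincides with the structure obtained from $\Qf$ by simultaneously replacing each maximal branching free interval $I$ of $\Qf$ by the well-order $J_I := \eta\left((\prod_{i=1}^n \Cf_i) \times I\right)$. Lemma~\ref{lemma-box-of-chains} immediately gives that each $J_I$ induces a well-order in $\Pf$ (since the ordinals $\Cf_i$ and the branching free interval $I$ are well-orders), and because the maximal branching free intervals partition $Q$ by Lemma~\ref{lem:WulpoPartitionMaxBranchingFreeIntervals} and $\eta$ is a bijection, the family $(J_I)_I$ partitions $P$. Hence it only remains to verify that, for $I_1\neq I_2$, the relation in $\Pf$ between elements of $J_{I_1}$ and $J_{I_2}$ is determined by the relation between any one element of $I_1$ and any one element of $I_2$ in $\Qf$---which, because $I_1,I_2$ are maximal branching free, coincides with the order prescribed by the replacement rule.

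Fix $\bar c_1,\bar c_2\in\prod_i C_i$, $q_1\in I_1$ and $q_2\in I_2$ with $I_1\neq I_2$. The remaining task splits into two subclaims: (a) if $q_1,q_2$ are incomparable in $\Qf$, then $\eta(\bar c_1,q_1)$ and $\eta(\bar c_2,q_2)$ are incomparable in $\Pf$; and (b) if $q_1<q_2$ in $\Qf$, then $\eta(\bar c_1,q_1)<\eta(\bar c_2,q_2)$ in $\Pf$ (the case $q_1>q_2$ being symmetric). For (a) I would argue by contradiction from $\eta(\bar c_1,q_1)\le\eta(\bar c_2,q_2)$: Lemma~\ref{lemma-box-of-chains} applied to the singleton $\{q_1\}$ makes $\eta(\bar c_1,q_1)$ and $\eta(\bar c_2,q_1)$ comparable, and one then distinguishes whether $\eta(\bar c_1,q_1)\le\eta(\bar c_2,q_1)$ (in which case upwards linearity above $\eta(\bar c_1,q_1)$ forces $\eta(\bar c_2,q_1)$ and $\eta(\bar c_2,q_2)$ into a chain, contradicting incomparability of $q_1,q_2$ via the embedding $\eta^{\bar c_2}_{n+1}$) or $\eta(\bar c_2,q_1)<\eta(\bar c_1,q_1)$ (in which case transitivity yields $\eta(\bar c_2,q_1)<\eta(\bar c_2,q_2)$, whence $q_1<q_2$ via the same embedding, again contradicting incomparability).

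The main obstacle is (b). Here Lemma~\ref{lemma-box-of-chains} applied to the well-order $\{q_1,q_2\}$ already guarantees that $\eta(\bar c_1,q_1)$ and $\eta(\bar c_2,q_2)$ are comparable in $\Pf$, so the only thing to exclude is $\eta(\bar c_2,q_2)<\eta(\bar c_1,q_1)$. This is the step where the hypothesis $I_1\neq I_2$ really enters, and the tool that breaks the case open is Lemma~\ref{lem:SeparationMBFreeInt}: it supplies a witness $q\in Q$ with $q<q_2$ but $q$ incomparable to $q_1$ in $\Qf$. The embedding $\eta^{\bar c_2}_{n+1}$ delivers $\eta(\bar c_2,q)<\eta(\bar c_2,q_2)$, hence under the hypothesised reversed order also $\eta(\bar c_2,q)<\eta(\bar c_1,q_1)$. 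Now Lemma~\ref{lemma-box-of-chains} on $\{q\}$ makes $\eta(\bar c_1,q)$ comparable to $\eta(\bar c_2,q)$: if $\eta(\bar c_2,q)\le\eta(\bar c_1,q)$ then upwards linearity above $\eta(\bar c_2,q)$ puts $\eta(\bar c_1,q)$ and $\eta(\bar c_1,q_1)$ into a chain, whereas if $\eta(\bar c_1,q)<\eta(\bar c_2,q)$ then direct transitivity gives $\eta(\bar c_1,q)<\eta(\bar c_1,q_1)$; either way, the embedding $\eta^{\bar c_1}_{n+1}$ would then have $q$ and $q_1$ comparable, a contradiction. Strictness $\eta(\bar c_1,q_1)\neq\eta(\bar c_2,q_2)$ follows from the bijectivity of $\eta$ together with $q_1\neq q_2$, completing the verification that $\Pf$ is the claimed simultaneous replacement.
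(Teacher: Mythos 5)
Your proof is correct and takes essentially the same approach as the paper: both reduce the problem to checking the cross-interval comparisons, with Lemma~\ref{lem:SeparationMBFreeInt} supplying the incomparable witness that forces strictness, Lemma~\ref{lemma-box-of-chains} providing linearity of boxed chains, and upwards linearity of $\Pf$ doing the rest. The only difference is organizational: the paper first isolates two general auxiliary claims (preservation of incomparability via a $\max$/$\min$ trick on the ordinal coordinates, and a strict inequality under a three-element configuration), then applies them, whereas you argue inline by contradiction with a case split on the comparability provided by Lemma~\ref{lemma-box-of-chains}.
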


\begin{proof}
  We first show the following two claims.
  \begin{enumerate}[(1)]
  \item  
    For
    all $q_0,q_1\in \Qf$  and for all $b_i,c_i\in \Cf_i$
    the nodes
    $\eta(b_1, \dots, b_n, q_0)$ and 
    $\eta(c_1, \dots, c_n, q_1)$ are incomparable if and only if
    $q_0$ and $q_1$ are incomparable.
  \item For all 
    $b_i,c_i\in \Cf_i$ and
    $q_0,q_1,q\in \Qf$ such that $q_0<q$ and $q_1<q$ but $q_0$
    and $q_1$ are incomparable, 
    $\eta(b_1, \dots, b_n, q_1) < \eta(c_1, \dots, c_n, q)$.
  \end{enumerate}
  The first statement is easy to show: If
  $\eta(b_1, \dots, b_n, q_0) \leq \eta(c_1, \dots, c_n, q_1)$
  then the elements
  $\eta(\max(b_1,c_1), \dots, \max(b_n,c_n), q_0)$ and
  $\eta(\max(b_1,c_1), \dots, \max(b_n,c_n), q_1)$ are both above 
  the element $\eta(b_1, \dots, b_n, q_0)$ and therefore comparable. By the 
  definition of a box-augmentation, also $q_0$ and $q_1$ are
  comparable. 
  For the other direction, if $q_0 \leq q_1$, then 
  \begin{equation*}
    \eta(\min(b_1,c_1), \dots, \min(b_n,c_n), q_0) \leq
    \eta(\min(b_1,c_1), \dots, \min(b_n,c_n), q_1).    
  \end{equation*}
  Thus, $\eta(b_1, \dots, b_n, q_0)$ and 
  $\eta(c_1, \dots, c_n, q_1)$ are both above 
  $\eta(\min(b_1,c_1), \dots, \min(b_n,c_n), q_0)$ and therefore
  comparable. 
  
  Let us now prove the second claim. 
  Due to the first claim, $q_1 < q$ implies that
  $\eta(b_1, \dots, b_n, q_1)$ and $\eta(c_1, \dots, c_n, q)$ are comparable.
  Moreover, 
  \begin{equation}
    \label{eq:incomparabilityPreserverd}
    \text{for all $a_i\in \Cf_i, \ \eta(a_1, \dots, a_n,q_0)$ and $\eta
    (a_1, \dots, a_n,q_1)$ are 
    incomparable,}
  \end{equation}
  because $q_0$ and $q_1$ are incomparable and $\Pf$ is a
  box-augmentation. 
  Similarly, it is clear that
  \begin{align*}
    &\eta(\min(b_1,c_1), \dots,
    \min(b_n,c_n), q_i) \leq \eta(\min(b_1,c_1), \dots,
    \min(b_n,c_n), q) \leq \eta(c_1, \dots, c_n, q) \text{ and}\\
    &\eta(\min(b_1,c_1), \dots,
    \min(b_n,c_n), q_i) \leq \eta(b_1, \dots, b_n, q_i)
  \end{align*}
  for  $i\in\{0,1\}$. Since $\Pf$ is a \wulpo, the
  nodes above 
  $\eta(\min(b_1,c_1), \dots, \min(b_n,c_n), q_i)$ are linearly
  ordered. Hence
  $\eta(c_1, \dots, c_n, q)$ and
  $\eta(b_1, \dots, b_n, q_i)$ are comparable for $i\in\{0,1\}$.
  Note that 
  \begin{equation*}
  	\eta(b_1, \dots, b_n, q_i) \leq \eta(c_1, \dots, c_n,q) \leq
  	\eta(b_1, \dots,  b_n, q_{1-i})
  \end{equation*}
  for $i \in \{0,1\}$
  would contradict \eqref{eq:incomparabilityPreserverd}.
  Analogously, if 
  \begin{align*}
  	&\eta(c_1, \dots, c_n,q) \leq \eta(b_1, \dots, b_n, q_1)  \text{
          and}\\
        &\eta(c_1, \dots, c_n,q)  \leq \eta(b_1, \dots,  b_n, q_0), 
  \end{align*}
  then
  upwards linearity would lead to a contradiction with 
  \eqref{eq:incomparabilityPreserverd}.
  Thus, we conclude that
  \begin{equation*}
    \eta(b_1, \dots, b_n, q_1) \leq \eta(c_1, \dots, c_n,q),
  \end{equation*}
  which shows the second claim.

  Using these two claims, we now show that
  for every pair $I_1,I_2$ of distinct maximal branching free
  intervals of $\Qf$ and for all $q_1\in I_1,q_2\in I_2$, and for all
  $b_i,c_i\in \Cf_i$ we have
  \begin{equation} \label{eq-same-branching-structure}
    q_1 < q_2 \quad \Longleftrightarrow \quad \eta(b_1,\dots, b_n, q_1) <
    \eta(c_1, \dots, c_n, q_2).
  \end{equation}
  First assume that $q_1< q_2$. Since $q_1$ and $q_2$ come from
  different maximal branching free intervals, Lemma
  \ref{lem:SeparationMBFreeInt} implies that there is a $q_3 < q_2$
  such that $q_1$ and $q_3$ are incomparable. 
  Due to the second claim, this immediately shows that
  $\eta(b_1,\dots, b_n, q_1) <
  \eta(c_1, \dots, c_n, q_2)$. 
  By symmetry, we conclude that 
  $q_2 < q_1$ implies $\eta(b_1,\dots, b_n, q_2) <
  \eta(c_1, \dots, c_n, q_1)$.
  Moreover, because of the first claim, if $q_1$ and $q_2$ are
  incomparable then   
  $\eta(b_1,\dots, b_n, q_1)$ and $\eta(c_1, \dots, c_n, q_2)$ are
  incomparable as well.
 
  By \eqref{eq-same-branching-structure}, $\Pf$ is obtained from $\Qf$
  by replacing every maximal branching free interval $I$ of $\Qf$
  by a box-augmentation of $(\Cf_1, \dots, \Cf_n, I)$, which is a well
  order by Lemma~\ref{lemma-box-of-chains}.
  This concludes the proof of the lemma.
\end{proof}

\begin{lem} \label{lemma:2-claims}
  Let $\Pf$ be  a \wulpo that is a box-augmentation of 
  ordinals $\Cf_1, \dots, \Cf_n$ and
  a  \wulpo $\Qf$ via the bijection $\eta$. 
  Let $q \in \Qf$ and $q' \in \Qf\cup\{\infty\}$
  such that $q < q'$ and
  $[q,q')$ is a maximal branching free interval 
  in $\Qf$. Let $\alpha := \rank([q,q'))>0$ and
  $0_i$ the minimal element of $\Cf_i$ for each $1\leq i \leq
  n$. 
  Then
  \begin{equation} \label{eq:1st-claim}
    \rank(\eta(0_1, 0_2, \dots, 0_n, q), \Pf) \leq 
    \rank(\Cf_1) \otimes \rank(\Cf_2) \otimes\dots\otimes
    \rank(\Cf_n) \otimes \rank(q, \Qf)
  \end{equation}
  and for each $\hat q\in [q, q')$ and $c_i\in\Cf_i$ we have
  \begin{equation} \label{eq:2nd-claim}
    \rank(\eta(c_1, c_2, \dots, c_n, \hat q), \Pf) <
    \rank(\Cf_1) \otimes \rank(\Cf_2) \otimes\dots\otimes
    \rank(\Cf_n) \otimes 
        (\rank(q, \Qf) + \alpha ).
  \end{equation}
\end{lem}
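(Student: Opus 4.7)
The plan is to establish both claims simultaneously by transfinite induction on $\beta := \rank(q, \Qf)$. I will abbreviate $K := \rank(\Cf_1) \otimes \cdots \otimes \rank(\Cf_n)$ and $r := \eta(0_1, \dots, 0_n, q)$. The crucial structural fact to pin down at the outset is that the strict downset of $r$ in $\Pf$ coincides with $\eta(\prod_i \Cf_i \times {\downarrow q})$, and moreover that any maximal branching free interval $J$ of $\Qf$ meeting $\downarrow q$ lies entirely in $\downarrow q$. For the first part, $r$ is the minimum of the maximal branching free interval $\tilde I := \eta(\prod_i \Cf_i \times I)$ of $\Pf$ (which is itself maximal branching free by Lemma~\ref{lem:wulpo-box-respects-branching}), so no element of $\tilde I$ lies strictly below $r$; for $p \in \Qf \setminus I$, equation~\eqref{eq-same-branching-structure} from the proof of Lemma~\ref{lem:wulpo-box-respects-branching} shows that $\eta(c_1,\dots,c_n,p) < r$ iff $p < q$. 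For the second part, if $p \in J \neq I$ satisfies $p < q$ and some $p' \in J$ had $p' \geq q$, then since $J$ is an interval, every element strictly between $p$ and $p'$ would lie in $J$; in particular $q \in J$, contradicting $J \cap I = \emptyset$ from Lemma~\ref{lemma-disjoint-or-equal}.

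For Claim 1, I partition $\downarrow r$ according to the maximal branching free intervals $J \subseteq \downarrow q$ of $\Qf$. Writing $p_J$ for the minimum of such $J$ and $\alpha_J := \rank(J)$, the inequality $\rank(p_J, \Qf) < \beta$ activates the inductive hypothesis for $J$, which (via Claim 2) gives, for every $p \in J$ and every choice of $(c_i)$,
\[\rank(\eta(c_1, \dots, c_n, p), \Pf) < K \otimes (\rank(p_J, \Qf) + \alpha_J).\]
Combining Lemma~\ref{lem:IntervalRankShift} with the ordinal identity $\sup_{\gamma < \alpha_J}(\gamma + 1) = \alpha_J$ yields $\rank(p_J, \Qf) + \alpha_J = \sup_{p \in J}(\rank(p, \Qf) + 1) \leq \rank(q, \Qf)$. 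Taking the supremum over all such $J$ and invoking monotonicity of $\otimes$ in its last argument delivers $\rank(r, \Pf) \leq K \otimes \rank(q, \Qf)$, which is Claim 1.

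Claim 2 is then a brief corollary. Since $\tilde I$ is a maximal branching free interval of $\Pf$, for $w := \eta(c_1, \dots, c_n, \hat q) \in \tilde I$ Lemma~\ref{lem:IntervalRankShift} yields
\[\rank(w, \Pf) = \rank(r, \Pf) + \rank([r, w)).\]
Lemma~\ref{lemma-box-of-chains} asserts that $\tilde I$ is a well-order, and since the natural product is the largest well-order linearization of the direct product, $\rank(\tilde I) \leq K \otimes \alpha$; hence $\rank([r, w)) = \rank(w, \tilde I) < K \otimes \alpha$. Combining with Claim 1 and Lemma~\ref{lem:subdistributivity-plus-natprod} gives $\rank(w, \Pf) < K \otimes \rank(q, \Qf) + K \otimes \alpha \leq K \otimes (\rank(q, \Qf) + \alpha)$, where the strict inequality survives the non-strict addition on the left by strict right-monotonicity of ordinal addition.

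The step I expect to be most delicate is the structural description of $\downarrow r$ and the closure property ``$J \cap \downarrow q \neq \emptyset$ implies $J \subseteq \downarrow q$'': although both are intuitively clear from the replacement picture of Lemma~\ref{lem:wulpo-box-respects-branching}, they require careful use of upward linearity together with disjointness of distinct maximal branching free intervals (Lemma~\ref{lemma-disjoint-or-equal}). Once these are in hand, the remainder is bookkeeping with Lemmas~\ref{lem:IntervalRankShift}, \ref{lemma-box-of-chains}, and~\ref{lem:subdistributivity-plus-natprod} together with the inductive hypothesis.
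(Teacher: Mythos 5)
Your proposal is correct and follows essentially the same strategy as the paper's proof: a simultaneous transfinite induction in which Claim~1 is obtained by partitioning $\downarrow q$ into maximal branching free intervals of $\Qf$ and applying Claim~2 (from the inductive hypothesis) to each, while Claim~2 for the same $q$ is derived from Claim~1 via Lemma~\ref{lem:IntervalRankShift}, the bound $\rank(\tilde I)\leq K\otimes\alpha$ (via Lemma~\ref{lemma-box-of-chains}), and Lemma~\ref{lem:subdistributivity-plus-natprod}. The only differences are cosmetic---you obtain $\rank(p_J,\Qf)+\alpha_J\leq\rank(q,\Qf)$ via a supremum calculation where the paper invokes the ``moreover'' clause of Lemma~\ref{lem:IntervalRankShift} at the endpoint $q_2\leq q$, and your base case is absorbed by the empty supremum rather than stated separately.
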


\begin{proof}
  We prove both claims simultaneously by induction. 
  If $\rank(q,\Qf)=0$, then $q$ is minimal in $\Qf$. By Lemma~\ref{lem:wulpo-box-respects-branching},
  $\eta(0_1, 0_2, \dots, 0_n, q)$ is minimal in $\Pf$.
  Hence, its rank is also $0$ as desired.

  Now assume that \eqref{eq:1st-claim} is true for some $q\in\Qf$ inducing
  a maximal branching free interval $[q,q')\subseteq\Qf$ and let $c_i\in\Cf_i$ and $\hat q\in [q,q')$.
  Due to Lemma~\ref{lem:wulpo-box-respects-branching}, 
  $I := \eta(\prod_{i=1}^n  \Cf_i \times [q,q'))$ is a maximal branching
  free interval of
  $\Pf$ with minimal element $\eta(0_1, 0_2, \dots, 0_n, q)$. The rank
  of $I$ can be at most  $\rank(\Cf_1) \otimes \dots \otimes \rank(\Cf_n) 
    \otimes \alpha$. The interval
    \begin{equation*}
      J := [\eta(0_1, \dots, 0_n, q), \eta(c_1, \dots, c_n, \hat q)) \subseteq
      \Pf      
    \end{equation*}
    is strictly contained in $I$ whence
  \begin{equation} \label{eq:rank-J}
  \rank(J) < \rank(\Cf_1) \otimes \dots \otimes \rank(\Cf_n) 
    \otimes \alpha .
  \end{equation}
  Thus, we get
  \begin{align*}
    &\rank(\eta(c_1, \dots, c_n, \hat q), \Pf)\\
    \overset{\text{Lem.}}{\underset{\ref{lem:IntervalRankShift}}{=}}  
    &\rank(\eta(0_1, \dots, 0_n, q), \Pf) +
    \rank(J)\\
     \overset{\text{\eqref{eq:rank-J}}}{<}  
    &\rank(\eta(0_1, \dots, 0_n, q), \Pf) + 
    \rank(\Cf_1) \otimes \dots \otimes \rank(\Cf_n) 
    \otimes \alpha \\
    \overset{\text{\eqref{eq:1st-claim}}}{\leq}
    &\left(\rank(\Cf_1) \otimes\dots\otimes
    \rank(\Cf_n) \otimes \rank(q, \Qf)\right)
    + \left( \rank(\Cf_1) \otimes \dots \otimes \rank(\Cf_n) 
    \otimes \alpha       \right)
    \\
    \overset{\text{Lem.}}
    {\underset{\ref{lem:subdistributivity-plus-natprod}}{\leq}}    
    &\rank(\Cf_1) \otimes\dots\otimes
    \rank(\Cf_n) \otimes (\rank(q, \Qf) + \alpha).
  \end{align*}
  Finally, let $q,q'\in\Qf$ be nodes such that $[q,q')$ is maximal
  branching free and all maximal branching free intervals below $q$
  satisfy the claims. 
  Lemma~\ref{lem:WulpoPartitionMaxBranchingFreeIntervals} shows that for
  each $q_1 < q$ there is a maximal branching free interval 
  $I:=[q_0, q_2)$ with $q_2\leq q$ such that $q_1\in I$. Then for all
  $c_i\in \Cf_i$  
  \begin{align*}
    \rank(\eta(c_1, c_2, \dots, c_n, q_1),\Pf) 
    \overset{\text{IH}}{<} &
    \rank(\Cf_1) \otimes \dots \otimes \rank(\Cf_n)\otimes 
    (\rank(q_0,\Qf) + \rank(I)) \\
    \overset{\text{Lem.}}{\underset{~\ref{lem:IntervalRankShift}}{\leq}} &
    \rank(\Cf_1) \otimes \dots \otimes \rank(\Cf_n)\otimes 
    (\rank(q_2,\Qf) \\
    \leq &
    \rank(\Cf_1) \otimes \dots \otimes \rank(\Cf_n)\otimes 
    (\rank(q,\Qf) .
  \end{align*} 
  By Lemma~\ref{lem:wulpo-box-respects-branching} all elements below 
  $\eta(0_1, 0_2, \dots, 0_n, q)$ are of the form $\eta(c_1, c_2,
  \dots, c_n, q_1)$ with $q_1 < q$. Thus, 
  \begin{equation*}
  	\rank(\eta(0_1,0_2, \dots, 0_n, q) \leq
    \rank(\Cf_1) \otimes \dots \otimes \rank(\Cf_n)\otimes 
    (\rank(q,\Qf),
  \end{equation*}
which proves the lemma.
\end{proof}

\begin{cor}\label{Cor:ProdConnectedWulpos}
  Assume that $\Pf$ is a \wulpo that is a box-augmentation of
  connected \wulpo's $\Qf_1, \dots, \Qf_n$.
  Then 
  \begin{align*}
    \rank(\Pf) \leq \rank(\Qf_1) \otimes\dots\otimes
    \rank(\Qf_n). 
  \end{align*}
\end{cor}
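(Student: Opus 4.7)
The plan is to reduce Corollary~\ref{Cor:ProdConnectedWulpos} to Lemma~\ref{lemma:2-claims}, which already handles the case of a single \wulpo{} factor box-augmented by ordinals. The first step is to invoke Corollary~\ref{coro-at-most-1-branching}: among $\Qf_1, \dots, \Qf_n$ at most one contains a branching node, so after re-indexing we may assume this is $\Qf_1$. Each of $\Qf_2, \dots, \Qf_n$ is then a disjoint union of well-orders; being connected, each must in fact be a single well-order, i.e., an ordinal $\Cf_i := \Qf_i$. (The degenerate subcase in which some factor is empty forces $\Pf = \emptyset$ and is trivial.) If $\Qf_1$ itself happens to have no branching node, the same reasoning makes $\Qf_1$ an ordinal too, in which case the whole $\Qf_1$ is its own unique maximal branching free interval and the argument below still applies.

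Set $\gamma := \rank(\Cf_2) \otimes \dots \otimes \rank(\Cf_n)$; the goal becomes $\rank(\Pf) \leq \gamma \otimes \rank(\Qf_1)$. Let $\eta$ be the bijection witnessing the box-augmentation and fix an arbitrary element $p = \eta(c_2, \dots, c_n, \hat{q}) \in \Pf$. By Lemma~\ref{lem:WulpoPartitionMaxBranchingFreeIntervals}, $\hat{q}$ lies in a unique maximal branching free interval $[q, q')$ of $\Qf_1$; setting $\alpha := \rank([q, q'))$, inequality~\eqref{eq:2nd-claim} of Lemma~\ref{lemma:2-claims} yields
\begin{equation*}
  \rank(p, \Pf) < \gamma \otimes \bigl(\rank(q, \Qf_1) + \alpha\bigr).
\end{equation*}
Thanks to strict monotonicity of $\otimes$, the corollary will then follow once we establish the ordinal inequality $\rank(q, \Qf_1) + \alpha \leq \rank(\Qf_1)$ for every such interval $[q, q')$.

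This last ordinal inequality is the main technical obstacle of the proof. I plan to handle it by splitting on whether $q' \in \Qf_1$ or $q' = \infty$. In the first case the second statement of Lemma~\ref{lem:IntervalRankShift} immediately gives $\rank(\Qf_1) > \rank(q', \Qf_1) \geq \rank(q, \Qf_1) + \alpha$, finishing that case. The genuinely subtle situation is $q' = \infty$: here the first statement of the same lemma rewrites $\rank(p, \Qf_1) = \rank(q, \Qf_1) + \rank([q, p))$ for every $p \in [q, \infty)$, and taking the supremum over such $p$ (using continuity of ordinal addition in its right argument and the identification $\rank([q, \infty)) = \alpha$) yields $\rank(\Qf_1) \geq \rank(q, \Qf_1) + \alpha$. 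Combining both cases gives $\rank(\Pf) \leq \gamma \otimes \rank(\Qf_1) = \rank(\Qf_1) \otimes \dots \otimes \rank(\Qf_n)$, as desired.
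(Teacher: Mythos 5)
Your proof is correct and follows essentially the same route as the paper's: invoke Corollary~\ref{coro-at-most-1-branching} to reduce to the setting of Lemma~\ref{lemma:2-claims} (one \wulpo{} factor, the rest ordinals, each $\Qf_i$ being an ordinal by connectedness), locate the maximal branching free interval via Lemma~\ref{lem:WulpoPartitionMaxBranchingFreeIntervals}, and apply inequality~\eqref{eq:2nd-claim}. The only difference is that you explicitly verify the closing ordinal inequality $\rank(q,\Qf_1)+\alpha \leq \rank(\Qf_1)$ by splitting on whether $q'\in\Qf_1$ or $q'=\infty$ and invoking Lemma~\ref{lem:IntervalRankShift} in each case, whereas the paper states this step without justification; your extra care here is sound and fills a small gap in the published argument.
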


\begin{proof}
  By Corollary~\ref{coro-at-most-1-branching} we can assume that
  $\Qf_2, \dots, \Qf_n$ are ordinals with
  minimal elements $0_2, \dots, 0_n$. 
  For each node $p\in\Pf$ there is a maximal branching free interval
  $[p_0,p_1)\subseteq\Pf$ such that $p\in[p_0,p_1)$ (where
  $p_1=\infty$ is possible).  
  There is a $q_0\in \Qf_1$ such that $p_0=\eta(q_0, 0_2, \dots, 0_n)$
  and there is a $q_1\in \Qf_1\cup\{\infty\}$ such that
  $I:=[q_0,q_1)$ is maximal branching free.
  By Lemma~\ref{lemma:2-claims}, we have
  \begin{align*}
  \rank(p, \Pf) &< \rank(\Qf_2) \otimes\dots\otimes \rank(\Qf_n)
  \otimes (\rank(q_0, \Qf_1)+ \rank(I)) \\
  &\leq
  \rank(\Qf_2) \otimes\dots\otimes \rank(\Qf_n)
  \otimes \rank(\Qf_1),
  \end{align*}
  where in case that $p=p_0$ the strict inequality follows from
  $\rank(I)\geq 1$.
\end{proof}
Having studied connected \wulpo's, we now have to deal with
disconnected ones. 
For this purpose, we  have to
restrict our attention to tamely colorable box-augmentations.
As a first step, we analyze boxes of antichains.
If a \wulpo $\Pf$ is a tamely colorable box-augmentation of
$n$ antichains then the rank of
$\Pf$ is bounded by some constant that only depends on the tame
colorings of the antichains and on $n$.
In order to prove this fact, we first introduce a notion of \emph{same
  factor equivalence} on the elements of a box-augmentation. Elements
are equivalent with respect to this equivalence if and only if 
their preimages in each of the factors of the box are contained in the
same connected component. 

\begin{defi}
  Let $\Pf$ be a \wulpo that is a box-augmentation of 
  (\wulpo's) $(\Pf_1, \dots, \Pf_n)$.  
  Let $\Pf_i=(P_i, \leq_i)$. 
  For $a_i, b_i\in P_i$ 
  we write $\Con{i}{a_i}{b_i}$ if 
  $a_i$ and $b_i$ are in the same connected component of $\Pf_i$.
  Due to upwards linearity, this is equivalent to saying that
  $\Con{i}{a_i}{b_i}$ if there exists some $c_i\in P_i$ such that $a_i\leq_i
  c_i$ and $b_i \leq_i c_i$. 

  Let $\eta$ be the bijection witnessing that $\Pf$ is
  box-augmentation of $(\Pf_1, \dots, \Pf_n)$. 
  For $a,b\in P$ we define the \emph{same factors  equivalence} by
  $\ConBox{a}{b}$ if $\Con{i}{a_i}{b_i}$ for all $1\leq i \leq n$
  where $a=\eta(a_1,\dots,a_n)$ and $b=\eta(b_1,\dots,b_n)$.
  For $p\in P$ we write $[p]_{\ConBoxNull}$ for the equivalence class
  $\{p'\mid \ConBox{p'}{p}\}$. 
\end{defi}

\begin{lem} \label{lemma:at-most-c-classes-on-chain}
  Let $\Pf=(P, \leq)$ be a \wulpo that is a box-augmentation of 
  $(\Pf_1, \dots, \Pf_n)$ via the bijection $\eta$. If $C\subseteq P$ is
  a chain that contains 
  elements of $k$ distinct $\ConBoxNull$-classes, then $\Pf$ contains
  a chain $C'$ of length $k$ such that 
  $C'\subseteq \eta(A_1 \times\dots\times A_n)$ where $A_i$ is an
  antichain in $\Pf_i$. 
\end{lem}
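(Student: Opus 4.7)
\smallskip
\noindent\textbf{Proof plan.}
The plan is to build $C'$ by selecting, from each $\ConBoxNull$-class appearing in~$C$, a single carefully chosen representative, and then to verify that the coordinate-wise projections form antichains in the factors. Concretely, since $C$ is a chain in a well-founded partial order it is itself a well-order; enumerate the $k$ distinct $\ConBoxNull$-classes by the order of their first appearance in $C$ as $K^{(1)},K^{(2)},\dotsc,K^{(k)}$, and let $c^{(j)}\in C\cap K^{(j)}$ be the first element of $C$ whose class is $K^{(j)}$. Then $c^{(1)}<c^{(2)}<\dotsb<c^{(k)}$, so $C':=\{c^{(j)}\mid 1\leq j\leq k\}$ is automatically a chain of length~$k$ in $\Pf$. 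Writing $c^{(j)}=\eta(c^{(j)}_1,\dotsc,c^{(j)}_n)$, I then set $A_i:=\{c^{(j)}_i\mid 1\leq j\leq k\}\subseteq P_i$, so that by construction $C'\subseteq\eta(A_1\times\dotsb\times A_n)$.

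The remaining task is to show that each $A_i$ is an antichain in $\Pf_i$. Fix $j<j'$ and a coordinate $i$. If the classes $K^{(j)}$ and $K^{(j')}$ place their $i$-th components in distinct connected components of $\Pf_i$, then $c^{(j)}_i$ and $c^{(j')}_i$ lie in different components of $\Pf_i$ and are therefore incomparable, which is exactly what antichain-ness requires. The delicate case is when $K^{(j)}$ and $K^{(j')}$ agree in their $i$-th component, so that $c^{(j)}_i$ and $c^{(j')}_i$ sit in the same component of $\Pf_i$; here I would prove that in fact $c^{(j)}_i=c^{(j')}_i$, so the repeated value contributes only one element to $A_i$ and the antichain property is preserved.

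To establish this equality I would argue by contradiction, assuming $c^{(j)}_i\neq c^{(j')}_i$, and exploit both the box structure and the upwards-linearity of $\Pf$. The central auxiliary element is
\begin{equation*}
  e=\eta\bigl(c^{(j')}_1,\dotsc,c^{(j')}_{i-1},c^{(j)}_i,c^{(j')}_{i+1},\dotsc,c^{(j')}_n\bigr),
\end{equation*}
which differs from $c^{(j')}$ only in the $i$-th coordinate, and from $c^{(j)}$ by agreeing in coordinate $i$ while possibly disagreeing elsewhere. Applying the embedding $\eta^{\bar d}_i$ (with $\bar d$ the remaining coordinates of $c^{(j')}$) places $e$ and $c^{(j')}$ in a position that is controlled by the order of $c^{(j)}_i$ and $c^{(j')}_i$ in $\Pf_i$. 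Using that $c^{(j)}<c^{(j')}$ in $\Pf$, that $\uparrow c^{(j)}$ is linear, and that $c^{(j')}$ is the first occurrence of its class in $C$ after $c^{(j)}$, I would deduce that the existence of $e$ forces either an element of class $K^{(j)}$ strictly between $c^{(j)}$ and $c^{(j')}$ with the same coord-$i$ data as $c^{(j')}$ (contradicting first-occurrence minimality), or that one of the embeddings $\eta^{\cdot}_{i'}$ for a coordinate $i'$ where the classes already differ in component is violated.

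The main obstacle is precisely this last step: controlling the comparability of $e$ with $c^{(j)}$ and $c^{(j')}$ in a coordinate $i'$ where $K^{(j)}_{i'}\neq K^{(j')}_{i'}$, and arguing that upwards-linearity at a common lower bound forces two elements that a direct embedding declares incomparable to become comparable, the resulting clash yielding $c^{(j)}_i=c^{(j')}_i$. Once this structural dichotomy (``equal coordinate, or coordinates in different components'') is available for any two representatives from different classes, both cases above give that $A_i$ is an antichain in $\Pf_i$, completing the proof.
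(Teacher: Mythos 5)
The approach in your proposal is genuinely different from the paper's, and unfortunately the central claim on which it rests is false. You keep the original chain elements $c^{(j)}$ (first occurrences of their $\ConBoxNull$-classes in $C$) and project them coordinate-wise, and you then need that whenever $c^{(j)}_i$ and $c^{(j')}_i$ lie in the same connected component of $\Pf_i$ they are in fact \emph{equal}. This is not true, and no choice of representatives from the classes in $C$ can rescue it. Here is a concrete counterexample. Let $\Pf_1 = \{p\}\sqcup\{q\}$, $\Pf_2 = \{x\}\sqcup\{y\}$ (two two-element antichains), and $\Pf_3 = \{a_0 < a_1\}$. Take $\Pf$ to be the disjoint union of two $4$-element chains,
\begin{align*}
&\eta(p,x,a_0) < \eta(p,x,a_1) < \eta(q,y,a_0) < \eta(q,y,a_1), \\
&\eta(p,y,a_0) < \eta(p,y,a_1) < \eta(q,x,a_0) < \eta(q,x,a_1).
\end{align*}
One checks that all the componentwise maps $\eta^{\bar d}_k$ are embeddings (each one lands inside a single chain, and incomparabilities in the antichain factors are respected because the two combinations that would have to be comparable always sit in different chains), so $\Pf$ is a box-augmentation of $(\Pf_1,\Pf_2,\Pf_3)$ and is a \wulpo. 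Now take the chain $C = \{\eta(p,x,a_0), \eta(q,y,a_1)\}$; its two elements lie in distinct $\ConBoxNull$-classes, so $k=2$. Your $C'$ equals $C$, and your $A_3 = \{a_0,a_1\}$ is \emph{not} an antichain in $\Pf_3$.

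What the paper does differently, and why it works, is that it does \emph{not} reuse the original chain elements. For each chosen representative $a^j$ and each coordinate $i$ it passes to $m^j_i$, a minimal common upper bound (in $\Pf_i$) of the finite set of $i$-th coordinates of all representatives lying in the same component of $\Pf_i$; this is well-defined by upwards linearity and well-foundedness. The new elements $\eta(m^j_1,\ldots,m^j_n)$ all lie above $a^1$, hence form a chain by upwards linearity of $\Pf$, and the crucial observation is that $m^j_i = m^{j'}_i$ if and only if $a^j_i$ and $a^{j'}_i$ share a component, which simultaneously forces the new chain to have $k$ distinct elements and forces $\{m^j_i\}_j$ to be an antichain (two comparable $m$'s would be in the same component, hence equal). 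In the example above this replaces $a_0$ by $a_1$ in the third coordinate, giving $C' = \{\eta(p,x,a_1), \eta(q,y,a_1)\}$ and $A_3 = \{a_1\}$, which is the needed antichain. Your proof cannot be repaired by choosing a cleverer representative from each class: taking the last occurrence instead of the first fails on the same example as soon as one appends $\eta(q,y,a_0)$ to $C$; the passage to upper bounds $m^j_i$ is what is genuinely doing the work.
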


\begin{proof}
  Let $a^1<a^2<\dots<a^k$ be a chain of pairwise
  $\ConBoxNull$-inequivalent elements of $\Pf$. 
  There are $a^j_i\in \Pf_i$ such that
  $a^j=\eta(a^j_1, \dots, a^j_n)$. 
  Let
  \begin{align*}
    A^j_i:=\{ a^{j'}_i\mid \Con{i}{a^j_i}{a^{j'}_i}\}
  \end{align*}
  be the connected component of $a^j_i$ restricted to those 
  elements that appear as factors of the $a^{j'}$ with $1\leq j,j' \leq k$. 
  Due to upwards linearity, for all $1 \leq i \leq n$ and $1 \leq j
  \leq k$ there is a minimal 
  $m^j_i\in P_i$ such that for all $x\in A^j_i$, $x\leq_i m^j_i$. 
  By definition, we have
  \begin{align}  \label{eqn-Minimalelements}
    m^j_i=m^{j'}_i \Leftrightarrow \Con{i}{a^j_i}{a^{j'}_i}
    \Leftrightarrow
    A^j_i = A^{j'}_i.
  \end{align}
  By definition of a box-augmentation we have
  \begin{align*}
    a^j=\eta(a^j_1, \dots, a^j_n) \leq \eta(m^j_1,a^j_2, \dots, a^j_n)
    \leq \dots \leq \eta(m^j_1, \dots, m^j_n). 
  \end{align*}
  Thus, $a^1 \leq a^j \leq \eta(m^j_1, \dots, m^j_n)$ for all $1\leq j
  \leq k$. Due to upwards linearity, the set
  \begin{equation*}
  	C' := \{\eta(m^j_1, \dots, m^j_n)\mid 1\leq j \leq k\}
  \end{equation*}
  forms a chain. 
  Note that $\eta(m^j_1,\dots, m^j_n) = \eta(m^{j'}_1, \dots,
  m^{j'}_n)$ would imply $m^j_i=m^{j'}_i$, i.e., $\Con{i}{a^j_i}{a^{j'}_i}$
  (by \eqref{eqn-Minimalelements}) for all $1 \leq i \leq n$.
  But this would lead to the contradiction 
  $\ConBox{a^j}{a^{j'}}$. 
  Thus, the chain $C'$ consists of $k$ elements. 
  We conclude by proving that $A_i:=\{m^j_i \mid 1\leq j \leq k\}$ is
  an antichain in $\Pf_i$. Heading for a contradiction assume that 
  there are $j\neq j'$ such that $m^j_i < m^{j'}_i$. 
  Then clearly $\Con{i}{m^j_i}{m^{j'}_i}$ holds and hence also
  $\Con{i}{a^j_i}{a^{j'}_i}$ holds. With \eqref{eqn-Minimalelements}
  we conclude that $m^j_i =  m^{j'}_i$ contradicting our assumption 
  $m^j_i < m^{j'}_i$. 
\end{proof}
The previous result can be used to bound the length of ordinals
occurring in the image of antichains in a tamely colorable
box-augmentation.

\begin{lem}\label{lem:bound-length-chain-wulpo}
  Let $\Pf=(P,\leq)$ be a \wulpo
  which is a tamely colorable box-augmentation of
  $(\Pf_1, \Pf_2, \dots, \Pf_n)$ via
  $\eta$.
  There is a constant $c\in\N$ such
  that the following holds:
  For all choices of antichains $A_i\subseteq P_i$
  (for each $1\leq i \leq n$),  the substructure of $\Pf$ induced by
  $\eta(\prod_{i=1}^{n} A_i)$ does not contain a chain of length $c$.
\end{lem}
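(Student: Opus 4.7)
The plan is to combine Ramsey's theorem with the tameness of the box-augmentation in order to extract from a hypothetical long chain a short but highly structured sub-chain, and then to derive a contradiction from upwards linearity of $\Pf$. Set $N:=|C_1|\cdot|C_2|\cdots|C_n|$ and let $c:=R_N(3)$ be the Ramsey number ensuring that every $N$-coloring of the edges of $K_c$ contains a monochromatic triangle. Supposing for contradiction that $p^1<p^2<\cdots<p^c$ is a chain contained in $\eta(\prod_{i=1}^{n}A_i)$, write $p^s=\eta(a^s_1,\ldots,a^s_n)$ with $a^s_i\in A_i$. Applying Ramsey to the $N$-coloring $\{s,t\}\mapsto\sigma(p^s,p^t)$ (defined on pairs $s<t$) yields indices $s_1<s_2<s_3$ such that $\sigma(p^{s_u},p^{s_v})=(c_1,\ldots,c_n)$ for all $u<v$.

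For each coordinate $i$, since $\sigma_i$ is a finite $\leq_i$-coloring and $A_i$ is an antichain, exactly one of two situations occurs: either $c_i$ is a ``$\leq_i$''-color, in which case $a^{s_u}_i\leq_i a^{s_v}_i$ for every $u<v$ together with the antichain property forces $a^{s_1}_i=a^{s_2}_i=a^{s_3}_i$ (case (A)); or $c_i$ is a ``$\not\leq_i$''-color, in which case the three values are pairwise distinct and pairwise incomparable in $\Pf_i$ (case (B)). Let $I_B$ index the case-(B) coordinates. Clearly $|I_B|\geq 1$ because $p^{s_1}\neq p^{s_2}$, and I will show $|I_B|\geq 2$ as follows. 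If $I_B=\{k\}$, then fixing all other coordinates at their constant values turns the box-augmentation into an isomorphic embedding of $\Pf_k$ into $\Pf$ that sends the pairwise incomparable elements $a^{s_1}_k,a^{s_2}_k,a^{s_3}_k$ to the chain $p^{s_1}<p^{s_2}<p^{s_3}$, a contradiction.

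Assuming without loss of generality that $\{1,2\}\subseteq I_B$ and writing $\bar a$ for the constant tuple of values in coordinates $3,\ldots,n$, I will introduce the ``cross'' element $q:=\eta(a^{s_2}_1,a^{s_3}_2,\bar a)$. A component-wise inspection then yields $\sigma(p^{s_1},q)=\sigma(p^{s_1},p^{s_2})$: coordinate 1 agrees identically, coordinate 2 agrees by Ramsey monochromaticity $\sigma_2(a^{s_1}_2,a^{s_3}_2)=c_2=\sigma_2(a^{s_1}_2,a^{s_2}_2)$, and coordinates $\geq 3$ agree because $a^{s_u}_i$ is independent of $u$. Since $p^{s_1}<p^{s_2}$, tameness gives $p^{s_1}\leq q$, and $p^{s_1}\neq q$ (they differ in coordinate 1) upgrades this to $p^{s_1}<q$. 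Together with $p^{s_1}<p^{s_3}$, upwards linearity of $\Pf$ forces $q$ and $p^{s_3}$ to be comparable---but $q$ and $p^{s_3}$ differ only in coordinate 1, where their entries $a^{s_2}_1$ and $a^{s_3}_1$ are incomparable in $\Pf_1$, so the coordinate-1 box-embedding of $\Pf_1$ into $\Pf$ makes them incomparable, a contradiction.

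The main subtlety lies in designing $q$ so that it simultaneously (i) shares the tame color profile with $p^{s_2}$ when paired against $p^{s_1}$, allowing tameness to place $q$ into $\uparrow p^{s_1}$, and (ii) differs from $p^{s_3}$ in exactly one case-(B) coordinate, allowing the box-embedding to force $q\parallel p^{s_3}$. The existence of at least two case-(B) coordinates is precisely what makes such a choice of $q$ possible; with only a single case-(B) coordinate, any ``swap'' would either collapse into an identity or fail to lie in $\uparrow p^{s_1}$.
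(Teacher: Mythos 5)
Your proof follows essentially the same strategy as the paper's: extract a Ramsey-monochromatic triple from the hypothetical chain, form a ``cross'' element by swapping coordinate~$1$ of the top element to the middle element's value, and then play tameness, upwards linearity, and the coordinate-$1$ box-embedding against each other. However, your execution has a gap. You define $q := \eta(a^{s_2}_1, a^{s_3}_2, \bar a)$ with ``$\bar a$ for the constant tuple of values in coordinates $3,\ldots,n$'' and later justify that $\sigma(p^{s_1},q) = \sigma(p^{s_1},p^{s_2})$ in coordinates $\geq 3$ ``because $a^{s_u}_i$ is independent of $u$''. This silently assumes $I_B = \{1,2\}$, but you only established $\{1,2\} \subseteq I_B$: if some coordinate $i\ge3$ is also a case-(B) coordinate, there is no constant value, and your verification breaks. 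The fix is to set $\bar a := (a^{s_3}_3,\dots,a^{s_3}_n)$ unconditionally, so that $q$ is precisely $p^{s_3}$ with coordinate~$1$ replaced by $a^{s_2}_1$; then the coordinates $\geq 3$ agree by Ramsey monochromaticity (exactly as for coordinate~$2$), not by constancy. With that repair the argument is sound.

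On the other side, some of your machinery is superfluous. The case-(A)/(B) split and the lemma $|I_B|\ge 2$ are not actually needed: even when $q$ collapses to $p^{s_2}$ (which happens exactly when $I_B=\{1\}$), step~(ii) still makes $p^{s_2}$ and $p^{s_3}$ incomparable via the coordinate-$1$ box-embedding, directly contradicting $p^{s_2} < p^{s_3}$ — upwards linearity is not even required in that case. The paper's version is leaner: it defines $p' := \eta(p^2_1, p^1_2,\ldots,p^1_n)$ (the same swap), uses tameness to get $p' > p^3$, uses upwards linearity to get $p'$ comparable to $p^1$, reads off $p^1_1$ comparable to $p^2_1$ from the box-embedding, and then concludes $p^1_1 = p^2_1$ from the antichain property alone (no case distinction). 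Repeating coordinatewise yields $p^1 = p^2$, a contradiction. That framing avoids the need to show any coordinate is in case~(B) at all.
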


\begin{proof}
  For each $1\leq i \leq n$ we fix a finite coloring 
  $\sigma_i:P_i\times  P_i\to C_i$ of $\Pf_i$
  such that the map $\sigma : P\times P \to C$ with
  $C = \prod_{1\leq i\leq n} C_i$ and
  \begin{equation*}
  	\sigma(\eta(p_1,\ldots,p_n),\eta(q_1,\ldots,q_n)) =
		(\sigma_1(p_1,q_1),\ldots,\sigma_n(p_n,q_n)).
  \end{equation*}
  is a finite coloring of $\Pf$.
  
  By Ramsey's theorem \cite{Ramsey30}
  there exists a constant $c\in\N$
  such that every complete simple graph
  with at least $c$ nodes
  whose edges are colored by $|C|$ colors
  contains a monochromatic triangle.

  For the sake of a contradiction,
  assume that for each $1\leq i\leq n$
  there exists an antichain $A_i$ in $\Pf_i$
  such that
  $\eta(\prod_{i=1}^{n} A_i)$
  contains a chain of length $c$.
  Due to the choice of $c$ there exist
  three elements $p^1>p^2>p^3$ in this chain and
  a color $\bar c=(c_1,\ldots,c_n)\in C$
  such that
  \begin{equation*}
  	\sigma(p^1,p^2) = \sigma(p^1,p^3) = \sigma(p^2,p^3) = \bar c.
  \end{equation*}
  Let $p^j=\eta(p^j_1,\ldots,p^j_n)$, where $p^j_i \in A_i$.
  For each $1\leq i\leq n$ we obtain
  \begin{equation*}
  	\sigma_i(p^1_i,p^2_i) = \sigma_i(p^1_i,p^3_i) =
		\sigma_i(p^2_i,p^3_i) = c_i .
  \end{equation*}
  For $p'=\eta(p^2_1,p^1_2,\ldots,p^1_n)$ we get
  $\sigma(p^1,p^3)=\sigma(p',p^3)$.
  Since $p^1>p^3$, this implies $p'>p^3$.
  As the elements above $p^3$ are linearly ordered,
  we conclude that $p'$ and $p^1$ are comparable.
  Recall that $\eta(\cdot, p^1_2, \ldots p^1_n)$
  embeds $\Pf_1$ into $\Pf$.
  We conclude that $p^1_1$ and $p^2_1$ are comparable in $\Pf_1$.
  Since $A_1$ is an antichain and $p^1_1,p^2_1\in A_1$,
  this implies $p^1_1=p^2_1$.
  
  Analogous arguments for the other coordinates show that
  $p^1_i=p^2_i$ for each $1\leq i \leq n$, i.e., $p^1=p^2$.
  However, this contradicts $p^1>p^2$.
\end{proof}
We now head for the following result. Given a \wulpo that is a tamely
colorable box-aug\-men\-ta\-tion we can write it as a finite sum-augmentation
of \wulpo's whose rank is bounded in terms of the ranks of the
connected  components of the factors
of the box.

\begin{lem} \label{lem:mu-function}
  Let $\Pf=(P, \leq)$ be a countable \wulpo which is a tamely colorable box
  augmentation of 
  $(\Pf_1, \dots, \Pf_n)$ via the bijection $\eta$. 
  Let $c\in\N$ such that 
  the image of antichains under $\eta$ does not contain a chain of
  length $c$ (exists by Lemma~\ref{lem:bound-length-chain-wulpo}). 
  Then there  is a map $\mu:P\to \{1, \dots, c\}$ such that
  for all chains $L\subseteq P$, and
  all $p,p'\in L$ we have $\mu(p) = \mu(p')$ if and only if 
  $\ConBox{p'}{p}$.
\end{lem}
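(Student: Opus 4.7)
Combining Lemma~\ref{lemma:at-most-c-classes-on-chain} with Lemma~\ref{lem:bound-length-chain-wulpo} yields the key quantitative bound: every chain $L\subseteq P$ contains elements of at most $c-1$ distinct $\equiv$-classes, for otherwise a chain of distinct-class elements in $L$ produces, via the minimum-of-component construction of Lemma~\ref{lemma:at-most-c-classes-on-chain}, a chain of length $c$ in $\eta(\prod_i A_i)$ for suitable antichains $A_i\subseteq P_i$, contradicting the choice of $c$.

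I would then introduce on $P$ the relation $p\sim q$ defined by ``$p,q$ comparable in $\Pf$ and $p\equiv q$'' and let $\approx$ denote its reflexive-transitive closure. A short check shows $\approx$ refines $\equiv$, and that on any chain $L$ the restrictions $\approx|_L$ and $\equiv|_L$ coincide (if $p,q\in L$ and $p\equiv q$ then $p\sim q$; conversely each $\sim$-step preserves $\equiv$). Any admissible $\mu$ is forced to be constant on each $\approx$-class, so the problem reduces to producing a map $\bar\mu\colon P/{\approx}\to\{1,\dots,c\}$ such that whenever $X\neq Y$ have representatives lying on a common chain of $\Pf$, one has $\bar\mu(X)\neq\bar\mu(Y)$, and then setting $\mu(p)=\bar\mu([p]_\approx)$.

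Consider the graph $H$ on $V=P/{\approx}$ with $X$ adjacent to $Y\neq X$ iff some $x\in X$, $y\in Y$ are comparable in $\Pf$. Using that $\Pf$ is upwards linear and well-founded, any finite set of pairwise comparable elements of $P$ has a least element (well-foundedness), above which the remaining members sit linearly (upwards linearity), hence forms a chain. Therefore every clique of $H$ corresponds to a family of $\approx$-classes jointly represented on a single chain, and by the quantitative bound its size is at most $c-1$. Thus the clique number of $H$ is bounded by $c-1$.

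The main obstacle is promoting this clique bound to a proper $c$-coloring of $H$. My plan is to pick, for each $\approx$-class $X$, a representative $x_X\in X$ of minimal $\Pf$-rank, well-order $V$ by these ranks (breaking ties arbitrarily), and color greedily. The crucial claim to verify is that in this ordering every vertex $X$ has at most $c-1$ earlier neighbours; this will follow once one shows that if $Y_1,\dots,Y_k$ are earlier neighbours of $X$ in $H$, one may choose representatives $y_j\in Y_j$ so that $\{x_X,y_1,\dots,y_k\}$ is pairwise comparable in $\Pf$ and therefore, by the previous paragraph, a chain containing $k+1$ distinct $\approx$-classes. The hard point is to extract such simultaneously-comparable representatives; here I expect to invoke upwards linearity of $\Pf$ iteratively, together with the description of $\approx$-classes as $\eta$-images of tuples $(K_1,\dots,K_n)$ of connected components of the $\Pf_i$ — the tame-colorability of the box-augmentation is used only indirectly, through its role in guaranteeing the existence of $c$ in Lemma~\ref{lem:bound-length-chain-wulpo}. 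Once the claim is proved, greediness produces $\bar\mu$ and hence $\mu$ as required.
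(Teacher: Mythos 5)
Your reformulation is sound: the lemma is equivalent to finding a proper $c$-colouring of the graph $H$ on equivalence classes (with $\mu$ constant on each class), and the chain bound ``$\leq c-1$ distinct classes per chain'' follows correctly from Lemmas~\ref{lemma:at-most-c-classes-on-chain} and \ref{lem:bound-length-chain-wulpo}. One side remark: the relation $\approx$ you introduce actually coincides with $\ConBoxNull$ outright, not merely on chains, because each $\ConBoxNull$-class is $\eta(D_1\times\dots\times D_n)$ with each $D_i$ connected, and one can walk from any tuple to any other by changing one coordinate at a time through $D_i$, each step being a comparability in $\Pf$ (via the embedding $\eta^{\bar d}_i$); so the detour through $\approx$ is unnecessary, though harmless.

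The genuine gap is exactly where you flag it, and I do not think it closes as sketched. The greedy/degeneracy step needs: for a vertex $X$ with earlier neighbours $Y_1,\dots,Y_k$, there are representatives lying on a \emph{single} chain together with a representative of $X$, so that the chain bound caps $k$. But in a \wulpo, upward linearity controls only the structure \emph{above} a point; below a fixed element $x$ the order can branch arbitrarily into a tree. So a class $X$ with minimal representative $x$ may be adjacent to many classes $Y_j$ via distinct, pairwise \emph{incomparable} elements $y_j<x$, and no chain contains $x$ together with more than one of the $y_j$. ``Invoking upwards linearity iteratively'' cannot produce simultaneously comparable representatives in this situation. For the same reason the claim that every clique of $H$ is ``jointly represented on a single chain'' is unjustified: pairwise adjacency gives pairwise comparable \emph{pairs} of representatives, using possibly different elements of $X_i$ for different pairs, and there is no a priori way to merge them. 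So neither the clique bound nor the greedy degree bound is established, and the rank ordering in particular is the wrong order to be greedy in.

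The paper's proof sidesteps all of this by colouring along the rays $[m,\infty)$ from minimal elements $m$ of $\Pf$. Each such ray \emph{is} a chain (by upward linearity), so the chain bound applies to it directly and shows that at most $c-1$ colours are ever used on it, guaranteeing a spare colour; the construction is a transfinite-inductive extension along each ray, with consistency between overlapping rays maintained by the running invariant. This can be viewed as a greedy colouring in a carefully chosen ray-by-ray order, where the degree bound you are after is automatic, precisely because each processing step stays inside a single chain. If you want to salvage a graph-theoretic presentation, the order you should be greedy in is this ray-based one, not the rank order, and the ``spare colour'' argument replaces the clique/degeneracy bound.
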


\begin{proof}
  Fix an enumeration $m_1, m_2, \dots$ of the minimal elements of $\Pf$. 
  Let $\mu_0$ be the partial function with empty domain. We define
  $\mu$ as the limit of partial functions $\mu_i$ satisfying 
  the lemma (restricted to their domain). This limit $\mu$ is a  total
  function because we guarantee that
  \begin{equation*}
  	\dom(\mu_i)= \bigcup_{j=1}^i [m_j,\infty).
  \end{equation*}
  Assume that $\mu_{j}$ has already been defined.
  
  First we set $\mu_{j+1}(x) = \mu_{j}(x)$ for all $x\in\dom(\mu_{j})$. 
  By transfinite induction we extend 
  $\mu_{j+1}$ to all nodes $y \in [m_{j+1},\infty) \setminus \bigcup_{i=1}^j [m_i,\infty)$. 
  For this purpose assume that 
  $y \in [m_{j+1},\infty) \setminus \bigcup_{i=1}^j [m_i,\infty)$
  and assume that we have
  defined $\mu_{j+1}(x)$ for all $m_{j+1}\leq x < y$.
  Hence, the domain of the current $\mu_{j+1}$ is 
  $$ 
  D = \bigcup_{i=1}^j [m_i,\infty) \cup [m_{j+1},y) . 
  $$
  We assume that for all chains $L \subseteq D$, and
  all $p,p'\in L$ we have $\mu_{j+1}(p) = \mu_{j+1}(p')$ if and only if 
  $\ConBox{p'}{p}$. Let 
  \begin{equation*}
  	J = \{1,\ldots,c\} \setminus \mu_{j+1}(D \cap
        [m_{j+1},\infty)) .
  \end{equation*}
  Then $J \neq \emptyset$:  $D \cap [m_{j+1},\infty)$ is a chain.
  By Lemma~\ref{lemma:at-most-c-classes-on-chain} and the choice of
  $c$, each chain in $\Pf$ contains elements of at most $c-1$ many
  $\ConBoxNull$-classes. Hence, the elements from $D \cap [m_{j+1},\infty)$
  fall into at most $c-1$ many $\ConBoxNull$-classes. By our
  assumption on the current $\mu_{j+1}$, this mapping takes at most $c-1$ different
  values among the elements from $D \cap [m_{j+1},\infty)$. Hence, $J$ is not empty.

  Now we extend $\mu_{j+1}$ to $y$ as follows:
  \begin{align*}
  \mu_{j+1}(y):=
  \begin{cases}
    \mu_{j+1}(p) &\text{if $p \in D \cap [m_{j+1},\infty)$ and $\ConBox{p}{y}$,}  \\
    \min(J) & \text{otherwise.}
  \end{cases}  
  \end{align*}
  Each of the $\mu_j$ is a well-defined
  partial function and setting $\mu=\bigcup_{j\in\N} \mu_j$
  settles the claim. 
\end{proof}

\begin{cor}\label{cor:mu-connected-components}
  Let $\Pf$, $c$, and $\mu$ be defined as in Lemma~\ref{lem:mu-function}. 
  Let $1\leq i\leq c$ and  ${a,b\in \mu^{-1}(i)}$ such that $a$
  and $b$ are in a connected component of the suborder induced by
  ${\mu^{-1}(i)\subseteq \Pf}$. Then we have $\ConBox{a}{b}$. 
\end{cor}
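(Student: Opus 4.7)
The plan is to unfold the definition of connectedness in the subposet $\Pf{\restriction}_{\mu^{-1}(i)}$ and then apply Lemma~\ref{lem:mu-function} edge by edge along a connecting path. Concretely, since $a$ and $b$ lie in the same connected component of the subposet, by definition of connectedness there is a finite sequence $a = x_0, x_1, \ldots, x_k = b$ of elements of $\mu^{-1}(i)$ such that for every $0 \leq j < k$, the elements $x_j$ and $x_{j+1}$ are comparable in $\Pf$.

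For each such $j$, the two-element set $L_j := \{x_j, x_{j+1}\}$ is a chain in $\Pf$ on which $\mu$ takes the constant value $i$. Applying Lemma~\ref{lem:mu-function} to the chain $L_j$ and the elements $x_j, x_{j+1} \in L_j$, the equality $\mu(x_j) = \mu(x_{j+1})$ immediately yields $\ConBox{x_j}{x_{j+1}}$.

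Finally, note that $\ConBoxNull$ is transitive: writing $x_j = \eta(x_{j,1},\ldots,x_{j,n})$, the relation $\ConBox{x_j}{x_{j+1}}$ unfolds to $\Con{\ell}{x_{j,\ell}}{x_{j+1,\ell}}$ for every $1 \leq \ell \leq n$, and being in the same connected component of $\Pf_\ell$ is an equivalence relation. Chaining the equivalences $\ConBox{x_0}{x_1}, \ldots, \ConBox{x_{k-1}}{x_k}$ componentwise therefore gives $\ConBox{a}{b}$, which is the desired conclusion. The argument is essentially a one-step consequence of Lemma~\ref{lem:mu-function}, and no real obstacle arises beyond recognizing that paths in the undirected comparability graph decompose into comparable pairs to which the lemma directly applies.
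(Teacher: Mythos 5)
Your proof is correct. The approach is essentially the paper's, with one small structural difference: the paper exploits upwards linearity of the subposet to reduce the connection between $a$ and $b$ to a \emph{single} common upper bound $p\in\mu^{-1}(i)$, and then applies Lemma~\ref{lem:mu-function} to the two chains $\{a,p\}$ and $\{b,p\}$ before invoking transitivity of $\ConBoxNull$. You instead unfold connectedness directly into a finite comparability path $a=x_0,\dots,x_k=b$ inside $\mu^{-1}(i)$ and apply the lemma to each two-element chain $\{x_j,x_{j+1}\}$, again finishing by transitivity. The two arguments are interchangeable here; yours is marginally more elementary in that it does not appeal to the wulpo structure of the induced suborder at this point (all the wulpo hypotheses are already packed into Lemma~\ref{lem:mu-function}), while the paper's phrasing is more compact because upwards linearity collapses the path to length two.
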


\begin{proof}
  Let $\mu(a) = i = \mu(b)$ and assume that $a$
  and $b$ are in a connected component of the suborder induced by
  ${\mu^{-1}(i)\subseteq \Pf}$. Since this suborder is again a \wulpo,
  there exists $p \in \mu^{-1}(i)$ with $a \leq p$ and $b \leq p$.
  Lemma~\ref{lem:mu-function} together with $\mu(a)=\mu(p)=\mu(b)$
  implies $a \equiv p \equiv b$, i.e., $a \equiv b$ by 
  transitivity of $\ConBox{}{}$. 
\end{proof}

\begin{cor}\label{lem:wulpoboxsplitassums}
  Let $\Pf$ be a \wulpo that is box-augmentation of (\wulpo's)
  $(\Pf_1, \Pf_2, \dots, \Pf_m)$ such that there is some $n\in\N$ with
  $\rank(\Pf_j)<\omega^{\omega^n}$ for all $1\leq j\leq m$. 
  \begin{enumerate}[\em(1)]
  \item   For $\mu$ as in Lemma \ref{lem:mu-function}, 
    $\rank(\Pf{\restriction}_{\mu^{-1}(i)})\leq \omega^{\omega^n}$ for
    all $i$ in the range of $\mu$.
  \item Moreover, if $\Pf$ is a forest and there is $n\in\N$
    with $\rank(\Pf_j)<\omega^n$ for all $1\leq j\leq m$, then
    $\rank(\Pf{\restriction}_{\mu^{-1}(i)})\leq \omega^n$ for
    all $i$ in the range of $\mu$.
  \end{enumerate}
\end{cor}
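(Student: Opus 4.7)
The plan is to analyze the connected components of $\Pf{\restriction}_{\mu^{-1}(i)}$ one at a time, invoke the product bound for connected \wulpo's, and then observe that the relevant initial segments of the ordinals are closed under the required arithmetic operations.

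First I would use Corollary~\ref{cor:mu-connected-components} to see that any connected component $C$ of $\Pf{\restriction}_{\mu^{-1}(i)}$ consists of pairwise $\ConBoxNull$-equivalent elements; hence there exist connected components $K_j$ of each $\Pf_j$ such that $C\subseteq\eta(K_1\times\cdots\times K_m)$. The restriction of $\eta$ to $\prod_{j=1}^m K_j$ witnesses that $\Pf{\restriction}_{\eta(K_1\times\cdots\times K_m)}$ is a box-augmentation of the \emph{connected} \wulpo's $\Pf_j{\restriction}_{K_j}$, so Corollary~\ref{Cor:ProdConnectedWulpos} yields
\begin{equation*}
  \rank(C)\;\leq\;\rank\bigl(\Pf{\restriction}_{\eta(K_1\times\cdots\times K_m)}\bigr)\;\leq\;\rank(\Pf_1{\restriction}_{K_1})\otimes\cdots\otimes\rank(\Pf_m{\restriction}_{K_m}),
\end{equation*}
and Lemma~\ref{lemma-rank-substructure} bounds each factor by $\rank(\Pf_j)$. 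Since $\Pf{\restriction}_{\mu^{-1}(i)}$ is the disjoint union (as a partial order) of its connected components, its rank equals the supremum of their ranks.

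For claim (1), the remaining step is to observe that $\omega^{\omega^n}$ is closed under natural product. By a standard Cantor-normal-form argument, using closure of $\omega^n$ under natural sum together with $\omega^{\alpha'}\otimes\omega^{\beta'}=\omega^{\alpha'\oplus\beta'}$, any two ordinals $\alpha,\beta<\omega^{\omega^n}$ satisfy $\alpha\otimes\beta<\omega^{\omega^n}$. Combined with $\rank(\Pf_j{\restriction}_{K_j})<\omega^{\omega^n}$, this gives $\rank(C)<\omega^{\omega^n}$ for every connected component $C$, whence $\rank(\Pf{\restriction}_{\mu^{-1}(i)})\leq\omega^{\omega^n}$.

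For claim (2), I would first note that $\Pf$ being a forest forces every $\Pf_j$ to be a forest, so by Corollary~\ref{coro-at-most-1-branching} at most one $\Pf_j$---say the one with index $j_0$---contains a branching node, while the others are disjoint unions of well-orders. In a forest the upward closure of every node is finite, so each such well-order is a finite chain. Hence $\rank(K_j)<\omega$ for $j\neq j_0$ while $\rank(K_{j_0})<\omega^n$. A short Cantor-normal-form computation, using $\omega^{\alpha'}\otimes k=\omega^{\alpha'}\cdot k$ for finite $k$, shows that the natural product of an ordinal strictly below $\omega^n$ with a finite ordinal remains strictly below $\omega^n$; iterating over the factors $j\neq j_0$ yields $\rank(C)<\omega^n$ for each connected component, and therefore $\rank(\Pf{\restriction}_{\mu^{-1}(i)})\leq\omega^n$. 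The only real obstacle is verifying the two arithmetic closure statements, and both reduce to routine manipulations of Cantor normal forms; the rest of the proof is just an assembly of previously established lemmas.
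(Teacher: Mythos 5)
Your proof is correct and follows essentially the same strategy as the paper's. The paper packages the same ideas slightly differently: rather than bounding the rank of each connected component $C$ of $\Pf{\restriction}_{\mu^{-1}(i)}$ directly, it observes (via Corollary~\ref{cor:mu-connected-components} and Lemma~\ref{lemma-rank-substructure}) that $\Pf{\restriction}_{\mu^{-1}(i)}$ embeds into the disjoint union $\bigsqcup_{(D_1,\dots,D_m)\in\Cf_1\times\cdots\times\Cf_m}\Pf{\restriction}_{\eta(D_1\times\cdots\times D_m)}$, bounds each summand by Corollary~\ref{Cor:ProdConnectedWulpos} together with property \eqref{Eqn-Otimes-Limit}, and then uses the fact that rank of a disjoint union is the supremum of the ranks. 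For the forest case the paper likewise bounds $\rank(D_1)\otimes c$ by a $c$-fold natural sum and invokes \eqref{Eqn-Oplus-Limit}, whereas you argue directly on Cantor normal forms; both amount to the same closure fact. In short, the two proofs differ only in bookkeeping (per-component versus one big disjoint union, and how the ordinal-arithmetic closure is justified), not in substance.
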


\begin{proof}
  Let $\Cf_j$ be the set of connected components of $\Pf_j$ for all
  $1\leq j \leq m$. 
  Due to Corollary \ref{cor:mu-connected-components},  $\Pf{\restriction}_{\mu^{-1}(i)}$ is
  isomorphic to some suborder of the disjoint union
  \begin{align*}
    \bigsqcup_{(D_1,\dots, D_m)\in \Cf_1\times\dots\times \Cf_m} 
    \Pf{\restriction}_{\eta(D_1\times \dots \times D_m)}.
  \end{align*}
  Note that every $D_j \in \Cf_j$ is a substructure of $\Pf_j$, hence
  $\rank(D_j)<\omega^{\omega^n}$. Due to Lemma
  \ref{Cor:ProdConnectedWulpos},
  \begin{align*}
    \rank(\Pf{\restriction}_{\eta(D_1 \times \dots \times D_m)})\leq 
    \rank(D_1) \otimes \dots\otimes \rank(D_m)) \overset{\text{\eqref{Eqn-Otimes-Limit}}}{<} \omega^{\omega^n}.    
  \end{align*}
  Thus, 
  \begin{align*}
    \rank\left(\bigsqcup_{(D_1,\dots, D_m)\in \Cf_1\times\dots\times \Cf_m} 
    \Pf{\restriction}_{\eta(D_1\times \dots \times D_m)}\right) \leq
  \omega^{\omega^n}.
  \end{align*}
  Hence, the same holds for the substructure 
  $\Pf{\restriction}_{\mu^{-1}(i)}$. 

  If $\Pf$ is a forest, then without loss of generality
  $\Pf_2, \dots, \Pf_m$ are disjoint unions of finite ordinals (cf.\
  Remark \ref{rem-difference-forest-wulpos}). 
  Thus, 
  $\rank(\Pf{\restriction}_{\eta(D_1 \times \dots \times D_m)})\leq 
  \rank(D_1) \otimes c = \alpha \otimes c$ for some
  finite ordinal 
  $c<\omega$ and some $\alpha<\omega^n$. 
  We get
  $$\alpha \otimes c \leq \underbrace{\alpha \oplus \alpha \oplus \cdots
    \oplus \alpha}_{\text{$c$ many}} \overset{\text{\eqref{Eqn-Oplus-Limit}}}{<} \omega^n ,
  $$
  where the first inequality follows from the fact that every
  linearization of the direct product $\alpha \otimes c$ can be viewed
  as a linearization of the disjoint union of $c$ copies of $\alpha$
  (one can actually show that equality holds at the place of the first inequality).
  We can now conclude completely analogously to the \wulpo case. 
\end{proof}
We are now prepared to establish box-indecomposable intervals with
respect to the rank function on the domain of \wulpo's.
Recall that we denote by $\rank_{ul}$ the function $\rank$ restricted
to \wulpo's and by $\rank_{F}$ its restriction to well-founded forests. 

\begin{prop} \label{prop:rank-tame-box-indecomposable}
  Let $i\in\N$.
  \begin{enumerate}[\em(1)]
    \item \label{item:rank-tame-box-indecomposable-wulpo} The interval
      $[\omega^{\omega^i},\omega^{\omega^{i+1}}]$ is
      $\rank_{ul}$-tamely-colorable-box-in\-de\-com\-pos\-able.
    \item The interval $[\omega^i, \omega^{i+1}]$ is
      $\rank_{F}$-tamely-colorable-box-indecomposable.
  \end{enumerate}
\end{prop}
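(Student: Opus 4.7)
The plan is to prove both items by contrapositive, showing that if all factors of a tamely colorable box-augmentation have rank strictly below $\omega^{\omega^i}$ (respectively $\omega^i$), then the whole structure has rank strictly below $\omega^{\omega^{i+1}}$ (respectively $\omega^{i+1}$). The upper bound $\rank(\Pf_j)\leq\omega^{\omega^{i+1}}$ (resp.\ $\omega^{i+1}$) demanded by the definition of tamely-colorable-box-indecomposability is automatic: by the remark following Definition~\ref{def:box-augmentation} each factor embeds as a substructure, and the class of \wulpo's (resp.\ well-founded forests) is closed under substructures, so Lemma~\ref{lemma-rank-substructure} yields $\rank(\Pf_j)\leq\rank(\Pf)$. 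Hence only the lower bound requires work.

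For part (1), suppose $\Pf$ is a \wulpo presented as a tamely colorable box-augmentation of $(\Pf_1,\dots,\Pf_m)$ with $\rank(\Pf_j) < \omega^{\omega^i}$ for every $j$. Invoke Lemma~\ref{lem:bound-length-chain-wulpo} to fix the constant $c\in\N$ and then Lemma~\ref{lem:mu-function} to obtain a partition map $\mu\colon P\to\{1,\dots,c\}$. Part~(1) of Corollary~\ref{lem:wulpoboxsplitassums} applied with $n=i$ immediately bounds each fibre by
\[
  \rank(\Pf{\restriction}_{\mu^{-1}(k)}) \;\leq\; \omega^{\omega^i}
  \quad\text{for every }k\in\{1,\dots,c\}.
\]
Since the fibres partition $P$ into at most $c$ pieces, iterating Lemma~\ref{lem:rankNaturalSum} (together with associativity of $\oplus$) gives
\[
  \rank(\Pf)
  \;\leq\; \underbrace{\omega^{\omega^i} \oplus \cdots \oplus \omega^{\omega^i}}_{c\text{ copies}}
  \;=\; \omega^{\omega^i}\cdot c
  \;<\; \omega^{\omega^i+1}
  \;\leq\; \omega^{\omega^{i+1}},
\]
where the strict inequality uses that $c$ is finite and the last step uses $\omega^i+1\leq\omega^{i+1}$, which holds for every $i\in\N$. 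This contradicts $\rank(\Pf)=\omega^{\omega^{i+1}}$, so some $\rank(\Pf_j)\geq\omega^{\omega^i}$, as required.

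Part (2) follows by the same argument, substituting part~(2) of Corollary~\ref{lem:wulpoboxsplitassums}: under the hypothesis $\rank(\Pf_j)<\omega^i$ (and the fact that the $\Pf_j$ are themselves forests, being substructures of the forest $\Pf$), the corollary yields $\rank(\Pf{\restriction}_{\mu^{-1}(k)})\leq\omega^i$ on every fibre, whence
\[
  \rank(\Pf) \;\leq\; \omega^i\cdot c \;<\; \omega^{i+1}.
\]
There is no real obstacle here: all the difficult work has been done in Corollary~\ref{lem:wulpoboxsplitassums}, which already packages the interaction between connected components, tame colorings, and Lemmas~\ref{Cor:ProdConnectedWulpos} and \ref{lem:mu-function}. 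The only arithmetic point used is that a finite natural sum of copies of $\omega^{\omega^i}$ remains strictly below $\omega^{\omega^{i+1}}$ (and analogously for $\omega^i$ in the forest case), which is an elementary consequence of $\omega^i+1\leq\omega^{i+1}$.
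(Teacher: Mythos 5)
Your proof is correct and takes essentially the same approach as the paper: both rely on Lemma~\ref{lem:mu-function} and Corollary~\ref{lem:wulpoboxsplitassums} to bound the rank of each fibre $\Pf{\restriction}_{\mu^{-1}(k)}$. The only cosmetic difference is that you explicitly compute the natural-sum bound $\omega^{\omega^i}\cdot c < \omega^{\omega^{i+1}}$ via Lemma~\ref{lem:rankNaturalSum}, whereas the paper instead invokes Proposition~\ref{prop:rank-sum-indecomposable} (which is itself proved via Lemma~\ref{lem:rankNaturalSum}) to find a fibre of full rank and contradict Corollary~\ref{lem:wulpoboxsplitassums}; these amount to the same thing.
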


\begin{proof}
  For the first claim let $\Pf=(P,\leq)$ be a \wulpo of rank
  $\omega^{\omega^{i+1}}$. 
  Heading for a contradiction, 
  assume that it is a tamely colorable
  box-augmentation of \wulpo's $(\Pf_1, \dots, \Pf_n)$ such that
  $\rank(\Pf_j)< \omega^{\omega^i}$ for all $1\leq j \leq n$. 

  Take $c \in \N$ and the mapping $\mu: P \to \{1, \dots, c\}$
  from Lemma~\ref{lem:mu-function}. 
  The preimages $\mu^{-1}(i)$ of $\mu$ induce a sum-decomposition of $\Pf$ into 
  $c$ parts
  $\Pf{\restriction}_{\mu^{-1}(1)}, \dots, \Pf{\restriction}_{\mu^{-1}(c)}$. Since
  $\omega^{\omega^{i+1}}$ is $\rank$-sum-indecomposable by
  Proposition~\ref{prop:rank-sum-indecomposable}, 
  we can assume (perhaps after renaming of
  colors) that
  $\rank(\Pf{\restriction}_{\mu^{-1}(1)}) = \omega^{\omega^{i+1}}$. 
  But Corollary~\ref{lem:wulpoboxsplitassums} implies that 
  $\rank(\Pf{\restriction}_{\mu^{-1}(1)}) \leq \omega^{\omega^i}$ which is clearly a
  contradiction. 

  For the case of forests, we use exactly the same arguments, and the
  second part of Corollary~\ref{lem:wulpoboxsplitassums} yields a
  similar contradiction. 
\end{proof}
Using Propositions~\ref{Prop:Indecomposability},
\ref{prop:rank-sum-indecomposable} and~\ref{prop:rank-tame-box-indecomposable},
we obtain directly the desired bounds on the ranks.

\begin{thm} \label{thm:Forests-Rank}
  \hfill
  \begin{enumerate}[\em(1)]
  \item   Every tree-automatic \wulpo has rank strictly below $\omega^{\omega^\omega}$.
  \item \label{item:Forest-Rank-Forests}  Every tree-automatic
    well-founded forest has rank strictly below 
    $\omega^\omega$. 
  \end{enumerate}
\end{thm}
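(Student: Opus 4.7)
The plan is to combine Propositions~\ref{Prop:Indecomposability}, \ref{prop:rank-sum-indecomposable}, and~\ref{prop:rank-tame-box-indecomposable} with the well-known fact (stated just after Proposition~\ref{Prop:Indecomposability} in the introductory discussion of this section) that every well-founded partial order of rank $\alpha$ realises each rank $\beta<\alpha$ as the rank of some principal downset. All the conceptual work has been done; the remaining task is essentially bookkeeping.

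For~(1), let $\Pf$ be a tree-automatic \wulpo, let $\C$ be the class of \wulpo's (which is closed under substructures, so the side condition in Definition~\ref{def:indecomposable value} is automatic), and consider the $\FOext[\leq]$\nobreakdash-formula $\varphi(x,y)\equiv x<y$. For every $s\in P$ we have $\Pf\restriction_{\varphi,s}=\Pf\restriction_{\downarrow s}$, and hence $\rank_{ul}(\Pf\restriction_{\varphi,s})=\rank(s,\Pf)$. Setting $\alpha_i:=\omega^{\omega^i}$, the ordinals $\alpha_i$ are $\rank_{ul}$-sum-indecomposable by Proposition~\ref{prop:rank-sum-indecomposable}, and each interval $[\alpha_i,\alpha_{i+1}]$ is $\rank_{ul}$-tamely-colorable-box-indecomposable by Proposition~\ref{prop:rank-tame-box-indecomposable}(1). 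Proposition~\ref{Prop:Indecomposability} therefore yields a finite $F\subseteq\N$ such that
\begin{equation*}
\{\,i\in\N \mid \exists s\in P:\rank(s,\Pf)=\omega^{\omega^i}\,\}\subseteq F.
\end{equation*}

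Assume for a contradiction that $\rank(\Pf)\geq \omega^{\omega^\omega}$. For every $i\in\N$ we have $\omega^{\omega^i}<\omega^{\omega^\omega}\leq\rank(\Pf)$, and by the ``realisability'' lemma mentioned above there is some $s\in P$ with $\rank(s,\Pf)=\omega^{\omega^i}$. Hence the displayed set equals $\N$, contradicting finiteness of $F$. This proves~(1).

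For~(2), the argument is the same mutatis mutandis. Let $\Pf$ be a tree-automatic well-founded forest, let $\C$ be the (substructure-closed) class of well-founded forests, take $\varphi$ as before, and choose $\alpha_i:=\omega^i$. Each $\alpha_i$ is $\rank_{F}$-sum-indecomposable (Proposition~\ref{prop:rank-sum-indecomposable}, noting that $\downarrow s$ in a forest is again a forest), and each $[\omega^i,\omega^{i+1}]$ is $\rank_{F}$-tamely-colorable-box-indecomposable by Proposition~\ref{prop:rank-tame-box-indecomposable}(2). Proposition~\ref{Prop:Indecomposability} again bounds the set of $i$ for which some $s$ has $\rank(s,\Pf)=\omega^i$, and realisability forces $\rank(\Pf)<\omega^\omega$. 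There is no genuine obstacle here; the only subtle point is to make sure the class $\C$ chosen in each case actually contains every substructure $\Pf\restriction_{\downarrow s}$ that arises, which is clear since both \wulpo's and well-founded forests are closed under substructures.
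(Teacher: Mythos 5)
Your proposal is correct and follows essentially the same route as the paper: choose $\varphi(x,y)\equiv x<y$, invoke Propositions~\ref{prop:rank-sum-indecomposable} and~\ref{prop:rank-tame-box-indecomposable} to verify the hypotheses of Proposition~\ref{Prop:Indecomposability} with the sequences $\omega^{\omega^i}$ (resp.\ $\omega^i$), and derive a contradiction from the realisability of every rank below $\rank(\Pf)$ as $\rank(s,\Pf)$ for some $s$. The only difference is cosmetic: you spell out the identification $\Pf{\restriction}_{\varphi,s}=\Pf{\restriction}_{\downarrow s}$ and the closure-under-substructures point, both of which the paper leaves implicit.
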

\begin{proof}
	Concerning the first claim,
	let $\alpha_i=\omega^{\omega^i}$ for each $i\in\N$.
	Due to Propositions~\ref{prop:rank-sum-indecomposable}
	and~\ref{prop:rank-tame-box-indecomposable} 
        (part \eqref{item:rank-tame-box-indecomposable-wulpo}),
	the sequence $\alpha_0<\alpha_1<\alpha_2<\dotsc$ satisfies
	the conditions of Proposition~\ref{Prop:Indecomposability}.
	For the sake of a contradiction to this latter proposition,
	assume that $\Pf$ is a tree-automatic \wulpo of rank
	at least~$\omega^{\omega^\omega}$.
	Let $\varphi$ be the formula ${\varphi(x,y)=x<y}$.
	By well-foundedness, for each $i\in\N$ there is a $p\in P$
	such that $\Pf{\restriction}_{\varphi,p}$
	has rank $\alpha_i$,
	contradicting Proposition~\ref{Prop:Indecomposability}.
	
	For \eqref{item:Forest-Rank-Forests},
        we proceed
	analogously with 
        $\alpha_i=\omega^i$. This time we use
	the second part of
	Proposition~\ref{prop:rank-tame-box-indecomposable}.
\end{proof}
Note that this result has an analogous counterpart in the world of
string-automatic structures: 
Khoussainov and
Minnes \cite{KhMi09} and Delhomm{\'e} \cite{Delhomme04} independently
proved that the ordinal ranks of string-automatic well-founded partial 
orders are the ordinals strictly below $\omega^\omega$ and this bound
is optimal because for each $\alpha<\omega^\omega$ the ordinal
$\alpha$ is string-automatic. 
In contrast, we \cite{KartzowLL12,KartzowLL12b-Version1} proved that
every string-automatic forest has rank strictly below $\omega^2$ and this bound is
also optimal. 
In the next section, we construct tree-automatic well-founded
trees of all ranks strictly below $\omega^\omega$. Moreover, since
ordinals are \wulpo's,  our bound on the ranks of \wulpo's is also
optimal. Thus, tree-automatic forests
realize much smaller ranks than tree-automatic partial orders 
(as it is also the case in the string-automatic setting). 
Note that there is no non-trivial bound on the ranks of arbitrary
well-founded tree-automatic partial orders. But our new result on the
subclass of all \wulpo's and the analogy to the string-automatic case
support the  following conjecture.

\begin{conj}
  Every tree-automatic well-founded partial order has rank below
  $\omega^{\omega^\omega}$. 
\end{conj}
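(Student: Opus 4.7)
The only machinery the paper has developed for bounding ranks in the tree-automatic setting is the refined Delhomm\'e decomposition (Corollary \ref{cor:tame-delhomme}) combined with sum- and tamely-colorable-box-indecomposability, and Example \ref{exa:BoxDestroysRanksofWfPO} shows that this machinery cannot be applied directly to arbitrary well-founded partial orders: box-augmentations of antichains are themselves well-founded partial orders whose ranks can be arbitrarily large, killing any hope of box-indecomposability above rank $1$. The overall plan is therefore not to mimic the proof of Theorem \ref{thm:Forests-Rank} but rather to reduce the conjecture to the case of tree-automatic ordinals, whose ranks are already known to be bounded by $\omega^{\omega^\omega}$ by \cite{Delhomme04}.

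The cleanest such reduction would go via a tree-automatic well-founded linear extension. Suppose that for every tree-automatic well-founded partial order $\Pf = (P, \leq_P)$ one could exhibit a tree-automatic well-order $\mathfrak{L} = (P, \leq_L)$ with $\leq_P \subseteq \leq_L$. A straightforward induction on $\rank(p,\Pf)$ then gives $\rank(p,\Pf) \leq \rank(p,\mathfrak{L})$ for every $p \in P$, hence $\rank(\Pf) \leq \rank(\mathfrak{L}) < \omega^{\omega^\omega}$, and the conjecture follows. The first concrete attempt would be a stage construction based on the length-lexicographic order on $\T^\fin_2$: enumerate $P$ in length-lex order and at stage $n$ insert the $n$-th element into the already built finite chain at the unique $\leq_P$-respecting position. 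The aim would be to show that the resulting relation $\leq_L$ is $\FOext[\leq]$-definable over $\Pf$ and hence, by Theorem \ref{thm:tree-automatic}, tree-automatic.

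The main obstacle is precisely that last step. It is classical that every countable well-founded partial order admits a well-founded linear extension, but requiring the extension to be tree-automatic relative to a fixed tree-automatic presentation of $\Pf$ is a much stronger constraint, and the warning around Example \ref{exa:BoxDestroysRanksofWfPO} suggests that no such uniformly definable extension exists. If the linear-extension route breaks down, the fallback is to design a genuinely new decomposition theorem in the spirit of Corollary \ref{cor:tame-delhomme}: isolate an operation strictly weaker than arbitrary box-augmentation but still covering the substructures $\Af{\restriction}_{\varphi,\bar s}$ produced by the proof of Theorem \ref{thm:delhomme}, under which every interval $[\omega^{\omega^i}, \omega^{\omega^{i+1}}]$ remains $\rank$-indecomposable on the class of all well-founded partial orders. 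A natural starting point is to exploit the tame coloring more aggressively than is done in the \wulpo proof, combining it with the connected-component analysis from Corollary \ref{coro-at-most-1-branching} in a way that does not rely on upwards linearity; isolating the right such abstraction is where the real work lies, and it is precisely the ``completely new techniques'' the authors declare to be necessary.
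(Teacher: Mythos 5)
The statement you were asked to address is explicitly an open conjecture in the paper; the authors make no attempt at a proof and indeed say in the introduction that there is no hope of extending Delhomm\'e's technique and that ``completely new techniques'' are required. Your proposal correctly treats it as open and does not claim to settle it, which is the right reading of the situation: Example~\ref{exa:BoxDestroysRanksofWfPO} is precisely the paper's evidence that the decomposition machinery of Section~\ref{sec:Delhomme} breaks down for general well-founded partial orders, and you cite it correctly.

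On the content of your sketch, the linear-extension reduction is sound as far as it goes: if $\leq_P \subseteq \leq_L$ and $(P,\leq_L)$ is a well-order, the induction on $\rank(p,\Pf)$ indeed gives $\rank(\Pf)\leq\rank(\mathfrak{L})$, and $\rank(\mathfrak{L}) < \omega^{\omega^\omega}$ would then follow from Delhomm\'e's characterization of tree-automatic ordinals. But two things should be flagged. First, the length-lex ``greedy insertion'' you propose does not by itself produce a well-order: when a newly enumerated element is $\leq_P$-incomparable to everything inserted so far (e.g.\ in an infinite antichain), the insertion position is not uniquely determined by $\leq_P$, and choosing the tiebreak badly can yield $\omega^*$. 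So even before any definability concerns, the construction needs a well-foundedness-preserving tiebreak rule, which is exactly where all the difficulty lies. Second, even granting such a rule, it is not at all clear that the resulting extension is $\FOext$-definable in $\Pf$: the position of the $n$-th element depends on the entire finite history of earlier insertions, which is not a first-order local condition, so invoking Theorem~\ref{thm:tree-automatic} is not straightforward. Your fallback --- a new decomposition notion strictly between sum-augmentation and tamely colorable box-augmentation under which $[\omega^{\omega^i},\omega^{\omega^{i+1}}]$ stays indecomposable on \emph{all} well-founded partial orders --- is indeed the natural direction, and is essentially what the authors mean by ``completely new techniques''; you are right that neither your sketch nor the paper supplies it.
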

Beside upwards linearity, well-founded forests 
also have the property that one cannot embed an infinite linear
order. One might wonder whether the class of tree-automatic partial
orders without infinite linear suborder already  satisfies that the
ranks are bounded by $\omega^\omega$.
The example below answers this question negatively be showing that for each
$\alpha<\omega^{\omega^\omega}$ there exists such a partial order of rank $\alpha$. 

\begin{exa}\label{exa:BoxDestroysRanksofWfPO}
Let $\alpha<\omega^{\omega^\omega}$ be an ordinal. We show that the
the direct product of $\alpha$ and $\omega^*$ (the reverse of
$\omega$) as \emph{strict} partial orders is a tree-automatic partial
order which has rank $\alpha$ and contains no infinite linear suborder. 

More formally, we consider the partial order $\Pf=(P,\preceq)$ on
\begin{equation*}
	P = \{ (\beta,k) \mid \beta<\alpha, k\in\N \}
\end{equation*}
whose induced strict partial order $\prec$ is given by
\begin{equation*}
	(\beta,k) \prec (\beta',k') \Longleftrightarrow
        \beta<\beta' 
        \text{ and }k>k'.
\end{equation*}
As a direct product of the tree-automatic structures $\alpha$ and $\omega^*$,
$\Pf$ is also tree-automatic.

Note that $\Pf$ does not contain an infinite linear suborder because
any such suborder could be projected to an infinite strictly descending
sequence in $\alpha$ or to an infinite strictly ascending sequence in 
$\omega^*$.  

We prove that $\rank(\Pf)=\rank(\alpha)$ by transfinite induction. 
For $\beta=0$ and $k\in\omega^*$  we have
$\rank((\beta,k),\Pf) = 0$ because $(\beta,k)$ is a minimal element.
For $\beta>0$, $k\in\omega^*$,
\begin{align*}
  \rank((\beta,k),\Pf)
  &= \sup\{ \rank((\beta',k'),\Pf) + 1 \mid (\beta',k') \prec (\beta,k) \in P \} \\
  &= \sup\{ \beta'+1 \mid \beta'<\beta \}
  = \beta .
\end{align*}
Thus, we conclude that
\begin{equation*}
	\rank(\Pf)
	= \sup\{ \rank((\beta,k),\Pf)+1 \mid (\beta,k)\in\Pf \}
	= \sup\{ \beta+1 \mid \beta < \alpha \}
	= \alpha.
\end{equation*}
The structure $\Pf$ also shows that Delhomm\'{e}'s decomposition
technique cannot be used to prove our conjecture that all
tree-automatic partial orders have ranks below
$\omega^{\omega^\omega}$. 
Note that for all $\alpha'\in \alpha$  the
substructure
$\Pf{\restriction}_{\{(\alpha',n)\mid n\in\omega^*\}}$ is isomorphic to
the countably infinite  antichain $(\omega^*, =)$. Similarly, for all
$n\in\omega^*$ the
substructure
$\Pf{\restriction}_{\{(\alpha',n)\mid \alpha'\in\alpha\}}$ is isomorphic to
the countably infinite  antichain $(\alpha, =)$. Thus, $\Pf$ is a
partial order of rank $\alpha$ that is a tamely colorable
box-augmentation of two antichains. Since infinite antichains have
rank $1$, we conclude that  applications of box-decomposition to
well-founded partial orders may destroy all information about the
rank. Thus, there 
is no hope that one could provide bounds on the ranks of
tree-automatic well-founded partial orders using  Delhomm\'{e}'s
decomposition technique.
\end{exa}

\section{Upper bound for the isomorphism
  problem for well-founded  trees} 
\label{sec:OmegaOmegaFormulas}

We first introduce hyperarithmetical sets. Afterwards we provide the
upper bound for the isomorphism problem of well-founded tree-automatic
trees. 

\subsection{Hyperarithmetical sets} \label{sec:hyperarith}

We use standard terminology concerning recursion theory;
see e.g. \cite{Rogers}.
We use the definition of the hyperarithmetical hierarchy from Ash and
Knight \cite{AshKNight00}, see also~\cite{HiWe02}. We first define
inductively a set
of {\em ordinal notations} $O\subseteq \N_{>0}$.
Simultaneously we define a mapping $a \mapsto |a|_O$ from $O$ into
ordinals and a strict partial order $<_O$ on $O$.
The set $O$ is the smallest
subset of $\N_{>0}$ satisfying the following conditions:
\begin{iteMize}{$\bullet$}
\item $1 \in O$ and $|1|_O=0$, i.e., $1$ is a notation for the ordinal
  $0$.
\item If $a \in O$, then also $2^a \in O$. We set $|2^a|_O = |a|_O+1$
and let $b <_O 2^a$ if and only if $b=a$ or $b <_O a$.
\item If $e \in \N$ is such that
  $\Phi_e$ (the $e^{th}$ partial computable
  function) is total, $\Phi_e(n) \in O$ for all $n \in \N$, and
  $\Phi_e(0) <_O \Phi_e(1) <_O  \Phi_e(2) <_O \cdots$, then also
  $3 \cdot 5^e \in O$. We set $|3 \cdot 5^e|_O = \sup\{|\Phi_e(n)|_O
  \mid n \in \N\}$ and let
  $b <_O 3 \cdot 5^e$ if and only if there exists $n \in \N$ with
  $b <_O \Phi_e(n)$.
\end{iteMize}
An ordinal $\alpha$ is {\em computable} if there exists $a \in O$
with $|a|_O = \alpha$. The smallest non-computable ordinal is
the Church-Kleene ordinal $\omega_1^{\mathsf{ck}}$.
If $a \in O$ then the restriction of the partial order $(O, <_O)$ to
$B = \{ b \in O \mid b <_O a\}$ is isomorphic to the ordinal $|a|_O$  \cite[Proposition~4.9]{AshKNight00}.
Moreover, the set $B$ is computably enumerable and an index for $B$
can be computed from $a$ \cite[Proposition~4.10]{AshKNight00}.

Next, we define the
{\em hyperarithmetical hierarchy} based on ordinal notations.
For this we define sets $H(a)$ for each $a \in O$ as follows:
\begin{iteMize}{$\bullet$}
\item $H(1) = \emptyset$,
\item $H(2^b) = H(b)'$ (the Turing jump of $H(b)$; see e.g. \cite{Rogers}),
\item $H(3 \cdot 5^e) = \{ \langle b,n\rangle \mid b <_O 3 \cdot 5^e,
  n \in H(b) \}$; here $\langle\cdot,\cdot\rangle$ denotes a
  computable pairing function.
\end{iteMize}
Spector has shown that $|a|_O = |b|_O$ implies that $H(a)$ and $H(b)$
are Turing equivalent.
The levels of the hyperarithmetical hierarchy can be defined as
follows, where $\alpha$ is a computable ordinal.
\begin{iteMize}{$\bullet$}
\item If $\alpha$ is infinite, then $\Sigma^0_\alpha$ is the set of all subsets $A \subseteq
  \N$ that are recursively enumerable in some
  $H(a)$ with $|a|_O = \alpha$ (by Spector's theorem, the concrete
  choice of $a$ is irrelevant). For $n > 0$ finite, one defines $\Sigma^0_n$
   as the set of all subsets $A \subseteq
  \N$ that are recursively enumerable in 
  $H(a)$ with $|a|_O = n-1$.\footnote{The distinction between finite and infinite ordinals is made in order to have 
  a correspondence between the arithmetical hierarchy and the finite part of the hyperarithmetical hierarchy, see also
  \cite[pp.~74 and 75]{AshKNight00}.}
\item $\Pi^0_\alpha$ is the set of all complements of
  $\Sigma^0_\alpha$-sets.
\item $\Delta^0_\alpha = \Sigma^0_\alpha \cap \Pi^0_\alpha$
\end{iteMize}
A relation $R \subseteq \N^k$ is $X^0_\alpha$
(with $X \in \{\Sigma,\Pi,\Delta\}$) if the set
$\{ \langle x_1, \ldots, x_k \rangle \mid (x_1,\ldots, x_k) \in R\}$
is $X^0_\alpha$, where $\langle \cdot, \ldots, \cdot\rangle$ denotes
a computable encoding of $k$-tuples.

For any two computable ordinals $\alpha,\beta$,
$\alpha<\beta$ implies $\Sigma_\alpha \cup \Pi_\alpha \subsetneq \Delta_\beta$.
The union of all classes
$\Sigma^0_\alpha$ where $\alpha< \omega_1^{\mathsf{ck}}$ yields the
class of all {\em hyperarithmetical sets}. By a
classical result of Kleene, the hyperarithmetical sets are exactly the
sets in $\Delta^1_1 = \Sigma^1_1\cap \Pi^1_1$, where $\Sigma^1_1$ is
the first existential level of the analytical hierarchy, and $\Pi^1_1$
is the set of all complements of $\Sigma^1_1$-sets.

For our purposes, it is convenient to present an alternative definition
of the hyperarithmetical hierarchy using computable infinitary
formulas, see \cite[Chapter~7]{AshKNight00}.
Fix a predicate $R(\overline{x}) \subseteq \N^k$ where $k\geq 1$.  If $R$
is computable, then a $\Sigma^0_0$ (resp. $\Pi^0_0$) {\em index} for
$R$ is a triple $(\Sigma,0, e)$ (resp. $(\Pi,0,e)$) where $e$ is an
index for $R$.  Next, let $0 < \alpha < \omega_1^{\mathsf{ck}}$ be a
computable ordinal. 
  \medskip
  \noindent
  {\em Case 1.} $\alpha=\beta+1$ is a successor ordinal. Then, a $\Sigma^0_\alpha$
  (resp. $\Pi^0_\alpha$) index for $R$ is a triple $(\Sigma, a, e)$
  (resp. $(\Pi, a,e)$) where $|a|_O = \alpha$ and $e$ is
  a $\Pi^0_\beta$ (resp. $\Sigma^0_\beta$) index for a predicate
  $P(\overline{x},y)\subseteq \N^{k+1}$ such that for all $\overline{x}\in
  \N^k$:
  \[
  R(\overline{x}) \Leftrightarrow \exists y: P(\overline{x},y) \qquad
  \Big( \text{resp. } R(\overline{x}) \Leftrightarrow \forall y:
  P(\overline{x},y)\Big).
  \]
  {\em Case 2.} $\alpha$ is a limit ordinal. Then, a $\Sigma^0_\alpha$
  (resp. $\Pi^0_\alpha$) index for $R$ is a triple $(\Sigma, a,e)$
  (resp. $(\Pi,a,e)$), where $|a|_O = \alpha$ and
  $\Phi_e$ is a total computable function such that the following holds:
  For all $n \in \N$, $\Phi_e(n)$ is a $\Pi^0_{\beta_n}$
  (resp. $\Sigma^0_{\beta_n}$) index for some
  predicate $P_n(\overline{x})\subseteq \N^k$,
  $\beta_0 <\beta_1 <\ldots < \alpha$ with $\sup\{\beta_n \mid n \in \N \}
   = \alpha$, and for all $\overline{x}\in \N^k$:
  \[
  R(\overline{x}) \Leftrightarrow \bigvee_{i\in \N} P_i(\overline{x})
  \qquad \Big( \text{resp. } R(\overline{x})\Leftrightarrow
  \bigwedge_{i\in \N} P_i(\overline{x}) \Big).
  \]
Essentially, we can view a $\Sigma^0_\alpha$ (resp. $\Pi^0_\alpha$)
index as a finite representation of a {\em computable infinitary formula} that defines
the corresponding $\Sigma^0_\alpha$ (resp. $\Pi^0_\alpha$) predicate.
For instance, the $\Sigma^0_\alpha$ index $(\Sigma, a,e)$,
where $|a|_O$ is a limit ordinal, represents
the computable infinitary formula $\bigvee_{i\in \N} \varphi_i$, where
$\varphi_i$ is the computable infinitary formula represented by the
index $\Phi_e(i)$. 
In this paper we use the notions of $\Sigma^0_\alpha$
(resp. $\Pi^0_\alpha$) predicates and indices interchangeably.
Formally, we do not allow negation in  computable infinitary formulas.
But if $\varphi(\overline{x})$ defines the $\Sigma^0_\alpha$ (resp. $\Pi^0_\alpha$)
set $A \subseteq \N^k$, then one can construct effectively a
$\Pi^0_\alpha$ (resp. $\Sigma^0_\alpha$) formula for $\N^k
\setminus A$; and we therefore may define this formula as
$\neg\varphi(\overline{x})$, see \cite[Theorem~7.1]{AshKNight00}.

\subsection{Hyperarithmetic Upper Bound}

It turns out that the $\rank$ for well-founded computable
trees yields an upper bound on the recursion-theoretic complexity of
the isomorphism 
problem. Recall that we defined forests as particular partial orders. For
the isomorphism problem, it is useful to assume that also the direct successor
relation is computable.
When speaking of a computable forest in the following theorem, we mean a
forest $\Ff = (F,\leq)$ such that $F$, $\leq$, and the direct successor relation
$E_\Ff$ are all computable.\footnote{On the other hand, if we would omit the
requirement of a computable direct successor relation in
Theorem~\ref{thm-comp-forest}, 
then we would only have to replace the constants in the theorem by
larger values.} 
Note that the direct successor relation of a tree-automatic forest is
even  tree-automatic because it is first-order definable.

\begin{thm} \label{thm-comp-forest}
Let $\alpha$ be a computable ordinal and assume that $\alpha = \lambda+ k$, where
$k \in \mathbb{N}$ and $\lambda$ is not a successor ordinal.
The isomorphism problem for well-founded computable trees of rank at
most $\alpha$  belongs to
level $\Pi^0_{\lambda+2k+1}$ of the hyperarithmetical hierarchy.
\end{thm}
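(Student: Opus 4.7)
The plan is to construct, by effective transfinite recursion on a notation $a \in O$ for $\alpha$ with $|a|_O = \alpha = \lambda + k$, a computable infinitary $\Pi^0_{\lambda + 2k + 1}$-formula $\iota_\alpha(T_1, T_2)$ which coincides with the isomorphism relation on the class of well-founded computable trees of rank at most $\alpha$ and which is moreover isomorphism-invariant on \emph{all} pairs of well-founded trees (so that the approximants can safely be combined at limit stages). The base case $\alpha = 0$ is immediate: every tree of rank $0$ is a single node, so $\iota_0$ is the identically-true $\Pi^0_1$-formula.

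For the successor case $\alpha = \beta + 1$ (with $\beta = \lambda + (k-1)$), let $\iota_\alpha(T_1, T_2)$ assert the cardinality-matching condition
\begin{equation*}
\forall v \in E_{T_1}(r_1) \cup E_{T_2}(r_2)\ \forall n \in \N:\quad
|A_v| \geq n \ \Longleftrightarrow\ |B_v| \geq n,
\end{equation*}
where $A_v = \{u \in E_{T_1}(r_1) \mid \iota_\beta(T_1(u), T_*(v))\}$, $B_v = \{u \in E_{T_2}(r_2) \mid \iota_\beta(T_2(u), T_*(v))\}$, $r_1, r_2$ are the roots of $T_1, T_2$, and $T_*$ denotes the tree containing~$v$. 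The standard Scott-style argument for well-founded trees then shows that (i) isomorphic trees satisfy $\iota_\alpha$ (inductively $\iota_\beta$ is isomorphism-invariant, so any isomorphism $T_1 \to T_2$ induces a bijection $A_v \to B_v$), and (ii) for $T_1, T_2$ both of rank at most $\alpha$ the formula $\iota_\alpha(T_1, T_2)$ forces the multisets of isomorphism types of the children of the two roots to coincide, whence $T_1 \cong T_2$. For the complexity, $\iota_\beta$ is $\Pi^0_{\lambda + 2k - 1}$ by induction, the predicate $|A_v| \geq n$ is one existential quantifier over $\iota_\beta$ and hence $\Sigma^0_{\lambda + 2k}$, the biconditional stays at the same level as a Boolean combination, and the outer $\forall v\,\forall n$ produces exactly $\Pi^0_{\lambda + 2k + 1}$.

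For the limit case $\alpha = \lambda$ (so $k = 0$ and the target is $\Pi^0_{\lambda + 1}$), fix a notation $a = 3 \cdot 5^e$ for $\lambda$ so that $\Phi_e$ computably enumerates notations $a_n$ whose ordinals $|a_n|_O$ are cofinal in $\lambda$, and set
\begin{equation*}
\iota_\lambda \ :=\ \bigwedge_n \iota_{|a_n|_O}.
\end{equation*}
Isomorphism implies every $\iota_{|a_n|_O}$ by induction and therefore the conjunction. Conversely, if $T_1, T_2$ both have rank at most $\lambda$ and are non-isomorphic, the standard Scott analysis of well-founded trees yields some $\alpha_0 < \lambda$ at which the approximant $\iota_{\alpha_0}$ already separates them; then every $|a_n|_O \geq \alpha_0$ gives $\neg \iota_{|a_n|_O}(T_1, T_2)$, falsifying the conjunction. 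For the complexity, write $|a_n|_O = \lambda_n + k_n$ with $\lambda_n < \lambda$ (using that $\lambda$ is a limit and $|a_n|_O < \lambda$); each $\iota_{|a_n|_O}$ lies at level $\Pi^0_{\lambda_n + 2 k_n + 1}$ with $\lambda_n + 2 k_n + 1 < \lambda$, and a short case analysis shows that these levels are cofinal in $\lambda$ whenever the $|a_n|_O$ are, so $\iota_\lambda$ is a $\Pi^0_\lambda$-formula and in particular $\Pi^0_{\lambda + 1}$. The main technical obstacle will be the careful bookkeeping of quantifier absorption at successor versus limit levels of the hyperarithmetical hierarchy together with the uniformity of the whole construction in the notation $a$; both are routine within the framework of computable infinitary formulas developed in \cite[Chapter~7]{AshKNight00}.
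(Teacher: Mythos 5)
Your overall strategy---effective transfinite recursion on a notation for $\alpha$, building a computable infinitary formula that coincides with isomorphism on trees of rank at most $\alpha$, and counting quantifier alternations to land in $\Pi^0_{\lambda+2k+1}$---is the same as the paper's (which works over the disjoint union $\Ff$ of the two trees and defines formulas $\iso_\alpha(x,y)$ about nodes of $\Ff$), and the quantifier bookkeeping matches. But your formulas differ from the paper's in a crucial respect. The paper's $\iso_{\alpha+1}(x,y)$ contains, besides the cardinality-matching condition, the extra conjunct $\iso_\alpha(u,u)$ over children $u$ of $x$ and $y$ (and $\bigvee_{\beta<\alpha}\iso_\beta(u,u)$ at limits). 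This conjunct is what makes the paper's Claim~1 go through: $\iso_\alpha(x,y)$ forces $\rank(x),\rank(y)\leq\alpha$. Your $\iota_\alpha$ has no such conjunct; indeed $\iota_0$ is identically true, and more generally $\iota_\beta(T,T')$ can hold between trees of rank exceeding $\beta$ (any two trees whose roots have $\aleph_0$ children satisfy $\iota_1$). Your announced invariants---(a) agreement with isomorphism on rank-$\leq\beta$ trees, and (b) isomorphism-invariance---are therefore strictly weaker than what the paper maintains, and this opens a gap at the limit step.

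At a limit $\lambda$ you must show that two non-isomorphic rank-$\leq\lambda$ trees are separated by some $\iota_\beta$ with $\beta<\lambda$. The obstruction is \emph{false positives}: for a child $v$ with $\rank(T_*(v))=\rho<\lambda$, the set $A_v=\{u:\iota_\beta(T_1(u),T_*(v))\}$ may contain children $u$ of rank greater than $\beta$ that are not isomorphic to $T_*(v)$; property~(a) says nothing about such $u$, and property~(b) only gives the easy direction. Appealing to a ``standard Scott analysis'' does not automatically close this: whether \emph{your particular} $\iota$-hierarchy converges to isomorphism depends on these specific formulas, not on general Scott-rank facts. The hole is fillable, but it requires an extra lemma along the lines of ``if $\rank(T)<\beta$ and $\iota_\beta(T,S)$, then $\rank(S)<\beta$'' (a one-step-delayed analogue of the paper's Claim~1, provable by a simultaneous transfinite induction with (a) and (b)); with it, choosing $\beta=\rho+1$ eliminates the false positives and the cardinality discrepancy surfaces at level $\rho+2<\lambda$. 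Until that lemma is stated and proven, the limit case of your induction does not close. The paper avoids the problem by building the rank bound directly into the formula via the $\iso_\alpha(u,u)$ conjunct, at the cost of one extra quantifier at limit stages (which, as Remark~4.2 notes, can be shaved off for the isomorphism problem, recovering the $\Pi^0_\lambda$ level you claim).
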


\begin{proof}
Let us fix a well-founded forest $\Ff = (F,\leq)$.
We  define a computable infinitary $\Pi^0_{\lambda+2k+1}$ formula
expressing
$\Ff(x) \cong \Ff(y)$ for nodes $x$ and $y$ of $\Ff$ of rank at most
$\lambda + k$,
where $k \in \mathbb{N}$ and $\lambda = 0$ or $\lambda$ is a limit ordinal.
This suffices because the disjoint union of two computable trees is a
computable forest.

Let $E = E_\Ff$ be the direct successor relation of $\Ff$.
For every ordinal $\alpha$ we define a computable infinitary formula
$\iso_\alpha(x,y)$ over $\Ff$ as follows:
Let
\begin{equation*}
\iso_0(x,y)  =  \forall z ( \neg (x > z)
  \land \neg (y > z)).  
\end{equation*}
This is a $\Pi^0_1$ formula.
For a successor ordinal $\alpha+1$ let\footnote{We use $\exists^{\geq \ell} x : \varphi(x)$
as an abbreviation for $\exists x_1 \cdots \exists x_\ell : 
\bigwedge_{1 \leq i<j \leq n} x_i \neq x_j \wedge \bigwedge_{i=1}^n \varphi(x_i)$.
The quantifier $\exists^{\geq \ell} x$ can be encoded by an ordinary
single existential quantifier.}   $\iso_{\alpha+1}(x,y)$ be
\begin{equation*}
  \forall u \in  E(x) \cup E(y)\;  \forall \ell \geq 1 \left(
\begin{aligned}
  &\exists^{\geq \ell} v \in E(x)\;  \iso_{\alpha}(u,v)
\Longleftrightarrow  \exists^{\geq \ell} v \in E(y)\;
\iso_{\alpha}(u,v)) \\
&\land \iso_\alpha(u,u).
\end{aligned}
\right)
\end{equation*}
Finally, for a limit ordinal $\alpha$, we define  $\iso_{\alpha}(x,y)$ as
\begin{equation*}
  \bigwedge_{\beta<\alpha}
  \forall u \in E(x) \cup E(y)\; \forall \ell \geq 1 \left(
  \begin{aligned}
 &\exists^{\geq \ell} v \in E(x) \; \iso_{\beta}(u,v)
  \Longleftrightarrow  \exists^{\geq \ell} v \in E(y) \;
  \iso_{\beta}(u,v)) \\
  &\wedge \bigvee_{\beta<\alpha}\iso_\beta(u,u).
  \end{aligned}
  \right)
\end{equation*}
Let us argue by induction that $\iso_{\lambda+k}$ (where $k \in \N$ and 
$\lambda=0$ or $\lambda$ a limit ordinal)
is a  $\Pi^0_{\lambda+2k+1}$-formula.
The case $\lambda=k=0$ is clear. 
For the case $k=0$ and $\lambda>0$ note that the subformula
$\forall u \in E(x) \cup E(y) : \bigvee_{\beta<\lambda}\iso_\beta(u,u)$
is $\Pi^0_{\lambda+1}$. Moreover, 
the subformula 
$$
\forall u \in E(x) \cup E(y)\; \forall \ell \geq 1:
(\exists^{\geq \ell} v \in E(x) :
  \iso_{\beta}(u,v)
  \Longleftrightarrow  \exists^{\geq \ell} v \in E(y):
  \iso_{\beta}(u,v))
$$
is $\Pi^0_{\lambda'+2k'+3}$, where $\beta = \lambda'+k'<\alpha$
(with $k' \in \N$ and $\lambda'=0$ or $\lambda'$ a limit ordinal).
Since $\lambda$ is a limit ordinal, we have $\lambda'+2k'+3 <
\lambda$. This shows that $\iso_{\lambda}(x,y)$ is a conjunction
of a  $\Pi^0_{\lambda}$-formula and a $\Pi^0_{\lambda+1}$-formula, 
and hence also $\Pi^0_{\lambda+1}$.
Finally, for the case that $k>0$ and $\lambda>0$ note that
if $\iso_{\lambda+k}(x,y)$ is a \mbox{$\Pi^0_{\lambda+2k+1}$-formula},
then $\iso_{\lambda+k+1}(x,y)$ is a \mbox{$\Pi^0_{\lambda+2k+3}$-formula},
since two alternating quantifier blocks are added for successor ordinals.

It remains to show that if $\rank(x) \leq \alpha$ and
${\rank(y) \leq \alpha}$, then $\Ff
\models \iso_\alpha(x,y)$ if and only if $\Ff(x) \cong \Ff(y)$. This
is the content of Claim 2 below.
As an auxiliary step we show the following claim.

\medskip
\noindent
{\em Claim 1.} If
$\Ff \models \iso_\alpha(x,y)$ then
$\rank(x) \leq \alpha$ and  $\rank(y) \leq \alpha$.

\medskip
\noindent
We prove Claim 1 by induction on $\alpha$.
The case $\alpha=0$ is clear because trees of rank $0$ do only consist
of one node.
Next, consider an ordinal
$\alpha>0$ such that the claim holds for all $\beta<\alpha$.
Assume that
$\Ff \models \iso_{\alpha}(x,y)$.
Note that the second conjunct of $\iso_{\alpha}(x,y)$ and the induction
hypothesis imply that all children of $x$
have rank $<\alpha$.
Thus, $\rank(x) \leq \sup\{\beta+1\mid \beta<\alpha\} = \alpha$.
For $y$ we conclude analogously.

\medskip
\noindent
{\em Claim 2.} If $\rank(x) \leq \alpha$ and $\rank(y) \leq \alpha$,
then
$\Ff \models \iso_\alpha(x,y)$ if and only if $\Ff(x) \cong \Ff(y)$.

\medskip
\noindent
We prove Claim 2  by induction on $\alpha$.
Again, the case $\alpha=0$ is clear.
Next, consider some  ordinal
$\alpha>0$ and assume that $\rank(x) \leq \alpha$ and $\rank(y) \leq
\alpha$.

First, assume that $\Ff(x) \cong \Ff(y)$.
Fix $\beta<\alpha$, $u\in E(x)\cup E(y)$, and $\ell \geq 1$.
By assumption, we have $\rank(u) < \alpha$.
Furthermore, due to Claim 1, if $\beta<\rank(u)$, then there is no $v$
such that $\iso_\beta(u,v)$ holds. Thus, in this case
\begin{align*}
  \left(\exists^{\geq \ell} v \in E(x): \iso_{\beta}(u,v)
  \ \ \Longleftrightarrow \ \
  \exists^{\geq \ell} v \in E(y) : \iso_{\beta}(u,v) \right) .
  \end{align*}
 holds.
Moreover, if $\rank(u) \leq \beta < \alpha$, then Claim 1 and
the induction hypothesis imply that
\begin{align*}
  \exists^{\geq \ell} v \in E(x): \iso_{\beta}(u,v)
  \ \ \Longleftrightarrow \ \
  \exists^{\geq \ell} v \in E(y) : \iso_{\beta}(u,v)
\end{align*}
is equivalent to
\begin{align*}
  \exists^{\geq \ell} v \in E(x):\Ff(u) \cong \Ff(v)
  \Longleftrightarrow  \exists^{\geq \ell} v \in E(y)
  :\Ff(u) \cong \Ff(v)  .
\end{align*}
Since $\Ff(x) \cong \Ff(y)$ the latter is clearly satisfied.
Moreover, by induction hypothesis for each ${u\in E(x)\cup E(y)}$ we have
$\iso_{\rank(u)}(u,u)$ (note that $\rank(u)<\alpha$). Hence
$\bigvee_{\beta<\alpha} \iso_\beta(u,u)$ holds as well. Thus
$\iso_\alpha(x,y)$ is satisfied.

For the other direction, assume that $\Ff \models \iso_\alpha(x,y)$.
Then for each \mbox{$u\in E(x) \cup E(y)$}, there is some $\beta<\alpha$ such
that $\iso_\beta(u,u)$ holds. Due to Claim 1 this implies that
\mbox{$\rank(u) < \alpha$} for each child $u$ of $x$ or $y$.
Using the induction hypothesis and Claim $1$, we conclude that for 
all \mbox{$u\in E(x) \cup E(y)$} and all ordinals 
$\rank(u)\leq\beta < \alpha$,
\begin{equation} \label{eq:exists-at-least-rank}
  \exists^{\geq \ell} v \in E(x): \iso_{\beta}(u,v)
  \ \ \Longleftrightarrow \ \
  \exists^{\geq \ell} v \in E(y) : \iso_{\beta}(u,v)
\end{equation}
is equivalent to
\begin{equation} \label{eq:exists-at-least-rank'}
  \exists^{\geq \ell} v \in E(x):\Ff(u) \cong \Ff(v)
  \Longleftrightarrow  \exists^{\geq \ell} v \in E(y)
  :\Ff(u) \cong \Ff(v)  .
\end{equation}
Since $\iso_\alpha(x,y)$ holds,  for each $u\in E(x) \cup E(y)$ there
is a $\rank(u)\leq \beta< \alpha$ such that
\eqref{eq:exists-at-least-rank} holds. Hence, \eqref{eq:exists-at-least-rank'} holds
 for all children $u$ of $x$ or $y$, which implies that
$\Ff(x) \cong \Ff(y)$.  
\end{proof}

\begin{rem} \label{remark-Pi}
  If we are only interested in the isomorphism problem for computable trees of
  rank at most $\lambda$ for a limit ordinal $\lambda$ it suffices to consider
  the $\Pi^0_\lambda$-formula
  \begin{equation*}
      \bigwedge_{\beta<\alpha}
    \forall u \in E(x) \cup E(y)\; \forall \ell \geq 1: 
    (\exists^{\geq \ell} v \in E(x) :
    \iso_{\beta}(u,v)
    \Longleftrightarrow  \exists^{\geq \ell} v \in E(y):
    \iso_{\beta}(u,v)) 
  \end{equation*}
  (where the formulas $\iso_{\beta}$ are constructed as above).
  This is because if the rank of the root is bounded
  by $\lambda$, any child of the root has rank strictly below
  $\lambda$ and satisfies $\iso_\beta(u,u)$ for some $\beta<\lambda$.
\end{rem}

\begin{cor}
The isomorphism problem for well-founded tree-automatic trees belongs
to $\Delta^0_{\omega^\omega}$.
\end{cor}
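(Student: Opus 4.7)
The plan is to combine Theorem~\ref{thm:Forests-Rank}\eqref{item:Forest-Rank-Forests} (tree-automatic well-founded forests have rank strictly below $\omega^\omega$) with the uniform construction of the formulas $\iso_\alpha$ from the proof of Theorem~\ref{thm-comp-forest}. First I would reduce to the setting of Theorem~\ref{thm-comp-forest}: given two tree-automatic presentations of well-founded trees $\Tf_1,\Tf_2$ with roots $r_1,r_2$, form the disjoint union $\Ff = \Tf_1 \sqcup \Tf_2$. Since tree-automatic structures and their first-order definable relations (in particular $E_{\Ff}$) are effectively computable, $\Ff$ is a computable forest in the sense required by the theorem, and $\Tf_1\cong\Tf_2$ is then a property of the single forest $\Ff$.

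Next, I would instantiate the construction of $\iso_\alpha$ at $\alpha=\omega^n$ for each $n\in\N$. Fixing a computable sequence of notations $a_n\in O$ with $|a_n|_O=\omega^n$, the construction in the proof of Theorem~\ref{thm-comp-forest} yields uniformly in $n$ a computable infinitary $\Pi^0_{\omega^n+1}$ formula $\iso_{\omega^n}(x,y)$ whose index is computable from $a_n$. The key properties I would reuse are Claim~1 of that proof ($\iso_{\omega^n}(x,x)$ entails $\rank(x)\le\omega^n$) and Claim~2 (under the rank bounds $\rank(x),\rank(y)\le\omega^n$, the formula $\iso_{\omega^n}(x,y)$ captures exactly $\Ff(x)\cong\Ff(y)$).

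Then I would express isomorphism and non-isomorphism symmetrically:
\begin{align*}
  \Tf_1 \cong \Tf_2 \ &\Longleftrightarrow\ \bigvee_{n\in\N}\Bigl(\iso_{\omega^n}(r_1,r_1)\wedge\iso_{\omega^n}(r_2,r_2)\wedge\iso_{\omega^n}(r_1,r_2)\Bigr),\\
  \Tf_1 \not\cong \Tf_2 \ &\Longleftrightarrow\ \bigvee_{n\in\N}\Bigl(\iso_{\omega^n}(r_1,r_1)\wedge\iso_{\omega^n}(r_2,r_2)\wedge\neg\iso_{\omega^n}(r_1,r_2)\Bigr).
\end{align*}
The two equivalences both rely on Theorem~\ref{thm:Forests-Rank}: both $\rank(\Tf_1)$ and $\rank(\Tf_2)$ are strictly below $\omega^\omega$, so there exists $n$ with both ranks $\le\omega^n$; for such $n$ the first two conjuncts hold by Claim~2 applied to $x=y$, and the third conjunct is equivalent to iso by Claim~2. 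Conversely, whenever a disjunct holds, Claim~1 forces both roots to have rank at most $\omega^n$, so Claim~2 applies and its third (or negated third) conjunct decides iso correctly.

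Finally, I would read off the complexity. Each disjunct is a boolean combination of $\Pi^0_{\omega^n+1}$ formulas, hence lies in $\Sigma^0_{\omega^n+2}\cap\Pi^0_{\omega^n+2}$. Since $\omega^n+2<\omega^\omega$ for every $n$ and $\sup_n(\omega^n+2)=\omega^\omega$, the computable disjunction over $n$ (with indices $a_n$ effectively given) is a $\Sigma^0_{\omega^\omega}$ formula according to the definition in Section~\ref{sec:hyperarith}. Thus both $\Tf_1\cong\Tf_2$ and its complement are $\Sigma^0_{\omega^\omega}$, placing the isomorphism problem in $\Delta^0_{\omega^\omega}$. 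The only point that requires care is the uniform construction of the indices $(\Sigma,a_n,e_n)$ for the $\iso_{\omega^n}$ formulas and the verification that the resulting infinitary disjunction has a computable index as required; this is routine given that the definition of $\iso_\alpha$ is uniform in a notation for $\alpha$.
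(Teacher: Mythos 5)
Your proof is correct and follows essentially the same strategy as the paper: combine the rank bound from Theorem~\ref{thm:Forests-Rank}\eqref{item:Forest-Rank-Forests} with the uniformly constructed formulas $\iso_\alpha$ from the proof of Theorem~\ref{thm-comp-forest} to obtain both a $\Sigma^0_{\omega^\omega}$ and a $\Pi^0_{\omega^\omega}$ description of isomorphism. The only difference in realization is on the $\Pi$ side: the paper invokes Remark~\ref{remark-Pi} to get a $\Pi^0_{\omega^\omega}$ formula directly (by dropping the rank-testing conjunct of $\iso_\lambda$ at a limit ordinal $\lambda$), whereas you express \emph{non}-isomorphism as a $\Sigma^0_{\omega^\omega}$ disjunction by using Claim~1 ($\iso_{\omega^n}(r_j,r_j)$ implies $\rank(r_j)\leq\omega^n$) as an in-formula rank certificate, which is equivalent and arguably more self-contained. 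Your complexity accounting is also sound: each disjunct sits in $\Pi^0_{\omega^n+2}$ (in fact the iso disjuncts are already $\Pi^0_{\omega^n+1}$, being conjunctions of $\Pi^0_{\omega^n+1}$ formulas), these levels are increasing with supremum $\omega^\omega$, and the indices are uniformly computable, so the disjunction is a bona fide $\Sigma^0_{\omega^\omega}$ set.
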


\begin{proof}
The $\Sigma^0_{\omega^\omega}$ formula
$\bigvee_{\alpha<\omega^\omega}\iso_{\alpha}(x,y)$
expresses $\Ff(x) \cong \Ff(y)$ for all $x,y$ with $\rank(x),\allowbreak
\rank(y) < \omega^\omega$.

With Remark~\ref{remark-Pi}, we also obtain a $\Pi^0_{\omega^\omega}$ 
formula expressing
$\Ff(x) \cong \Ff(y)$ for all $x,y$ with $\rank(x),
\rank(y) < \omega^\omega$.

Due to Theorem \ref{thm:Forests-Rank}, the $\rank$ of every
tree-automatic well-founded tree is strictly
below $\omega^\omega$. Thus, the isomorphism problem for these trees
belongs to ${\Sigma^0_{\omega^\omega} \cap \Pi^0_{\omega^\omega} =
\Delta^0_{\omega^\omega}}$. 
\end{proof}

\section{Lower bound for the isomorphism problem for well-founded trees}

In this section, we prove hyperarithmetical lower bounds for the isomorphism
problem for well-founded tree-automatic trees. More precisely, we show
that for every ordinal $\omega^i$, there exists a well-founded tree
$\Vf$ such that the set of all tree-automatic copies of
$\Vf$ is hard 
for the class $\Pi^0_{\omega^i}$. Moreover, for the class of all  
well-founded  tree-automatic trees, we prove that the isomorphism
problem is $\Delta^0_{\omega^\omega}$-hard under Turing-reductions.

\subsection{Isomorphism for computable trees of rank $< \omega^\omega$}

Basically, our hardness proof is a reduction from computable
well-founded trees to tree-automatic well-founded trees.
For this, we make use of a construction from \cite{HiWe02}, which
works for all
computable ordinals. We use this construction only for ordinals
strictly below $\omega^\omega$. In this section,
a computable tree is a computable prefix-closed subset
$\Sf \subseteq \mathbb{N}_{>0}^*$.\footnote{For
technical reasons, it is useful to exclude $0$.
This makes the construction of tree-automatic copies easier.}
We identify $\Sf$ with the tree $(\Sf,\succeq)$. Recall that 
$\succeq$ is the inverse prefix relation. Hence, the empty 
word $\varepsilon$ is the root of $(\Sf,\succeq)$.

First, we have to fix the so called
fundamental sequence for every limit ordinal $< \omega^\omega$. Each
ordinal $\alpha < \omega^\omega$ can be written in its
Cantor normal form as
$$
\alpha = \omega^{e_i} \cdot n_i + \omega^{e_{i-1}} \cdot n_{i-1} + \cdots + \omega^{e_1} \cdot n_1,
$$
where $e_i  > e_{i-1} > \cdots > e_1 \geq 0$ 
and $n_j > 0$ are natural numbers for $1 \leq j \leq i$.
Assume that $e_1 > 0$ so that $\alpha$ is a limit ordinal.
Then
$$
\alpha = \sup \{\alpha_k \mid k \geq 1 \},
$$
where we define $\alpha_k$ as
\begin{equation}\label{def-alpha_k}
\alpha_k = \omega^{e_i} \cdot n_i + \omega^{e_{i-1}} \cdot n_{i-1} +
\cdots + \omega^{e_2} \cdot n_2 +
\omega^{e_1} \cdot (n_1-1) + \omega^{e_1-1} \cdot k + 1,
\end{equation}
for $k \geq 1$. Note that $\alpha_k$ is a successor ordinal and we
call $(\alpha_k)_{k\geq 1}$ the fundamental sequence. 

Next, we define for every ordinal $\alpha < \omega^\omega$ a
computable well-founded tree
$\Sf_\alpha \subseteq \mathbb{N}_{>0}^*$ by induction on $\alpha$.
Let $\Sf_0 = \{\varepsilon\}$ be the tree consisting of a single node.
If $\alpha = \beta+1 < \omega^\omega$ is a successor ordinal then
$$
\Sf_{\beta+1} = \{n u \mid n \in \mathbb{N}_{>0}, u \in \Sf_\beta\} \cup \{\varepsilon\} .
$$
Hence, $\Sf_{\beta+1}$ consists of $\aleph_0$ many copies of
$\Sf_\beta$ together with a new root.
Finally let $\alpha < \omega^\omega$ be a limit ordinal with the
fundamental sequence $(\alpha_k)_{k \geq 1}$ defined in
\eqref{def-alpha_k}. Then
$$
\Sf_\alpha = \{ k u \mid k \in \mathbb{N}_{>0}, u \in \Sf_{\alpha_k}\}  \cup \{\varepsilon\}.
$$
Thus, $\Sf_\alpha$ consists of all the trees $\Sf_{\alpha_k}$ ($k \in
\mathbb{N}_{>0}$) together with a new root.
By induction on $\alpha < \omega^\omega$, it is straightforward to
show that $\Sf_\alpha$ is well-founded and computable. Moreover,
also the set of leaves $\leaves(\Sf_\alpha) \subseteq \mathbb{N}_{>0}^*$
is computable.
Let $A \subseteq \leaves(\Sf_\alpha)$.
Then we denote the structure consisting
of the tree $\Sf_\alpha$ together with the additional unary predicate
$A$ by $(\Sf_\alpha, A)$.

The following result is implicitly shown in
\cite[Proposition~3.2]{HiWe02}, where it is stated
for all computable ordinals. But since we only defined fixed
fundamental sequences for limit ordinals below $\omega^\omega$, we
restrict to ordinals below $\omega^\omega$.

\begin{thm} \label{thm-hirschfeld-white}
Given a limit ordinal $\alpha < \omega^\omega$ and $k \in
\mathbb{N}_{>0} \cup \{\infty\}$ 
(resp., a successor ordinal ${\alpha < \omega^\omega}$),
one can compute indices of a computable
subset $L^\alpha_k \subseteq \leaves(\Sf_\alpha)$
(resp., computable subsets ${A_\alpha, E_\alpha \subseteq \leaves(\Sf_\alpha)}$) such that
the following holds: 

From a natural number $n \in \mathbb{N}$ and a
$\Pi^0_\alpha$ index $(\Pi,a,e)$ for a set $P \subseteq \mathbb{N}$
one can compute an index for a  computable
subset $T_{P,n} \subseteq \leaves(\Sf_\alpha)$ such that the following holds:
\begin{iteMize}{$\bullet$}
\item $(\Sf_\alpha, A_\alpha) \not\cong (\Sf_\alpha, E_\alpha)$ and
  $(\Sf_\alpha, L^\alpha_\infty) \not\cong
  (\Sf_\alpha, L^\alpha_k)$ for all $k\in\N_{>0}$, respectively.
\item
  If $\alpha$ is a successor ordinal, then
  \begin{align*} 
    (\Sf_\alpha, T_{P,n}) \cong \begin{cases}
      (\Sf_\alpha, A_\alpha) & \text{ if } n \in P \\
      (\Sf_\alpha, E_\alpha) & \text{ if } n \notin P .
    \end{cases}
  \end{align*}
\item
  If $\alpha$ is a limit ordinal, then
  \begin{align*}
    (\Sf_\alpha, T_{P,n}) \cong \begin{cases}
      (\Sf_\alpha, L^\alpha_\infty) & \text{ if } n \in P \\
      (\Sf_\alpha, L^\alpha_k) \text{ for some $k \in
        \mathbb{N}_{>0}$} & \text{ if } n \notin P.  
    \end{cases}
  \end{align*}
\end{iteMize}
\end{thm}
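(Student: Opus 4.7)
The plan is to prove this by transfinite induction on $\alpha<\omega^\omega$, simultaneously establishing all three components of the statement: the canonical marker sets, their pairwise non-isomorphism, and the uniform reduction $T_{P,n}$. For the base of the induction I would use $\alpha=0$ (even though the theorem is stated only for $\alpha\geq 1$), setting $A_0=\{\varepsilon\}$, $E_0=\emptyset$, and $T_{P,n}=\{\varepsilon\}$ if $n\in P$, otherwise $\emptyset$, for a decidable $P$.

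At the successor step $\alpha=\beta+1$, I would unpack the given $\Pi^0_{\beta+1}$ index for $P$ into a $\Sigma^0_\beta$ index for a predicate $Q$ with $P(n)\equiv\forall y\,Q(n,y)$, so that a $\Pi^0_\beta$ index for $\neg Q$ is obtained uniformly. Identify the children of the root of $\Sf_\alpha$, each a copy of $\Sf_\beta$, with pairs $(y,k)\in\N_{>0}\times\N_{>0}$ via a computable pairing; on the subtree at position $(y,k)$ install the marker $T^\beta_{\neg Q,\,(n,y)}$ supplied by the induction hypothesis. This subtree is then isomorphic to the \emph{yes}-type ($A_\beta$ or $L^\beta_\infty$) exactly when $\neg Q(n,y)$ holds, and to a \emph{no}-type ($E_\beta$ or some $L^\beta_{k'}$) otherwise. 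To force the isomorphism type of $(\Sf_\alpha,T_{P,n})$ to depend only on whether $n\in P$, I would add padding children carrying, for every \emph{no}-type, infinitely many pre-installed copies; the needed $\Pi^0_\beta$ indices are uniformly constructible (an always-true predicate yields the \emph{yes}-type, and, when $\beta$ is a limit, a predicate falsified exactly at the $k$-th position of the fundamental sequence of $\beta$ yields $L^\beta_k$). Define $A_\alpha$ as the resulting pattern when every non-padding subtree is \emph{no}-like, and $E_\alpha$ as the pattern when infinitely many non-padding subtrees are \emph{yes}-like; the padding guarantees that both configurations contain infinitely many copies of every \emph{no}-type, so the multisets of child isomorphism types differ exactly by the presence of the \emph{yes}-type, and non-isomorphism follows from the induction hypothesis.

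At the limit step with fundamental sequence $(\alpha_k)$, I would unpack the $\Pi^0_\alpha$ index into $P(n)\equiv\bigwedge_i P_i(n)$ with $P_i$ of complexity $\Sigma^0_{\beta_i}$ and $\beta_i\to\alpha$. To align this with the intrinsic $(\alpha_k)$, define for each $k\geq1$ the aggregate predicate
\[
R_k(n)\;\equiv\;\bigwedge_{j\,:\,\beta_j\leq\alpha_k}P_j(n),
\]
which is a finite conjunction of $\Sigma^0_{\beta_j}$'s, hence $\Sigma^0_{\alpha_k}$ via the effective closure-under-conjunction algorithm. On the $k$-th child subtree, a copy of $\Sf_{\alpha_k}$ with $\alpha_k$ a successor, install the marker obtained from the successor-step reduction applied to the $\Pi^0_{\alpha_k}$ predicate $\neg R_k$ at input $n$. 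Letting $k_0\in\N_{>0}\cup\{\infty\}$ be the least $k$ for which $R_k(n)$ fails, the $k$-th subtree is $E_{\alpha_k}$-like for $k<k_0$ and $A_{\alpha_k}$-like for $k\geq k_0$, and $k_0=\infty$ iff $n\in P$. Define $L^\alpha_\infty$ as the marker in which every $k$-th subtree is $E_{\alpha_k}$-like and $L^\alpha_k$ ($k\in\N_{>0}$) as the marker in which the first $k-1$ subtrees are $E$-like and the rest $A$-like. Pairwise non-isomorphism is immediate, since the ranks $\alpha_k$ are pairwise distinct---forcing any isomorphism to preserve the child index---and $A_{\alpha_k}\not\cong E_{\alpha_k}$ by the induction hypothesis.

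The hard part will be the careful verification that every step remains uniformly effective: in particular, translating the externally given complexity sequence $(\beta_i)$ into the intrinsic fundamental sequence $(\alpha_k)$ at the limit step---which, for ordinals below $\omega^\omega$, reduces to decidable comparisons via Cantor normal forms---and producing indices witnessing closure of $\Sigma^0_{\alpha_k}$ under finite conjunction, as well as producing the canonical padding indices at the successor step. These are standard effective operations on hyperarithmetical indices, but they must be assembled so that all constructions in the induction remain uniform in the input $\Pi^0_\alpha$ index and in $n$; once that is done, the claim follows.
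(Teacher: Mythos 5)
Your proposal follows essentially the same transfinite-induction scheme as the paper's proof (which adapts Hirschfeldt and White): unpack the $\Pi^0_{\beta+1}$ index into a $\forall y\,Q(n,y)$ form at successor steps, align the index's conjuncts with the fundamental sequence $(\alpha_k)$ at limit steps, install the induced markers on the child subtrees, and pad with infinitely many copies of every \emph{no}-type so that the isomorphism type depends only on membership in $P$. One small simplification worth noting: you need not manufacture the padding via auxiliary $\Pi^0_\beta$ predicates fed into the reduction, since the induction hypothesis already hands you computable indices for $E_\beta$ and for each $L^\beta_k$ directly; moreover, the theorem as stated only guarantees that a \emph{no}-instance produces \emph{some} $L^\beta_{k'}$, so the claim that a predicate ``falsified exactly at the $k$-th position'' yields the specific $L^\beta_k$ is not licensed by the induction hypothesis alone (it holds, but only after tracing the construction), and is best avoided by installing $L^\beta_k$ outright as the paper does.
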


\begin{proof}
In Proposition~3.2 from \cite{HiWe02}, the computable subset $A_\alpha \subseteq \leaves(\Sf_\alpha)$
is replaced by a computable tree ${\A}_\alpha \subseteq \Sf_\alpha$ 
(and similarly for $E_\alpha, L^\alpha_k, T_{P,n}$).
For our purpose it is more convenient
to work with computable subsets of the leaves of the fixed tree
$\Sf_\alpha$. Nevertheless, the construction works analogously to the proof
of Proposition~3.2 from  \cite{HiWe02}, except for the induction base
$\alpha=1$. In \cite{HiWe02}, the tree ${\A}_1$ consists of a single
node and the tree ${\E}_1$ consists of a root with infinitely many children (i.e., ${\E}_1 \cong \Sf_1$).
For the construction of $T_{P,n}$, one chooses a computable
set $Q \subseteq \N\times\N$ such that
$P' = \{ n \mid \exists x  Q(x,n) \}$ is  the complement of the
$\Pi^0_1$-set $P$. Then the tree ${\T}_{P,n}$ consists of the root and  $x \in \mathbb{N}_{>0}$
is a child of the root if and only if there exists $y \leq x$ satisfying $Q(y,n)$.
Hence, ${\T}_{P,n} \cong {\E}_1$ if there exists $x$ with $Q(x,n)$ and
${\T}_{P,n} \cong {\A}_1$ if $\neg Q(x,n)$ for all $x$.

In our context, we define the subsets 
$A_1, E_1, T_{P,n} \subseteq \mathbb{N}_{>0} = \leaves(\Sf_1)$ as follows:
let ${A_1 = \emptyset}$, $E_1$ be the set of all non-zero even numbers, and
$T_{P,n} = \{ 2x \mid \exists y \leq x : Q(y,n) \}$.
Then, we have $(\Sf_1, T_{P,n}) \cong (\Sf_1, E_1)$ if there exists $x$ with $Q(x,n)$ and
$(\Sf_1, T_{P,n}) \cong (\Sf_1, A_1)$ if $\neg Q(x,n)$ for all $x$. 

Let us now discuss the induction step. As remarked above, it works analogously to the proof
of Proposition~3.2 from  \cite{HiWe02}, but we present the details for
the sake of completeness.  
Let  $\alpha < \omega^\omega$, $n \in \mathbb{N}$, and $(\Pi,a,e)$ be a
$\Pi^0_\alpha$ index  for a set $P \subseteq \mathbb{N}$. Hence, $|a|_O=\alpha$.
We distinguish three cases:
\begin{figure}[t]
    \centering
      \begin{tikzpicture}
     [ level distance=2.5 cm,
    sibling distance=1.8cm,
    node distance = 0.5cm,
    grow = down,
    child anchor = north,
    konf/.style={
      circle, draw,},
    tree/.style={isosceles triangle, shape border rotate= 90, font=\small,
      draw},
    edge from parent/.style={ ->,draw, pos=0.65, font=\scriptsize},
    edgelabel/.style={ above=-1mm,sloped},
    ]

    \begin{scope}
    \node[] (AAlpha) {$A_\alpha$};
    \node [konf, below of= AAlpha] (root) {}
    child { 
      node[tree] {$E_\beta$}}
    child { 
      node[tree] {$E_\beta$}}
    child { 
      node[tree] {$E_\beta$}}
      child[]{ node(dots){\dots}  edge from parent[draw=none]}
    ;
    \node[above right of= dots]  {$=$};
    \node[konf, node distance=7cm, right of=root, ]{}
    child { 
      node[tree] {$E_\beta$} edge from parent node[edgelabel]{$\aleph_0$}};
    \end{scope}
    \begin{scope}[yshift=-5cm]
    \node[] (EAlpha) {$E_\alpha$};
    \node [konf, below of= EAlpha] (Eroot) {}
    child { 
      node[tree] {$A_\beta$}}
    child { 
      node[tree] {$E_\beta$}}
    child { 
      node[tree] {$A_\beta$}}
    child { 
      node[tree] {$E_\beta$}}
      child[]{ node(Edots){\dots}  edge from parent[draw=none]}

    ;
    \node[above right of= Edots]  {$=$};
    \node[konf, node distance=7cm, right of=Eroot, ]{}
    child { 
      node[tree] {$E_\beta$} edge from parent node[edgelabel]{$\aleph_0$}}
    child { 
      node[tree] {$A_\beta$} edge from parent node[edgelabel]{$\aleph_0$}};
    \end{scope}
    \begin{scope}[yshift=-10cm, sibling distance = 2.2cm]
    \node[] (TPN) {$T_{P,n}$};
    \node [konf, below of= TPN,] (Troot) {}
    [level distance = 2.7cm] child{ 
      node[tree] {$E_\beta$} edge from parent
      node[edgelabel]{$\aleph_0$}}
    child[missing]{}
    child { 
      node[tree] {$T_{\bar Q, n, 0}$}edge from parent
      node[edgelabel]{$\aleph_0$}} 
    child { 
      node[tree] {$T_{\bar Q, n, 1}$}
      edge from parent node[edgelabel]{$\aleph_0$}}
    child { 
      node[tree] {$T_{\bar Q, n, 2}$}
      edge from parent node[edgelabel]{$\aleph_0$}}
    child[]{ node {\dots} edge from parent[draw=none]}
    ;
    \end{scope}

  \end{tikzpicture}

  \caption{Trees for Case 1 from the proof of Theorem~\ref{thm-hirschfeld-white} (here and in the following pictures we
    write an $\aleph_0$ labeled edge 
    from a node $d$ to a
    subtree $T$ if $d$ has edges to countably infinitely many subtrees
    isomorphic to $T$).}    
  \label{fig:Thm5.1Case1}
\end{figure}
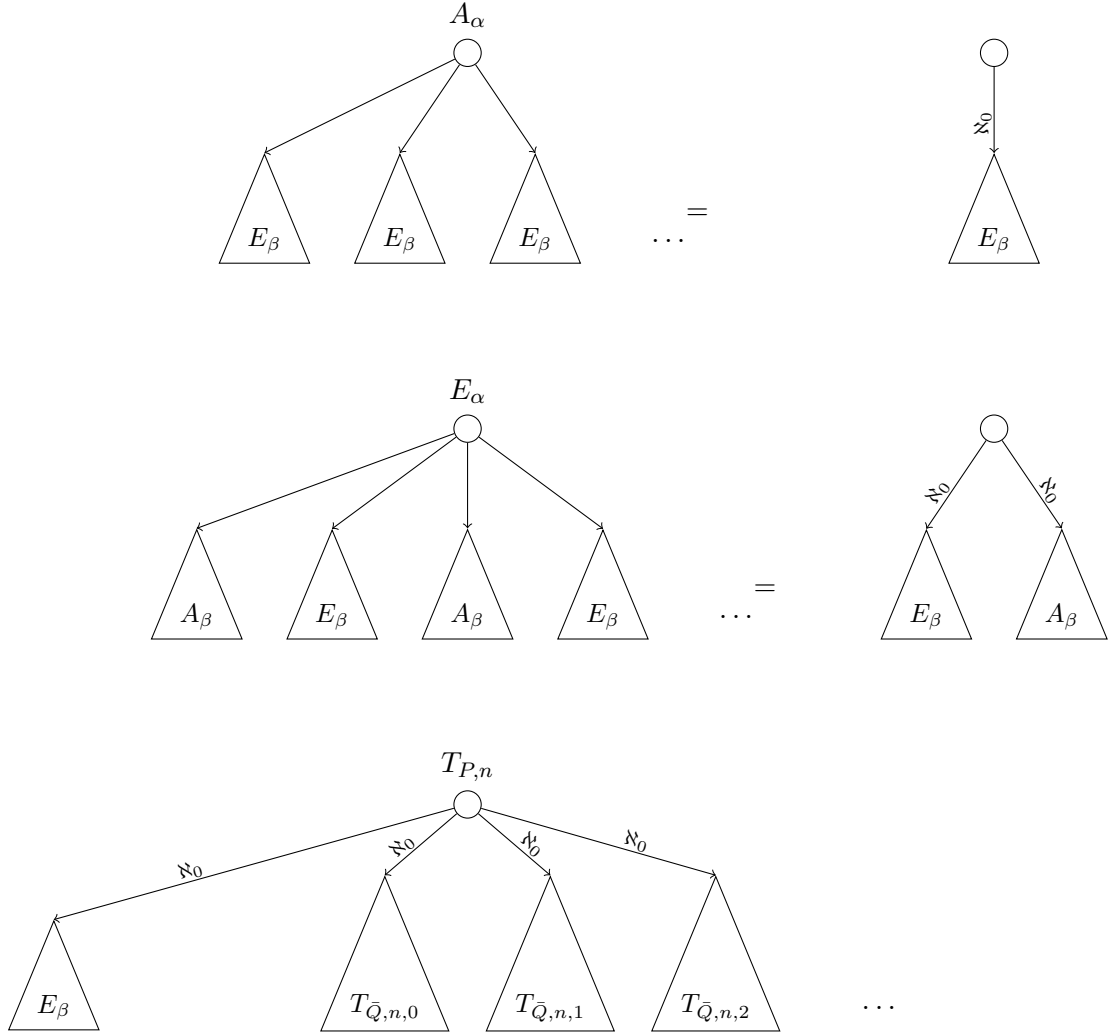

\medskip
\noindent
{\em Case 1.}  $\alpha = \beta+1$ for a successor ordinal $\beta < \omega^\omega$.
Note that we can compute indices for the sets 
\mbox{$A_\beta, E_\beta \subseteq \leaves(\Sf_\beta)$} from $\beta$. The following
constructions are illustrated in Figure \ref{fig:Thm5.1Case1}.
We define the sets $A_\alpha, E_\alpha \subseteq \leaves(\Sf_\alpha) = 
\{m u \mid m \in \mathbb{N}_{>0}, u \in \leaves(\Sf_\beta)\}$
as follows (here and in the following, we use square brackets to enclose single numbers in finite words over $\mathbb{N}$):
\begin{align*}
 &A_\alpha  =  \{m u \mid n \in \mathbb{N}_{>0}, u \in E_\beta \}, \\
 &E_\alpha  =  \{ [2m] u \mid n \in \mathbb{N}_{>0}, u \in E_\beta \}  \cup \{ [2m-1] u \mid m \in \mathbb{N}_{>0}, u \in A_\beta \}.
\end{align*}
Clearly, indices for these computable sets can be computed from the indices for $A_\beta$ and $E_\beta$.
Since $(\Sf_\beta, E_\beta) \not\cong (\Sf_\beta, A_\beta)$, we also have
$(\Sf_\alpha, E_\alpha) \not\cong (\Sf_\alpha, A_\alpha)$

The index $e$ from $(\Pi,a,e)$ is a $\Sigma^0_\beta$ index  for a set $Q \subseteq \mathbb{N} \times \mathbb{N}$ such that
$n \in P$ if and only if $(n,m) \in Q$ for all $m \in \mathbb{N}$. 
From $e$ one can compute a $\Pi^0_\beta$ index $f$ for the complement of $Q$ \cite[Theorem~7.1]{AshKNight00}.
By induction, one can compute from $f$ and $(n,m) \in \mathbb{N} \times \mathbb{N}$
an index of a computable set $T_{\overline{Q},n,m} \subseteq \leaves(\Sf_\beta)$ such that:\footnote{Formally, we have
to encode the pair $(n,m)$ into a single number in order to apply the induction hypothesis.}
\begin{align*} 
    (\Sf_\beta, T_{\overline{Q},n,m}) \cong \begin{cases}
      (\Sf_\beta, A_\beta) & \text{ if } (n,m) \notin Q, \\
      (\Sf_\beta, E_\beta) & \text{ if } (n,m) \in Q .
    \end{cases}
\end{align*}
We define the computable set  $T_{P,n}\subseteq \leaves(\Sf_\alpha)$
as follows
(where we write $2^{\mathbb{N}} 3^{\mathbb{N}}$  for 
\mbox{$\{n\in\N\mid \exists k,l\in\N\ n=2^k\cdot 3^l\}$)}:
\begin{align*}
T_{P,n} = \{ [2^m 3^x] u \mid x,m \in \mathbb{N}, u \in T_{\overline{Q},n,m} \} \cup 
                  \{ x u \mid  x \in \mathbb{N}_{>0} \setminus 2^{\mathbb{N}} 3^{\mathbb{N}}, u \in E_\beta \} . 
\end{align*}
Note that the leaf-labeled tree $(\Sf_\alpha, T_{P,n})$ consists of a root together with 
infinitely many copies of each of the trees $(\Sf_\beta, T_{\overline{Q},n,m})$ for $m \in \mathbb{N}$
together with infinitely many copies of the tree $(\Sf_\beta, E_\beta)$. 
Also note that each of the subtrees $(\Sf_\beta, T_{\overline{Q},n,m})$ is either isomorphic
to $(\Sf_\beta, A_\beta)$ or $(\Sf_\beta, E_\beta)$.

If there is an $m$ with $(n,m) \not\in Q$, then $(\Sf_\alpha, T_{P,n})$ consists of infinitely many copies of 
$(\Sf_\beta, A_\beta)$ and  infinitely many copies of  $(\Sf_\beta, E_\beta)$. Hence, we have 
$(\Sf_\alpha, T_{P,n}) \cong (\Sf_\alpha, E_\alpha)$. On the other hand, if $(n,m) \in Q$ for all $m \in \mathbb{N}$
then $(\Sf_\alpha, T_{P,n}) \cong (\Sf_\alpha, A_\alpha)$.

Finally, an index for the computable set $T_{P,n}$ can be computed from $(\Pi,a,e)$ and $n$.

\medskip
\noindent
{\em Case 2.}  $\alpha = \beta+1$ for a limit ordinal $\beta < \omega^\omega$.
Hence, from a given $k \in
\mathbb{N}_{>0} \cup \{\infty\}$  one can compute and index of a computable
subset $L^\alpha_k \subseteq \leaves(\Sf_\beta)$. We define the 
sets $A_\alpha, E_\alpha \subseteq \leaves(\Sf_\alpha) = 
\{m u \mid m \in \mathbb{N}_{>0}, u \in \leaves(\Sf_\beta)\}$
as follows (cf.~Figure \ref{fig:Thm5.1Case2}):
\begin{figure}
    \centering
      \begin{tikzpicture}
     [ level distance=2.5 cm,
    sibling distance=1.8cm,
    node distance = 0.5cm,
    grow = down,
    child anchor = north,
    konf/.style={
      circle, draw,},
    tree/.style={isosceles triangle, shape border rotate= 90, font=\small,
      draw},
    edge from parent/.style={ ->,draw, pos=0.7, font=\scriptsize},
    edgelabel/.style={ above=-1mm,sloped},
    ]
    \begin{scope}
      \node[] (AAlpha) {$A_\alpha$};
      \node [konf, below of= AAlpha] (root) {}
      child { 
        node[tree] {$L^\beta_1$} edge from parent node[edgelabel]{$\aleph_0$}}
      child { 
        node[tree] {$L^\beta_2$} edge from parent node[edgelabel]{$\aleph_0$}}
      child { 
        node[tree] {$L^\beta_3$} edge from parent node[edgelabel]{$\aleph_0$}}
      child[]{ node[left=0.3cm]{\dots} edge from parent[draw=none] }
      ;
      
      \node[node distance = 7.8cm, right of=AAlpha] (EAlpha) {$E_\alpha$};
      \node [konf, below of= EAlpha] (Eroot) {}
      child { 
        node[tree] {$L^\beta_\infty$} edge from parent
        node[edgelabel]{$\aleph_0$}} 
      child { 
        node[tree] {$L^\beta_1$} edge from parent node[edgelabel]{$\aleph_0$}}
      child { 
        node[tree] {$L^\beta_2$} edge from parent node[edgelabel]{$\aleph_0$}}
      child { 
        node[tree] {$L^\beta_3$} edge from parent node[edgelabel]{$\aleph_0$}}
      child[]{ node[left=0.3cm]{\dots} edge from parent[draw=none]}
      ;
    \end{scope}
    \begin{scope}[yshift = -5cm, xshift=4cm, sibling distance= 2cm,
      level distance =3cm]
      \node (TPN) {$T_{P,n}$};
      \node [konf, below of= TPN,] (Troot) {}
      child{ node{\dots} edge from parent[draw=none]}
      child{ 
        node[tree] {$L^\beta_3$} edge from parent node[edgelabel]{$\aleph_0$}}
      child{ 
        node[tree] {$L^\beta_2$} edge from parent node[edgelabel]{$\aleph_0$}}
      child{ 
      node[tree] {$L^\beta_1$} edge from parent node[edgelabel]{$\aleph_0$}}
    child { 
      node[tree] {$T_{\bar Q, n,0}$}edge from parent
      node[edgelabel]{$\aleph_0$}} 
    child { 
      node[tree] {$T_{\bar Q, n, 1}$}
      edge from parent node[edgelabel]{$\aleph_0$}}
    child { 
      node[tree] {$T_{\bar Q, n, 2}$}
      edge from parent node[edgelabel]{$\aleph_0$}}
    child{ node{\dots} edge from parent[draw=none]}
    ;
    \end{scope}

  \end{tikzpicture}

  \caption{Trees for Case 2 from the proof of Theorem~\ref{thm-hirschfeld-white}.}    
\label{fig:Thm5.1Case2}
\end{figure}

\begin{align*}
 &A_\alpha  =  \{ [2^{k-1} (2x-1)] u \mid k,x \in \mathbb{N}_{>0},  u \in L^\beta_k \}, \\
 &E_\alpha  =  \{ [2^k (2x-1)] u \mid k,x \in \mathbb{N}_{>0},  u \in L^\beta_k \}  \cup 
                             \{ [2x-1] u \mid x \in \mathbb{N}_{>0},  u \in L^\beta_\infty \}.
\end{align*}
Again, indices for these computable sets can be computed from $\alpha$.
Since $(\Sf_\beta, L^\beta_k) \not\cong (\Sf_\beta, L^\beta_\infty)$ for all $k \in \mathbb{N}_{>0}$, we have
$(\Sf_\alpha, E_\alpha) \not\cong (\Sf_\alpha, A_\alpha)$.

As in Case 1, the index $e$ from $(\Pi,a,e)$ is a $\Sigma^0_\beta$ index  for a set $Q \subseteq \mathbb{N} \times \mathbb{N}$ such that
$n \in P$ if and only if $(n,m) \in Q$ for all $m \in \mathbb{N}$. 
From $e$ one can compute a $\Pi^0_\beta$ index $f$ for the complement of $Q$.
By induction, one can compute from $f$ and $(n,m) \in \mathbb{N} \times \mathbb{N}$
an index of a computable set $T_{\overline{Q},n,m} \subseteq \leaves(\Sf_\beta)$ such that
\begin{align*} 
    (\Sf_\beta, T_{\overline{Q},n,m}) \cong \begin{cases}
      (\Sf_\beta, L^\beta_\infty) & \text{ if } (n,m) \notin Q, \\
      (\Sf_\beta, L^\beta_k) \text{ for some $k \in
        \mathbb{N}_{>0}$} & \text{ if } (n,m) \in Q .
    \end{cases}
\end{align*}
We define the computable set  $T_{P,n}\subseteq \leaves(\Sf_\alpha)$ as
the set
\begin{equation*}
   \{ [2^{2m} (2x+1)]u \mid m,x \in \mathbb{N},  u \in
   T_{\overline{Q},n,m} \}
   \cup \{  [2^{2k-1} (2x+1)] u \mid k \in \mathbb{N}_{>0},
   x\in\mathbb{N},  u \in L^\beta_k \}.       
\end{equation*}

The leaf-labeled tree $(\Sf_\alpha, T_{P,n})$ consists of a root together with 
infinitely many copies of each of the trees $(\Sf_\beta, T_{\overline{Q},n,m})$ for $m \in \mathbb{N}$
together with infinitely many copies of each of the trees $(\Sf_\beta, L^\beta_k)$ for $k \in \mathbb{N}_{>0}$. 
Also note that each of the subtrees $(\Sf_\beta, T_{\overline{Q},n,m})$ is isomorphic
to one of the trees $(\Sf_\beta, L^\beta_k)$ ($k \in \mathbb{N}_{>0} \cup \{\infty\}$).

If there is an $m$ with $(n,m) \not\in Q$, then $(\Sf_\alpha, T_{P,n})$ consists of infinitely many copies of 
each of the trees $(\Sf_\beta, L^\beta_k)$  for all $k \in \mathbb{N}_{>0} \cup \{\infty\}$.
Hence, we have 
$(\Sf_\alpha, T_{P,n}) \cong (\Sf_\alpha, E_\alpha)$. On the other hand, if $(n,m) \in Q$ for all $m \in \mathbb{N}$
then $(\Sf_\alpha, T_{P,n}) \cong (\Sf_\alpha, A_\alpha)$.

\medskip
\noindent
{\em Case 3.}  $\alpha$ is a limit ordinal. Let $(\alpha_k)_{k \geq 1}$
be the fundamental sequence of $\alpha$. Recall that $\alpha_k$ is a
successor ordinal.
\begin{figure}
    \centering
      \begin{tikzpicture}
    [ level distance=2.5 cm,
    sibling distance=1.8cm,
    node distance = 0.5cm,
    grow = down,
    child anchor = north,
    konf/.style={
      circle, draw,},
    tree/.style={isosceles triangle, shape border rotate= 90, font=\small,
      draw},
    edge from parent/.style={ ->,draw, pos=0.65, font=\tiny},
    edgelabel/.style={ above=-1mm,sloped},
    ]
    \begin{scope}
      \node[] (Linfty) {$L^\alpha_\infty$};
      \node [konf, below of= Linfty] (root) {}
      child { 
        node[tree] {$A_{\alpha_1}$} }
      child { 
        node[tree] {$A_{\alpha_2}$} }
      child { 
        node[tree] {$A_{\alpha_3}$} }
      child[]{ node{\dots}  edge from parent[draw=none]}
      ;
    \end{scope}
    \begin{scope}[yshift=-5cm]
      
      \node[] (Lk) {$L^\alpha_k$};
      \node [konf, below of= Lk] (Lkroot) {}   
      child { 
        node[tree] {$A_{\alpha_1}$} }
      child { 
        node[tree] {$A_{\alpha_2}$} }
      child{ node{\dots} edge from parent[draw=none]}
      child { 
        node[tree] {$A_{\alpha_{k-1}}$} }
      child { 
        node[tree] {$E_{\alpha_k}$} }
      child { 
        node[tree] {$E_{\alpha_{k+1}}$} }
      child { 
        node[tree] {$E_{\alpha_{k+2}}$} }
      child[]{ node{\dots} edge from parent[draw=none]}
      ;

    \end{scope}
    \begin{scope}[yshift = -10cm]
      \node (TPN) {$T_{P,n}$};
      \node [konf, below of= TPN,] (Troot) {}
      child { 
        node[tree] {$T_{P_1, n}$}} 
      child { 
        node[tree] {$T_{P_2, n}$}
      }
      child { 
        node[tree] {$T_{P_3, n}$}
      }
      child{ node{\dots} edge from parent[draw=none]}
      ;
    \end{scope}

  \end{tikzpicture}

  \caption{Trees for Case 3 from the proof of Theorem~\ref{thm-hirschfeld-white}.}    
  \label{fig:Thm5.1Case3}
\end{figure}
We define a computable subset $L^\alpha_\infty$ of the leaves 
$\leaves(\Sf_\alpha) =$ 
\mbox{$\{ k u \mid k \in \mathbb{N}_{>0}, u \in \leaves(\Sf_{\alpha_k})\}$} by
$$
L^\alpha_\infty = \{ ku \mid k \in \mathbb{N}_{>0},   u \in A_{\alpha_k}\}.
$$
For $k \in \mathbb{N}_{>0}$ we define 
the computable set $L^\alpha_k \subseteq \leaves(\Sf_\alpha)$ by
$$
L^\alpha_k = \{ xu \mid 0 < x < k,   u \in A_{\alpha_x}\} \cup \{ xu \mid x \geq k,   u \in E_{\alpha_x}\}.
$$
Recall that $(\Sf_{\alpha_k}, A_{\alpha_k}) \not\cong  (\Sf_{\alpha_k}, E_{\alpha_k})$ for all $k \geq 1$,
whence $(\Sf_\alpha,  L^\alpha_k) \not\cong (\Sf_\alpha,
L^\alpha_\infty)$ for all $k \geq 1$. 

For the $\Pi^0_\alpha$ index $(\Pi,a,e)$ we can assume (using a padding argument) that
$\Phi_e$ is a total computable function such that 
for all $k \geq 1$, $\Phi_e(k)$ is a $\Pi^0_{\alpha_k}$ index for some
set $Q_k \subseteq \N$ and $P = \bigcap_{k \geq 1} Q_k$.
Let us define for $k \in \mathbb{N}_{>0}$ the set  $P_k = \bigcap_{i=1}^k Q_k$.
Since $\Pi^0_{\alpha_k}$ sets are closed under finite intersections, $P_k$
is a $\Pi^0_{\alpha_k}$ set. Moreover, a $\Pi^0_{\alpha_k}$ index for $P_k$ can 
be computed from $k$.
By induction, from $k \in \mathbb{N}_{>0}$ and $n$ one can compute 
an index for a computable set $T_{P_k,n} \subseteq \leaves(\Sf_{\alpha_k})$
such that
\begin{align*} 
    (\Sf_{\alpha_k}, T_{P_k,n}) \cong \begin{cases}
      (\Sf_{\alpha_k}, A_{\alpha_k}) & \text{ if } n \in P_k, \\
      (\Sf_{\alpha_k}, E_{\alpha_k}) & \text{ if } n \notin P_k .
    \end{cases}
\end{align*}
Define the computable set
$$
T_{P,n} = \{ k u \mid k \in  \mathbb{N}_{>0}, u \in T_{P_k,n} \}.
$$
If $n \in P$, then $n \in P_k$ for all $k \in \mathbb{N}_{>0}$ and we get
$(\Sf_\alpha, T_{P,n}) \cong (\Sf_\alpha,  L^\alpha_\infty)$. On the other hand,
if $n \notin P$ then there is some $k \in \mathbb{N}_{>0}$ such that $n \in P_i$ if and only
if $i < k$. In this case we get $(\Sf_\alpha, T_{P,n}) \cong (\Sf_\alpha,  L^\alpha_k)$. 
\end{proof}

\subsection{Tree-automaticity of the trees $\Sf_{\omega^i}$}

In this section, we show that all trees $\Sf_{\omega^i}$ ($i \geq 1$)
from the previous section are tree-automatic.
For this, we need the following lemma.

\begin{lem}  \label{lemma-trees-for-omega^i}
Let $i \geq 1$, $n \geq 1$, $\alpha \leq \omega^i$. Then we have
\begin{equation} \label{def:Tf}
\Sf_{\alpha} \cup \{ uv \mid u \in \leaves(\Sf_{\alpha}), v \in \Sf_{\omega^i \cdot n}\}
= \Sf_{\omega^i \cdot n+\alpha} .
\end{equation}
\end{lem}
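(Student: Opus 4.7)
I would prove the equality by transfinite induction on $\alpha \leq \omega^i$, unfolding the inductive definition of $\Sf_\beta$ on both sides. Throughout, the key invariant is that applying the ``graft $\Sf_{\omega^i \cdot n}$ at every leaf'' operation commutes with the recursive construction of $\Sf_\alpha$, provided the fundamental sequences are chosen consistently.

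\textbf{Base case $\alpha = 0$.} Here $\Sf_0 = \{\varepsilon\}$, so $\leaves(\Sf_0) = \{\varepsilon\}$, and the left hand side collapses to $\{\varepsilon\} \cup \Sf_{\omega^i\cdot n} = \Sf_{\omega^i\cdot n} = \Sf_{\omega^i \cdot n + 0}$.

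\textbf{Successor case $\alpha = \beta+1$.} Since $\omega^i$ is a limit for $i\geq 1$, $\alpha \leq \omega^i$ forces $\beta < \omega^i$, so the IH applies at $\beta$. Writing $\Sf_{\beta+1} = \{\varepsilon\} \cup \bigcup_{m \geq 1} m\,\Sf_\beta$ and noting that $\leaves(\Sf_{\beta+1}) = \bigcup_{m\geq 1} m\,\leaves(\Sf_\beta)$, a direct factorization gives
\begin{equation*}
  \Sf_\alpha \cup \{uv \mid u \in \leaves(\Sf_\alpha),\, v \in \Sf_{\omega^i\cdot n}\} = \{\varepsilon\} \cup \bigcup_{m\geq 1} m\bigl(\Sf_\beta \cup \{u'v \mid u' \in \leaves(\Sf_\beta),\, v \in \Sf_{\omega^i\cdot n}\}\bigr).
\end{equation*}
The inner union equals $\Sf_{\omega^i\cdot n + \beta}$ by IH, and the whole right-hand side equals $\Sf_{(\omega^i\cdot n + \beta)+1} = \Sf_{\omega^i\cdot n + \alpha}$ by the successor clause in the definition of $\Sf$.

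\textbf{Limit case $\alpha$.} Let $(\alpha_k)_{k \geq 1}$ be the chosen fundamental sequence of $\alpha$; each $\alpha_k$ is a successor and $\alpha_k < \alpha \leq \omega^i$, so the IH applies to each $\alpha_k$. The same factorization as in the successor case yields
\begin{equation*}
  \Sf_\alpha \cup \{uv \mid u \in \leaves(\Sf_\alpha),\, v \in \Sf_{\omega^i\cdot n}\} = \{\varepsilon\} \cup \bigcup_{k \geq 1} k\,\Sf_{\omega^i\cdot n + \alpha_k}.
\end{equation*}
By the limit clause in the definition of $\Sf$, this equals $\Sf_{\omega^i\cdot n + \alpha}$ provided the fundamental sequence of the limit ordinal $\omega^i\cdot n + \alpha$ is $(\omega^i\cdot n + \alpha_k)_{k \geq 1}$.

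\textbf{Where the work lies.} The only nontrivial obligation is the final claim about fundamental sequences. This is a short Cantor-normal-form check in two subcases. If $\alpha < \omega^i$, then writing $\alpha = \omega^{f_j} m_j + \cdots + \omega^{f_1} m_1$ with $f_j < i$, the CNF of $\omega^i\cdot n + \alpha$ is simply $\omega^i n + \omega^{f_j} m_j + \cdots + \omega^{f_1} m_1$, and formula \eqref{def-alpha_k} manifestly yields $(\omega^i\cdot n + \alpha)_k = \omega^i\cdot n + \alpha_k$. If $\alpha = \omega^i$, then $\omega^i\cdot n + \alpha = \omega^i\cdot(n+1)$ has CNF $\omega^i(n+1)$, and \eqref{def-alpha_k} gives $(\omega^i(n+1))_k = \omega^i\cdot n + \omega^{i-1}\cdot k + 1 = \omega^i\cdot n + \alpha_k$, as required. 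This compatibility of fundamental sequences with left-addition by $\omega^i \cdot n$ is the crux of the lemma.
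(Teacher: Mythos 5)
Your proof is correct and follows essentially the same transfinite induction that the paper uses, unfolding the recursive definition of $\Sf_\beta$ at successor and limit stages. The one place you add genuine content is the explicit Cantor-normal-form verification that $(\omega^i\cdot n+\alpha_k)_{k\geq 1}$ is indeed the fundamental sequence of $\omega^i\cdot n+\alpha$; the paper simply asserts this compatibility, so your check fills in a detail the paper leaves to the reader.
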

Note that the tree on the left-hand side of \eqref{def:Tf}
is the tree that results from the tree $\Sf_{\alpha}$
by replacing every leaf by a copy of the tree  $\Sf_{\omega^i \cdot n}$.

\begin{proof}
We prove the lemma by induction on $\alpha \leq \omega^i$. The case $\alpha = 0$ is clear.
Next, assume that $\alpha = \gamma+1$ is a successor ordinal. Then
\begin{equation} \label{def:Sf_alpha}
\Sf_{\alpha} = \{\varepsilon\} \cup \{n u \mid n \in \mathbb{N}_{>0}, u \in \Sf_\gamma\} .
\end{equation}
By induction hypothesis, we have
\begin{equation} \label{IH-for-gamma}
\Sf_{\gamma} \cup \{ uv \mid u \in \leaves(\Sf_{\gamma}), v \in \Sf_{\omega^i \cdot n}\}
= \Sf_{\omega^i \cdot n+\gamma} .
\end{equation}
Hence, we get
\begin{align*}
&\Sf_{\alpha} \cup  \{ uv \mid u \in \leaves(\Sf_{\alpha}), v \in \Sf_{\omega^i \cdot n}\}\\
 \overset{\eqref{def:Sf_alpha}}{=} &\{\varepsilon\} \cup
\{n u \mid n \in \mathbb{N}_{>0}, u \in \Sf_\gamma\}\cup 
 \{ nu'v \mid n \in \mathbb{N}_{>0}, u' \in \leaves(\Sf_{\gamma}), v \in \Sf_{\omega^i \cdot n}\} \\
\overset{\eqref{IH-for-gamma}}{=} &\{\varepsilon\} \cup 
\{ n w \mid n \in \mathbb{N}_{>0}, w \in \Sf_{\omega^i \cdot
  n+\gamma}\} = \Sf_{\omega^i \cdot n+\gamma+1} = \Sf_{\omega^i \cdot
  n+\alpha}. 
\end{align*}
Finally, assume that $\alpha \leq \omega^i$ is a limit ordinal with the fundamental sequence $(\alpha_k)_{k \geq 1}$.
Then $(\omega^i \cdot n + \alpha_k)_{k \geq 1}$ is our fundamental sequence for the ordinal $\omega^i \cdot n + \alpha$.
We have
\begin{equation} \label{def:Sf_alpha(2)}
\Sf_{\alpha} = \{\varepsilon\} \cup \{k u \mid k \in \mathbb{N}_{>0}, u \in \Sf_{\alpha_k}\} .
\end{equation}
By induction hypothesis, for every $k \geq 1$ we have
\begin{equation} \label{IH-for-alpha_k}
\Sf_{\alpha_k} \cup \{ uv \mid u \in \leaves(\Sf_{\alpha_k}), v \in \Sf_{\omega^i \cdot n}\}
= \Sf_{\omega^i \cdot n+\alpha_k} .
\end{equation}
Hence, we get
\begin{align*}
&\Sf_{\alpha} \cup \{ uv \mid u \in \leaves(\Sf_{\alpha}), v \in \Sf_{\omega^i \cdot n}\}\\
 \stackrel{\eqref{def:Sf_alpha(2)}}{=} &
\{\varepsilon\} \cup
\{k u \mid k \in \mathbb{N}_{>0}, u \in \Sf_{\alpha_k}\}  \cup 
 \{ ku'v \mid k \in \mathbb{N}_{>0}, u' \in \leaves(\Sf_{\alpha_k}), v \in \Sf_{\omega^i \cdot n}\} \\
 \stackrel{\eqref{IH-for-alpha_k}}{=} & \{\varepsilon\} \cup
\{ k w \mid k \in \mathbb{N}_{>0}, w \in \Sf_{\omega^i \cdot
  n+\alpha_k} \}  =  \Sf_{\omega^i \cdot n+\alpha} .
\end{align*}
\end{proof}
Now we can prove tree-automaticity of $\Sf_{\omega^i}$.

\begin{lem} \label{lemma-Sf_alpha-tree-auto}
For every $i \geq 1$, the tree $\Sf_{\omega^i} \subseteq \mathbb{N}_{>0}^*$ is
tree-automatic. Moreover, there is a unary tree-automatic copy
$(L,\leq)$ of $\Sf_{\omega^i}$ together with a computable isomorphism 
\mbox{$f : \Sf_{\omega^i} \to (L,\leq)$}.
\end{lem}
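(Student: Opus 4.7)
The plan is to construct, for each $i \geq 1$, an explicit tree-automatic structure isomorphic to $\Sf_{\omega^i}$ together with a computable isomorphism; Lemma~\ref{lemma-unary-alphabet} will then produce the required unary tree-automatic copy $(L, \leq)$ and the computable isomorphism $f$ by composition with $\unlabel$.

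For the base case $i = 1$, I would encode a non-root node $n_1 n_2 \cdots n_\ell \in \Sf_\omega$ (recall that $1 \leq \ell \leq n_1 + 2$ and $n_j \geq 1$) as the finite binary tree consisting of a leftmost $0$-spine $\{0^j : 0 \leq j \leq n_1\}$ together with, for each $j \in \{0, \ldots, \ell - 2\}$, a right-appendage chain $\{0^j 1 0^s : 0 \leq s \leq n_{j+2} - 1\}$ attached at the spine node $0^j$; and encode the root $\varepsilon$ as the singleton tree $\{\varepsilon\}$. The constraint $\ell \leq n_1 + 2$ is built into the encoding because the spine offers only $n_1 + 1$ positions at which to hang appendages; the image is a regular set of binary trees. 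The prefix relation on $\Sf_\omega$ becomes tree-automatic because, on the convolution of two encodings, a top-down automaton can verify that both spines have equal length (forcing $n_1 = m_1$) and then walk along the spine, checking at each position that either both encodings carry a right-appendage and the two appendage chains agree in length (verified in lockstep as the automaton descends the chains) or only the second encoding does, with the latter being permitted only at spine positions strictly beyond the last appendage of the first encoding.

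For the inductive step, I would use Lemma~\ref{lemma-trees-for-omega^i} to decompose $\Sf_{\omega^{i+1}}$: the subtree at the $k$-th child of the root is $\Sf_{\omega^i \cdot k + 1}$, and iterating the lemma exhibits $\Sf_{\omega^i \cdot k}$ as $k$ copies of $\Sf_{\omega^i}$ stacked by grafting each onto the leaves of the previous one, while $\Sf_{\omega^i \cdot k + 1}$ is one such stack grafted onto the leaves of $\Sf_1$. A non-root node of $\Sf_{\omega^{i+1}}$ would then be encoded as a binary tree with an outer $k$-spine as in the base case, carrying on each of its (up to $k$) spine positions, in a disjoint sub-branch, the inductively constructed encoding of the corresponding layer's component (an empty encoding for absent layers), together with an auxiliary unary encoding of the intermediate $\Sf_1$-leaf index.

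The main obstacle is verifying tree-automaticity of the prefix check uniformly in the unbounded parameter $k$. Once the outer spines of two comparable nodes have been forced to agree in length, however, the up-to-$k$ inner encodings occupy pairwise disjoint subtrees of the convolution, and a single tree automaton can check them in parallel by simulating, at each layer, the automaton inductively constructed for the prefix order on $\Sf_{\omega^i}$. Since the encoding is computable by construction, the induction goes through and the lemma follows.
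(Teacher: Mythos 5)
Your proposal takes essentially the same route as the paper: induction on $i$, with Lemma~\ref{lemma-trees-for-omega^i} used to decompose $\Sf_{\omega^{i+1}}$ into layers of $\Sf_{\omega^i}$ grafted onto leaves, and nodes encoded as binary trees whose spine positions carry the per-layer sub-encodings so that the order can be checked componentwise by a tree automaton running the inductively given automaton on disjoint subtrees of the convolution. The differences are presentational: you supply an explicit $i=1$ base case while the paper effectively starts its induction one step earlier, and the paper's encoding $\pref\bigl(\bigcup_{i=1}^n 0^i 1 t_i\bigr)$ (together with the auxiliary $\{1^m\}$-branch) states explicitly the ``leaf components, then one arbitrary component, then root components'' constraint that your sketch leaves implicit when it speaks of ``the corresponding layer's component'' and ``an empty encoding for absent layers.''
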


\begin{proof}
We prove the lemma by induction on $i$. Assume that we  have already constructed
a tree-automatic copy $(L, \leq)$ of the tree $\Sf_{\omega^i}$
over a unary alphabet (i.e., $L \subseteq \T^{\fin}_2$ is regular)
together with the computable isomorphism $f$.
In addition, we assume that the root of $(L, \leq)$ is the one-node tree $\{\varepsilon\}$;
this property is preserved by the construction.
We aim at constructing a unary tree-automatic copy of $\Sf_{\omega^{i+1}}$ with
root $\{\varepsilon\}$.
Let us first construct a tree-automatic copy of the computable forest $\bigsqcup_{n \geq 1} \Sf_{\omega^i \cdot n}$.
This forest is isomorphic to the inverse prefix relation on the domain 
\begin{equation*}
	\{ nu \mid n \geq 1, u \in \Sf_{\omega^i \cdot n} \} \subseteq
\mathbb{N}_{>0}^*.
\end{equation*}
Define well-founded trees $\Tf_n$ for $n \geq 1$ inductively as follows.
Let $\Tf_1 = \Sf_{\omega^i}$ and let
$\Tf_{n+1}$ result from the tree $\Sf_{\omega^i}$
by replacing every leaf by a copy of the tree  $\Tf_n$.
Formally, we define
$$
\Tf_n = \{ u_1 \cdots u_j v \mid 0 \leq j < n, u_1, \ldots, u_j \in
\leaves(\Sf_{\omega^i}), v \in \Sf_{\omega^i}\} .
$$
Lemma~\ref{lemma-trees-for-omega^i} implies $\Tf_n = \Sf_{\omega^i \cdot n}$ for $n \geq 1$.
We  construct a tree-automatic copy of $\bigsqcup_{n \geq 1}  \Tf_n$
using \mbox{$(L, \leq)$.} 
The universe of this copy is the set $L'$ of all trees  of the form
$\pref( \bigcup_{i=1}^n 0^i 1 t_i )$, where $n \geq 1$, $t_1, \ldots,
t_n \in L$ 
and there exists $1 \leq j \leq n$ such that
$t_j$ is a leaf of $(L, \leq)$ for all $j < i$ and
$t_j = \{\varepsilon\}$  for all $j > i$.
Since the set of leaves of $(L, \leq)$ is regular, the set
$L'$ is clearly regular.
We define a tree-automatic partial order $\leq'$ on the set $L'$ by
comparing the $t_i$ componentwise.
Let $t = \pref(\bigcup_{i=1}^m 0^i 1 t_i) \in L'$ and
$t' = \pref(\bigcup_{i=1}^n 0^i 1 t'_i) \in L'$.
Then $t \leq' t'$ if and only if $n = m$ and
$t_i \leq t'_i$ for
all $1 \leq i \leq n$.
From this construction, it follows easily that
\begin{equation*}
(L', \leq') \cong \bigsqcup_{n \geq 1}\Tf_n = \bigsqcup_{n \geq 1}\Sf_{\omega^i \cdot n}.  
\end{equation*}
The set of roots of this forest is $\{ \pref(\bigcup_{i=1}^n 0^i 1)
\mid n \geq 1 \}$.
Let us also define a computable isomorphism $f' :
\bigsqcup_{n \geq 1}\Tf_n \to (L', \leq')$.
Take an element $nw$ from $\bigsqcup_{n \geq 1}\Tf_n$, where
\mbox{$n \in \mathbb{N}_{>0}$} and $w \in \Tf_n$. There is a unique
factorization $w = u_1 \cdots u_j v$ with $0 \leq j < n$,
$u_1, \ldots, u_j \in\leaves(\Sf_{\omega^i})$, and $v \in \Sf_{\omega^i}$.
Since the set $\leaves(\Sf_{\omega^i})$ is computable, we can
compute this factorization. Next, using the computable isomorphism
$f : \Sf_{\omega^i} \to (L,\leq)$ define the trees $t_k = f(u_k)$
($1 \leq k \leq j$), $t_{j+1} = f(v)$, and $t_{j+2}, \ldots, t_n =
\{\varepsilon\}$. Then
set $f'(nw) = \pref(\bigcup_{i=1}^n 0^i 1 t_i)$. It is straightforward
to verify that this defines indeed an isomorphism from
$\bigsqcup_{n \geq 1}\Tf_n$ to $(L', \leq')$. 

We derive from $(L', \leq')$ a tree-automatic copy
of the computable forest $\bigsqcup_{n \geq 1} \Sf_{\omega^i \cdot n+1}$. This forest
is isomorphic to the inverse prefix relation on
$$
\{ nu \mid n \in \mathbb{N}_{>0}, u \in \Sf_{\omega^i \cdot n+1}\} =
\mathbb{N}_{>0} \cup \{ nmw \mid n,m \in \mathbb{N}_{>0}, w \in \Sf_{\omega^i \cdot n}\} .
$$
Note that in every node $u \in \T^{\fin}_2$ of the forest $(L', \leq')$, the root of $u$ has no
right child (i.e., $1 \notin u$).
Define the regular set of trees
$$
L''  =
 \{  \pref(0^n) \mid n \geq 1\} \cup
 \{  \pref( \{1^m\} \cup t) \mid t \in L', m \geq 1 \} .
$$
On the set $L''$ we define the partial order $\leq''$ as follows.
For $u,v \in L''$, let $u \leq'' v$ if and only if either
\begin{iteMize}{$\bullet$}
\item $u = \pref( \{1^m\} \cup s)$, $v = \pref( \{1^m\} \cup t)$ with
  $s,t \in L'$ and $s \leq' t$ or
\item $u =  \pref(\{1^m\} \cup t)$, $v = \pref(0^n)$ with $t\in L'$
of the form $\pref(\bigcup_{i=1}^n 0^i 1 t_i)$.
\end{iteMize}
This order relation is clearly tree-automatic.
Moreover, the construction implies that
$$
(L'', \leq'') \cong
( \N_{>0} \cup \{nmw \mid n,m \in \mathbb{N}_{>0}, w \in \Sf_{\omega^i
  \cdot n} \}, \succeq) = \bigsqcup_{n \geq 1} \Sf_{\omega^i \cdot n+1} .
$$
A computable isomorphism $f'' : \bigsqcup_{n \geq 1} \Sf_{\omega^i \cdot n+1} \to
(L'', \leq'')$ can be defined as follows.
For a root $n \in \mathbb{N}_{>0}$ let $f''(n) = \pref(0^n)$.
A node $nmu$ with $n,m \in \mathbb{N}_{>0}$ and $u \in \Sf_{\omega^i \cdot n}$
(hence, $mu \in \Sf_{\omega^i \cdot n+1}$)
is mapped to $f''(nmu) = \pref(\{1^m\} \cup t)$ with $f'(nu) = t$. 

Finally, we add to the forest $(L'', \leq'')$ the root $\{\varepsilon\}$; this gives us a tree-automatic
copy of the tree $\Sf_{\omega^{i+1}}$ with root $\{\varepsilon\}$. A computable isomorphism
is obtained by extending $f''$ by $f''(\varepsilon) = \{\varepsilon\}$.
\end{proof}

\subsection{Encoding $\Sigma^0_2$-sets of binary trees}

Theorem~\ref{thm-hirschfeld-white} and
Lemma~\ref{lemma-Sf_alpha-tree-auto}
show that the isomorphism problem for the
following class of computable structures is $\Pi^0_{\omega^i}$-hard
for every $i\in \N$: the class contains all structures of the
form $(V,\sqsubseteq, X)$ where $(V,\sqsubseteq)$ is the unary
tree-automatic copy of $\Sf_{\omega^i}$ from Lemma~\ref{lemma-Sf_alpha-tree-auto}
and $X$ is a computable unary predicate, which is moreover a
subset of ${\leaves(V,\sqsubseteq)}$. By
Lemma~\ref{lemma-unary-alphabet} we can moreover assume that
$V \subseteq \T_{\bin}$, i.e., $V$ consists of unlabeled full
binary trees.

Let us define the set
$$
\T_{\lef} = \{ \{\varepsilon\} \cup 0u \mid u \in \T_{\bin} \} .
$$
Thus, $\T_{\lef}$ contains all trees $t \in \T^{\fin}_2$, where
the root of $t$ has no right child ($1 \notin t$), the root has
a left child ($0 \in t$), and the subtree rooted at $0$ belongs to
$\T_{\bin}$, i.e., is a full binary tree.

In this section,
we  describe an encoding of $\Sigma^0_2$-subsets of $\T_{\lef}$
by sets of tree-automatic trees of height 3.
Actually, we  need this encoding only for computable subsets of
$\T_{\lef}$ (instead of $\Sigma^0_2$-sets), but the proof
of Lemma~\ref{lem:F_B} is not simpler for a computable set $B$.

\begin{lem}\label{lem:F_B}
  There are two trees $\Uf_0$ and $\Uf_1$ of height 3
  ($\Uf_0 \not\cong \Uf_1$) with the
  following property: From a given index of a $\Sigma^0_2$-set
  $B\subseteq \T_{\lef}$ one can
  effectively construct a tree-automatic forest $\Ff_B$ of height 3 such
  that:
  \begin{iteMize}{$\bullet$}
  \item The set of roots of $\Ff_B$ is $\T_{\lef}$.
  \item For every $t \in \T_{\lef}$, $\Ff_B(t) \cong \Uf_0$ if
    $t\in B$ and $\Ff_B(t)\cong \Uf_1$ if $t\notin B$.
  \end{iteMize}
\end{lem}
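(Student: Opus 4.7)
The plan is to encode the $\Sigma^0_2$-quantifier alternation $\exists n\forall m$ via two levels of child structure, with a leaf versus non-leaf distinction at the deepest level realizing the innermost computable predicate. Write $B = \{t\in\T_{\lef} : \exists n\forall m : R(t,n,m)\}$ for some computable $R$, and define two auxiliary isomorphism types: let $\Xf$ be the tree whose root has $\aleph_0$ leaf children, and let $\Yf$ be the tree whose root has $\aleph_0$ leaf children together with $\aleph_0$ further children each of type $\Xf$. Since $\Xf\not\cong\Yf$ (the former has no non-leaf grandchildren while the latter does), take $\Uf_0$ and $\Uf_1$ to be the trees of height $3$ whose root has, respectively, both $\aleph_0$ children of type $\Xf$ and $\aleph_0$ children of type $\Yf$, versus only $\aleph_0$ children of type $\Yf$. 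Then $\Uf_0\not\cong\Uf_1$, because any isomorphism would have to send an $\Xf$-child of $\Uf_0$'s root to some $\Yf$-child of $\Uf_1$'s root.

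Next I describe the intended structure of $\Ff_B(t)$ for each $t\in\T_{\lef}$. The root is $t$ itself, and its children come in two families: \emph{primary} children indexed by $(n,j)\in\N^2$ and \emph{padding} children indexed by $i\in\N$. Each padding child is hard-wired to realize $\Yf$. For each primary child $(n,j)$, the grandchildren are indexed by $(m,k)\in\N^2$ together with \emph{forced-leaf} indices $l\in\N$; forced-leaf grandchildren are always leaves, and an $(m,k)$-grandchild is a leaf if $R(t,n,m')$ holds for all $m'\leq k$ and otherwise carries $\aleph_0$ leaf children. A case distinction on the least failing $m$ for each $n$ shows that primary children $(n,j)$ with $n$ a witness for $t\in B$ are of type $\Xf$ (all their grandchildren are leaves), while all other primary children are of type $\Yf$ (the forced-leaf grandchildren already contribute $\aleph_0$ leaf grandchildren, and the $(m,k)$-grandchildren with $k$ above the least failing $m$ contribute $\aleph_0$ non-leaves). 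Combined with the $\aleph_0$ padding $\Yf$-children, this yields $\Ff_B(t)\cong\Uf_0$ whenever $t\in B$ and $\Ff_B(t)\cong\Uf_1$ otherwise.

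The remaining task is to present $\Ff_B$ as a tree-automatic forest. Each node is encoded as a tree extending $t$ at its unused right-branch position by a marker indicating its level, the integer indices $n,j,m,k,l,i$ encoded via the codes $a_\nu$ from Lemma~\ref{lemma-unary-alphabet}, and, for $(m,k)$-grandchildren, a certificate of the leaf/non-leaf status: a witness $m^*\leq k$ together with a halting trace of the Turing machine $M$ computing $R$ on $(t,n,m^*)$ yielding $\neg R(t,n,m^*)$ in the non-leaf case, and the collection of halting traces of $M$ on each $(t,n,m')$ for $m'\leq k$ showing $R(t,n,m')$ in the leaf case. Since individual configurations and transitions of $M$ are tree-automatically recognizable and the bound $m'\leq k$ is a locally checkable condition on an initial segment of the encoded data, a tree automaton can verify validity of the attached certificates on the convolution of two encoded nodes. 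The main obstacle will be to set up this bookkeeping carefully so that exactly $\aleph_0$ distinct encodings realize each intended node type (for instance, the leaf children of each non-leaf grandchild are indexed by an extra free parameter), at which point the domain and child relations become regular, providing the required tree-automatic presentation.
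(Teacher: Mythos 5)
Your proof takes a genuinely different route from the paper's, which reduces to the string\nobreakdash-automatic result of \cite{KuLiLo11} (quoted as Lemma~\ref{lemma:U_0andU_1}) by transferring accepting runs from a string automaton to a tree automaton (Lemma~\ref{lemma-string->tree}); you instead try to build $\Ff_B$ directly from an $\exists n\,\forall m$ normal form of $B$, embedding halting traces as certificates. Two genuine problems remain. First, the trees you build are not of height~$3$. With $\Xf$ a root plus $\aleph_0$ leaves ($\rank(\Xf)=2$) and $\Yf$ a root plus $\aleph_0$ leaves plus $\aleph_0$ copies of $\Xf$ ($\rank(\Yf)=3$), any tree whose root has a $\Yf$\nobreakdash-child already has rank~$4$. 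So $\Uf_0$, $\Uf_1$, and $\Ff_B(t)$ (root $t$, then $(n,j)$ and padding children, then $(m,k)$/forced-leaf grandchildren, then the $\aleph_0$ leaf great-grandchildren below the non-leaf grandchildren) all have height~$4$, contradicting the statement. Collapsing to height~$3$ is not a renumbering: that would force $\Xf$ to be a single node, which is incompatible with your always-present forced-leaf grandchildren, so the whole combinatorial gadget has to be redesigned.

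Second, and more substantively, the tree-automaticity of $\Ff_B$ is asserted rather than established. Your nodes carry halting traces of a Turing machine $M$ on $(t,n,m)$ as certificates, and a tree automaton must check that an attached trace is really a run of $M$ on the intended input. But $t$ is itself a tree in $\T_{\lef}$, whereas the initial configuration inside the trace is a \emph{string} encoding of $t$; verifying that a string stored on one branch of the encoded node matches the tree stored elsewhere is precisely the nontrivial step the paper isolates in Lemma~\ref{lemma-string->tree}, which simulates a depth-first tree-walking automaton to line the two up. Without some device of that kind a finite tree automaton cannot do this check. You concede that ``the main obstacle will be to set up this bookkeeping carefully,'' but that obstacle is not bookkeeping: it is the heart of the lemma. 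The paper sidesteps it by inheriting the $\Sigma^0_2$ construction of \cite{KuLiLo11} (a careful counting argument over accepting runs of a nondeterministic automaton) and only adds a run-preserving string-to-tree translation; your sketch does not show how to reproduce that content.
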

Restricting to trees from $\T_{\lef}$
makes our encoding technically a bit simpler, and this restriction can be easily enforced
later when we apply Lemma~\ref{lem:F_B}.

We prove Lemma~\ref{lem:F_B} using a similar statement for words from \cite{KuLiLo11}.
First, we have to introduce a notation from  \cite{KuLiLo11}.
Let $\A = (Q,\Sigma, \Delta, I, F)$ be a finite nondeterministic
\emph{string} automaton, where $Q$ is the set of states, $\Sigma$ is the input
alphabet, $\Delta \subseteq Q \times \Sigma \times Q$ is the set of transition triples,
$I$~is the set of initial states, and $F$ is the set of final states.
A \emph{successful run} of $\A$ on a non-empty word $w \in \Sigma^+$ is a word
$(q_0, a_1, q_1) (q_1, a_2, q_2) \cdots (q_{n-1}, a_n, q_n)$ over $\Delta$ such that
$q_0 \in I$, $q_n \in F$, and $w = a_1 a_2 \cdots a_n$.
The language $L(\A)$ accepted by $\A$ consists of all non-empty words
(for technical reasons, the empty word was excluded in \cite{KuLiLo11})
that have a successful run.
We define a forest $\forest(\A)$ in the following. Clearly,
$(L(\A), \succeq)$ (recall that $\preceq$ is the
prefix relation on words) is a (string-automatic) forest.  
The set of all
leaves of the forest $(L(\A), \succeq)$ is first-order definable
whence it is a regular language. It is the set of all words in $L(\A)$
that are not a proper prefix of another word in $(L(\A)$.
Let us denote the set of leaves of $(L(\A), \succeq)$
with $\leaves(\A)$.
We define the forest $\forest(\A)$ as follows:
\begin{iteMize}{$\bullet$}
\item The domain of $\forest(\A)$ is the regular set
$$
L(\A) \sqcup \{ r \in \Delta^+ \mid r
  \text{ is a successful run of $\A$ on some $v \in \leaves(\A)$} \}.
$$
\item The order relation $\leq$ of $\forest(\A)$ is defined by 
$u \leq v$ if
\begin{enumerate}[(1)]
\item either $u,v \in L(\A)$ and $v \preceq u$ or
\item $v \in L(\A)$ and $u \in \Delta^+$  is a successful
  run of $\A$ on some $w \in \leaves(\A)$ with $v \preceq w$.
\end{enumerate}
\end{iteMize}
Clearly, $\forest(\A)$ is string-automatic. Intuitively, we take the forest
resulting from the inverse prefix order on the regular language $L(\A)$ and
append to each leaf $v$ of $(L(\A), \succeq)$ all successful runs of $\A$
on $v$ as children. All these children are leaves in
$\forest(\A)$.  In \cite{KuLiLo11}, the following lemma was proved.

 \begin{lem} \label{lemma:U_0andU_1}
 There exist two trees $\Uf_0$ and
  $\Uf_1$ of height 3 ($\Uf_0 \not\cong \Uf_1$) with the following property:
  From a given index of a $\Sigma^0_2$-set $A \subseteq
  \{0,1\}^*1$ one can effectively construct a finite string automaton $\A$
  (over an alphabet $\Sigma$ with $0,1, \sharp \in \Sigma$) such that 
  $\forest(\A)$ is
  a forest of height 3 with the following properties.
  \begin{iteMize}{$\bullet$}
  \item The set of roots of $\forest(\A)$ is $\{0,1\}^*1\sharp$.
  \item For every $w \in \{0,1\}^*1$, 
    $\forest(\A)(w\sharp) \cong \Uf_0$     if $w \in  A$, 
    and $\forest(\A)(w\sharp) \cong \Uf_1$ if  $w \notin A$. 
  \end{iteMize}
\end{lem}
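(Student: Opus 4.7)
The plan is to normalise the description of $A$, fix canonical target trees $\Uf_0$ and $\Uf_1$, and then build a finite nondeterministic string automaton $\A$ whose forest $\forest(\A)$ realises the required shape. Since $A$ is $\Sigma^0_2$, I would fix a decidable ternary predicate $P$ with $A=\{w\in\{0,1\}^*1\mid \exists n\,\forall k\; P(w,n,k)\}$ and set $g(w,n)=\min\{k\in\N\mid \neg P(w,n,k)\}\in\N\cup\{\infty\}$, so that $w\in A$ iff $g(w,n)=\infty$ for some~$n$. Fix also a Turing machine $M$ deciding $P$. As target trees, let $\Uf_0$ and $\Uf_1$ be the canonical trees of height $3$ whose root has, for every finite $g\in\N$, countably infinitely many children each carrying exactly $g$ grandchildren, where every grandchild in turn carries a single great-grandchild; in $\Uf_0$ the root additionally has countably infinitely many children with $\aleph_0$ grandchildren, whereas $\Uf_1$ has none. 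These are canonical and non-isomorphic.

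I would build $\A$ over an alphabet $\Sigma\supseteq\{0,1,\sharp,\star,R,X\}\cup\Gamma$, where $R,X$ mark ``real'' and ``auxiliary'' branches and $\Gamma$ contains the state set and tape alphabet of~$M$. The intended $L(\A)$ consists of: (i) the roots $w\sharp$ with $w\in\{0,1\}^*1$; (ii) the \emph{real children} $w\sharp R\,0^n 1 0^j 1$ and \emph{auxiliary children} $w\sharp X\,0^g 1 0^j 1$ for all $n,j,g\in\N$; (iii) the \emph{real grandchildren} $w\sharp R\,0^n 1 0^j 1\,\star 0^k 1\,\star\,\tau$ where $\tau$ is an accepting computation trace of $M$ certifying that $P(w,n,k')$ holds for every $k'\le k$; and (iv) the \emph{auxiliary grandchildren} $w\sharp X\,0^g 1 0^j 1\,\star 0^k 1$ for $0\le k<g$. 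Validity of such a $\tau$ is a regular condition: Turing machine trace-validity is a local property, and the initial tape content can be compared letter-by-letter against the visible input prefix $(w,n,k)$. The markers $R,X,\star$ isolate the four families so that the depths in $\forest(\A)$ of roots, children and grandchildren are exactly $0$, $1$ and $2$ respectively. A deterministic canonicalisation of the trace-checker ensures that every grandchild admits a single accepting run, so $\A$ contributes exactly one great-grandchild at depth $3$ per grandchild.

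With this design the subtree $\forest(\A)(w\sharp)$ takes the following shape. At depth $1$ the root $w\sharp$ has one child per pair $(n,j)\in\N^2$ (real) and per pair $(g,j)\in\N^2$ (auxiliary). Every auxiliary child has exactly $g$ grandchildren, so for each finite $g\in\N$ infinitely many children of $w\sharp$ have $g$ grandchildren, independent of~$w$. The real child indexed by $(n,j)$ has exactly $g(w,n)$ grandchildren, which is $\aleph_0$ whenever $g(w,n)=\infty$. Hence if $w\notin A$ all real children have finite grandchild counts and $\forest(\A)(w\sharp)\cong\Uf_1$, while if $w\in A$ then any witness $n_0$ with $g(w,n_0)=\infty$ produces, through the multiplicity parameter $j$, infinitely many real children with $\aleph_0$ grandchildren, yielding $\forest(\A)(w\sharp)\cong\Uf_0$. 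By construction the set of roots of $\forest(\A)$ is $\{0,1\}^*1\sharp$.

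The main obstacle is the assembly of the real-grandchild acceptor: a finite NFA recognising precisely the canonical accepting computation traces of $M$ on the relevant inputs, and producing a single accepting run per valid word so that the great-grandchild count is canonical. This combines the standard regular encoding of Turing machine histories with a deterministic selection of one canonical trace; the remaining prefix-isolation bookkeeping between the four families is routine.
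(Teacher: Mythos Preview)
The paper does not prove this lemma; it is imported from \cite{KuLiLo11} without argument, so your attempt has to stand on its own.

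Your overall shape is reasonable, and the auxiliary-child family guaranteeing infinitely many children of every finite grandchild-count is a clean device. The gap is exactly where you locate the ``main obstacle'': the real-grandchild acceptor. You assert that recognising the words $w\sharp R\,0^n 1 0^j 1\,\star 0^k 1\,\star\,\tau$, with $\tau$ a valid accepting trace of $M$ on input $(w,n,k)$, is a regular condition because trace-validity is ``local'' and the initial tape can be ``compared letter-by-letter against the visible input prefix''. Neither claim holds for a one-way finite automaton reading a concatenated string. Checking that configuration $c_{i+1}$ follows from $c_i$ in a trace $c_0\#c_1\#\cdots$ requires comparing tape cells at distance roughly $|c_i|$, which is unbounded; this is not a finite-window property of the string. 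More directly, after reading $w\sharp$ the automaton is in one of finitely many states, so there exist $w\neq w'$ leaving it in the same state; it then accepts exactly the same set of suffixes after either prefix and cannot distinguish traces whose initial tape holds $w$ from those holding $w'$. Equivalently, your grandchild language has infinitely many pairwise distinct left quotients $(w\sharp)^{-1}L$ and is therefore not regular. Determinising the trace-checker to canonicalise runs does not help, since the failure is already at the level of membership in $L(\A)$, not at the level of counting runs.

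This is not a missing detail but a wrong step: there is no regular language of strings with $w$ as a free prefix whose $w\sharp$-sections are exactly the valid computations of a fixed Turing machine on input $w$ (otherwise every r.e.\ set would be regular, by projecting out the suffix). The construction in \cite{KuLiLo11} must therefore rely on a genuinely different mechanism, and your encoding of the grandchild layer would have to be replaced rather than patched.
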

In order to prove Lemma~\ref{lem:F_B} using Lemma~\ref{lemma:U_0andU_1},
we have to encode trees from $\T_{\lef}$ by words.
A tree $t \in \T_{\bin}$ can be encoded by a
non-empty bracket expression, i.e., a word over the alphabet $\{ (, )\}$.
Here, we view such a  bracket expression as a binary string by
identifying $($ with $0$ and $)$ with $1$. Thus, we define
a nonempty word $\word(t) \in \{0,1\}^+$ as follows: Consider a
depth-first left-to-right traversal of $t$. Each time, we move from a
node $v$ to one of its children, we write down
$0$. Each time, we move from a node $vi$ to its
parent node $v$, we write down $1$. The resulting word is
$\word(t)$.
Formally, let $\word(\{\varepsilon\}) = \varepsilon$ and
for $t\in \T_{\bin} \setminus \{\{\varepsilon\}\}$
such  that $t=\{\varepsilon\} \cup 0t_1\cup 1t_2$
let
$$
\word(t) = 0\word(t_1)10\word(t_2)1 .
$$
This mapping $\word$ is clearly injective. Finally, for $t =
(\{\varepsilon\} \cup 0u)  \in \T_{\lef}$ with
$u \in \T_{\bin}$ let $\word(t) = 0 \word(u) 1$.
Also the mapping $\word :  \T_{\lef} \to  0\{0,1\}^*1$
is injective.

Let us now fix an alphabet $\Sigma$ such that $0,1, \sharp \in \Sigma$
(as in Lemma~\ref{lemma:U_0andU_1}).
Take a word $w = u \sharp v$ with $u \in \{0,1\}^*$, $v \in \Sigma^*$ and
$u = \word(t)$ for some tree $t \in  \T_{\lef}$.
Note that the root of $t$ has no right child in $t$.
For $1 \leq i \leq |v|$ let $a_i$ be the $i^{th}$ symbol of $v$.
We encode
$w$ by the $\Sigma$-labeled tree $\tree(w)  = (T,\lambda) \in
\T^{\fin}_{2,\Sigma}$, where
\begin{align*}
T  &= t \cup \{ 1^i \mid 1 \leq i \leq |v| \} \text{ and}\\
\lambda(x)  &= \begin{cases}
    a_i  & \text{ if $x = 1^{i}$ for some  $1 \leq i \leq |v|$}, \\
    \sharp & \text{ else.}
\end{cases}
\end{align*}
Note that $\tree( \word(t) \sharp) = t$, since we identify an unlabeled tree
with a tree where all nodes are labeled with $\sharp$.

\begin{lem} \label{lemma-string->tree}
From a given string automaton  $\A$ over $\Sigma$
one can construct effectively a tree automaton $\B$ over $\Sigma$
such that for every tree $t \in \T_{\lef}$ and every word $v \in \Sigma^*$
the following holds: The number of successful runs of $\A$ on the string
$\word(t)\sharp v$ equals the
number of successful runs of $\B$ on the tree $\tree(\word(t)\sharp v)$.
\end{lem}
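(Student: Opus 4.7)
My plan is to design $\B$ as a top-down tree automaton that traverses $\tree(\word(t)\sharp v)$ while simulating $\A$'s string computation on $\word(t)\sharp v$. The crucial observation is that $\tree(\word(t)\sharp v)$ decomposes into the ``tree part'' $t$ (whose $\sharp$-labeled nodes encode the traversal word $\word(t)$ through the bracket structure $\word(t(d)) = 0\,\word(t_1)\,1\,0\,\word(t_2)\,1$ for internal $t(d)\in\T_{\bin}$ with subtrees $t_1,t_2$) and the right-leaning ``spine'' $1, 11, \ldots, 1^{|v|}$ (whose labels spell out $v$). On the spine, $\B$ will step $\A$ forward one symbol at a time; on the tree part, $\B$ will reduce the simulation of $\A$ on $\word(t)$ to local simulations around each subtree, exploiting the fact that every subtree $t(d)$ contributes the bracketed substring $0\,\word(t(d))\,1$ to the traversal of its parent.

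Accordingly, I equip $\B$ with two families of internal states: a \emph{tree-state} $(p,p')$, placed at a position $d \in t\setminus\{\varepsilon\}$ to encode that $\A$ reads $0\,\word(t(d))\,1$ from $p$ to $p'$; and a \emph{spine-state} $p$ at $1^i$ to encode that $\A$ reads $a_i \cdots a_{|v|}$ from $p$ and ends in $F_\A$. The transitions are then forced by these semantics. At a tree-internal node with state $(p,p')$ and label $\sharp$, $\B$ guesses intermediate $\A$-states $r_0, r_1, r_2$ with $(p,0,r_0), (r_2,1,p') \in \Delta_\A$ and dispatches $(r_0,r_1)$ left and $(r_1,r_2)$ right, matching $0\,\word(t(d))\,1 = 0 \cdot 0\,\word(t_1)\,1 \cdot 0\,\word(t_2)\,1 \cdot 1$. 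At a tree leaf (where the bracketed word is $01$), $\B$ guesses $r$ with $(p,0,r), (r,1,p')\in\Delta_\A$. Along the spine, $\B$ picks an $\A$-transition on the current label and sends the updated state to the right child. At the root $\varepsilon$, $\B$ picks $p_0\in I_\A$ and guesses the states $p_1$ of $\A$ after reading $\word(t)$ and $p_2$ after the separator $\sharp$, dispatching $(p_0,p_1)$ left and $p_2$ right.

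With $\B$ in place, a successful $\A$-run on $\word(t)\sharp v$ uniquely determines a successful $\B$-run on $\tree(\word(t)\sharp v)$ by reading off $\A$'s state at every entry/exit of a subtree's brackets in $\word(t)$ and at every spine position; conversely, a $\B$-run reconstructs the $\A$-run by concatenating its local transitions in the order $\A$ scans $\word(t)\sharp v$. A short structural induction on $t$ makes both directions precise and yields the equality of run counts. The main subtlety I expect is preserving the \emph{bijectivity} of this correspondence at frontier positions of $\B$, where no state can be passed on: at a tree leaf, the intermediate $\A$-state $r$ between the outer $0$ and $1$ has no child state to live in; in the degenerate case $v=\varepsilon$, the terminal $\A$-state reached after $\sharp$ falls similarly on the frontier. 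A naive transition with an existential quantifier over these states would collapse several $\A$-runs into one $\B$-run and break the count. I resolve this by introducing a family of state-indexed final states $\{\bullet_r : r \in Q_\A\}$ (alongside a single generic accept state used wherever no information is lost) and routing the otherwise-invisible $\A$-state into the frontier through $\bullet_r$, so that every $\A$-choice is recorded as a distinct frontier assignment of $\B$.
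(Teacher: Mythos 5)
Your proposal is correct and takes essentially the same route as the paper: build a top-down tree automaton that simulates a depth-first walk over $\tree(\word(t)\sharp v)$ while driving $\A$ on the underlying string, with the nodes of $t$ handling $\word(t)$ and the right-hand spine handling $\sharp v$, and arrange the states so that runs of $\A$ and runs of $\B$ are in bijection. The only substantive difference is bookkeeping: the paper stores a triple $(p_1,p_2,p_3)$ at each non-root node $d$ of $t$ (the $\A$-states on arrival at $d$, after returning from the left subtree, and on exit), so that a leaf of $t$ carries $(r,r,r)$ and the otherwise-invisible intermediate state $r$ you correctly worried about is already recorded at the leaf itself, with both frontier children receiving a single fixed final state; you instead use pairs $(p,p')$ and push $r$ onto the frontier via state-indexed final states $\bullet_r$, which equally preserves the run-count bijection.
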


\begin{proof}
Let $\A = (Q,\Sigma, \Delta, I, F)$.
Note that $\tree(\word(t)\sharp v)$ consists of the tree $t$ to which we add at the root
a right branch of length $|v|$. The $i^{th}$ node on this branch 
 is labeled with
the $i^{th}$ symbol of $\sharp v$ (we count from the root to the leaf).
Essentially, the tree automaton $\B$ simulates a
tree-walking automaton $\mathcal{W}$ (see \cite{BojaTWA:08} for a
survey on tree walking automata, but we do not need a formal definition of tree walking automata) that
walks over the tree $\tree(\word(t)\$v)$
in  depth-first left-to-right order. Thereby, $\mathcal{W}$ simulates
the string automaton $\A$. The automaton $\mathcal{W}$ starts in the root of the tree.
In a first phase (which is finished if $\mathcal{W}$ returns to the
root for the first time), $\mathcal{W}$
behaves as follows: each time, $\mathcal{W}$ moves down in the tree (towards the leaves),
it simulates a $0$-labeled transition of $\A$, and each time, $\mathcal{W}$ moves up in the tree (towards the root),
it simulates a $1$-labeled transition of $\A$.
After the first phase, the tree $t$ is fully traversed and
$\mathcal{W}$ goes into the right branch of  $\tree(\word(t)\sharp v)$
(which is labeled with the word $\sharp v$) and continues the simulation of $\A$.

Here is a formal definition of the tree automaton
$\B = (Q', \Delta', I', F')$, which simulates
a tree-walking automaton $\mathcal{W}$ with the above behavior.
Fix an arbitrary final state $ q_{\mathsf{f}} \in F$.
The state set of $\B$ is
$$
Q' = (Q \times Q \times Q) \cup (I \times Q) \cup Q ,
$$
the set of initial states is $I' = I \times Q$, and the set
of final states is $F$.
The set of transitions is $\Delta' = \Delta_1 \cup \Delta_2 \cup \Delta_3$, where
\begin{align*}
\Delta_1 =& \{ (  (p_1, p_2), \sharp, (q_1,q_2,q_3), r) \mid p_1 \in 
I, (p_1,0,q_1) \in \Delta, q_2\in Q, (q_3,1,p_2), (p_2,\sharp,r) \in \Delta \},
\\ 
\Delta_2 =& \{ ((p_1,p_2,p_3), \sharp, (q_1,q_2,q_3), (r_1,r_2,r_3))
\mid\\
&\quad  (p_1,0,q_1), q_2\in Q, (q_3,1,p_2), (p_2,0,r_1),r_2\in Q,
(r_3,1,p_3) \in \Delta\} 
\\ 
 &\cup   \{ ((p,p,p),\sharp,  q_{\mathsf{f}},q_{\mathsf{f}}) \mid p
 \in Q \},   \\ 
\Delta_3 =& \{ (p,a,q_{\mathsf{f}},q) \mid (p,a,q) \in \Delta \}. 
\end{align*}
With the transitions in $\Delta_1$ we split the simulation of the tree-walking automaton into its first and second
phase, i.e., $p_2$ in $\Delta_1$ is the state reached by the  tree-walking automaton after traversing the tree $t$.
The transitions in $\Delta_2$ simulate the traversal of $t$, whereas the transitions in $\Delta_3$ simulate the
the string automaton $\A$ on the right $\sharp v$-labeled branch.
\end{proof}
Let us now prove Lemma~\ref{lem:F_B}.

\begin{proof}
  [Proof of Lemma \ref{lem:F_B}]
 Fix a $\Sigma^0_2$-set $B\subseteq \T_{\lef}$.
 Then the set $A = \word(B)  \subseteq 0\{0,1\}^*1$ belongs to $\Sigma^0_2$ as well (the range of the $\word$-mapping is
 computable and on its range, the inverse of $\word$ is also computable).
 Therefore, we can apply Lemma~\ref{lemma:U_0andU_1}  to the set $\word(B)$.
 We obtain (effectively)  a finite string automaton $\A$
  (over an alphabet $\Sigma$ with $0,1, \sharp \in \Sigma$) such that 
  $\forest(\A)$ is
  a forest of height 3 with the following properties.
  \begin{iteMize}{$\bullet$}
  \item The set of roots of $\forest(\A)$ is $\{0,1\}^*1\sharp$.
  \item For every $w \in \{0,1\}^*1$, 
    $\forest(\A)(w\sharp) \cong \Uf_0$ if $w \in A$, and 
    $\forest(\A)(w\sharp) \cong \Uf_1$ if $w \notin A$
    for two
    nonisomorphic trees $\Uf_0$ and $\Uf_1$.
  \end{iteMize}
To the string automaton $\A$ we next apply Lemma~\ref{lemma-string->tree}.
We obtain (effectively) a tree automaton $\B$ over $\Sigma$
such that for every tree $t \in \T_{\lef}$ and every word $v \in \Sigma^*$
the following holds: The number of successful runs of $\A$ on the string $\word(t)\sharp v$ equals the
number of successful runs of $\B$ on the tree $\tree(\word(t)\sharp v)$.
Since $\A$ accepts every word from $\{0,1\}^*1\sharp$ (this set is the set of roots of
$\forest(\A)$), $\B$ accepts every tree $t \in \T_{\lef}$.

By taking the product with a deterministic tree automaton that accepts the set of
trees
\begin{equation*}
	\{ \tree(\word(t)\sharp v) \mid t \in \T_{\lef}, v \in \Sigma^*\}
\end{equation*}
(which is regular),
we can assume that
$$
L(\B) \subseteq \{ \tree(\word(t)\sharp v) \mid t \in \T_{\lef}, v \in \Sigma^*\}.
$$
For trees $t_1 = \tree(\word(t)\sharp v_1)$, $t_2 = \tree(\word(t)\sharp v_2)$ let us write
$t_1 \sqsubseteq t_2$ if $v_1$ is a prefix of $v_2$. Clearly, this is a tree-automatic relation.
Let
$$
\max(\B) = \{ t \in L(\B) \mid \text{there does not exist $t' \in
  L(\B)$ with $t \sqsubset t'$} \} ; 
$$
this set is regular as well.
We can now construct a tree-automatic forest $\Ff_B$ of height 3 as follows.
The set of nodes of $\Ff_B$ is
$$
L(\B) \cup \bigcup_{t \in \max(\B)} \Run(\B, t) .
$$
Since $\max(\B)$ is regular, this set is also regular.
The order relation of the forest $\Ff_B$ is the tree-automatic relation
$$
\sqsupseteq \;\cup\; \{ (\rho, t) \mid  t\in L(\B), \exists u \in
\max(\B) : t \sqsubseteq u,   \rho \in  \Run(\B, u)\} . 
$$
The set of roots of $\Ff_B$ is (as required) $\T_{\lef}$.
Moreover, for every tree $t \in \T_{\lef}$, the construction
directly implies that $\Ff_B(t) \cong \forest(\A)(\word(t)\sharp)$. Hence, for every tree
$t \in \T_{\lef}$ we have $\Ff_B(t) \cong \Uf_0$ if and only if 
$\forest(\A)(\word(t)\sharp) \cong \Uf_0$ if and only if
$\word(t) \in A = \word(B)$ if and only if $t \in B$, and similarly,
$\Ff_B(t) \cong \Uf_1$ if and only if $t \notin B$.
\end{proof}

\subsection{Hardness for the isomorphism problem}

Hardness of the isomorphism problem for well-founded tree-automatic
trees is established through the following theorem.

\begin{thm}\label{thm:hardness}
  From a given $i \in \N_{>0}$, one can compute a well-founded tree-automatic
  tree $\Vf_i$ such that the following holds: From a given
  $\Pi^0_{\omega^i}$-set $P\subseteq\mathbb{N}$ (represented by a
  $\Pi^0_{\omega^i}$ index) and $n \in \mathbb{N}$ one can compute a
  well-founded tree-automatic tree $\Wf_{P,n}$ such that
  $n \in P$ if and only if $\Vf_i \cong \Wf_{P,n}$
\end{thm}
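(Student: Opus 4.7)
The plan is to combine the four building blocks developed in the preceding sections: Theorem~\ref{thm-hirschfeld-white} (which encodes $\Pi^0_{\omega^i}$-membership via isomorphism of labeled trees $(\Sf_{\omega^i}, \cdot)$), Lemma~\ref{lemma-Sf_alpha-tree-auto} (giving a tree-automatic copy of $\Sf_{\omega^i}$ with a computable isomorphism), Lemma~\ref{lemma-unary-alphabet} (to arrange that this copy lives inside $\T_{\bin}$), and Lemma~\ref{lem:F_B} (converting a computable subset of $\T_{\lef}$ into an isomorphism test on subtrees at height~$3$).

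First I would set $\alpha = \omega^i$ and invoke Theorem~\ref{thm-hirschfeld-white}: since $\alpha$ is a limit ordinal, this yields a computable set $L^{\alpha}_\infty \subseteq \leaves(\Sf_\alpha)$ and, from any $\Pi^0_{\omega^i}$-index for $P$ together with $n$, a computable set $T_{P,n} \subseteq \leaves(\Sf_\alpha)$ satisfying $(\Sf_\alpha, T_{P,n}) \cong (\Sf_\alpha, L^\alpha_\infty)$ iff $n \in P$ (otherwise $(\Sf_\alpha, T_{P,n}) \cong (\Sf_\alpha, L^\alpha_k)$ for some $k \in \N_{>0}$, which is non-isomorphic to $(\Sf_\alpha, L^\alpha_\infty)$). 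Composing Lemma~\ref{lemma-Sf_alpha-tree-auto} with Lemma~\ref{lemma-unary-alphabet} produces a tree-automatic copy $(L,\leq_L)$ of $\Sf_\alpha$ with $L \subseteq \T_{\bin}$, together with a computable isomorphism $f:\Sf_\alpha \to L$. Thus $f(L^\alpha_\infty)$ and $f(T_{P,n})$ are computable subsets of $\leaves(L) \subseteq \T_{\bin}$, and the map $\phi:\T_{\bin} \to \T_{\lef}$ defined by $\phi(t) = \{\varepsilon\} \cup 0t$ is a computable injection, so $B_V := \phi(f(L^\alpha_\infty))$ and $B_W := \phi(f(T_{P,n}))$ are computable subsets of $\T_{\lef}$ (in particular $\Sigma^0_2$). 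Lemma~\ref{lem:F_B} then delivers tree-automatic forests $\Ff_{B_V}$ and $\Ff_{B_W}$ of height~$3$ with root set $\T_{\lef}$ such that at every root $\phi(\ell)$ the subtree is $\Uf_0$ or $\Uf_1$ according to whether $\ell$ lies in the corresponding set.

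I would then define $\Vf_i$ by gluing $\Ff_{B_V}$ onto $L$, identifying each leaf $\ell$ of $(L,\leq_L)$ with the root $\phi(\ell)$ of $\Ff_{B_V}$; $\Wf_{P,n}$ is defined analogously with $\Ff_{B_W}$. As the result is a well-founded tree of rank $\omega^i + 3$ obtained from a tree-automatic tree by attaching tree-automatic pieces of height~$3$ at its leaves via the first-order definable connector $\phi$, it admits a tree-automatic presentation: encode nodes as convolutions placing elements of $L$ in one component and non-root elements of $\Ff_B$ in the other, use the regularity of $\leaves(L)$ and of $\phi$ applied to it to splice the two order relations, and take the Boolean combination of $\leq_L$, $\leq_{\Ff_B}$, and the ``comparable to $\phi(\ell)$ for some leaf $\ell$ below'' clause to recover the order. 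Both $\Vf_i$ and $\Wf_{P,n}$ are well-founded because $L$ is and each $\Ff_B$-fiber has height~$3$.

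For correctness, if $n \in P$ then $(\Sf_\alpha, T_{P,n}) \cong (\Sf_\alpha, L^\alpha_\infty)$; transporting along $f$, the two $\Uf_0/\Uf_1$-leaf colorings of $L$ underlying $\Vf_i$ and $\Wf_{P,n}$ are isomorphic, and any such color-preserving isomorphism lifts uniquely to an isomorphism of the glued trees. Conversely, if $n \notin P$, any isomorphism $\Vf_i \to \Wf_{P,n}$ would restrict to an isomorphism of the underlying copies of $\Sf_\alpha$ that preserves the color at each leaf (since $\Uf_0 \not\cong \Uf_1$ and every leaf of $\Sf_\alpha$ uniquely determines a subtree $\Uf_0$ or $\Uf_1$ above it), yielding $(\Sf_\alpha, L^\alpha_\infty) \cong (\Sf_\alpha, T_{P,n})$ and hence a contradiction with Theorem~\ref{thm-hirschfeld-white}. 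The main obstacle I expect is the bookkeeping in step three: producing a single tree automaton that simultaneously recognizes the domain of the glued structure and its order relation, and in particular that properly matches each leaf of $(L,\leq_L)$ with its image under $\phi$ inside the domain of $\Ff_B$; this is conceptually straightforward because $\phi$ is first-order definable on trees, but the construction requires us to coordinate automata over different alphabets via a standard product-and-padding construction.
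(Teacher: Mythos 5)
Your proposal follows essentially the same route as the paper: apply Theorem~\ref{thm-hirschfeld-white} to reduce $\Pi^0_{\omega^i}$-membership to isomorphism of leaf-labeled copies of $\Sf_{\omega^i}$, transport along the computable isomorphism from Lemmas~\ref{lemma-Sf_alpha-tree-auto} and~\ref{lemma-unary-alphabet}, apply Lemma~\ref{lem:F_B}, and glue height-$3$ trees onto the leaves. Two minor points: Lemma~\ref{lem:F_B} gives forests whose root set is \emph{all} of $\T_{\lef}$, so before gluing you must restrict $\Ff_{B_V}$ and $\Ff_{B_W}$ to those components whose root lies in the regular set $\phi(\leaves(L))$ --- otherwise the result is a forest with infinitely many extraneous height-$3$ components, not a tree; and the paper sidesteps the ``product-and-padding bookkeeping'' you anticipate by pushing the whole domain $S$ into $\T_{\lef}$ via $t\mapsto\{\varepsilon\}\cup 0t$ \emph{before} invoking Lemma~\ref{lem:F_B}, so that the leaves of the automatic copy already coincide with roots of the forest and the glued structure is literally a union of two tree-automatic structures over the same alphabet.
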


\begin{proof}
Fix $i \geq 1$, an arbitrary $\Pi^0_{\omega^i}$-set $P\subseteq
\N$, and $n \in \N$.
According to  Theorem~\ref{thm-hirschfeld-white} there exists
effectively a computable subset $L_i = L^{\omega^i}_\infty \subseteq \leaves(\Sf_{\omega^i})$
such that the following holds. From an index for $P$ and $n$
one can compute an index for a computable
subset $T_{P,n} \subseteq \leaves(\Sf_{\omega^i})$ such that
$(\Sf_{\omega^i}, L_i) \cong (\Sf_{\omega^i}, T_{P,n})$  if and
only if $n \in P$. By Lemma~\ref{lemma-Sf_alpha-tree-auto}, the tree $\Sf_{\omega^i}$ is
tree-automatic and there exists a unary tree-automatic copy
$(S, \leq)$ of $\Sf_{\omega^i}$ together with a computable
isomorphism $f : \Sf_{\omega^i} \to (S, \leq)$. Moreover, by Lemma~\ref{lemma-unary-alphabet}
we can assume that $S \subseteq \T_{\bin}$. Finally,
by applying the computable bijection $t \mapsto \{\varepsilon\} \cup 0t$ from
$\T_{\bin}$ to $\T_{\lef}$, we can even assume that
$S \subseteq \T_{\lef}$. 
Since $f$ is computable and bijective and $L_i$ and $T_{P,n}$ 
are computable, the sets $L_i' = f(L_i) \subseteq S$ and $T'_{P,n} = f(T_{P,n}) \subseteq S$
are computable as well and in particular $\Sigma^0_2$.
We have $n \in P$ if and only if $(S, \leq, L_i') \cong (S, \leq, T'_{P,n})$.

Next, using Lemma~\ref{lem:F_B} we obtain two trees $\Uf_0$ and $\Uf_1$ of height 3
and from the indices of the computable sets $L'_i$ and $T'_{P,n}$, we
can compute two tree-automatic forests $\Gf_i$ and $\Hf_{P,n}$ of height 3
such that the following holds.
  \begin{iteMize}{$\bullet$}
  \item The set of roots of $\Gf_i$ (resp. $\Hf_{P,n}$) is $\T_{\lef}$.
  \item For every $t \in \T_{\lef}$,
    $\Gf_i(t)\cong \Uf_0$ if $t \in L_i'$ and $\Gf_i(t)\cong
    \Uf_1$ otherwise.
  \item For every $t \in  \T_{\lef}$,
    $\Hf_{P,n}(t)\cong \Uf_0$ if $t \in T'_{P,n}$ and $\Hf_{P,n}(t)\cong
    \Uf_1$ otherwise.
  \end{iteMize}
  Note that by Theorem~\ref{thm:tree-automatic}, $\leaves(S, \leq)$
  is a regular set of trees since it is first-order definable in the
  tree-automatic tree $(S,\leq)$. Let $\Gf_i'$ (resp. $\Hf'_{P,n}$) be the restriction of
  the forest $\Gf_i$ (resp. $\Hf_{P,n}$) to those trees with a root from
  $\leaves(S, \leq)$. It follows that $\Gf_i'$ and $\Hf'_{P,n}$ are again
  tree-automatic forests of height 3. Moreover, we can assume that the
  intersection of the domains of $\Gf_i'$ (resp. $\Hf'_{P,n}$) and $S$ equals
  $\leaves(S, \leq)$. Finally, let $\Vf_i$  (resp. $\Wf_{P,n}$)
  be the well-founded
  tree-automatic tree obtained from the union of $\Gf_i'$ (resp. $\Hf'_{P,n}$) and $(S,\leq)$.
  Hence, $\Vf_i$
  (resp. $\Wf_{P,n}$) results from the tree $(S,\leq)$
   by (i) replacing every leaf which belongs to
  $L_i'$ (resp. $T'_{P,n}$) by the tree-automatic
  height-3 tree $\Uf_0$ and by (ii) replacing every leaf which does not belong to
  $L_i'$ (resp. $T'_{P,n}$) by the tree-automatic
  height-3 tree $\Uf_1$. Since $\Uf_0 \not\cong \Uf_1$, we have
  $n \in P$ if and only if $(S, \leq, L_i') \cong (S, \leq, T'_{P,n})$
  if and only if $\Vf_i \cong \Wf_{P,n}$.
\end{proof}

\begin{thm}\label{thm:hardness2}
   The isomorphism problem for well-founded tree-automatic trees is
  $\Delta^0_{\omega^\omega}$-hard under Turing-reductions.
\end{thm}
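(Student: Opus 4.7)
The plan is to leverage Theorem~\ref{thm:hardness}, which gives for each $i \geq 1$ a uniform many-one reduction from an arbitrary $\Pi^0_{\omega^i}$-set to the isomorphism problem $\Iso$ for well-founded tree-automatic trees. I want to combine these reductions (for all $i$) into a single Turing-reduction, so as to capture all of $\Delta^0_{\omega^\omega}$. Given a set $A \in \Delta^0_{\omega^\omega}$, the starting point is that $A \leq_T H(a)$ for some ordinal notation $a \in O$ with $|a|_O = \omega^\omega$; consequently it suffices to prove $H(a) \leq_T \Iso$.

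Since $\omega^\omega$ is a limit ordinal, we may take $a = 3 \cdot 5^e$ where $\Phi_e$ is total computable and $\Phi_e(i) = b_i$ is an ordinal notation with $|b_i|_O = \omega^{i+1}$, so that $b_0 <_O b_1 <_O b_2 <_O \cdots$ is cofinal in the set $\{b \in O \mid b <_O a\}$. By definition, $\langle b, n \rangle \in H(a)$ if and only if $b <_O a$ and $n \in H(b)$, and membership in $\{b \in O \mid b <_O a\}$ is computably enumerable (uniformly in $a$, cf.~\cite[Proposition~4.10]{AshKNight00}). Moreover, given any $b <_O a$ one can compute the least $i \in \mathbb{N}$ such that $b \leq_O b_i$; in particular $|b|_O \leq \omega^{i+1}$.

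For each such $b$, the set $H(b)$ is recursively enumerable in $H(b)$ itself, so $H(b) \in \Sigma^0_{|b|_O} \subseteq \Delta^0_{\omega^{i+1}}$; hence $\overline{H(b)} \in \Pi^0_{\omega^{i+1}}$. By unwinding the inductive definition of $H$ (and using the effective translation between $\Sigma^0_\alpha$- and $\Pi^0_\beta$-indices for $\alpha < \beta$, cf.~\cite[Theorem~7.1]{AshKNight00}), one can compute a $\Pi^0_{\omega^{i+1}}$-index for $\overline{H(b)}$ uniformly in $b$. Applying Theorem~\ref{thm:hardness} to this index and the input $n$ yields well-founded tree-automatic trees $\Vf_{i+1}$ (depending only on $i$) and $\Wf_{b,n}$ (depending on $b$ and $n$) with the property that $n \notin H(b) \iff \Vf_{i+1} \cong \Wf_{b,n}$. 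A single query to $\Iso$ therefore decides whether $n \in H(b)$.

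Putting these pieces together, the desired Turing-reduction proceeds as follows: on input $m$, simulate the reduction $A \leq_T H(a)$; each time the simulation queries $\langle b, n\rangle \stackrel{?}{\in} H(a)$, first test $b <_O a$, then compute the associated $i$, construct $(\Vf_{i+1}, \Wf_{b,n})$ as above, and answer via an $\Iso$-oracle call. This establishes $A \leq_T \Iso$, and since $A \in \Delta^0_{\omega^\omega}$ was arbitrary we obtain the claimed hardness. The main technical obstacle is uniformity: one must verify that a $\Pi^0_{\omega^{i+1}}$-index for $\overline{H(b)}$ can be produced effectively from $b$ alone (not merely from $i$), which is routine but requires careful bookkeeping through the stages of the hyperarithmetical hierarchy.
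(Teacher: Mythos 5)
Your argument is essentially the same as the paper's. Both start from the fact that $\Delta^0_{\omega^\omega}$ consists of the sets Turing-reducible to $H(3\cdot 5^e)$ for a notation $3\cdot 5^e$ of $\omega^\omega$, then deciding each query to $H(3\cdot 5^e)$ is translated, via Theorem~\ref{thm:hardness}, into a single $\Iso$-query. The paper inserts a clean intermediate set $A = \{\langle i,n\rangle \mid n \in H(a_i)\}$ (with $|a_i|_O=\omega^i$) and splits the reduction as $H(3\cdot 5^e)\leq_T A \leq_m \Iso$. This has the advantage that the key index-computation step only has to be done for the finitely presented sets $H(a_i)$ themselves: the reduction of $H(b)\leq_m H(a_i)$ for $b <_O a_i$ is a classical effective fact (cited to Rogers), after which one just needs a $\Pi^0_{\omega^i}$-index for the fixed set $H(a_i)$, computable from $i$ alone. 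You instead fold this in and claim one can uniformly compute a $\Pi^0_{\omega^{i+1}}$-index for $\overline{H(b)}$ directly from $b$; this is true (and morally amounts to the same composition), but you correctly flag it as the main technical debt of your version. One additional detail you gloss over: deciding $b <_O a$ is only $\Sigma^0_1$, so the Turing-reduction needs an oracle to decide it; the paper explicitly notes that the halting problem is computable from the intermediate set $A$ (and hence from $\Iso$), which justifies the step, whereas your sketch just says ``first test $b<_O a$'' without accounting for how the reduction does so. Neither issue is a genuine gap -- they are the same bookkeeping the paper packages via $A$ -- but they are worth making explicit.
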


\begin{proof}
  Let $\Phi_e$ be a total computable function that maps
  $i \in \N$ to an ordinal notation $a_i \in O$ with $|a_i|_O = \omega^i$.
  Hence $3 \cdot 5^e$ is an ordinal notation for $\omega^\omega$.
  Recall that $\Delta^0_{\omega^\omega}$ consists of all sets that are
  Turing-reducible to
  $$
  H(3 \cdot 5^e) = \{ \langle a,n\rangle \mid a <_O 3 \cdot 5^e, n \in
  H(a) \}.
  $$
  This set is Turing-reducible to
  $$
  A =  \{ \langle i, n\rangle \mid i \geq 1, n \in H(a_i)\} .
  $$
  To see this, take a pair $\langle a,n\rangle$.
  First, check whether $a <_O 3 \cdot 5^e$. Since the set
  $\{ b \in O \mid b <_O 3 \cdot 5^e\}$ is computably enumerable
  \cite[Prop.~4.10]{AshKNight00}, this is effectively possible using
  the halting problem as an oracle. Clearly, the halting problem
  is computable in $H(a_1) = H(\omega)$ and hence in $A$.
  If $a <_O 3 \cdot 5^e$, we can compute
  effectively $i \in \N$ with $a <_O a_i$: Simply enumerate
  all sets $B_i = \{ b \in O \mid b <_O a_i \}$ until $a$ is found.
  Having $i \in \N$ with $a <_O a_i$ we can finally compute
  $m$ such that $m \in H(a_i)$ if and only if $n \in H(a)$
  (more precisely, from $a,a_i \in O$ one can compute an index for a many-one
  reduction of $H(a)$ to $H(a_i)$ \cite[p.~437]{Rogers}).
  
  Finally, we reduce the set $A$ to the
  isomorphism problem for well-founded tree-automatic trees.
  Take a pair $\langle i, n\rangle$. From $i, n$ (and a
  $\Pi^0_{\omega^i}$ index for the $\Delta^0_{\omega^i}$-set $H(a_i)$, which can
  be computed from $i$) we can compute by Theorem~\ref{thm:hardness}
  two well-founded tree-automatic trees $\Vf$ and
  $\Wf$ such that $n \in H(a_i)$ if and only if
  $\Vf \cong \Wf$. Hence,
  $\langle i, n\rangle \in A$ if and only if
  $\Vf \cong \Wf$. This proves the theorem.
\end{proof}

\section*{Acknowledgment}

We thank Dietrich Kuske and the anonymous referees of this paper and its 
conference version for their useful comments. In particular, we thank
one of the referees for simplifying our arguments in Example
\ref{exa:BoxDestroysRanksofWfPO}.


\end{document}